\documentclass[journal]{IEEEtran}
\usepackage[explicit]{titlesec}
\usepackage{graphicx}
\usepackage{easyReview}
\usepackage{verbatim}
\usepackage{epstopdf}
\usepackage{setspace}
\usepackage{subfigure} 
\usepackage{algorithmic}
\usepackage{amsmath}
\usepackage{amsthm}
\usepackage{mathrsfs}
\usepackage{extarrows}
\usepackage{cite}
\newtheorem{theorem}{Theorem}
\newtheorem{example}{Example}

\usepackage{bbm}
\usepackage[bookmarks=false, colorlinks=true, citecolor=blue]{hyperref}
\usepackage{cleveref}
\usepackage{amssymb}
\newtheorem{remark}{Remark}

\newcommand{\RNum}[1]{\uppercase\expandafter{\romannumeral #1\relax}}
\newtheorem{Lemma}{Lemma}

\newtheorem{problem}{Problem}
\newtheorem{definition}{Definition}
\newtheorem{iteration}{Iteration}
\newtheorem{corollary}{Corollary}
\usepackage{stfloats}
\usepackage{color, soul}
\usepackage{booktabs}
\usepackage{tikz,xcolor}
\usetikzlibrary{arrows.meta,automata, positioning,,backgrounds,matrix,shadows}

\newcommand{\finished}{\hfill$\blacksquare$}
\newenvironment{proofsketch}{\begin{proof}[\textit{Proof Sketch}]}{\end{proof}}
\definecolor{lime}{HTML}{A6CE39}
\usepackage[left=0.68in,right=0.68in,top=0.67in,bottom=0.99in]{geometry}

\titlespacing{\section}{0pt}{1.2ex plus .0ex minus .0ex}{.3ex plus .0ex}
\titlespacing{\subsection}{0pt}{1.2ex plus .0ex minus .0ex}{.3ex plus .0ex}
\linespread{0.992}
\setlength{\textfloatsep}{6pt plus 1pt minus 1pt}
\setlength{\floatsep}{6pt plus 1pt minus 1pt}
\setlength{\intextsep}{6pt plus 1pt minus 1pt}
\setlength{\abovedisplayskip}{6pt plus 1pt minus 1pt}
\setlength{\belowdisplayskip}{6pt plus 1pt minus 1pt}
\setlength{\abovedisplayshortskip}{4pt plus 1pt minus 1pt}
\setlength{\belowdisplayshortskip}{4pt plus 1pt minus 1pt}
\definecolor{mygreen}{RGB}{46,125,50}
\makeatletter
\DeclareRobustCommand{\orcidicon}{%
	\begin{tikzpicture}
		\draw[lime, fill=lime] (0,0) 
		circle [radius=0.16] 
		node[white] {{\fontfamily{qag}\selectfont \tiny ID}};    \draw[white, fill=white] (-0.0625,0.095) 
		circle [radius=0.007];    \end{tikzpicture}
	\hspace{-2mm}}
\foreach \x in {A, ..., Z}{%
	\expandafter\xdef\csname orcid\x\endcsname{\noexpand\href{https://orcid.org/\csname orcidauthor\x\endcsname}{\noexpand\orcidicon}}
}

\newcommand*\bigcdot{\mathpalette\bigcdot@{.5}}
\newcommand*\bigcdot@[2]{\mathbin{\vcenter{\hbox{\scalebox{#2}{$\m@th#1\bullet$}}}}}
\makeatother

\usepackage[linesnumbered,ruled,vlined]{algorithm2e}
	
	\IEEEoverridecommandlockouts
	\makeatletter
	\def\@thmnote#1{\textit{[#1]}} 
	\makeatother

	\hyphenation{op-tical net-works semi-conduc-tor}
	\begin{document}
		\title{\textit{From Freshness to Effectiveness}: Goal-Oriented Sampling for  Remote Decision Making}
		\author{
			Aimin Li,
			Shaohua Wu,
			Gary C.F. Lee, 
			and Sumei Sun,
			\emph{Fellow, IEEE}
			
			
			\thanks{
				An earlier version of this work was presented in part by IEEE Information Theory Workshop (IEEE ITW) 2024 \cite{li2024sampling}.
				
				Aimin Li is with the Guangdong Provincial Key Laboratory of Aerospace Communication and Networking Technology, Harbin Institute of Technology (Shenzhen), Shenzhen 518055, China. This work is accomplished in part during his visit at Institute for Infocomm Research, Agency for Science, Technology and Research, 136832, Singapore (e-mail: liaimin@stu.hit.edu.cn).
				
				Shaohua Wu is with the Guangdong Provincial Key Laboratory of Aerospace Communication and Networking Technology, Harbin Institute of Technology (Shenzhen), Shenzhen 518055, China, and also with the Peng Cheng Laboratory, Shenzhen 518055, China (e-mail: hitwush@hit.edu.cn). 
				
				Gary C.F. Lee is with the Institute for Infocomm Research, Agency for Science, Technology and Research, 138632, Singapore (e-mail: gary\_lee@i2r.a-star.edu.sg).
				
				S. Sun is with the Institute for Infocomm Research, Agency for Science, Technology and Research, 138632, Singapore (e-mail: sunsm@i2r.a-star.edu.sg).
				
				This work has been supported in part by the National Key Research and Development Program of China under Grant no. 2020YFB1806403, and in part by the Guangdong Basic and Applied Basic Research Foundation under Grant no. 2022B1515120002. 
			}
		}
		
		\maketitle
		\allowdisplaybreaks

		\begin{abstract}
			Data freshness, measured by Age of Information (AoI), is highly relevant in networked applications such as Vehicle to Everything (V2X), smart health systems, and Industrial Internet of Things (IIoT). However, freshness alone does not always equate to utility in decision-making. In decision-critical settings, some \textit{stale} data may be more valuable than \textit{fresh} updates. \textcolor{black}{Motivated by this, we move beyond AoI-centric policies and investigate how data \textit{staleness} affects remote decision-making effectiveness under random delay and limited communication resources.} To this end, we propose {AR-MDP}, an \textbf{A}ge-\textbf{a}ware \textbf{R}emote \textbf{M}arkov \textbf{D}ecision \textbf{P}rocess framework, which co-designs optimal sampling and remote decision-making under a sampling frequency constraint and random delay. To efficiently solve this problem, we design a new \textit{two-stage} hierarchical algorithm, namely \textbf{Q}uick \textbf{B}ellman-\textbf{Linear-Program} ({\normalfont \textsc{QuickBLP}}), where
			the first stage involves solving the Dinkelbach root of a Bellman variant and the second stage involves solving a streamlined linear program (LP). For the tricky first stage, we propose a new One-layer Primal-Dinkelbach Synchronous Iteration ({\normalfont{\textsc{OnePDSI}}}) method, which overcomes the \textit{re-convergence} and \textit{non-expansive divergence} present in existing \textit{per-sample} multi-layer algorithms. Through rigorous convergence analysis of our proposed algorithms, we establish that the worst-case optimality gap in {\normalfont{\textsc{OnePDSI}}} exhibits exponential decay with respect to iteration $K$ at a rate of $\mathcal{O}(\frac{1}{R^K})$. Through \textit{sensitivity analysis}, we derive a threshold for the sampling frequency, beyond which additional sampling does not yield further gains in decision-making. Simulation results validate our analyses. 
		\end{abstract}

		\begin{IEEEkeywords}
			Age of Information, Value of Information, Markov Decision Process, Remote Decision Making, Goal-Oriented Communications, Effective Communications
		\end{IEEEkeywords}
		
		\IEEEpeerreviewmaketitle
		
		\section{Introduction}\label{sectionI}	
		Age of Information (AoI) is a crucial metric for evaluating information freshness in status update systems, garnering broad attention from both academia and industry \cite{DBLP:journals/ftnet/KostaPA17,10105150}. {Currently, AoI has been applied in a wide range of applications such as queue control \cite{DBLP:conf/infocom/KaulYG12,talak2020age,costa2016age,bedewy2019minimizing,huang2015optimizing,DBLP:journals/tcom/DoganA21,yates2018age,kam2015effect,moltafet2020age,pappas2015age}, source coding\cite{10715699,mayekar2020optimal}, remote estimation \cite{DBLP:conf/isit/SunPU17,sun2019wiener,mitra2021distributed,chen2021real,sun2019samplingwiener,ornee2019sampling,DBLP:journals/tcom/ArafaBSP21,10.1145/3492866.3549732,tsai2021unifying,ornee2021sampling,arafa2020sample}, and network design \cite{li2022age,pan2022age,xie2020age,meng2022analysis,DBLP:journals/automatica/MenaN23,DBLP:journals/tcom/CaoZJWS21,DBLP:conf/globecom/Long0GLN22,Yusi-tvt-uav-aoi,DBLP:journals/tcom/FengWFCD24,DBLP:journals/tmc/PanCLL23,9036969,Yusi-conf-uav-aoi-drl}} (see \cite{DBLP:journals/jsac/YatesSBKMU21a} for a comprehensive review). Central to this field is the question: ``\textit{How can we minimize the Age of Information?}'' The conventional wisdom in AoI optimization lies in an intuitively compelling yet mathematically non-trivial heuristic: ``\textit{fresher information holds greater value}''. This heuristic finds validation across real-world applications. In Internet of Vehicles (IoV) systems, \textit{timely} status updates are essential for enabling safety-critical driving maneuvers. In financial markets, access to \textit{first-hand} information directly impacts the effectiveness of trading decisions. These examples empirically demonstrate that minimizing AoI can improve estimation accuracy or enhance subsequent information-driven decision-making.
		
		\textcolor{black}{A significant challenge in the field lies in the lack of unified analytical frameworks that link information \textit{freshness} to its \textit{effectiveness} in real-time decision-making. In many scenarios, freshness alone does not determine how beneficial an update is to the downstream decision-making task.} Instead, the \textit{effectiveness} of information may depend on multiple interrelated factors beyond AoI, including the semantic content of transmitted packets and the underlying dynamics of the monitored source\cite{9919752,10579545,9955525,10639525}. This recognition has driven the development of various AoI variants. One approach introduces nonlinear AoI penalties, implemented through both empirical configurations \cite{sun2019sampling,kosta2020cost,cho2003effective,bastopcu2020information} and theoretically derived functions \cite{truong2013effects,8445873,wangFrameworkCharacterisingValue2021,chen2022uncertainty}. These nonlinear formulations aim to quantify the loss resulting from information \textit{staleness}. Additionally, researchers have proposed and optimized various heuristic metrics, including Age of Synchronization (AoS) \cite{zhong2018two}, Age of Incorrect Information (AoII) \cite{AoII,10818531,chen2024minimizing}, Age of Changed Information (AoCI) \cite{AoCI}, and Age of Collected Information \cite{10841476,8825510}. These metrics customize time-related penalization from a wider perspective than what can be captured with age, particularly in applications involving rapidly evolving source dynamics. For remote estimation, mean square estimation error (MSEE) \cite{DBLP:conf/isit/SunPU17} and context-aware Urgency of Information (UoI) \cite{zheng2020urgency} are leveraged to penalize the real-time reconstruction distortion. \textcolor{black}{Despite these advances, the relationship between AoI and decision-making performance is not fully characterized. Specifically, existing studies often optimize communication metrics (e.g., AoI or AoII) without explicitly modeling how delayed information influences sequential decision-making outcomes. This motivates the question: \textit{How does delayed or outdated information affect the quality of remote decisions, and how should communication policies adapt accordingly?}} 
		
		Several works have proposed heuristic approaches to characterize this relationship. In \cite{dong2019age}, Dong \textit{et al.} introduced Age upon Decisions (AuD), which measures the time elapsed between data generation and its use in decision-making, where the decision epoch follows a stochastic distribution. In a similar vein, \cite{nikkhah2023age} proposed Age of Actuation (AoA). \textcolor{black}{In \cite{Kountouris2020SemanticsEmpoweredCF,pappas2021goal,10409276}, Cost of Actuation Error (CoAE) was proposed to penalize \textit{distortion-induced} error actuation. In this setting, a penalty $C_{i,j}$ is incurred when the true system state is $i$ while the remote controller makes decisions based on an estimated value $\hat{X}_t = j$. This line of work primarily focuses on the estimation of a discrete-time Markov chain (DTMC), and quantifies the semantic mismatch between state and inferred action due to delayed or lossy communication.} In \cite{zou2023costly}, three types of decisions: correct decisions, incorrect decisions, and missed decisions are assigned different time-cost functions. A new metric termed Penalty upon Decision (PuD) was proposed. In \cite{10579545} and \cite{10562359}, a tensor-based metric termed Goal-oriented Tensor (GoT) was proposed as a unified framework for existing metrics. However, while prior work has advanced communication optimization through a variety of metrics, it often overlooks how decision-making systems actually function when operating with potentially \textit{stale} information. \textcolor{black}{Bridging this gap requires a framework that explicitly models how stochastic queuing delays influence decision quality in dynamic systems.}
		
		\textcolor{black}{In this paper, we aim to examine how \textit{stale} information impacts remote stochastic decision-making under communication constraints.} To this end, we propose the \textit{Age-aware Remote Markov Decision Process}\footnote{In \cite{10273599}, the term \textit{remote MDP} was first proposed as a pathway to pragmatic or goal-oriented communications. Our paper focuses on the communication delay and introduces the \textit{age} to enhance remote decision-making to achieve a certain goal, hence the term \textit{age-aware remote MDP}.} (AR-MDP), a comprehensive framework that jointly optimizes sampling and sequential decision-making, with a specific purpose to achieve goal-oriented effective decisions. The relationship between sampling and decision-making exhibits inherent \textit{bidirectional coupling}. \textcolor{black}{Sampling decisions affect the freshness of information available for the remote decision maker, which can result in unsatisfactory decision outcomes.} Conversely, decision-making processes affect the stochastic evolution of the source system, which in turn impacts the effectiveness of content-driven goal-oriented sampling mechanisms. To decouple these two processes and achieve optimal decision-making under random delay and a sampling frequency constraint, we formulate the problem as a constrained partially observable semi-Markov decision process, where AoI no longer serves as a typical indicator, \textcolor{black}{but as side information that informs delay-aware decision-making.} We design efficient algorithms to solve this problem. \textcolor{black}{\emph{To the best of our knowledge, this is among the first attempts to treat AoI as dynamic side information in remote decision-making systems, and to systematically integrate it into a formal decision-theoretic framework.}}
		
		\section{Related Work and Our Novelty}\label{section II}
		\subsection{Sampling Under Random Delay}
		The results in this paper contribute to the optimal sampling design under random delay. In real-world network environments, communication channels inevitably experience random delays due to various factors: network handover, congestion, variable sample sizes, and packet retransmissions. These fundamental characteristics have driven research into developing optimal sampling policies under random delay. Existing literature has focused on optimizing three key aspects: $i$) \textbf{information freshness} \cite{DBLP:journals/tit/SunUYKS17,DBLP:journals/jcn/SunC19,DBLP:journals/tit/TangCWYT23,DBLP:journals/ton/PanBSS23,DBLP:journals/tit/LiyanaarachchiU24,peng2024online}; $ii$) \textbf{remote estimation} \cite{ornee2021sampling,DBLP:conf/isit/SunPU17,sun2019wiener,DBLP:journals/ton/TangST24,chen2023sampling,10807024}; and $iii$) \textbf{remote inference} \cite{shisher2024timely,shisher2022does,shisher2023learning,ari2024goal} under random delay. A particularly noteworthy and \textit{counter-intuitive} finding in this field reveals that optimal sampling may require the source to \textit{deliberately wait} before submitting a new sample to the channel, challenging the conventional wisdom of throughput-optimal \textit{zero-wait} sampling policy. 
		
		\textbf{Information Freshness:} In the seminal work  \cite{DBLP:journals/tit/SunUYKS17}, Sun \textit{et al.} derived an AoI-optimal sampling policy under random delay. This paper revealed that under a maximum rate constraint, the AoI-optimal sampling follows a threshold structure, where sampling is activated only when the current AoI exceeds a specific threshold determinable through a low-complexity bisection search method. In \cite{DBLP:journals/jcn/SunC19}, the optimal sampling policy for a non-linear monotonic function of AoI was designed. Tang \textit{et al.} 
		\cite{DBLP:journals/tit/TangCWYT23} extended this framework to scenarios with unknown delay statistics, employing stochastic approximation methods to determine the AoI-optimal sampling policy under unknown delay statistics. Further advancements were made by Pan \textit{et al.} \cite{DBLP:journals/ton/PanBSS23}, who developed AoI-optimal sampling policies under unreliable transmission with random two-way delay.
		Most recently, Liyanaarachchi and Ulukus extended \cite{DBLP:journals/tit/SunUYKS17} by incorporating random ACK delay, demonstrating that sampling before receiving acknowledgment can potentially achieve superior AoI performance \cite{DBLP:journals/tit/LiyanaarachchiU24}. In \cite{peng2024online}, Peng \textit{et al.} designed optimal sampling policies that achieve minimal Age of Changed Information (AoCI) \cite{wang2021age}---a metric capable of detecting source changes---under known and unknown delay statistics. \textcolor{black}{Most recently, Chen \textit{et al.} \cite{chen2024minimizing} derived AoII-optimal sampling policies under random delay. }
			
			\textbf{{Remote Estimation:}} The theoretical foundations of remote estimation-oriented sampling under random delay were established through \cite{DBLP:conf/isit/SunPU17,sun2019wiener}. These studies developed an optimal sampling policy for the Wiener process that minimizes the mean square estimation error (MSEE) while adhering to sampling frequency constraints. Their research revealed that the optimal sampling policy exhibits a threshold structure, where sampling is initiated only when the real-time MSEE surpasses a predetermined threshold. Building upon this foundation, Ornee \textit{et al.} \cite{ornee2021sampling} expanded this theoretical framework by investigating MSEE-optimal sampling for the Ornstein-Uhlenbeck (OU) process, a stationary Gauss-Markov process, under random delay. In \cite{DBLP:journals/ton/TangST24} and \cite{chen2023sampling}, the MSEE-optimal sampling policies are derived for the Wiener process and the OU process under unknown delay statistics. In \cite{10807024}, Chen \textit{et al.} derived the optimal sampling policy that achieves minimum uncertainty of information under random delay, where UoI is defined as the conditional entropy of the source at the receiver given the observation history \cite{chen2022uncertainty}---mathematically expressed as $H(X_t|\mathcal{I}_t)$, with \textcolor{black}{$X_t$} representing the source state at time $t$ and $\mathcal{I}_t$ denoting the available observation history at the receiver. 
			
			\textbf{Remote Inference:} Recent research has revealed insights into remote inference performance and its relationship with information freshness metrics. In \cite{shisher2022does,shisher2024timely}, Shisher \textit{et al.} demonstrated that the loss function in remote inference may not be  monotonic in terms of the age of the samples (features) used, if the source sequence is not Markovian. Upon making this remarkable observation, the authors developed policies that allow selection of aged samples from the buffer, rather than the freshest one. This was termed the  ``\textit{selection-from-buffer}'' model. In \cite{shisher2023learning}, a learning and communication co-design problem that jointly optimizes feature length selection and
			transmission scheduling is proposed. In \cite{ari2024goal}, Ari \textit{et al.} expanded previous works by incorporating time-varying statistics of random delay and delayed feedback, developing optimal sampling policies to minimize long-term inference error within the ``\textit{selection-from-buffer}'' model. All these works reveal that the remote inference utility may not be a monotonic function in terms of AoI. \textcolor{black}{Together, these works demonstrate that in remote inference settings, the utility of a sample is not necessarily a monotonic function of its age.}
			
			\textit{To the best of our knowledge, the optimal sampling policy for \textbf{
					Remote 
					Decision Making} under random delay remains an open research problem, which we address in this paper.}
			\begin{table}[t]
				\centering
				\caption{Comparisons of Time-Lag MDPs}
				\label{1}
				\begin{tabular}{ccc}
					\toprule
					\textbf{Type} & \textbf{Observation} & \textbf{Reference} \\
					\midrule
					Standard MDP & \( O(t) = X_t \) & \cite{bellman1966dynamic}\\
					DDMDP & \( O(t) = X_{t-d} \) & \cite{DBLP:conf/sigmetrics/AltmanN92} \\
					SDMDP & \( O(t) = X_{t-D} \) & \cite{DBLP:journals/tac/KatsikopoulosE03}\\
					Age-Aware Remote MDP & \( O(t) = X_{t-\Delta(t)} \) & {This Work}\\
					\bottomrule
				\end{tabular}
			\end{table}
			\begin{figure}[tbp]
				\centering
				\includegraphics[width=1\linewidth]{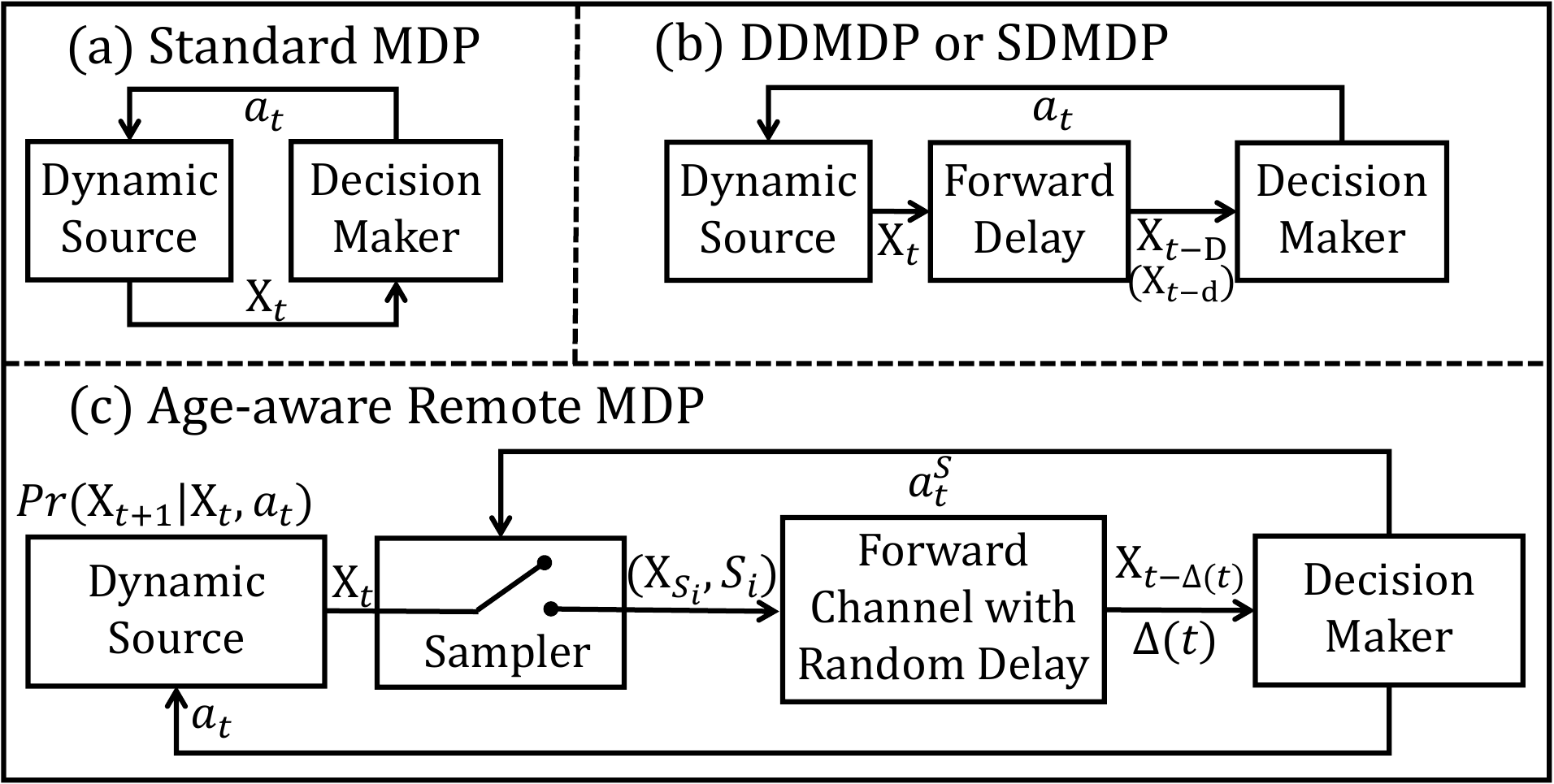}
				\caption{Comparisons among standard MDP, DDMDP, SDMDP, and AR-MDP.}
				\label{fig:figure2}
			\end{figure}
			\subsection{Decision-Making over Stale Status}
			
			The proposed AR-MDP in this paper also enriches the family of time-lag MDP, whose focus is on making decisions based on \textit{stale} status. As illustrated in \textcolor{black}{Table \ref{1}} and Fig. \ref{fig:figure2}, two primary types of MDPs address observation delay at the decision maker: deterministic delayed MDP (DDMDP) \cite{DBLP:conf/sigmetrics/AltmanN92} and stochastic delayed MDP (SDMDP) \cite{DBLP:journals/tac/KatsikopoulosE03}. The DDMDP introduces a constant observation delay $d$ to the standard MDP framework. At any given time $t$, the decision-maker accesses the time-varying data as $O(t)=X_{t-d}$. The main result of the DDMDP problem is its reducibility to a standard MDP without delays through \textit{state augmentation}, as detailed by Altman and Nain \cite{DBLP:conf/sigmetrics/AltmanN92}. The SDMDP extends DDMDP by treating the observation delay not as a static constant but as a random variable $D$ following a given distribution $\Pr(D=d)$, with $O(t)=X_{t-D}$. In 2003, V. Katsikopoulos and E. Engelbrecht showed that an SDMDP is also reducible to a standard MDP problem without delay \cite{DBLP:journals/tac/KatsikopoulosE03}. Thus, it becomes clear to solve an SDMDP problem by solving its equivalent standard MDP.
			
			\begin{figure*}[!t]
				\centering
				\fcolorbox{black}{white}{%
					\includegraphics[width=0.75\linewidth]{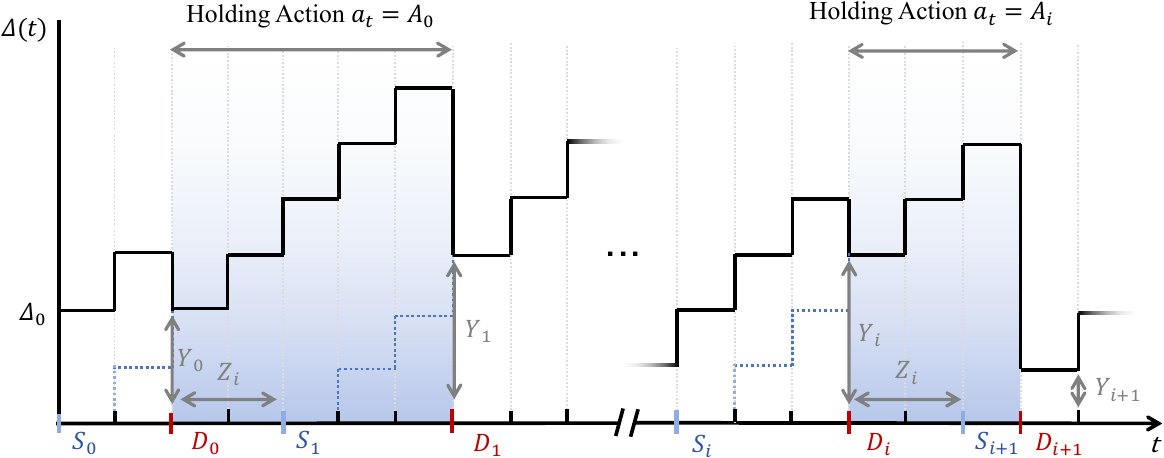}%
				}
				\caption{\textcolor{black}{AoI evolution in slotted time. The $i$-th sample is generated at $S_i$ and delivered at $D_i$ with random delay $Y_i$. The control is updated only upon delivery and is held constant until the next delivery. Shaded areas indicate action-holding intervals, during which the delivered observation remains $X_{S_i}$ for $t\in[D_i,D_{i+1})$ while the staleness $\Delta(t)$ increases linearly.}}
				\label{fig:figure1}
			\end{figure*}
			
			However, the above time-lag MDPs, where the observation delay follows a given distribution (DDMDP can be regarded as a special type of SDMDP), potentially assume that the state is sampled and transmitted to the decision maker \textit{at every time slot}\footnote{In this case, each state $X_{i}, \forall i \in \{0,1,...n\}$ are all sampled and forwarded to the decision maker. The observation delay $D$ is an i.i.d. random variable and is independent of the sampling policy.}. This setup presumes that the system can transmit every state update without encountering any \textit{backlog}. In practice, constantly sampling and transmitting may result in infinitely accumulated packets in the queue, resulting in severe congestion. This motivates the need for queue control and adaptive sampling policy design in the network \cite{DBLP:conf/isit/Yates15,DBLP:journals/tit/SunUYKS17,DBLP:journals/tcom/ArafaBSP21,DBLP:journals/tit/TangCWYT23,panjiayu2023,BZJSSU}, where Age of Information (AoI) serves as a key performance indicator. Suppose the $i$-th sample is generated at time $S_i$ and is delivered at the receiver at time $D_i$. Then, \textcolor{black}{in a time slotted system, }AoI is defined as: \begin{equation}\label{eq1}\Delta(t)=t-S_i, D_{i}\le t< D_{i+1},\qquad \forall i\in\mathbb{N},\quad{\color{black}t\in\mathbb{N}},\end{equation}
			as shown in Fig. \ref{fig:figure1}. From this definition, the most recently available information at the receiver at time slot $t$ is $O(t)=X_{t-\Delta(t)}$. \textcolor{black}{In slotted time, AoI evolves deterministically between deliveries and resets to the realized delivery delay; specifically}
			\begin{equation}\label{eq:aoi-recursion}
				\textcolor{black}{\Delta(t+1)=
					\begin{cases}
						\Delta(t)+1, & \text{if no delivery occurs at } t+1,\\
						Y_i, & \text{if a delivery occurs at } t+1.
				\end{cases}}
			\end{equation}
			
			Different from the DDMDP and SDMDP where the time lag is a constant $d$ or an i.i.d. random variable $D$, with $O(t)=X_{t-d}$ or $O(t)=X_{t-D}$, \textcolor{black}{the effective delay in AR-MDP is \emph{sampling-dependent} (through $S_i$) and coupled with random transmission delay, so the time lag is policy-dependent rather than an exogenous fixed/i.i.d. variable\footnote{\textcolor{black}{While the AoI evolves deterministically between successful updates (i.e., linearly increases), its reset events depend on both the sampling actions and the stochastic delay process. Thus, it can be viewed as a process indirectly governed by the sampling actions.}}.}   
			
			\textit{To the best of our knowledge, the design of optimal remote decision-making \textcolor{black}{in the presence of sampling-dependent stochastic observation delays, as captured by the AoI process,} remains an unexplored research direction, which we address in this paper by studying AR-MDP.}
			
			\subsection{The Novelty of Our Work}
			\begin{itemize}
				\item\textbf{System Model}: This paper proposes AR-MDP, a novel theoretical framework integrating optimal sampling and decision-making under random delay. Differing from prior sampling designs which often treat information as an end in itself---optimizing for freshness, accuracy, or estimation quality under random delays  \cite{DBLP:journals/tit/SunUYKS17,DBLP:journals/jcn/SunC19,DBLP:journals/tit/TangCWYT23,DBLP:journals/ton/PanBSS23,DBLP:journals/tit/LiyanaarachchiU24,peng2024online,shisher2024timely,shisher2022does,shisher2023learning,ari2024goal,ornee2021sampling,DBLP:conf/isit/SunPU17,sun2019wiener,DBLP:journals/ton/TangST24,chen2023sampling,10807024}, we treat information as a \textbf{means to action}, where its \textcolor{black}{contribution} is defined not by how precise it is, but by how well it enables timely and effective decisions. \textcolor{black}{This goes beyond distortion-based formulations such as the Cost of Actuation Error (CoAE) \cite{Kountouris2020SemanticsEmpoweredCF,pappas2021goal,fountoulakis2023goal,10409276}, by embedding the staleness-induced impact directly into the sequential decision-making process.}
				\textcolor{black}{Different from classical time-lag MDPs (e.g., DDMDP/SDMDP) that model the delay as an \emph{exogenous} quantity where either a fixed constant $d$ or a random variable $D$ independent of the sampling policy, our AR-MDP adopts a different remote-decision information structure in which AoI $\Delta(t)$ is an \emph{endogenous} staleness process shaped jointly by controllable sampling or waiting decisions and random delay $Y_i$. Under the sample-and-hold information structure, we establish an exact fixed-dimensional sufficient statistic and an embedded lifted MDP (Lemma~\ref{l1}), enabling tractable average-cost analysis.}

				\item\textbf{Methodology Design:} We design \textsc{QuickBLP}, a computationally efficient single-layer algorithm that addresses limitations associated with iterative re-convergence common in multi-layer per-sample algorithms, \emph{e.g.}, \cite[Algorithm 1]{10807024}, \cite[Algorithm 1]{bedewy2021optimal}, and \cite[Section IV.C]{10621420}. This algorithm is designed based on a key analytical insight that the optimal solution exhibits a threshold structure and can be obtained through a two-stage process. The first stage determines the \textit{Dinkelbach root} of a Bellman variant, for which we develop \textsc{OnePDSI}, a \textit{Cauchy sequence} that converges asymptotically to the root without requiring \textit{re-convergence}. The second stage involves finding the \textit{Dinkelbach root} of a \textit{per-sample} constrained Markov Decision Process (CMDP), which traditionally necessitates multiple CMDP solutions and suffers from \textit{re-convergence}. We resolve this challenge by proving that the \textit{Dinkelbach root} can be explicitly calculated through the optimal value of a linear program (LP), enabling direct solution to the root through a single LP solution. To the best of our knowledge, \textsc{QuickBLP} is the first framework to tackle constrained partially observable SMDPs \textbf{through a streamlined single-layer flow}, fundamentally improving efficiency by eliminating re-convergence overhead.
				
				\item \textbf{Theoretical Rigor and Convergence:} We significantly advance our previous work \cite{li2024sampling} by resolving critical convergence limitations. To overcome inherent divergence risks, we design two novel algorithms in this paper: Bisec-$\tau$-RVI and \textsc{OnePDSI}. Although convergence analyses often rely on the Banach Contraction Mapping Theorem \cite[Theorem 6]{banach1922operations} \cite{DBLP:journals/ton/PanBSS23,10621420}, this method falls short in capturing the behavior of our models. Our approach departs from this tradition, providing rigorous proofs that both algorithms guarantee efficient \textit{exponential convergence} in \textit{worst-case} settings. These guarantees reinforce the reliability and efficiency of our algorithms in practical applications.
			\end{itemize}
			\subsection{Notations}
			\textcolor{black}{The main notations throughout this paper are summarized in Table \ref{tab:notation}.}
			\begin{table}[t]
				\color{black}
				\caption{Summary of Notations}
				\label{tab:notation}
				\centering
				\resizebox{0.5\textwidth}{!}{%
					\begin{tabular}{cl}
						\hline
						\textbf{Symbol} & \textbf{Description} \\ \hline
						$X_t$ & System state at time slot $t$ \\
						$a_t$ & Response action taken by the decision maker at time $t$ \\
						$u_t$ & Sampling action at time slot $t$ \\
						$\pi_t$ & History-dependent policy \\
						$\phi_t$ & State-dependent policy \\
						$\boldsymbol{\psi}$ & Policy composed by $\phi_{0:\infty}$\\
						$\boldsymbol{\psi}^{\lambda}$ & Policy in Problem \ref{p4} induced by parameter $\lambda$\\
						$\mathcal{H}_t(\cdot)$ & Sufficient statistics function \\
						$\Delta(t)$ & AoI at time $t$ \\
						$S_i$ & Time slot when the $i$-th sample is taken \\
						$D_i$ & Time slot when the $i$-th sample is delivered \\
						$Y_i$ & Random delivery delay between $S_i$ and $D_i$ \\
						$Z_i$ & Waiting time to sample the $i$-th sample \\
						$A_i$ & Holding-action taken at the $i$-th epoch \\
						$G_i$ & Lifted State of Lifted MDP at epoch $i$ \\
						$C(x,a)$ & Cost function given state $x$ and action $a$ \\
						$\lambda$ & Dinkelbach parameter \\
						$\mathscr{P}_\mathrm{MDP}(\lambda)$ & Transformed MDP tuple give parameter $\lambda$ \\
						$\phi_\lambda^*$ & Optimal policy of MDP $\mathscr{P}_\mathrm{MDP}(\lambda)$ \\
						$U(\lambda)$ & Optimal value of Problem \ref{p4} \\
						$V^\star(\cdot;\lambda)$ & Optimal relative value function given $\lambda$ \\
						${U}_K(\lambda),{V}_K(\gamma;\lambda)$ & RVI values in \eqref{RVIiteration} \\
						$\tilde{U}_K,\tilde{V}_K(\gamma;\lambda)$ & $\tau$-RVI values in \eqref{MRVI} \\
						$e_{\mathrm{U}}^{(K)}(\lambda)$,$e_{\mathrm{V}}^{(K)}(\cdot;\tau,\lambda)$ & Relative errors for the $K$-iteration in $\tau$-RVI given $\lambda$\\
						$\kappa$ & A parameter for \textsc{OnePDSI}\\
						$e_{\rho}^{(K)}$,$e_{\mathrm{W}}^{(K)}(\cdot;\kappa)$ & Relative errors for the $K$-iteration in \textsc{OnePDSI} given $\kappa$\\
						$W^\star(\cdot)$ & Variables in fixed-point equations \eqref{eqfunction} \\
						$\rho_K$,$\widetilde{W}_K(\cdot)$ & \textsc{OnePDSI} values in \eqref{prop2}\\
						$f_{\max}$ & Maximum average sampling rate\\
						$H(\lambda;f_{\max})$ & Optimal value of Problem \ref{p5}\\
						$\theta$ & Lagrangian multiplier\\
						$\mathcal{L}(\boldsymbol{\psi};\theta,\lambda,f_{\max})$ & Lagrangian function\\
						$\Upsilon(\theta,\lambda;f_{\max})$ & Lagrangian dual function\\
						$d(\lambda;f_{\max})$ & Optimal value of Problem \ref{p6}\\
						$Q^\star(f_{\max})$ & Optimal value of Problem \ref{p7}\\
						$\theta_{\lambda}^{\star}$ & Optimal variable $\theta$ for fixed $\lambda$ in Problem \ref{p6}\\
						$\mathcal{Q}^{\lambda}$ & Long-term average cost given policy $\phi_{\lambda}^*$ \\
						$\mathcal{F}^{\lambda}$ & Long-term average sampling rate given policy $\phi_{\lambda}^*$\\
						$\mathcal{F}^{\lambda^+}$ & Right limit of $\mathcal{F}^{\lambda}$\\
						$\mathcal{F}^{\lambda^-}$ & Left limit of $\mathcal{F}^{\lambda}$\\
						$f_{\max}^T$ & Sapling frequency threshold\\
						$\mathbf{P}_a$ & Transition probability matrix given action $a$ \\
						$\mathbf{P}^{n}$ & $n$-step transition probability matrix given action $a$ \\
						$\mathbf{P}_{i\times j}$ & the $(i,j)$-th entry of the transition matrix $\mathbf{P}$ \\
						$\rho^{\star}$ & Optimal value of Problem \ref{p3}; Root of $U(\lambda)$ \\
						$h^{\star}$ & Optimal value of Problem \ref{p1}; Root of $H(\lambda;f_{\max})$\\
						$\pi_a(\cdot)$ & Stationary distribution over states under $\mathbf{P}_a$\\
						\hline
					\end{tabular}
				}
			\end{table}
			
			\section{System Model and Problem Formulation }\label{section III}
			

			We consider a time-slotted\footnote{\textcolor{black}{As the proposed AR-MDP is formulated as an extension of the discrete-time MDP framework, a time-slotted system model is employed to maintain structural consistency with MDP. Therefore, all key variables including transmission delay $Y_i$, AoI $\Delta(k)$, and sampling time $S_i$ are accordingly defined and evolve over discrete time slots.}} \textit{age-aware remote MDP} problem illustrated in Fig. \ref{fig:figure2}(c). Let $X_t\in\mathcal{S}$ be the controlled source of interest at time slot $t$. The evolution of the source is a Markov decision process, characterized by the transition probability $\Pr(X_{t+1}|X_t,a_t)$\footnote{For short-hand notations, we use the transition probability matrix $\mathbf{P}_a$ to encapsulate the dynamics of the source given an action $a_t=a$.}, where $a_t\in\mathcal{A}$ represents the controlled action taken by the remote decision maker to control the source in the desired way. \textcolor{black}{We assume that both the state space $\mathcal{S}$ and the action space $\mathcal{A}$ are finite. This finite setting is a common starting point in infinite-horizon average-cost MDP literature \cite{puterman2014markov,bertsekas2012dynamic} and allows the proposed AR-MDP to inherit well-established \textit{optimality} and \textit{convergence results}. We view AR-MDP as a stepping stone toward more general models, and refer readers to \cite[Chap. 4.6]{bertsekas2012dynamic} for approaches that can be used to extend our framework to infinite spaces.}
			
			The sampler conducts the sampling action $u_t\in\{0,1\}$, with $u_t=1$ representing the sampling action and $u_t=0$ otherwise. Let $S_i$ be the sampling time of the $i$-th delivered packet, and $D_i$ be the corresponding delivery slot. The random channel delay of the $i$-th packet is denoted as $Y_i\in\mathcal{Y}\subseteq\mathbb{N}^+$, which is independent of the source $X_t$ and is bounded $\max[{Y_i}]<\infty$. The sampling times $S_0,S_1,\cdots$  record the time stamp when $u_t=1$, given by
			\begin{equation}\label{eq2}
				S_i=\max\{\textcolor{black}{t\in\mathbb{N}}\left|t\le D_i,u_t=1\right.\},\quad \forall i\in\mathbb{N},
			\end{equation}
			where the initial state of the system is $S_0=0$ and $\Delta(0)=\Delta_0$. 
			
			\textcolor{black}{Since the system is time-slotted, the AoI evolves in discrete steps. Specifically, for any slot $t$, let
				$i(t)\triangleq \max\{i: D_i\le t\}$ be the index of the most recently delivered update. Then $
				\Delta(t)=t-S_{i(t)}$. Hence, upon a delivery at $t=D_i$, the AoI drops to
				\begin{equation}\label{eq:AoI_drop_item}
					\Delta(D_i)=D_i-S_i=Y_i,
				\end{equation}
				which is \emph{not necessarily zero} unless the channel delay is zero. Moreover, during the interval
				$t\in[D_i,D_{i+1})$, no new observation is delivered and the age $\Delta(t)$ drifts with  $\Delta(t+1)=\Delta(t)+1$. The discrete-time AoI dynamics are shown in Fig. \ref{fig:figure1}.}
			
			At the sampling time $S_i,\forall i\in\mathbb{N}$, the state $X_{S_i}$ along with the corresponding time stamp $S_i$ is encapsulated into a packet $(X_{S_i},S_i)$, which is transmitted to a remote decision maker. \textcolor{black}{Upon receipt of the packet $(X_{S_i},S_i)$ at delivery time slot $D_i$, the \textit{observation history} at the decision maker is
				$
				\{(X_{S_j}, S_j,D_j): j \le i\}, t \in [D_i, D_{i+1}).
				$
				By employing \eqref{eq1}, for any time slot \( t \in\mathbb{N} \), the freshest available sample at the decision maker is $(X_{t-\Delta(t)},t-\Delta(t))$. As $t$ is known to the decision maker, the \textit{observation history} up to time slot $t$ is equivalently expressed by AoI, given as:}
			$
			\color{black}
			\{(X_{k-\Delta(k)}, \Delta(k)): k \le t,k\in\mathbb{N}\}.
			$
			
			\subsection{\textcolor{black}{Protocol and Assumptions}}
			Similar to \cite{DBLP:journals/tit/SunUYKS17}, we impose the following assumptions in sampling:
			\begin{itemize}
				\item[(S1)]  A new sample cannot be generated until the previous sample has been delivered. Specifically,
				\begin{equation}\label{eq4}
					S_{i+1}=D_i+Z_i,\quad Z_i\ge 0,\quad i\in\mathbb{N},
				\end{equation}
				where $Z_i$ is the sampling waiting time after the delivery at $D_i$.
				Consequently, the delivery time satisfies $D_i=S_i+Y_i$ for all $i\in\mathbb{N}$, where $Y_i$ is the random delay of the $i$-th sample.
				\textcolor{black}{Moreover, under (S1) and the delivery timeline, the observation is piecewise constant between two consecutive deliveries. 
					Specifically, for any $t\in[D_i,D_{i+1})$ we have $t-\Delta(t)=S_i$, and hence $
					O(t)=X_{t-\Delta(t)}=X_{S_i}, D_i\le t< D_{i+1},
					$
					while the staleness $\Delta(t)$ increases deterministically within the interval.}
				
				\item[(S2)] The inter-sample times $G_i=S_{i+1}-S_i$ form a regenerative process \cite[Section 6.1]{haas2006stochastic}. Hence, almost surely\footnote{This assumption also implies that the waiting time $Z_i$ is bounded, belonging to a subset of nature numbers with $Z_i\in\mathcal{Z}\subseteq\mathbb{N}$.},
				\begin{equation}\label{eq3}
					\lim_{i\to\infty} S_i=\infty,\qquad \lim_{i\to\infty} D_i=\infty.
				\end{equation}
			\end{itemize}
			
			\textcolor{black}{In addition, we adopt a \textit{holding-action} paradigm consistent with the timeline in Fig. \ref{fig:figure1}:}
			\begin{itemize}
				\item[(A1)] 
				The controlled action is updated only upon the delivery of a sample.
				That is, upon the delivery time $D_i$, the decision maker selects an updated action $A_i\in\mathcal{A}$ based on the available history, and then holds it constant until the next delivery:
				\begin{equation}\label{eq8}
					a_t = A_i,\quad D_i \le t < D_{i+1},\quad i\in\mathbb{N}.
				\end{equation} 	 
				\textcolor{black}{As a result, the underlying state continues to evolve according to the controlled Markov kernel, i.e., $X_{t+1}\sim P(\cdot\mid X_t,A_i)$ for all $t\in[D_i,D_{i+1})$.}
			\end{itemize}
			
			\begin{remark}
				\textcolor{black}{The modeling assumption (A1) is widely used in networked and remote control systems where the controller can revise its command only when a new measurement is received; see, e.g., \cite{Lee2006PassiveTeleop,dimarogonas2009event,hespanha2007survey}.
					Typical real-world systems that follow the \textit{holding-action} paradigm include:
					\begin{itemize}
						\item \textbf{Robotic manipulation and teleoperation:} In robotic teleoperation under constant communication delays, the controller continuously updates the motion command based on the current local state and the most recently received delayed remote state, effectively maintaining past remote information during delay periods \cite{Lee2006PassiveTeleop}.		
						\item \textbf{Multi-UAV or vehicular coordination:} each autonomous agent maintains its last chosen coordination strategy (e.g., formation control gain or following distance target) until updated state information is received from neighboring agents \cite{dimarogonas2009event}.			
						\item \textbf{Industrial supervisory control:} in process plants or smart grids, a supervisory controller holds the previously assigned operation mode (e.g., heater on/off state, pump flow rate setpoint) during communication gaps between control center and local devices \cite{hespanha2007survey}.
				\end{itemize}}
			\end{remark}

			\begin{remark}[Beyond holding actions]\label{rem:beyond_holding}
				\textcolor{black}{The holding-action rule \eqref{eq8} induces an epoch-based decision structure: between two consecutive deliveries, no new observation is received and the applied action is fixed, and the state continues to evolve according to the controlled Markov kernel under the held action. This structure is essential for compressing the growing history into a finite-dimensional sufficient statistic (Lemma~\ref{l1}) and for constructing a finite-state lifted MDP amenable to our average-cost optimality analysis and algorithms\footnote{\textcolor{black}{If (A1) is relaxed so that the decision maker can adapt $a_t$ within $t\in[D_i,D_{i+1})$ based on the deterministic AoI drift, then the sufficient statistic generally becomes belief-based: the conditional distribution of $X_t$ depends on the staleness and the within-epoch action sequence. This leads to a (belief-)MDP/POMDP formulation with either a continuous state space (belief simplex) or a significantly enlarged epoch action space (age-indexed action plans). We leave this non-holding extension as future work.}}.}
			\end{remark}

			\subsection{\textcolor{black}{Joint Sampling and Decision-Making Policy}}
			Let $\mathcal{I}_t=\{(X_{k-\Delta(k)},\Delta(k),u_{k-1},a_{k-1}): k\le t\}$ denote the history available to the decision maker up to time $t$.
			Under the protocol in \eqref{eq4} and \eqref{eq8}, the slot-level actions $(u_t,a_t)$ are induced by the epoch-level decisions $\{(A_i,Z_i)\}$ made at delivery epochs. A possibly randomized decision policy is a sequence of mappings from the history to a distribution over the joint action space $\{0,1\}\times\mathcal{A}$:
			\begin{equation}\label{eq3new}
				\pi_t:\mathcal{I}_t \rightarrow \mathcal{P}(\{0,1\}\times \mathcal{A}),
			\end{equation}
			where $\mathcal{P}(\cdot)$ is a \textit{simplex} space which represents the probability that an action is taken. 
			
			\textcolor{black}{An epoch-based policy is defined as
				\begin{equation}
					\phi:\mathcal{S}\times\mathcal{Y}\times\mathcal{A}\to \mathcal{P}(\mathcal{A}\times\mathcal{Z}),
				\end{equation}
				which induces the slot-level policy $\pi_t$ via $S_{i+1}=D_i+Z_i$ and $a_t=A_i$ on $[D_i,D_{i+1})$.
				For completeness, a slot-level policy can be written as $\pi_t:\mathcal I_t\to \mathcal P(\{0,1\}\times\mathcal A)$, which in our setting is induced by $\phi$. (See Lemma \ref{l1}).}

			We consider a bounded cost function $\mathcal{C}(X_t,a_t)<\infty$, which represents the \textit{immediate cost} incurred when action $a_t$ is taken in state $X_t$. Under the above assumptions, the objective of the system is to design the optimal joint sampling and decision policies at each time slot, \emph{i.e.}, $\pi_0, \pi_1,\pi_2\cdots$, to minimize the \textit{long-term average cost}, subject to a \textit{long-term average sampling frequency constraint}:
			
						\begin{problem}[\textit{Joint Design of Sampling and Decision Processes under Sampling Frequency Constraint}]\label{p1}
				\begin{subequations}
					\begin{align}\label{eq5}
						&\inf _{\pi_{0: \infty}} \limsup _{T \rightarrow \infty} \frac{1}{T} \mathbb{E}_{\pi_{0: \infty}}\left[\sum_{t=1}^T \mathcal{C}(X_t, a_t)\right]\\
						&\text{s.t. } \liminf_{N \to \infty} \frac{1}{N} \mathbb{E}_{\pi_{0: \infty}}\left[\sum_{i=0}^{N-1}(S_{i+1}-S_i)\right] \geq \frac{1}{f_{\text{max}}},
					\end{align}
				\end{subequations}
				where $\pi_t$ is the joint sampling and decision-making policy defined by \eqref{eq3new}, $f_{\text{max}}$ represents the maximum allowed sampling frequency, and the expectation $\mathbb{E}_{\pi_{0:\infty}}$ is taken over the stochastic processes $(X_1,X_2,\cdots)$ and $(Y_0,Y_1,\cdots)$ under given policies $\pi_{0:\infty}$.	
			\end{problem}
			
			In practice, the overhead for information updates will increase with the average sampling frequency. Hence, Problem \ref{p1} represents a tradeoff between remote decision-making utility and communication overhead. Since \textit{age} $\Delta(k)$ is available at the decision maker as \textit{side information} to facilitate more informed decision-making, we call this problem an {age-aware remote MDP} problem. This problem aims at determining the distribution of joint sampling and controlled actions $(u_t,a_t)$ based on the history $\mathcal{I}_t$, such that the {long-term average cost} subject to the {sampling frequency constraint} is minimized.
			
			We remark that we will study Problem \ref{p1} both without \textcolor{black}{(see Section \ref{sectionIV} and \ref{sectionIV2}) and with (see Section \ref{sectionV} and \ref{Onelayerwithrate})} the sampling frequency constraint. To distinguish these two problems, we use $h^\star$ to denote the optimal value of the problem with the rate constraint and $\rho^\star$ to denote the optimal value of the problem without the sampling frequency constraint.
			
			\subsection{Sufficient Statistics of History}\label{subsectionA}
			\textcolor{black}{In principle, the joint policy $\pi_t$ maps the growing history $\mathcal I_t$ to a distribution over $(u_t,a_t)$, which leads to the classical \emph{curse of history} and makes direct dynamic programming intractable.}
			
			\textcolor{black}{More importantly, under delayed observations, optimal decisions generally depend not only on the latest delivered sample value but also on its \emph{staleness}. 
				In our time-slotted model, the staleness at delivery epochs is $\Delta(D_i)=Y_i$. 
				{Within the interval $t\in[D_i,D_{i+1})$, the freshest delivered sample remains $X_{S_i}$ and the AoI drifts deterministically as $\Delta(t)=Y_i+(t-D_i)$, i.e., the time since sampling equals $t-S_i=\Delta(t)$.}
				This determines how many Markov transitions have occurred since the sampled state. Consequently, even conditioned on the same delivered sample $X_{S_i}$ and held actions, different $Y_i$ (and hence different $\Delta(t)$ in the epoch $t\in[D_i,D_{i+1})$) induce different conditional distributions (beliefs) of the current state.}
			
			\textcolor{black}{Crucially, under the holding-action rule \eqref{eq8}, no new observation is delivered between two consecutive deliveries and the action is kept constant. This induces an epoch-based decision structure, under which the growing history can be compressed into a finite-dimensional sufficient statistic. The resulting statistic, given in Lemma~\ref{l1}, serves as the state of the lifted MDP and enables our subsequent analysis and algorithms.}

			\textcolor{black}{A sufficient statistic is defined as follows.}
			\begin{definition}\label{definition1}
				A sufficient statistic of $\mathcal{I}_t$ is a function $\mathcal{H}_t(\mathcal{I}_t)$, such that
				\begin{equation}\label{definition1eq}
					\min_{a_{t:T}}\mathbb{E}\left[\sum_{k=t}^T \mathcal{C}(X_k, a_k)|\mathcal{I}_t\right]=\min_{a_{t:T}}\mathbb{E}\left[\sum_{k=t}^T \mathcal{C}(X_k, a_k)|\mathcal{H}_t(\mathcal{I}_t)\right]
				\end{equation} holds for any $T>t$. 
			\end{definition} 
			This definition suggests that decision-making that leverages the \textit{sufficient statistics} $\mathcal{H}_t(\mathcal{I}_t)$ can achieve the same performance as using the complete history $\mathcal{I}_t$. Thus, the compression $\mathcal{H}_t(\mathcal{I}_t)$ is \textit{sufficient} for the agent to implement an optimal action, enabling the design of efficient policies that maintain optimality while overcoming the \textit{curse of history}. \textcolor{black}{In particular, the time stamp carried by each packet makes the staleness observable at delivery: upon receiving $(X_{S_i},S_i)$ at time $D_i$, the decision maker can compute $\Delta(D_i)=D_i-S_i=Y_i$. This staleness information enters the sufficient statistic and therefore the optimal delivery-epoch policy.}
			The following Lemma establishes efficient \textit{sufficient statistics} of the \textit{history} $\mathcal{I}_t$ in a piecewise manner.
			\begin{Lemma}\label{l1}
				(Sufficient Statistics). During the interval $t\in [D_{i},D_{i+1})$, $\mathcal{G}_i=(X_{S_i},Y_i,A_{i-1})\in\mathcal{S}\times\mathcal{Y}\times\mathcal{A}$ is a sufficient statistic of $\mathcal{I}_t$. In addition, determining the optimal sampling actions $u_t$ under condition (\ref{eq4}) is equivalent to determining the optimal sampling time $S_{i+1}$, or the optimal waiting time $Z_i$.
			\end{Lemma}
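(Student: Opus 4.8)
The plan is to establish sufficiency of $\mathcal{G}_i=(X_{S_i},Y_i,A_{i-1})$ by showing that, conditioned on $\mathcal{I}_t$ for $t\in[D_i,D_{i+1})$, the predictive law of the entire future — the source path $(X_k)_{k\ge t}$, all observations subsequently delivered to the decision maker, and hence the cost attainable under any continuation of the controlled policy — is a function of $\mathcal{I}_t$ only through $\mathcal{G}_i$ (and deterministic elapsed-time data). Granting this, both conditional minimizations in Definition~\ref{definition1} are minimizations of the same functional of the same conditional law, so they coincide, which is exactly what it means for $\mathcal{G}_i$ to be a sufficient statistic.

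First I would pin down the ordering of epochs: condition $i$) gives $S_{i+1}\ge D_i$ and $D_i=S_i+Y_i$, so $D_{i-1}\le S_i<D_i\le S_{i+1}<D_{i+1}$; in particular $S_i\in[D_{i-1},D_i)$, the interval on which the piecewise-constant control assumption fixes $a_t=A_{i-1}$. Hence $X_{D_i}$ is obtained from $X_{S_i}$ by $Y_i$ consecutive applications of the kernel $\mathbf{P}_{A_{i-1}}$. Next I would invoke the Markov property of $\Pr(X_{k+1}\mid X_k,a_k)$ together with independence of the delays from $(X_t)$ to conclude that $\Pr(X_{D_i}\in\cdot\mid\mathcal{I}_{D_i})=\mathbf{P}_{A_{i-1}}^{Y_i}(X_{S_i},\cdot)$, so that all earlier observations, ages and controls recorded in $\mathcal{I}_{D_i}$ are rendered irrelevant by conditional independence — the conditional law of $X_{D_i}$ depends on $\mathcal{I}_{D_i}$ only through $\mathcal{G}_i$. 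Then I would propagate this forward: for $k\ge D_i$ the path $(X_k)$ evolves by the Markov kernels indexed by the future controls, the future delays $Y_{i+1},Y_{i+2},\dots$ are independent of $\mathcal{I}_t$, and every future observation $(X_{S_j},S_j)$, $j\ge i+1$, is a measurable function of the source path and the policy-chosen sampling times; consequently the joint law of all future randomness given $\mathcal{I}_t$ is determined by the law of $X_{D_i}$, i.e.\ by $\mathcal{G}_i$ and $t-D_i$. Taking $\min_{a_{t:T}}\mathbb{E}[\sum_{k=t}^T\mathcal{C}(X_k,a_k)\mid\cdot]$ with respect to this common conditional law gives the identity required by Definition~\ref{definition1}.

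For the second assertion, observe that $\mathcal{G}_i$ already determines $S_i$ (through the delivered timestamp) and $Y_i$, hence $D_i=S_i+Y_i$. Condition \eqref{eq4} states $S_{i+1}=D_i+Z_i$ with $Z_i\ge0$, so for fixed $D_i$ the assignment $Z_i\mapsto S_{i+1}$ is a bijection onto $\{D_i,D_i+1,\dots\}$. Moreover conditions $i$) and \eqref{eq4} together force exactly one sampling instant per cycle $[D_i,D_{i+1})$, located at $S_{i+1}$, so on that cycle the sampling indicator is $u_t=\mathbbm{1}[t=S_{i+1}]$; thus prescribing $(u_t)_{t\in[D_i,D_{i+1})}$ is equivalent to prescribing $S_{i+1}$, equivalently $Z_i$, and ranging over all cycles yields the claimed equivalence between optimizing over $(u_t)_{t\ge0}$ and over $(Z_i)_{i\in\mathbb{N}}$.

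The step I expect to be the main obstacle is the forward-propagation argument — rigorously showing that the predictive law of the future \emph{observation} stream, not merely of $X_{D_i}$, collapses to a function of $\mathcal{G}_i$. Since the future observations are the source states sampled at policy-dependent (random) times $S_j$, the clean way is to apply the Markov property of $(X_k)$ at these sampling epochs, which are stopping times for the filtration generated by the observations and controls, and to combine it with the independence of $(Y_j)$ from the source; one must check that conditioning on the full history $\mathcal{I}_t$ never reveals more about the continuation than $\mathcal{G}_i$ does. The remaining pieces — the epoch ordering, the identification of $A_{i-1}$ as the control active on $[S_i,D_i)$, and the $Z_i\leftrightarrow S_{i+1}$ bijection — are routine once this conditional-independence picture is set up.
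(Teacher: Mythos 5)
Your proposal is correct and follows essentially the same route the paper takes (the proof is deferred to \cite[Appendix~A]{li2024sampling}): use the epoch ordering $D_{i-1}\le S_i<D_i\le S_{i+1}<D_{i+1}$, the Markov property of the source with the piecewise-constant control $A_{i-1}$ on $[S_i,D_i)$, and independence of the delays to show the predictive law of $X_{D_i}$ and of all future observations given $\mathcal{I}_t$ factors through $\mathcal{G}_i=(X_{S_i},Y_i,A_{i-1})$, with the $u_t\leftrightarrow S_{i+1}\leftrightarrow Z_i$ equivalence following from the one-sample-per-busy-cycle structure of \eqref{eq4}. No gaps worth flagging.
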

			\begin{proof}
				\textcolor{black}{See Appendix \ref{l1proof}.}
			\end{proof}

			Lemma~\ref{l1} indicates that solving Problem~\ref{p1} \textcolor{black}{over the history-dependent slot-level policies}
			\begin{equation}
				\pi_t:\mathcal{I}_t\rightarrow\mathcal{P}(\{0,1\}\times\mathcal{A})
			\end{equation}
			is equivalent to \textcolor{black}{solving over delivery-epoch policies that depend only on $\mathcal{G}_i$:}
			\begin{equation}
				\phi_i:\mathcal{S}\times\mathcal{Y}\times\mathcal{A}\rightarrow\mathcal{P}(\mathcal{A}\times\mathcal{Z}),\quad \forall i\in\mathbb{N},
			\end{equation}
			which maps the sufficient statistic $\mathcal{G}_i=(X_{S_i},Y_i,A_{i-1})$ to a distribution over the joint epoch actions $(A_i,Z_i)\in\mathcal{A}\times\mathcal{Z}$.
			\textcolor{black}{Consequently, the AoI at delivery, equivalently the realized delay $Y_i=\Delta(D_i)$, is an explicit argument of the optimal policy via $\mathcal{G}_i$. This provides a concrete AoI-related conclusion: even for the same delivered sample value $X_{S_i}$, different staleness values $Y_i$ can induce different beliefs and thus the optimal epoch sampling and decision-making may depend on $Y_i$.}
			\textcolor{black}{Because the Cartesian product $\mathcal{S}\times\mathcal{Y}\times\mathcal{A}$ does not grow with time, this reformulation avoids the exponential growth of the original information set $\mathcal{I}_t$.}
			From assumptions (S1) and (S2), we reformulate Problem~\ref{p1} as:
			
			\begin{problem}[\textit{From History-Dependent to State-Dependent Policy}]\label{p2}
				\begin{subequations}
					\begin{align}\label{eq6}
						&{\color{black}\inf _{\boldsymbol{\psi}} \limsup _{N \rightarrow \infty} \frac{\mathbb{E}_{\boldsymbol{\psi}}\left[\sum_{i=0}^{N-1}\sum_{t=D_{i}}^{D_{i+1-1}} \mathcal{C}(X_t, a_t)\right]}{\mathbb{E}_{\boldsymbol{\psi}}[D_N]}} \\
						&\text{s.t. } \liminf_{N \to \infty} \frac{1}{N} \mathbb{E}_{\boldsymbol{\psi}}\left[\sum_{i=1}^{N}(S_{i+1}-S_i)\right] \geq \frac{1}{f_{\text{max}}},
					\end{align}
				\end{subequations}
				Here, we use $\boldsymbol{\psi}$ to denote the policy $\{\phi_i\}_{i=0}^{\infty}$.
			\end{problem}

			\section{Optimal Sampling Without Rate Constraint: A Two-Layer Persample Solution}\label{sectionIV}
			
			In this section, we address the unconstrained problem to determine $\rho^{\star}$. A series of theoretical results are developed through a divide-and-conquer approach. In subsection \ref{subsectionB}, we rewrite the unconstrained problem into a \textit{non-linear fractional program}. By utilizing a \textit{Dinkelbach-like} method \cite{dinkelbach1967nonlinear}, we transform the \textit{non-linear fractional programming} into an infinite horizon MDP problem given a \textit{Dinkelbach parameter}. The problem then becomes finding the \textit{Dinkelbach parameter} such that the optimal value of the standard MDP is zero, which is equivalently a root-finding problem. To search for the root, in subsection \ref{sectionIIC} we review a typical \textit{two-layer nested} algorithm, namely \textit{Bisec-RVI} \cite[Algorithm 2]{li2024sampling}\footnote{The idea of the {two-layer nested} algorithm has been applied in \cite{DBLP:journals/tit/SunUYKS17,sun2019samplingwiener,DBLP:journals/tit/BedewySKS21} and \cite{sun2019sampling} to achieve Age-optimal and Mean Square Error (MSE)-optimal sampling and scheduling.}.  We note that the inner-layer \textit{Relative Value Iteration} (RVI) algorithm of \textit{Bisec-RVI} may suffer from divergence and thus propose an improved \textit{Bisec-$\tau$-RVI} algorithm to achieve provable convergence. 
			
			\subsection{A Reformulation of Problem \ref{p2}}\label{subsectionB}
			\textcolor{black}{From the action-holding assumption (A1), we can rewrite the objective function in Problem \ref{p2} as: } 
			\begin{subequations}\color{black}
				\begin{align}
						&\limsup _{N \rightarrow \infty} \frac{\mathbb{E}_{\boldsymbol{\psi}}\left[\sum_{i=0}^{N-1}\sum_{t=D_{i}}^{D_{i+1-1}} \mathcal{C}(X_t, a_t)\right]}{\mathbb{E}_{\boldsymbol{\psi}}[D_N]} \\&=\lim _{\mathrm{n} \rightarrow \infty} \frac{\sum_{i=0}^{N-1} \mathbb{E}_{\boldsymbol{\psi}}\left[\sum_{t=\it{D}_{i}}^{\it{D}_{i+1}-1} \mathcal{C}(X_t, A_i)\right]}{\sum_{i=0}^{N-1} \mathbb{E}_{\boldsymbol{\psi}}\left[\it{D}_{i+1}-\it{D}_{i}\right]}\\&=\lim _{\mathrm{n} \rightarrow \infty} \frac{\sum_{i=0}^{n-1} \mathbb{E}_{\boldsymbol{\psi}}\left[\sum_{t=\it{D}_{i}}^{\it{D}_{i+1}-1} \mathcal{C}(X_t, A_i)\right]}{\sum_{i=0}^{n-1} \mathbb{E}_{\boldsymbol{\psi}}\left[Y_{i+1}+Z_i\right]}.
				\end{align}
			\end{subequations}
			\textcolor{black}{We can then express the \textit{unconstrained} version of Problem~\ref{p2} as the epoch-by-epoch variant:}
			
			\begin{problem}[\textit{\textcolor{black}{Epoch-by-Epoch} Reformulation}]\label{p3}
				\begin{equation}
					\rho^{\star} \triangleq \inf _{\boldsymbol{\psi}}\lim _{\mathrm{N} \rightarrow \infty} \frac{\sum_{i=0}^{N-1} \mathbb{E}_{\boldsymbol{\psi}}\left[\sum_{t=\it{D}_{i}}^{\it{D}_{i+1}-1} \mathcal{C}(X_t, A_i)\right]}{\sum_{i=0}^{N-1} \mathbb{E}_{\boldsymbol{\psi}}\left[Y_{i+1}+Z_i\right]}.
				\end{equation}
			\end{problem}

			Problem \ref{p3} is a \textit{non-linear fractional program}, which is challenging due to its fractional nature. To simplify this problem, we consider the following sequential decision process with Dinkelbach parameter $\lambda\ge0$: 
			
			\begin{problem}[\textit{Standard Infinite-Horizon Sequential Decision Process with Dinkelbach Parameter $\lambda$}]\label{p4}
				\begin{equation}\label{p4eq}
					\begin{aligned}
						&U(\lambda)\triangleq\\&\inf _{\boldsymbol{\psi}^\lambda} \lim _{\mathrm{n} \rightarrow \infty}\frac{1}{n} {\sum_{i=0}^{n-1}\left\{ \mathbb{E}\left[\sum_{t=\it{D}_{i}}^{\it{D}_{i+1}-1} \mathcal{C}(X_t, A_i)\right]-\lambda\mathbb{E} \left[Z_i+Y_{i+1}\right]\right\}}.
					\end{aligned}
				\end{equation}
			\end{problem}
			
			By similarly applying the Dinkelbach-like method for \textit{non-linear fractional programming} \cite{dinkelbach1967nonlinear}, we can obtain the following lemma:
			\begin{Lemma}\label{l3}
				The following assertions hold:\\
				(i). $\rho^{\star} \gtreqless \lambda \text { if and only if } U(\lambda) \gtreqless 0 \text {. }$\\
				(ii). When $U(\lambda)=0$, the policy solutions to Problem \ref{p4} are equivalent to those of Problem \ref{p3}.\\
				(iii). $U(\lambda)=0$ has a unique root, with $U(\rho^{\star})=0$.
			\end{Lemma}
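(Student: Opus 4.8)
The plan is to run the classical Dinkelbach parametrization \cite{dinkelbach1967nonlinear}, adapted to the long-run-average (ratio-of-time-averages) setting, and to read all three claims off a single per-policy identity. For a policy $\boldsymbol{\psi}$, abbreviate the $i$-th cycle cost and cycle length by $c_i(\boldsymbol{\psi})=\mathbb{E}[\sum_{t=D_i}^{D_{i+1}-1}\mathcal{C}(X_t,A_i)]$ and $\ell_i(\boldsymbol{\psi})=\mathbb{E}[Z_i+Y_{i+1}]$, and set $\bar{c}(\boldsymbol{\psi})=\lim_n\frac1n\sum_{i=0}^{n-1}c_i$, $\bar{\ell}(\boldsymbol{\psi})=\lim_n\frac1n\sum_{i=0}^{n-1}\ell_i$. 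The regenerative-process hypothesis underlying Problem \ref{p3} and Problem \ref{p4}, together with $\lim_i D_i=\infty$, guarantees these limits exist and that $R(\boldsymbol{\psi}):=\lim_n\frac{\sum c_i}{\sum \ell_i}=\bar{c}(\boldsymbol{\psi})/\bar{\ell}(\boldsymbol{\psi})$, while the Problem \ref{p4} objective equals $V_\lambda(\boldsymbol{\psi}):=\lim_n\frac1n\sum_{i=0}^{n-1}(c_i-\lambda \ell_i)=\bar{c}(\boldsymbol{\psi})-\lambda\bar{\ell}(\boldsymbol{\psi})$, so that $U(\lambda)=\inf_{\boldsymbol{\psi}}V_\lambda(\boldsymbol{\psi})$. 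Since $Y_i\in\mathbb{N}^+$ and $Z_i\in\mathcal{Z}$ are bounded we have $1\le \ell_{\min}\le\bar{\ell}(\boldsymbol{\psi})\le \ell_{\max}<\infty$ uniformly in $\boldsymbol{\psi}$, and boundedness of $\mathcal{C}$ makes $R(\boldsymbol{\psi})$ uniformly bounded and $U(\lambda)$ finite. The identity I would lean on throughout is $V_\lambda(\boldsymbol{\psi})=\bar{\ell}(\boldsymbol{\psi})\bigl(R(\boldsymbol{\psi})-\lambda\bigr)$, whence $V_\lambda(\boldsymbol{\psi})$ and $R(\boldsymbol{\psi})-\lambda$ always share the same sign, with matching quantitative bounds controlled by $\ell_{\min},\ell_{\max}$.

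For claim (i), I would pass to the infimum over $\boldsymbol{\psi}$ in this identity. If $\rho^\star>\lambda$ then $R(\boldsymbol{\psi})-\lambda\ge\rho^\star-\lambda>0$ for every $\boldsymbol{\psi}$, so $V_\lambda(\boldsymbol{\psi})\ge \ell_{\min}(\rho^\star-\lambda)$ and hence $U(\lambda)\ge \ell_{\min}(\rho^\star-\lambda)>0$. If $\rho^\star=\lambda$, the same bound gives $U(\lambda)\ge0$, while picking $\boldsymbol{\psi}_\varepsilon$ with $R(\boldsymbol{\psi}_\varepsilon)<\rho^\star+\varepsilon$ yields $V_\lambda(\boldsymbol{\psi}_\varepsilon)<\ell_{\max}\varepsilon$ and thus $U(\lambda)\le0$, so $U(\lambda)=0$. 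If $\rho^\star<\lambda$, then $\inf_{\boldsymbol{\psi}}R(\boldsymbol{\psi})<\lambda$ produces some $\boldsymbol{\psi}_0$ with $R(\boldsymbol{\psi}_0)<\lambda$, whence $U(\lambda)\le V_\lambda(\boldsymbol{\psi}_0)<0$. These three implications partition both sides and are mutually exclusive, so by trichotomy each is in fact an equivalence, which is exactly $\rho^\star\gtreqless\lambda$ iff $U(\lambda)\gtreqless 0$.

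For (ii): claim (i) at $\lambda=\rho^\star$ gives $U(\rho^\star)=0$, and then for any policy $\boldsymbol{\psi}$ the identity forces $V_{\rho^\star}(\boldsymbol{\psi})=0$ iff $R(\boldsymbol{\psi})=\rho^\star$ (using $\bar{\ell}(\boldsymbol{\psi})>0$), i.e. $\boldsymbol{\psi}$ attains the Problem \ref{p4} optimum at $\lambda=\rho^\star$ iff it attains the Problem \ref{p3} optimum; hence the two optimal-policy sets coincide (trivially so if both are empty). For (iii): $U(\rho^\star)=0$ is already in hand, and uniqueness follows from strict monotonicity of $U$ — for $\lambda_1<\lambda_2$ we have $V_{\lambda_2}(\boldsymbol{\psi})=V_{\lambda_1}(\boldsymbol{\psi})-(\lambda_2-\lambda_1)\bar{\ell}(\boldsymbol{\psi})\le V_{\lambda_1}(\boldsymbol{\psi})-(\lambda_2-\lambda_1)\ell_{\min}$, and taking infima gives $U(\lambda_2)\le U(\lambda_1)-(\lambda_2-\lambda_1)\ell_{\min}<U(\lambda_1)$, so $U$ is strictly decreasing and thus has at most one zero, necessarily $\rho^\star$.

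The algebra above is routine; the step I expect to be the main obstacle is the preliminary one — verifying that, for every policy in the admissible class, the time-averages $\bar{c}(\boldsymbol{\psi})$, $\bar{\ell}(\boldsymbol{\psi})$ and the partial-sum ratio $R(\boldsymbol{\psi})$ genuinely exist and that the $\limsup$ in Problem \ref{p2} collapses to a limit equal to $\bar{c}/\bar{\ell}$. This is where the regenerative-process assumption and the boundedness of $Y_i$, $Z_i$ and $\mathcal{C}$ do the real work, and some care is needed to confirm $\ell_{\min}\ge 1$ and $\ell_{\max}<\infty$ hold uniformly over the whole policy class, so that the sign-preservation survives passing to the infimum. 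A minor point to state cleanly is that the policy class in Problem \ref{p4} coincides with that of Problem \ref{p3} — the superscript in $\boldsymbol{\psi}^\lambda$ merely flags the $\lambda$-dependence of the objective — so that the infima appearing in (i) and the set comparison in (ii) are taken over a common domain.
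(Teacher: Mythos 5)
Your argument is correct and follows essentially the same route as the paper, which defers the proof to \cite[Appendix B]{li2024sampling} and explicitly invokes the Dinkelbach-like transformation of \cite{dinkelbach1967nonlinear}: the per-cycle identity $V_\lambda(\boldsymbol{\psi})=\bar{\ell}(\boldsymbol{\psi})\,(R(\boldsymbol{\psi})-\lambda)$ with $\bar{\ell}$ bounded away from zero and from infinity, the trichotomy for (i), the sign argument for (ii), and strict monotonicity of $U$ for uniqueness in (iii) are exactly the standard steps used there, and your reliance on the regenerative assumption to convert the time average into the ratio form is the same hypothesis the paper uses to pass from Problem \ref{p2} to Problem \ref{p3}.
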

			\begin{proof}
				\textcolor{black}{See Appendix \ref{proofofl2}.}
			\end{proof}
			
			This key lemma enables the formulation of the following two-layer nested approach to determine $\rho^\star$.
			
			\begin{figure}[tbp]
				\centering
				\includegraphics[width=0.85\linewidth]{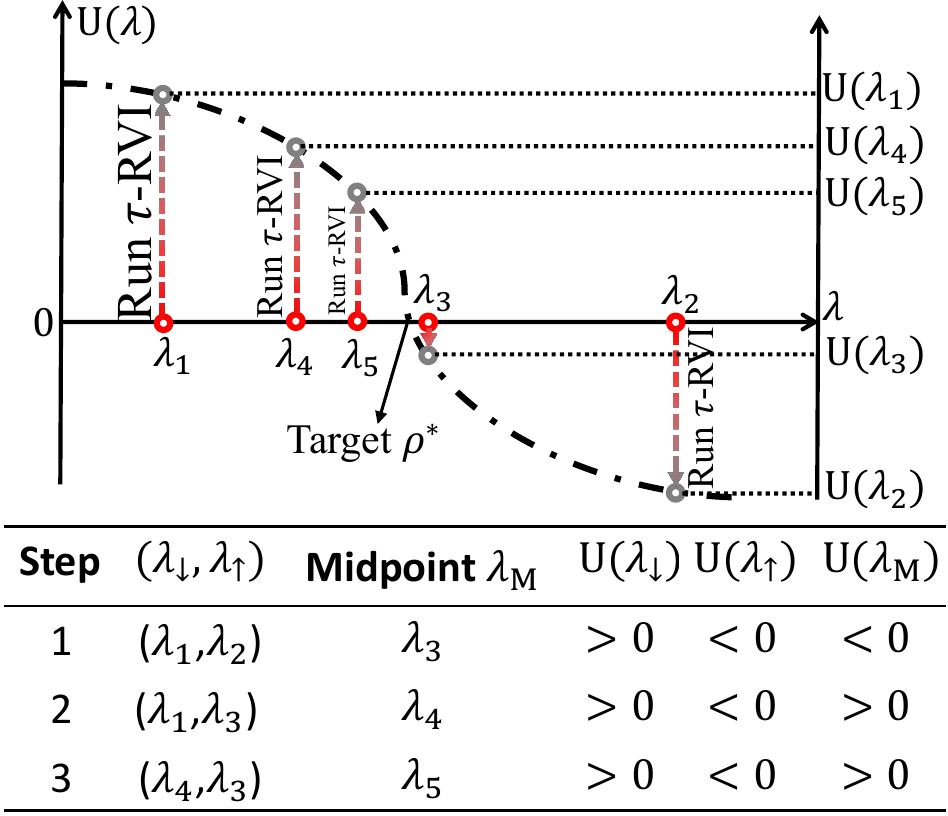}
				\caption{Bisection search to find the root of the implicit function $U(\lambda)$. The implicit function $U(\lambda)$ is approximated using a value iteration approach. The interval containing the root, denoted by $(\lambda_{\downarrow}, \lambda_{\uparrow})$, is halved at each outer-layer iteration, and this process eventually converges to the unique root $\rho^{\star}$.}
				\label{fig:figurealgorithm1}
			\end{figure}
			\begin{algorithm}[t]
				\caption{\textit{Two-layer approaches for $\rho^\star$}}
				\label{Algorithm 1}
				\LinesNumbered
				\KwIn{Tolerence $\epsilon>0$, MDP $\mathscr{P}_{\mathrm{MDP}}(\lambda)$}
				Initialization: $\lambda_{\uparrow}=\min_a\sum_{s\in\mathcal{S}}{\pi}_a(s)\cdot \mathcal{C}(s,a)$, $\lambda_{\downarrow}=\min_{s,a}\mathcal{C}(s,a)$\;
				\While {$\lambda_{\uparrow}-\lambda_{\downarrow}\ge \epsilon$}
				{
					$\lambda=(\lambda_{\uparrow}+\lambda_{\downarrow})/2$\;
					Run RVI (Iteration \ref{RVI}) or $\tau$-RVI (Iteration \ref{convergence:theorem2}) to solve $\mathscr{P}_{\mathrm{MDP}}(\lambda)$ and calculate $U(\lambda)$\;
					\If{$U(\lambda)>0$}
					{
						$\lambda_{\downarrow}=\lambda$\;
					}
					\Else
					{$\lambda_{\uparrow}=\lambda$\;}			
				}
				\KwOut{$\rho^{\star}=\lambda$}
			\end{algorithm}
			\subsection{Two-layer Approaches: Bisec-RVI and Bisec-$\tau$-RVI}\label{sectionIIC}
			Following Lemma \ref{l3}, solving Problem \ref{p3} reduces to solving Problem \ref{p4} to determine the optimal value $U(\lambda)$ while simultaneously finding the unique root $\rho^{\star}$ of the implicit function $U(\lambda)$, which can be solved using a {two-layer nested} algorithm as shown in Fig. \ref{fig:figurealgorithm1} and Algorithm \ref{Algorithm 1}. In the inner layer, value iteration approaches are applied to approximate the optimal value of Problem \ref{p4}, $U(\lambda)$, by resolving the reformulated MDP $\mathscr{P}_{\mathrm{MDP}}(\lambda)$ detailed in subsection \ref{MDP}. In the outer layer, a bisection search algorithm approximates the unique root $\rho^{\star}$. Conventionally, the RVI algorithm is applied in the inner layer to iteratively approximate $U(\lambda)$ \cite{li2024sampling,sun2019samplingwiener}, but it proves to be divergent in our formulation (see Fig. \ref{convergence1}). To address this limitation, we propose $\tau$-RVI in this paper and present a rigorous convergence analysis. The condition and the rationale of the convergence will be discussed in this section.

			\subsubsection{Inner-Layer MDP Given Dinkelbach Parameter $\lambda$}\label{MDP}
			To solve the inner layer sequential decision process in Problem \ref{p4}, we reformulate it as an equivalent standard infinite-horizon MDP. A standard MDP is typically described by a \textit{quadruple}: the state space, the action space, the transition probability, and the cost function. This subsection details this \textit{quadruple}. The MDP with Dinkelbach parameter $\lambda$ is denoted as $\mathscr{P}_{\mathrm{MDP}}(\lambda)$:
			\begin{itemize}
				\item \textbf{\ul{State Space}}: the state of the equivalent MDP is the sufficient statistics $\mathcal{G}_i=(X_{S_i},Y_i,A_{i-1})\in\mathcal{S}\times\mathcal{Y}\times\mathcal{A}$, as established in Lemma \ref{l1}.
				\item \textbf{\ul{Action Space}}: the action space of the MDP is composed of the tuple $(Z_i,A_i)\in\mathcal{\mathcal{Z}\times\mathcal{A}}$, where $Z_i$ is the sampling waiting time and $A_i$ is the controlled action that controls the source.
				\item \textbf{\ul{Transition Probability}}: The transition probability is defined by $\Pr(\mathcal{G}_{i+1}|\mathcal{G}_i,Z_i,A_i)$. We have the transition probability established in \eqref{tran}, \textcolor{black}{whose detailed proof is given in Appendix \ref{appendixc}}:
				\begin{equation}\label{tran}
					\begin{aligned}
						&\Pr(\mathcal{G}_{i+1}=(s',\delta',a')|\mathcal{G}_i=(s,\delta,a),Z_i,A_i)
						\\
						&=\Pr(Y_{i+1}=\delta')\cdot[\mathbf{P}_a^{\delta}\cdot\mathbf{P}_{A_i}^{Z_i}]_{s\times s'}\cdot\mathbbm{1}\{a'=A_i\},
					\end{aligned}
				\end{equation}
				where $[\mathbf{P}]_{s\times s'}$ denotes the element located at the $s$-th row and $s'$-th column of the matrix $\mathbf{P}$.
				
				\item \textbf{\ul{Cost Function}}: the cost function is typically a real-valued function over the state space and the action space. We denote it as $g(\mathcal{G}_i, Z_i, A_i)$ and will show how to tailor the cost function to establish equivalence with Problem \ref{p4}. 
				\begin{Lemma}\label{l4}
					If the cost function is defined by 
					\begin{equation}\label{cfun}
						\begin{aligned}
							g(\mathcal{G}_i,Z_i,A_i;\lambda)\triangleq q(\mathcal{G}_i,Z_i,A_i)-\lambda f(Z_i),
						\end{aligned}
					\end{equation}
					where
					\begin{align}\label{15}
						&q(\mathcal{G}_i,Z_i,A_i)\triangleq\notag\\&\mathbb{E}\left[\sum_{s'\in\mathcal{S}}\left[\sum_{t=0}^{Z_i+\mathrm{Y}_{i+1}-1} \mathbf{P}_{A_{i-1}}^{Y_i}\cdot\mathbf{P}_{A_{i}}^{t}\right]_{X_{S_i}\times s'}\cdot\mathcal{C}(s',A_i)\right],
					\end{align}
					with the expectation $\mathbb{E}$ taken over the random delay $Y_{i+1}$, and
					\begin{equation}
						f(Z_i)\triangleq\mathbb{E}\left[Y_{i+1}\right]+Z_i.
					\end{equation}
					
					Then Problem $\mathscr{P}_{\mathrm{MDP}}(\lambda)$ is equivalent to Problem \ref{p4}.
				\end{Lemma}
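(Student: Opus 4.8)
The plan is to reduce the per-interval terms of Problem~\ref{p4} to the one-step cost $g(\mathcal{G}_i,Z_i,A_i;\lambda)$ of $\mathscr{P}_{\mathrm{MDP}}(\lambda)$ by conditioning on the sufficient statistic, after which the two long-run average criteria will agree term by term. I would begin by fixing the geometry of the $i$-th interval: from condition~\eqref{eq4} together with $D_i=S_i+Y_i$ one gets $S_{i+1}=D_i+Z_i$, hence $D_{i+1}-D_i=Z_i+Y_{i+1}$, which matches the denominator increments appearing in~\eqref{p4eq} and the summation range in~\eqref{15}. Because the controlled action is refreshed only at a delivery and $S_i\ge D_{i-1}$, the source is driven by $A_{i-1}$ on $[S_i,D_i)$ and by $A_i$ on $[D_i,D_{i+1})$.

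Next I would compute the conditional distribution of $X_t$, for $t\in[D_i,D_{i+1})$, given $\mathcal{G}_i=(X_{S_i},Y_i,A_{i-1})$ and the chosen action $(Z_i,A_i)$. Propagating $\mathbf{P}_{A_{i-1}}$ for $Y_i$ steps from $X_{S_i}$ and then $\mathbf{P}_{A_i}$ for the remaining $t-D_i$ steps --- and using that $Y_{i+1}$ is independent of the source, exactly as in the derivation of~\eqref{tran} --- yields $\Pr(X_t=s'\mid\mathcal{G}_i,Z_i,A_i)=[\mathbf{P}_{A_{i-1}}^{Y_i}\mathbf{P}_{A_i}^{\,t-D_i}]_{X_{S_i}\times s'}$. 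Summing $\mathcal{C}(\cdot,A_i)$ over $t=D_i,\dots,D_{i+1}-1$, reindexing by $t'=t-D_i\in\{0,\dots,Z_i+Y_{i+1}-1\}$, and taking the residual expectation over $Y_{i+1}$ reproduces $q(\mathcal{G}_i,Z_i,A_i)$ of~\eqref{15}; the same conditioning gives $\mathbb{E}[Z_i+Y_{i+1}\mid\mathcal{G}_i,Z_i,A_i]=Z_i+\mathbb{E}[Y_{i+1}]=f(Z_i)$. Hence $\mathbb{E}\big[\sum_{t=D_i}^{D_{i+1}-1}\mathcal{C}(X_t,A_i)-\lambda(Z_i+Y_{i+1})\,\big|\,\mathcal{G}_i,Z_i,A_i\big]=g(\mathcal{G}_i,Z_i,A_i;\lambda)$.

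I would then take the outer expectation over the state--action trajectory of $\mathscr{P}_{\mathrm{MDP}}(\lambda)$: since $(\mathcal{G}_i,Z_i,A_i)$ is its state--action pair and $\mathcal{G}_i$ is a sufficient statistic by Lemma~\ref{l1}, the tower rule gives $\mathbb{E}[\sum_{t=D_i}^{D_{i+1}-1}\mathcal{C}(X_t,A_i)-\lambda(Z_i+Y_{i+1})]=\mathbb{E}[g(\mathcal{G}_i,Z_i,A_i;\lambda)]$ for every $i$ and every admissible policy. Averaging over $i=0,\dots,n-1$ and letting $n\to\infty$ --- the limits being well defined because $\mathcal{C}$ and the delays are bounded and $\{G_i\}$ is regenerative --- identifies the objective~\eqref{p4eq} of Problem~\ref{p4} with the long-run average cost of $\mathscr{P}_{\mathrm{MDP}}(\lambda)$ under the same policy. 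Since the kernel~\eqref{tran} is Markov in $\mathcal{G}_i$ and Problem~\ref{p4} already optimizes over maps from the sufficient statistic, the two infima are taken over the same policy class and therefore coincide, which is the asserted equivalence.

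The main obstacle I anticipate is the bookkeeping in the second step: correctly attributing $A_{i-1}$ to $[S_i,D_i)$ and $A_i$ to $[D_i,D_{i+1})$ --- this is precisely where the ``action changes only at delivery'' assumption, together with $S_i\ge D_{i-1}$, is needed --- and keeping the expectation over $Y_{i+1}$ \emph{outside} the inner sum over $t$, since $Y_{i+1}$ simultaneously fixes the interval length and is independent of the source path. Once the conditional law of $X_t$ on $[D_i,D_{i+1})$ is written in the matrix-power form of~\eqref{tran}, the rest is a direct identification of definitions.
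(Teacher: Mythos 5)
Your proposal is correct and follows essentially the same route as the paper's argument (deferred to the short version): condition on the sufficient statistic $\mathcal{G}_i$, write the law of $X_t$ on $[D_i,D_{i+1})$ as $[\mathbf{P}_{A_{i-1}}^{Y_i}\mathbf{P}_{A_i}^{\,t-D_i}]$, keep the expectation over $Y_{i+1}$ outside the inner sum so the per-interval term collapses to $q(\mathcal{G}_i,Z_i,A_i)-\lambda f(Z_i)=g(\mathcal{G}_i,Z_i,A_i;\lambda)$, and then use the tower rule together with the Markov kernel \eqref{tran} to identify the two long-run average objectives over the same policy class. No gaps noted; the bookkeeping of $A_{i-1}$ on $[S_i,D_i)$ versus $A_i$ on $[D_i,D_{i+1})$ is handled correctly.
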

				\begin{proof}
					\textcolor{black}{See Appendix \ref{appendixd}.}
				\end{proof}
			\end{itemize}
			
			\textcolor{black}{In what follows, we refer to the MDP with transition dynamics $\mathbf{P}_a$ as the \emph{primal MDP}, 
			and denote by $\mathscr{P}_{\mathrm{MDP}}(\lambda)$ the transformed (or \emph{lifted}) MDP. 
			The next theorem provides a simple sufficient condition under which the lifted MDP $\mathscr{P}_{\mathrm{MDP}}(\lambda)$ 
			is \emph{unichain}\footnote{An MDP is said to be \emph{unichain} if, under any stationary policy, the induced Markov chain has a single recurrent class (with all other states being transient).}, 
			thereby ensuring the existence of an optimal stationary deterministic policy.
			\begin{theorem}[Sufficient condition for a unichain lifted MDP]\label{thm:lifted-unichain}
				Let $\mathcal S$, $\mathcal A$, and $\mathcal Y$ be the finite state, action, and delay sets of the primal MDP. 
				Suppose there exist a state $s^{\star}\in\mathcal S$, an integer $m\ge 1$, and a constant $\epsilon>0$ such that, 
				for every initial $(s,a)\in\mathcal S\times \mathcal A$ and for every admissible length-$m$ sequence 
				\(
				\{A_{t},Z_{t},\delta_t\}_{t=0}^{m-1}
				\):
				\begin{equation}\label{eq:entrance}
					\big[\mathbf P_{a}^{\delta_0}\mathbf P_{A_0}^{Z_0}\cdots 
					\mathbf P_{A_{m-2}}^{\delta_{m-1}}\mathbf P_{A_{m-1}}^{Z_{m-1}}\big]_{s\times s^\star}\ge\ \epsilon .
				\end{equation}
				Then, for \emph{every} stationary deterministic policy 
				$\pi:\mathcal S\times\mathcal Y\times\mathcal A\to\mathcal A\times\mathcal Z$, 
				the Markov chain on $\mathcal S\times\mathcal Y\times\mathcal A$ induced by $\pi$ has a single recurrent class 
				(all other states are at most transient). 
				In particular, the lifted MDP $\mathscr{P}_{\mathrm{MDP}}(\lambda)$ is \emph{unichain}.
			\end{theorem}
			\begin{proof}
				See Appendix \ref{appendixe}.
			\end{proof}}
			If the lifted MDP $\mathscr{P}_{\mathrm{MDP}}(\lambda)$ is a \emph{unichain}, an optimal \textit{stationary deterministic} policy exists, and one can establish the following \textit{Average Cost Optimality Equations} (ACOE) \cite[Eq. 4.1]{howard1960dynamic}:
			\begin{align}\label{eq21}
				\mathbf{ACOE}:\nonumber &V^\star(\gamma;\lambda)+U(\lambda)=\min _{A_i, Z_i}\Big\{g(\gamma,Z_i,A_i;\lambda)+\\& \mathbb{E}[V^\star\left(\gamma';\lambda\right)|\gamma,Z_i,A_i]\Big\}, \forall \gamma\in\mathcal{S}\times\mathcal{Y}\times\mathcal{A},
			\end{align}
			where $V^\star(\gamma;\lambda)\in\mathbb{R}$ is the optimal value function and $U(\lambda)\in\mathbb{R}$ is the optimal long-term average value of $\mathscr{P}_{\text{MDP}}(\lambda)$ in \eqref{p4eq}. By solving $U(\lambda)$ and $V^\star(\gamma;\lambda)$ for $\gamma\in\mathcal{S}\times\mathcal{Y}\times\mathcal{A}$ from the ACOE \eqref{eq21}, we can obtain the optimal \textit{stationary deterministic} sampling and remote decision-making policy for  $\mathscr{P}_{\text{MDP}}(\lambda)$, defined as: $\phi^{*}_{\lambda}:\mathcal{S}\times\mathcal{Y}\times\mathcal{A}\rightarrow\mathcal{Z}\times\mathcal{A}$.
			\begin{align}\label{eq19}
				\phi^{*}_{\lambda}(\gamma)=\mathop{\arg\min}_{Z_i,A_i} &\Big\{g(\gamma,Z_i,A_i;\lambda)+\nonumber\\& \mathbb{E}[V^\star\left(\gamma';\lambda\right)|\gamma,Z_i,A_i]\Big\}, \forall \gamma\in\mathcal{S}\times\mathcal{Y}\times\mathcal{A}.
			\end{align}
			The ACOE in \eqref{eq21} represents a series of nonlinear equations, which is mathematically intractable to solve explicitly. One can resort to the typical Dynamic Programming (DP)-like RVI algorithm \cite{white1963dynamic} to iteratively generate the sequences $\{U_{K}(\lambda)\}^{K\in\mathbb{N}^+}$ and $\{V_{K}(\gamma;\lambda)\}_{\gamma\in\mathcal{S}\times\mathcal{Y}\times\mathcal{A}}^{K\in\mathbb{N}^+}$ that conditionally converge asymptotically to the solutions of the ACOE. 
			\begin{iteration}\label{RVI}
				({{RVI} Algorithm \cite{white1963dynamic}}). For a given $\lambda$, the RVI starts with a given initial value $\{V_0(\gamma;\lambda)\}_{\gamma\in\mathcal{S}\times\mathcal{Y}\times\mathcal{A}}$ and computes $U_{K+1}(\lambda)$ and $V_{K+1}(\gamma;\lambda)$ iteratively:		
				\begin{subequations}\label{RVIiteration}
					\begin{align}
						U_{K+1}(\lambda) &= \min_{A_i, Z_i} \Big\{\, g(\gamma^{\text{r}}, Z_i, A_i; \lambda)\notag \\
						&\quad +\, \mathbb{E}\big[ V_{K}(\gamma'; \lambda) \mid \gamma^{\text{r}}, Z_i, A_i \big] \Big\}, \\
						V_{K+1}(\gamma; \lambda) &= \min_{A_i, Z_i} \Big\{\, g(\gamma, Z_i, A_i; \lambda)\notag \\
						&\quad +\, \mathbb{E}\big[ V_{K}(\gamma'; \lambda) \mid \gamma, Z_i, A_i \big] \Big\}\notag \\
						&\quad -\, U_{K+1}(\lambda), \quad \forall \gamma \in \mathcal{S} \times \mathcal{Y} \times \mathcal{A}.
					\end{align}
				\end{subequations}		
				where $\gamma^r\in\mathcal{S}\times\mathcal{Y}\times\mathcal{A}$ is an arbitrarily chosen \textit{reference state} and the conditional expectation $\mathbb{E}$ is taken with respect to the conditional distribution of the next state $\gamma'$ given the current state and the current action. The iterative process continues until a predefined convergence criterion is satisfied.
			\end{iteration}
			\begin{figure*}[t!]
				\centering
				\includegraphics[width=0.88\linewidth]{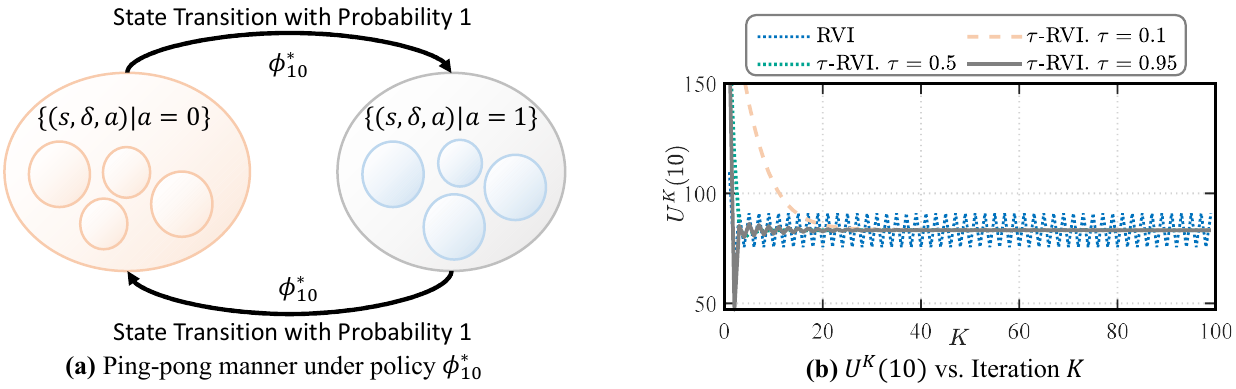}
				\caption{Algorithmic behavior of RVI versus $\tau$-RVI: Divergence mechanisms and comparative performance.}
				\label{fig4}
			\end{figure*}
			
			However, in our specific context, the convergence of the RVI algorithm is not necessarily guaranteed (see Example \ref{ex1}). To further investigate this issue and develop a convergent alternative, we present sufficient conditions for its convergence in the following Lemma.
			\begin{Lemma}\label{Lemma4}
				(Restatement of \cite[Proposition 4.3.2]{bertsekas2012dynamic2}). If an MDP satisfies the following conditions: \\
				(a) \textit{the MDP is a {unichain} MDP}; \\
				(b) \textit{the optimal stationary policy for the MDP yields an \textit{aperiodic} transition probability matrix};\\
				then the sequences $\{U_{K}(\lambda)\}_{K\in\mathbb{N}}$ and $\{V_{K}(\gamma;\lambda)\}_{K\in\mathbb{N}}$ will converge to the solution to the ACOE \eqref{eq21}:
				\begin{equation}
					\begin{aligned}
						&\lim_{K\rightarrow \infty} U_K(\lambda)=U(\lambda)\\
						&\lim_{K\rightarrow \infty} V_K(\gamma;\lambda)=V^{\star}(\gamma;\lambda),\forall \gamma\in\mathcal{S}\times\mathcal{Y}\times\mathcal{A}.
					\end{aligned}
				\end{equation}
			\end{Lemma}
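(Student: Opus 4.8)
The plan is to reduce Lemma \ref{Lemma4} to the classical span-seminorm contraction theory for average-cost value iteration, of which it is the cited restatement; below I sketch the three moves that make this reduction work and flag where the hypotheses (a)--(b) are consumed.

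First I would recast the recursion of Iteration \ref{RVI} in operator form. Define the dynamic-programming operator $T$ on real functions $V$ over the finite state space $\mathcal{S}\times\mathcal{Y}\times\mathcal{A}$ by $(TV)(\gamma)=\min_{A_i,Z_i}\{g(\gamma,Z_i,A_i;\lambda)+\mathbb{E}[V(\gamma')\mid\gamma,Z_i,A_i]\}$. Then Iteration \ref{RVI} is exactly $U_{K+1}(\lambda)=(TV_K)(\gamma^{\mathrm r})$ together with $V_{K+1}=TV_K-(TV_K)(\gamma^{\mathrm r})\mathbf{1}$, while the $\mathbf{ACOE}$ \eqref{eq21} asserts precisely that $(U(\lambda),V^\star(\cdot;\lambda))$ is a fixed point in the sense $TV^\star=V^\star+U(\lambda)\mathbf{1}$. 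Because $T$ is monotone and translation-equivariant, $T(V+c\mathbf 1)=TV+c\mathbf 1$, it is automatically nonexpansive in the span seminorm $\mathrm{sp}(v)\triangleq\max_\gamma v(\gamma)-\min_\gamma v(\gamma)$, that is $\mathrm{sp}(TV-TV')\le\mathrm{sp}(V-V')$; moreover the constant subtracted in the RVI update changes neither $\mathrm{sp}(V_K-V^\star)$ nor the induced greedy policy, so it is inert for the convergence argument.

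Second, and this is the crux, I would upgrade nonexpansiveness to a strict contraction: there is $\beta<1$ with $\mathrm{sp}(TV-TV')\le\beta\,\mathrm{sp}(V-V')$ for all $V,V'$ on the relevant invariant set. This is where unichain plus aperiodicity enter. Using the sandwich $P_{\mu}(V-V')\le TV-TV'\le P_{\mu'}(V-V')$ with $P_\mu,P_{\mu'}$ the transition matrices of the greedy policies for $V,V'$, the span-contraction modulus of $T$ is bounded by the worst Dobrushin/scrambling coefficient $\delta(P)=1-\min_{s,s'}\sum_j\min(P_{sj},P_{s'j})$ over the chains that the iteration actually visits. Unichain (condition (a), supplied by Theorem \ref{the1}) together with aperiodicity of the optimal chain (condition (b)) guarantees that some finite power $P^{m}$ of the relevant transition matrix concentrates its mass on a single recurrent class on which it is primitive, hence $P^{m}$ is scrambling and $\delta(P^{m})<1$; since $\mathcal{S},\mathcal{Y},\mathcal{A}$ are finite there are only finitely many stationary deterministic policies, so the maximum of these coefficients is still strictly below one, yielding a uniform $\beta<1$ after $m$ steps. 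Iterating, $\{V_K\}$ is Cauchy in $\mathrm{sp}$ on the quotient of functions modulo constants; the normalization $V_K(\gamma^{\mathrm r})=0$ pins a unique representative, so $\mathrm{sp}(V_K-V^\star)\to0$ forces $V_K(\gamma;\lambda)\to V^\star(\gamma;\lambda)$ pointwise, and passing to the limit in $U_{K+1}(\lambda)=(TV_K)(\gamma^{\mathrm r})$, using that $T$ is a finite minimum of affine maps hence continuous, gives $U_K(\lambda)\to U(\lambda)$, the unique average cost guaranteed by the unichain property.

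The main obstacle I expect is this second step. Condition (b) only asserts aperiodicity of the \emph{optimal} chain, whereas the RVI iterates momentarily follow non-optimal greedy policies whose chains could be periodic, so the clean argument ``all policies aperiodic $\Rightarrow$ uniform $\beta$'' is not literally available. The rigorous route, carried out in the cited \cite[Proposition 4.3.2]{bertsekas2012dynamic2}, is to localize to the tail of the iteration, where the greedy policies attained by $T$ eventually agree with an optimal policy on its recurrent class, so that the span of $V_K-V^\star$ is ultimately contracted by the optimal (aperiodic) chain; managing this ``eventually optimal'' subtlety, rather than imposing the stronger but unnecessary hypothesis that every stationary policy be aperiodic, is exactly the difficulty that the $\tau$-RVI variant of Iteration \ref{convergence:theorem2} is designed to circumvent by an explicit aperiodicity transformation.
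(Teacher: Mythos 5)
First, note that the paper does not actually prove Lemma~\ref{Lemma4}: it is stated verbatim as a restatement of \cite[Proposition 4.3.2]{bertsekas2012dynamic2} and used as a black box, so there is no in-paper argument to match; your attempt must therefore stand on its own. Its first move is fine: the operator form of Iteration~\ref{RVI}, the identification of the ACOE \eqref{eq21} as the fixed-point relation, and span nonexpansiveness of $T$ from monotonicity and translation invariance are all standard and correct. The problem is the second, decisive step. A uniform span contraction via Dobrushin/scrambling coefficients would require a minorization bound on the $m$-step \emph{products of the greedy-policy matrices actually generated by the iteration}, and hypotheses (a)--(b) do not supply this: condition (b) makes only the \emph{optimal} policy's chain aperiodic, while the greedy policies visited along the way can be periodic unichains, for which no power is scrambling and no $\beta<1$ exists. (This is not a technicality — it is exactly the oscillation mechanism behind the paper's divergence counterexample, Example~\ref{ex1}, where the offending periodic chain defeats plain RVI; the paper's own block-contraction machinery, the minorization \eqref{Ldefinition} used in Appendix~\ref{proof:convergence:theorem2}, is only available after the $\tau$-transformation forces every matrix in the product to have a strictly positive diagonal.)

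Your proposed repair — ``localize to the tail, where the greedy policies eventually agree with an optimal policy'' — is the actual content of the cited proposition and is not established by anything you wrote; as presented it is circular. Span nonexpansiveness only yields that $\mathrm{sp}(V_K-V^\star)$ is non-increasing, hence convergent to some $c\ge 0$; concluding that the greedy policies are eventually optimal (so that the aperiodic optimal chain governs the contraction) is equivalent to showing $c=0$, which is precisely what is to be proved. The genuine proofs of this result (Schweitzer--Federgruen/White-type arguments, on which the Bertsekas proposition rests) analyze the limit points of the value-iteration sequence and the structure of the optimal policies in a substantially more delicate way than a uniform contraction estimate, and your sketch neither reproduces that analysis nor supplies a substitute. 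So the write-up correctly identifies where the difficulty lies, but it does not close it: the step from nonexpansiveness to convergence under only (a) and (b) is missing, and that step is the whole theorem.
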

			
			In the transformed MDP $\mathscr{P}_{\text{MDP}}(\lambda)$, condition (a) in Lemma \ref{Lemma4} is ensured by Theorem \ref{thm:lifted-unichain}. Nevertheless, the following \textbf{counter-example} demonstrates that condition (b) does not necessarily hold in $\mathscr{P}_{\text{MDP}}(\lambda)$. Specifically, even if the \textit{primal MDP} characterized by $\left\langle \mathcal{S},\mathcal{A},\mathbf{P}_a,\mathcal{C}\right\rangle $ is \textit{aperiodic}, the transformed MDP $\mathscr{P}_{\text{MDP}}(\lambda)$ may still exhibit \textit{periodicity}, which will cause the RVI algorithm to diverge. 
			\begin{example}\label{ex1}(A divergence example of RVI).
				Consider the parameter setup described in Appendix \ref{appendixh} where the delay is constant with $p=0$. In this case, the optimal policy for the transformed MDP $\mathscr{P}_{\text{MDP}}(10)$ is:
				\begin{subequations}\label{pingpong}
					\begin{align}
						&\phi^{*}_{10}(\overbrace{0}^{X_{S_i}},\overbrace{10}^{Y_i},\overbrace{0}^{A_{i-1}})=(\overbrace{0}^{Z_i},\overbrace{1}^{A_i}),\\
						&\phi^{*}_{10}(\overbrace{1}^{X_{S_i}},\overbrace{10}^{Y_i},\overbrace{0}^{A_{i-1}})=(\overbrace{0}^{Z_i},\overbrace{1}^{A_i}),\\		
						&\phi^{*}_{10}(\overbrace{0}^{X_{S_i}},\overbrace{10}^{Y_i},\overbrace{1}^{A_{i-1}})=(\overbrace{0}^{Z_i},\overbrace{0}^{A_i}),\\ &\phi^{*}_{10}(\overbrace{1}^{X_{S_i}},\overbrace{10}^{Y_i},\overbrace{1}^{A_{i-1}})=(\overbrace{0}^{Z_i},\overbrace{0}^{A_i}).
					\end{align}
				\end{subequations}		
				The optimal stationary policy in \eqref{pingpong} induces the sub-state sequence $\{A_i\}_{i \in \mathbb{N}^+}$ to alternate in a $(0, 1, 0, 1, \dots)$ ping-pong manner, as shown in Fig. \ref{fig4}-\textbf{(a)}. This alternating behavior results in the RVI oscillations as shown in Fig.  \ref{fig4}-\textbf{(b)}.	\end{example} 	
			
			Lemma \ref{Lemma4} and Example \ref{ex1} show that the RVI algorithm \cite{white1963dynamic} may not asymptotically converge to $U(\lambda)$, and the reason is the \textit{periodic} nature inherent in the transformed MDP $\mathscr{P}_{\text{MDP}}(\lambda)$. Consequently, the existing two-layer Bisec-RVI Algorithm (e.g., \cite[Algorithm 1]{li2024sampling}, \cite[Algorithm 1]{10807024}, and \cite[Algorithm 1]{bedewy2021optimal}) cannot reliably determine the root $\rho^\star$. To circumvent this
			problem, we propose a new iterative approach, namely $\tau$-RVI, in Iteration \ref{convergence:theorem2}. This approach eliminates the need for condition (b) in Lemma \ref{Lemma4} but guarantees rigorous convergence. 
			\begin{iteration}\label{convergence:theorem2} ($\tau$-{RVI} Algorithm). For a given $\lambda$ and a parameter  $0<\tau\le1$, the $\tau$-RVI iteratively generate sequences $\{\tilde{U}_{K}(\lambda)\}^{K\in\mathbb{N}^+}$ and $\{\tilde{V}_{K}(\gamma;\lambda)\}_{\gamma\in\mathcal{S}\times\mathcal{Y}\times\mathcal{A}}^{K\in\mathbb{N}^+}$ with a starting initial value $\{\tilde{V}_0({\gamma;\lambda})\}_{\gamma\in\mathcal{S}\times\mathcal{Y}\times\mathcal{A}}$.
				\begin{subequations}\label{MRVI}
					\begin{align}
						\tilde{U}_{K+1}(\lambda) &= \min_{A_i, Z_i} \bigg\{\, g(\gamma^{\text{r}}, Z_i, A_i; \lambda)\notag \\
						&\quad\quad\quad + \tau\, \mathbb{E}\big[ \tilde{V}_{K}(\gamma'; \lambda) \mid \gamma^{\text{r}}, Z_i, A_i \big] \bigg\},\label{23a} \\
						\tilde{V}_{K+1}(\gamma; \lambda) &= (1 - \tau)\, \tilde{V}_{K}(\gamma; \lambda)\notag \\
						&\quad + \min_{A_i, Z_i} \bigg\{\, g(\gamma, Z_i, A_i; \lambda)\notag \\
						&\quad+ \tau\, \mathbb{E}\big[ \tilde{V}_{K}(\gamma'; \lambda) \mid \gamma, Z_i, A_i \big] \bigg\}\notag \\&\quad- \tilde{U}_{K+1}(\lambda), \forall \gamma \in \mathcal{S} \times \mathcal{Y} \times \mathcal{A},\label{23b}
					\end{align}
				\end{subequations}			 		 
				where $\gamma^{\text{r}}{\in\mathcal{S}\times\mathcal{Y}\times\mathcal{A}}$ is a predefined fixed \textit{reference state} with initial condition $\tilde{V}_{0}(\gamma^{\text{r}};\lambda)=0$.
			\end{iteration}
			\begin{remark}
				If $\tau=1$, then $\tau$-{RVI} reduces to {RVI}.
			\end{remark}
			
			Fig. \ref{fig4}-\textbf{(b)} illustrates the convergence of $\tau$-RVI across various values of the parameter $\tau$, compared to the standard RVI algorithm \cite{white1963dynamic}. The results demonstrate that $\tau$-RVI overcomes oscillatory behavior encountered by RVI, with convergence rates depending on the selected $\tau$ values. A rigorous convergence analysis of $\tau$-RVI will be presented in in \cref{ctaorvi}.
			
			\subsubsection{Outer-Layer Bisection and Bounds on $\rho^\star$}
			
			In the outer layer of Algorithm \ref{Algorithm 1}, the search interval $(\lambda_{\downarrow},\lambda_{\uparrow})$ is bisected on a slow time scale to approximate the root of $U(\lambda)$. This process relies on the uniqueness of the root of $U(\lambda)$ established in Lemma \ref{l3}. For the bisection search process, the complexity mainly depends on the initialization of the search interval $(\lambda_{\downarrow},\lambda_{\uparrow})$, which requires establishing upper and lower bounds on $\rho^\star$. To address this, we establish the bounds on $\rho^{\star}$ to initialize the bisection search:
			\begin{Lemma}[Upper and Lower Bounds on $\rho^\star$]\label{l5}
				The lower bound of $\rho^\star$ can be defined by the minimum value of the cost function, given by \begin{equation}\rho^\star\ge\min_{s,a}\mathcal{C}(s,a).\end{equation}
				The upper bound of $\rho^\star$ can be defined by the minimum stationary cost achievable under a constant action, given by 
				\begin{equation}
					\rho^\star\le\min_a\sum_{s\in\mathcal{S}}{\pi}_a(s)\cdot \mathcal{C}(s,a),
				\end{equation}
				where ${\pi}_a(s)$ represents the stationary distribution of state $s$, corresponding to the transition probability matrix $\mathbf{P}_a$.
			\end{Lemma}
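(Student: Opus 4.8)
The plan is to prove the two bounds separately: the lower bound follows from a pointwise comparison, and the upper bound from exhibiting an explicit admissible policy whose value can be computed in closed form.

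For the lower bound, I would observe that for every sample path and every slot $t$ one trivially has $\mathcal{C}(X_t,a_t)\ge\min_{s,a}\mathcal{C}(s,a)$ because $(X_t,a_t)\in\mathcal{S}\times\mathcal{A}$. Summing over $t=1,\dots,T$, dividing by $T$, taking expectation, then $\limsup_{T\to\infty}$, and finally the infimum over all admissible policies $\boldsymbol{\psi}$ each preserves the inequality, which yields $\rho^\star\ge\min_{s,a}\mathcal{C}(s,a)$. Nothing delicate arises here.

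For the upper bound, I would fix an arbitrary $a\in\mathcal{A}$ and consider the stationary policy $\boldsymbol{\psi}_a$ that always selects the controlled action $A_i=a$ together with the zero-wait sampling action $Z_i=0$, so that $S_{i+1}=D_i=S_i+Y_i$. This policy is admissible: since $1\le Y_i$ and $\max Y_i<\infty$, we have $S_i\to\infty$, and the inter-sample times $G_i=Y_i$ are i.i.d. and hence trivially form a regenerative process, so conditions (i)--(ii) on the sampling policy hold. Under $\boldsymbol{\psi}_a$ the controlled action equals $a$ at every slot after the bounded initial transient $t<D_0$, so $\{X_t\}$ evolves, past that transient, as a time-homogeneous Markov chain with transition matrix $\mathbf{P}_a$. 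Since the primal MDP is an aperiodic unichain, this chain has the unique stationary distribution $\pi_a$ and is ergodic, so $\frac{1}{T}\sum_{t=1}^{T}\mathbb{E}[\mathcal{C}(X_t,a)]\to\sum_{s\in\mathcal{S}}\pi_a(s)\mathcal{C}(s,a)$ for any initial distribution of the state. Consequently $\boldsymbol{\psi}_a$ attains objective value $\sum_{s}\pi_a(s)\mathcal{C}(s,a)$ in Problem \ref{p3}, giving $\rho^\star\le\sum_{s}\pi_a(s)\mathcal{C}(s,a)$; minimizing over $a\in\mathcal{A}$ delivers the claimed bound.

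The only real obstacle is justifying the Cesàro convergence $\frac{1}{T}\sum_{t=1}^T\mathbb{E}[\mathcal{C}(X_t,a)]\to\sum_s\pi_a(s)\mathcal{C}(s,a)$ while handling the random, but almost-surely finite and bounded, starting slot $D_0$ of the frozen chain. I expect this to follow from standard ergodic theory for finite aperiodic unichain Markov chains: conditioning on the values of $D_0$ and $X_{D_0}$, the tail average converges to the stationary expectation irrespective of $X_{D_0}$; boundedness of $\mathcal{C}$ makes the finitely many transient terms ($t<D_0$) vanish in the average, and averaging over the bounded distribution of $D_0$ preserves the limit by dominated convergence. I would also make explicit that $\boldsymbol{\psi}_a$ lies in the feasible set of policies defining $\rho^\star$, which is what licenses the inequality in the first place.
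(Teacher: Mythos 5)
Your proposal is correct and follows essentially the same route as the paper's proof (deferred there to the short version): the lower bound comes from bounding the per-slot cost pointwise by $\min_{s,a}\mathcal{C}(s,a)$, and the upper bound from evaluating the admissible constant-action policy (here combined with zero-wait sampling), whose long-run average cost under the aperiodic-unichain assumption equals $\sum_{s}\pi_a(s)\mathcal{C}(s,a)$, then minimizing over $a$. Your handling of feasibility (regenerative inter-sample times $G_i=Y_i$) and of the bounded initial transient before $D_0$ is exactly the care the argument requires, so no gap remains.
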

			\begin{proof}
				\textcolor{black}{See Appendix \ref{appendixf}.}
			\end{proof}
			\subsection{Convergence of $\tau$-RVI}\label{ctaorvi}
			In this subsection, we present convergence results for $\tau$-RVI in Iteration \ref{convergence:theorem2}. We theoretically show that the generated sequences in $\tau$-RVI will asymptotically approach the solution to the ACOE \eqref{eq21}. To quantify the convergence behavior, we define the \textit{relative error} for the $K$-th iteration value $\tilde{U}_K(\lambda)$ with respect to the ACOE solution $U(\lambda)$ as: \begin{equation}
				e_{\mathrm{U}}^{(K)}(\lambda)\triangleq|\tilde{U}_{K}(\lambda)-U(\lambda)|.
			\end{equation} Similarly, define the relative error for the $K$-th iteration value $\tilde{V}_K(\gamma;\lambda)$ with respect to the ACOE solution $V^\star(\gamma;\lambda)$ as: \begin{equation}
				e_{\mathrm{V}}^{(K)}(\gamma;\tau,\lambda)\triangleq\left|\tilde{V}_{K}(\gamma;\lambda)-\frac{V^\star(\gamma;\lambda)}{\tau}\right|.
			\end{equation} The main convergence results for $\tau$-RVI are summarized in Theorem \ref{convergence1} below.		
			\begin{theorem}\label{convergence1}
				(Convergence of $\tau$-RVI). If the MDP $\mathscr{P}_{\text{MDP}}(\lambda)$ is a \textit{unichain} MDP, then the following limits hold true:\\
				($i$). For $\forall \lambda\in\mathbb{R}$, 
				\begin{equation}
					\lim_{K \to \infty}e_{\mathrm{U}}^{(K)}(\lambda)=0.
				\end{equation}
				($ii$). For $\forall \lambda\in\mathbb{R}$, $\forall 0<\tau<1$, and $\forall \gamma\in\mathcal{S}\times\mathcal{Y}\times\mathcal{A}$,
				\begin{equation}
					\lim_{K \to \infty}e_{\mathrm{V}}^{(K)}(\gamma;\tau,\lambda)=0.
				\end{equation}
			\end{theorem}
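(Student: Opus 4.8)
The plan is to recast the $\tau$-RVI recursion \eqref{MRVI} as a fixed-point iteration for an auxiliary operator on $\mathbb{R}^{|\mathcal{S}\times\mathcal{Y}\times\mathcal{A}|}$ and then exploit a \emph{span-contraction} property induced by the averaging factor $\tau$. First I would eliminate $\tilde U_{K+1}$ from \eqref{23b} by substituting \eqref{23a}, obtaining a single recursion $\tilde V_{K+1}=\mathcal{T}_\tau \tilde V_K$, where $\mathcal{T}_\tau W = (1-\tau)W + \tau \mathcal{B}W - (\text{value of }\mathcal{B}W\text{ at }\gamma^{\mathrm r}) + (1-\tau)(\ldots)$; more cleanly, I would work directly with the operator $\mathcal{L}_\tau W := (1-\tau)W + \tau \mathcal{B}W$, where $\mathcal{B}$ is the standard Bellman (dynamic programming) operator $(\mathcal{B}W)(\gamma)=\min_{A_i,Z_i}\{g(\gamma,Z_i,A_i;\lambda)+\mathbb{E}[W(\gamma')\mid\gamma,Z_i,A_i]\}$. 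The $\tau$-RVI iterates are exactly the iterates of $\mathcal{L}_\tau$ up to a state-independent shift that normalizes the reference-state coordinate to zero; since the ACOE, the span seminorm, and the $\arg\min$ policy are all invariant under adding constants, it suffices to analyze $\mathcal{L}_\tau$ modulo constants, i.e.\ on the quotient space under the span seminorm $\mathrm{sp}(W)=\max_\gamma W(\gamma)-\min_\gamma W(\gamma)$.

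The key step is to show $\mathcal{L}_\tau$ is a contraction in the span seminorm whenever $\mathscr{P}_{\mathrm{MDP}}(\lambda)$ is unichain and $0<\tau<1$. For the undiscounted Bellman operator $\mathcal{B}$ one has the standard bound $\mathrm{sp}(\mathcal{B}W_1-\mathcal{B}W_2)\le \kappa\,\mathrm{sp}(W_1-W_2)$ with $\kappa=1-\min_{\gamma_1,\gamma_2}\sum_{\gamma'}\min\{P(\gamma'|\gamma_1,\cdot),P(\gamma'|\gamma_2,\cdot)\}\le 1$ (Bertsekas, \emph{Dynamic Programming}, Sec.\ on span contraction); the subtlety is that $\kappa$ can equal $1$ precisely in the periodic situation of Example \ref{ex1}. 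The averaging $\mathcal{L}_\tau = (1-\tau)\,\mathrm{Id}+\tau\,\mathcal{B}$ fixes this: because $\mathrm{Id}$ is trivially span-nonexpansive and the $(1-\tau)W$ term is "aligned" across any two arguments, one gets $\mathrm{sp}(\mathcal{L}_\tau W_1-\mathcal{L}_\tau W_2)\le (1-\tau+\tau\kappa)\,\mathrm{sp}(W_1-W_2) = (1-\tau(1-\kappa))\,\mathrm{sp}(W_1-W_2)$. When the primal chain is unichain, any two states communicate under the induced chain after finitely many steps, so an $m$-step version of $\mathcal{B}$ (equivalently of $\mathcal{L}_\tau^{m}$, using that $\mathcal{L}_\tau$ has a strictly positive self-loop once $\tau<1$) has the analogous constant strictly below $1$; hence some power $\mathcal{L}_\tau^{m}$ is a strict span-contraction with modulus $R^{-1}<1$. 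By the Banach fixed-point theorem applied to $\mathcal{L}_\tau^{m}$ on the span-quotient space, the iterates converge geometrically to the unique (mod constants) fixed point $V^\star(\cdot;\lambda)/\tau$ — the $1/\tau$ factor appearing because $W=\mathcal{L}_\tau W$ rearranges to $\tau\mathcal{B}W=\tau W$, i.e.\ $\mathcal{B}(\tau W)-U(\lambda)=\tau W - U(\lambda)$ matches the ACOE \eqref{eq21} with optimal value $V^\star=\tau W$. This establishes part $(ii)$, $\lim_K e_{\mathrm V}^{(K)}(\gamma;\tau,\lambda)=0$. Part $(i)$ then follows because $\tilde U_{K+1}(\lambda)$ equals $(\mathcal{B}\text{-type value at }\gamma^{\mathrm r})$ evaluated with the averaged operator, which is a continuous (indeed $1$-Lipschitz in sup norm) functional of $\tilde V_K$; since $\tilde V_K$ converges modulo constants and the reference coordinate is pinned to $0$, $\tilde U_K(\lambda)\to U(\lambda)$.

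The main obstacle I anticipate is \textbf{making the $m$-step span-contraction argument rigorous through the $\min$ (nonlinearity)}: the classical span-contraction lemma is usually stated for the one-step operator and for the $\min$-operator one must track the minimizing actions carefully when composing $m$ steps, showing that the "overlap" (minorization) constant $\sum_{\gamma'}\min_{\gamma_1,\gamma_2}P^{(m)}(\gamma'\mid\gamma_1,a_1^\star)\wedge P^{(m)}(\gamma'\mid\gamma_2,a_2^\star)$ is bounded below uniformly over policies by the unichain/communicating structure of the \emph{primal} MDP — here the block-triangular form of the transition kernel \eqref{tran} (the $\mathbbm{1}\{a'=A_i\}$ factor and the $\mathbf{P}_a^\delta\mathbf{P}_{A_i}^{Z_i}$ block) must be used to transfer reachability from $(\mathcal{S},\mathcal{A},\mathbf{P}_a)$ to $\mathscr{P}_{\mathrm{MDP}}(\lambda)$, reusing the machinery already invoked in the proof of Theorem \ref{the1}. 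A secondary, more bookkeeping-level obstacle is handling the state-independent normalization shift (subtracting $\tilde U_{K+1}$ and imposing $\tilde V_0(\gamma^{\mathrm r};\lambda)=0$) consistently, so that convergence in the span-quotient upgrades to genuine convergence of the un-quotiented sequences $\tilde V_K(\gamma;\lambda)\to V^\star(\gamma;\lambda)/\tau$ and $\tilde U_K(\lambda)\to U(\lambda)$; this is routine but must be done explicitly to match the error quantities $e_{\mathrm V}^{(K)}$ and $e_{\mathrm U}^{(K)}$ as defined.
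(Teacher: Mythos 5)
Your route lands on the same core mechanism as the paper's proof in Appendix \ref{proof:convergence:theorem2}: view $\tau$-RVI as relative value iteration for the aperiodicity-transformed MDP (the paper's kernel substitution \eqref{ptransform}, your averaging with the identity), extract from unichain-plus-aperiodicity a uniform $L$-step minorization over products of the policy-dependent transition matrices, conclude geometric decay at rate $(1-\epsilon)^{K/L}$, and identify the limit with the ACOE \eqref{eq21} through the scalings $\tau\tilde{V}^\star=V^\star$ and $\tilde{U}=U$. The difference is in the middle step's packaging: you make (a power of) the averaged operator a strict span-seminorm contraction and invoke a Banach-type fixed-point argument on the span-quotient space, whereas the paper never asserts the operator is a contraction; it runs a hands-on Cauchy argument on the increments $\tilde{V}_{K+1}-\tilde{V}_K$ of the actual iterates, which forces a two-case analysis on whether the minorization state $\gamma^\star$ equals the reference state $\gamma^{\mathrm{r}}$ (Case 2 needs an auxiliary iteration anchored at $\gamma^\star$). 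Your quotient-space framing buys you exactly that simplification: convergence in span plus the pinning $\tilde{V}_K(\gamma^{\mathrm{r}};\lambda)=0$ handles the reference-state bookkeeping uniformly. Conversely, the obstacle you flag—a minorization constant uniform over all length-$L$ sequences of minimizing decision rules, through the $\min$ nonlinearity—is precisely the paper's condition \eqref{Ldefinition}, asserted from Theorem \ref{the1} plus the positive self-loops created by $\tau<1$; you would have to cite or reprove the same fact, so neither route is lighter there.

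One slip to repair: the operator you write, $\mathcal{L}_\tau=(1-\tau)\,\mathrm{Id}+\tau\mathcal{B}$, is not the $\tau$-RVI map, because in \eqref{MRVI} the factor $\tau$ multiplies only the expectation, not the stage cost $g$. The correct map is the Bellman operator of the transformed MDP, $(\mathcal{M}_\tau W)(\gamma)=(1-\tau)W(\gamma)+\min_{A_i,Z_i}\{g(\gamma,Z_i,A_i;\lambda)+\tau\,\mathbb{E}[W(\gamma')\mid\gamma,Z_i,A_i]\}$; your $\mathcal{L}_\tau$ additionally rescales $g$ by $\tau$. The span-contraction estimates are indifferent to this (costs cancel in differences of minima), but the limit identification is not: a fixed point of $\mathcal{L}_\tau$ modulo constants satisfies $\mathcal{B}W=W+(c/\tau)\mathbf{e}$, so the iterates would converge to $V^\star$ up to a constant with gain $\tau U(\lambda)$, contradicting part ($i$) and the definition of $e_{\mathrm{V}}^{(K)}$, which compares to $V^\star/\tau$. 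Your own fixed-point algebra implicitly uses $\tau\mathcal{B}W=\mathcal{B}(\tau W)$, which fails since $\mathcal{B}$ is not positively homogeneous in the presence of $g$; once you replace $\mathcal{L}_\tau$ by $\mathcal{M}_\tau$, your stated conclusions ($\tilde{V}_K\to V^\star/\tau$, $\tilde{U}_K\to U(\lambda)$) come out correctly and the argument matches Theorem \ref{convergence1}.
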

			\begin{proofsketch}
				The motivation behind \eqref{MRVI} is to formulate an alternative MDP problem, denoted as $\widetilde{\mathscr{P}_{\text{MDP}}}(\lambda)$, by eliminating potential \textit{periodicity} in the transition probabilities in \eqref{tran}. We denote the transition probability from state $i$ to state $j$, given action $a$, as $p_{ij}(a)$ in ${\mathscr{P}_{\text{MDP}}}(\lambda)$, and as $\widetilde{p_{ij}}(a)$ in $\widetilde{\mathscr{P}_{\text{MDP}}}(\lambda)$. The alternative transition probability $\widetilde{p_{ij}}(a)$ is defined as:
				\begin{equation}\label{ptransform}
					\widetilde{p_{ij}}(a)\triangleq\begin{cases}
						\tau{p}_{ij}(a), & \text{if } i\ne j\\
						1-\tau+\tau p_{ii}(a), &\text{if } i=j.
					\end{cases}
				\end{equation}
				It is easy to verify that $\sum_{j}\widetilde{p_{ij}}(a)=1$ and $\widetilde{p_{ii}}(a)>0$ for $\forall i\in\mathcal{S}\times\mathcal{Y}\times\mathcal{A}$, and thus the new MDP $\widetilde{\mathscr{P}_{\text{MDP}}}(\lambda)$ is \textit{aperiodic}. Then, we can formulate the ACOE for the alternative $\widetilde{\mathscr{P}_{\text{MDP}}}(\lambda)$:
				\begin{equation}\label{eq26}
					\begin{aligned}			\tilde{V}^\star(\gamma;\lambda)&+\tilde{U}(\lambda)=\min _{A_i, Z_i}\Big\{g(\gamma,Z_i,A_i;\lambda)+\\ &\sum_{\gamma'}\widetilde{p_{\gamma \gamma'}}(Z_i,A_i)\tilde{V}^\star(\gamma';\lambda)\Big\}, \forall \gamma\in\mathcal{S}\times\mathcal{Y}\times\mathcal{A},
					\end{aligned}
				\end{equation}
				which can be solved by the traditional RVI with guaranteed convergence because of its \textit{unichain} and \textit{aperiodic} property. Then, comparing \eqref{eq21} and \eqref{eq26}, we can establish the relationship that $\tau\tilde{V}^\star(\gamma;\lambda)={V}^\star(\gamma;\lambda)$ for $\forall\gamma\in\mathcal{S}\times\mathcal{Y}\times\mathcal{A}$ and $\tilde{U}(\lambda)={U}(\lambda)$. See Appendix \ref{proof:convergence:theorem2} for the detailed proof.
			\end{proofsketch}
			
			The next theorem characterizes an upper bound on the \textit{relative error} of the $\tau$-RVI Algorithm, whose proof is included in Appendix \ref{proof:convergence:theorem2}.
			\begin{theorem} \label{theorem3:convergence}(Upper Bound of Relative Error).
				If the MDP $\mathscr{P}_{\text{MDP}}(\lambda)$ is a unichain MDP, then up to iteration $K$, the relative error $e_{\mathrm{U}}^{(K)}(\lambda)$ is upper bounded above by
				\begin{equation}
					e_{\mathrm{U}}^{(K)}(\lambda)\le\frac{\tau M(1-\epsilon)^{(K-1)/L}}{1-(1-\epsilon)^{1/L}}=\mathcal{O}\left(\frac{1}{R^K}\right),
				\end{equation}
				where $M$ is a scaling factor, $L$ is defined by \eqref{Ldefinition}. The term $R=\frac{1}{(1-\epsilon)^{1/L}}>1$ captures the asymptotic convergence rate.
			\end{theorem}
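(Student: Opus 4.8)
The plan is to exploit the reduction already set up in the proof sketch of Theorem~\ref{convergence1}: the $\tau$-RVI recursion \eqref{MRVI} is exactly ordinary RVI run on the auxiliary MDP $\widetilde{\mathscr{P}_{\text{MDP}}}(\lambda)$ whose kernel \eqref{ptransform} is \emph{aperiodic} by construction (every diagonal entry is $\ge 1-\tau>0$) while staying \emph{unichain} (inserting self-loops leaves the communicating structure inherited, via Theorem~\ref{the1}, from $\mathscr{P}_{\text{MDP}}(\lambda)$ unchanged), and moreover $\tilde U(\lambda)=U(\lambda)$ and $\tau\tilde V^\star(\cdot;\lambda)=V^\star(\cdot;\lambda)$. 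Writing $\widetilde T$ for the average-cost dynamic-programming operator of $\widetilde{\mathscr{P}_{\text{MDP}}}(\lambda)$, \eqref{MRVI} reads $\tilde U_{K+1}(\lambda)=(\widetilde T\tilde V_K)(\gamma^{\mathrm r})$ and $\tilde V_{K+1}=\widetilde T\tilde V_K-\tilde U_{K+1}(\lambda)\,\mathbbm{1}$; the normalization $\tilde V_0(\gamma^{\mathrm r};\lambda)=0$ together with the subtracted term propagates to $\tilde V_K(\gamma^{\mathrm r};\lambda)=\tilde V^\star(\gamma^{\mathrm r};\lambda)=0$ for every $K$.

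First I would reduce the scalar error $e_{\mathrm U}^{(K)}(\lambda)=|\tilde U_K(\lambda)-U(\lambda)|$ to the span seminorm $\mathrm{sp}(h)\triangleq\max_\gamma h(\gamma)-\min_\gamma h(\gamma)$ of the value-function error. Since $\tilde U_K(\lambda)=(\widetilde T\tilde V_{K-1})(\gamma^{\mathrm r})$ and $U(\lambda)=\tilde U(\lambda)=(\widetilde T\tilde V^\star)(\gamma^{\mathrm r})$, monotonicity of $\widetilde T$ and its commutation with constant shifts, applied to the sandwich $\tilde V^\star+(\min_\gamma[\tilde V_{K-1}-\tilde V^\star])\mathbbm{1}\le \tilde V_{K-1}\le \tilde V^\star+(\max_\gamma[\tilde V_{K-1}-\tilde V^\star])\mathbbm{1}$ and evaluated at $\gamma^{\mathrm r}$ (where $\tilde V_{K-1}-\tilde V^\star$ vanishes), yield $e_{\mathrm U}^{(K)}(\lambda)\le \mathrm{sp}\bigl(\tilde V_{K-1}(\cdot;\lambda)-\tilde V^\star(\cdot;\lambda)\bigr)$.

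Next I would establish geometric decay of $\mathrm{sp}(\tilde V_{K}-\tilde V^\star)$. Because $\mathrm{sp}$ annihilates constants, the ACOE $\widetilde T\tilde V^\star=\tilde V^\star+\tilde U(\lambda)\mathbbm{1}$ and the update give $\mathrm{sp}(\tilde V_{K+1}-\tilde V^\star)=\mathrm{sp}(\widetilde T\tilde V_K-\widetilde T\tilde V^\star)$, and likewise $\mathrm{sp}(\tilde V_{K+1}-\tilde V_{K+2})=\mathrm{sp}(\widetilde T\tilde V_K-\widetilde T\tilde V_{K+1})$. The analytic core is the classical $L$-stage span contraction: since $\widetilde{\mathscr{P}_{\text{MDP}}}(\lambda)$ is unichain and aperiodic, there exist $L\in\mathbb{N}^+$ (the quantity \eqref{Ldefinition}) and $\epsilon\in(0,1]$ such that every $L$-step state-transition matrix arising under any policy satisfies a uniform Doeblin-type minorization $\min_{\gamma,\gamma'}\sum_\eta\min\{[\cdot]_{\gamma\eta},[\cdot]_{\gamma'\eta}\}\ge\epsilon$, whence $\mathrm{sp}(\widetilde T^{L}V-\widetilde T^{L}W)\le(1-\epsilon)\mathrm{sp}(V-W)$. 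Iterating yields $\mathrm{sp}(\tilde V_{K}-\tilde V^\star)\le(1-\epsilon)^{\lfloor K/L\rfloor}\mathrm{sp}(\tilde V_0-\tilde V^\star)$, and, using Theorem~\ref{convergence1}(ii) to legitimize $\tilde V_0-\tilde V^\star=\sum_{k\ge0}(\tilde V_k-\tilde V_{k+1})$, the same $L$-stage contraction applied term-by-term bounds $\mathrm{sp}(\tilde V_0-\tilde V^\star)$ by a geometric series proportional to $\mathrm{sp}(\tilde V_0-\tilde V_1)/(1-(1-\epsilon)^{1/L})$, while $\mathrm{sp}(\tilde V_0-\tilde V_1)\le\tau M$ for a scaling factor $M$ determined by $\tilde V_0$ and $\max_{\gamma,Z_i,A_i}|g(\gamma,Z_i,A_i;\lambda)|$. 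Chaining the three estimates (and absorbing the floor in the exponent into the constants) gives the displayed bound $e_{\mathrm U}^{(K)}(\lambda)\le\frac{\tau M(1-\epsilon)^{(K-1)/L}}{1-(1-\epsilon)^{1/L}}=\mathcal{O}(R^{-K})$ with $R=(1-\epsilon)^{-1/L}>1$.

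The main obstacle is the uniform $L$-stage minorization. Under the mere \emph{unichain} (as opposed to communicating) hypothesis, distinct stationary policies may induce distinct recurrent classes, so no single state need be reachable from every state within a common number of steps under every policy; the argument has to use that the $1-\tau$ self-loops make every induced chain aperiodic, first treat the finitely many stationary deterministic policies (bounding each one's mixing horizon), and then lift to arbitrary history-dependent, randomized action sequences via a concatenation/product argument in order to extract the uniform pair $(L,\epsilon)$ of \eqref{Ldefinition}. Once this minorization is in hand, the reduction of $e_{\mathrm U}$ to the span and the geometric summation are routine bookkeeping.
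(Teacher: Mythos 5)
Your proposal is correct in outline and reaches the stated bound, but it takes a genuinely different route from the paper. The paper's proof (Appendix~\ref{proof:convergence:theorem2}) never measures the distance to the fixed point directly: it is a Cauchy-sequence argument on \emph{successive differences}. Using the greedy actions of consecutive iterations it sandwiches $\tilde V_{K+1}-\tilde V_K$ between products of the algorithm's own matrices $\widetilde{\mathbf P(t)}$, invokes \eqref{Ldefinition} (a single column $\gamma^\star$ uniformly bounded below by $\epsilon$ for those products) to shrink the span of successive differences by $(1-\epsilon)$ every $L$ steps, anchors everything with $\tilde V_K(\gamma^{\mathrm r};\lambda)=0$, treats the case $\gamma^\star\ne\gamma^{\mathrm r}$ by an auxiliary iteration re-anchored at $\gamma^\star$, deduces $|\tilde U_{K+1}(\lambda)-\tilde U_K(\lambda)|\le\tau M(1-\epsilon)^{(K-1)/L}$, and gets the theorem by summing the geometric tail and identifying $\tilde U_\infty(\lambda)=U(\lambda)$. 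You instead bound $e_{\mathrm U}^{(K)}(\lambda)$ by $\mathrm{sp}\bigl(\tilde V_{K-1}-\tilde V^\star\bigr)$ and contract that span \emph{toward the fixed point} via an $L$-stage span contraction of the operator $\widetilde T$. Your reduction (with the normalization $\tilde V^\star(\gamma^{\mathrm r};\lambda)=0$, which is legitimate) and the observation that subtracting $\tilde U_{K+1}(\lambda)$ only shifts by constants are sound; your route avoids the paper's auxiliary-iteration device and makes the exponential rate conceptually transparent, and since $M$ is an unspecified scaling factor your constant bookkeeping is acceptable — indeed the detour of expanding $\mathrm{sp}(\tilde V_0-\tilde V^\star)$ as a telescoping series is unnecessary, as you may simply absorb that initial span into $M$.

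The one substantive caveat is the step you flag yourself: your argument requires a uniform Doeblin/scrambling coefficient for \emph{every} length-$L$ product of greedy matrices (the two-sided sandwich for $\widetilde T^L V-\widetilde T^L W$ uses products greedy along the $V$-trajectory on one side and along the $W$-trajectory, here $\tilde V^\star$, on the other), whereas the paper's condition \eqref{Ldefinition} is asserted only for the products actually generated by the algorithm and is consumed inside the successive-difference recursion. Your condition is stronger in form (a common positive column does imply the pairwise overlap you need, but not conversely), and your plan to establish it — per-policy mixing from the $1-\tau$ self-loops plus a concatenation/Wolfowitz-type argument over the finitely many deterministic policies of a unichain MDP — is left as a sketch. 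The paper is itself terse at the corresponding point, so this does not invalidate your approach, but it is the one place where your proof is not complete as written.
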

			Theorem \ref{theorem3:convergence} demonstrates that the upper bound of the \textit{relative error} decreases \textbf{exponentially} with respect to the number of iterations $K$. This indicates that the proposed method exhibits faster convergence, as the number of inner iterations required to achieve a given accuracy $\delta$ is at most \textbf{logarithmic}, \emph{i.e.}, $K\le\mathcal{O}\left(\log(1/\delta)\right)$.
			
			
			\section{Optimal Sampling Without Rate Constraint: A One-layer Primal-Dinkelbach Approach}\label{sectionIV2}
			The two-layer algorithm requires repeatedly executing the \textit{computation-intensive} $\tau$-RVI to evaluate $U(\lambda)$ at the fast time scale and update $\lambda$ on the slow timescale. This results in high complexity since each update of $\lambda$ in the outer-layer search necessitates running the inner layer value iterations. To address this, we develop efficient \textit{one-layer} iterations in this section that eliminate the need for outer-layer bisection search. The key idea here is to treat the constraint $U(\rho^\star) = 0$ as an intrinsic condition within the Markov Decision Process (MDP) $\mathscr{P}_{\mathrm{MDP}}(\rho^\star)$, allowing us to directly establish the \textit{optimality equations}, which take the form of \textit{fixed-point equations}, as shown in \cite{li2024sampling}. In this work, we show through an example that the \textit{fixed-point operation} in \cite{li2024sampling} may not converge due to the \textit{non-contractive} nature of the operator (see Section \ref{noconverge}). To address divergence, we develop a new one-layer iterative algorithm that guarantees provable convergence to the \textit{fixed point} (see Section \ref{iiC}).
			
			\subsection{{Fixed-Point Equations and Iterations}}\label{noconverge}
			We demonstrate here that the root finding process on $U(\lambda)$ is equivalent to solving the following non-linear equations, which we will then show to be \textit{fixed-point equations}.
			\begin{theorem}\label{the2}
				Solving Problem $\mathscr{P}_{\mathrm{MDP}}(\rho^{\star})$ with $U(\rho^{\star})=0$ is equivalent to solving the following nonlinear equations:
				\begin{equation}\label{eqfunction}
					\left\{
					\begin{aligned}
						&W^\star(\gamma)=\min _{A_i, Z_i}\Big\{g(\gamma,A_i,Z_i;\rho^{\star})+\mathbb{E}[W^\star\left(\gamma'\right)|\gamma,Z_i,A_i]\Big\},\\&\quad\quad\quad\quad\quad\quad\quad\quad\quad\quad\quad\quad\quad\quad\quad\quad\quad \gamma\in\mathcal{S}\times\mathcal{Y}\times\mathcal{A},\\
						&\rho^{\star}=\min _{A_i, Z_i}\left\{\frac{q\left(\gamma^{\mathrm{r}}, A_i, Z_i\right)+
							\mathbb{E}[W^\star\left(\gamma'\right)|\gamma^{\mathrm{r}}, A_i, Z_i]}{f\left(Z_i\right)}\right\},\\
					\end{aligned}\right.
				\end{equation}
				where $\gamma^{\mathrm{r}}\in\mathcal{S}\times\mathcal{Y}\times\mathcal{A}$ is the reference state and can be arbitrarily chosen.
			\end{theorem}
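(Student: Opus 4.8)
The plan is to establish the equivalence by going through the Average Cost Optimality Equations (ACOE) from \eqref{eq21}, specialized to the Dinkelbach parameter $\lambda = \rho^\star$. Recall that by Lemma \ref{l3}(iii), $\rho^\star$ is precisely the unique root of $U(\lambda) = 0$, and by Theorem \ref{the1} the MDP $\mathscr{P}_{\mathrm{MDP}}(\rho^\star)$ is unichain, so the ACOE admits a solution pair $(V^\star(\gamma;\rho^\star), U(\rho^\star))$. First I would set $U(\rho^\star) = 0$ in \eqref{eq21}, which collapses the ACOE to the single family of equations $V^\star(\gamma;\rho^\star) = \min_{A_i,Z_i}\{g(\gamma,Z_i,A_i;\rho^\star) + \mathbb{E}[V^\star(\gamma';\rho^\star)|\gamma,Z_i,A_i]\}$. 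Renaming $W^\star(\gamma) := V^\star(\gamma;\rho^\star)$ yields exactly the first line of \eqref{eqfunction}. So the forward direction — that a solution of $\mathscr{P}_{\mathrm{MDP}}(\rho^\star)$ with $U(\rho^\star)=0$ satisfies the first equation — is essentially immediate from the ACOE and Lemma \ref{l3}.

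The substantive part is deriving the second equation of \eqref{eqfunction} and proving the converse direction. For the second equation, I would substitute the decomposition of the cost function $g(\gamma,Z_i,A_i;\rho^\star) = q(\gamma,Z_i,A_i) - \rho^\star f(Z_i)$ from Lemma \ref{l4} into the reference-state instance of the first equation evaluated at $\gamma = \gamma^{\mathrm{r}}$. Since $W^\star(\gamma^{\mathrm{r}})$ plays the role of the relative value at the reference state, the natural normalization (as in standard RVI) is to anchor it; combined with $U(\rho^\star)=0$, the equation at $\gamma^{\mathrm{r}}$ reads $W^\star(\gamma^{\mathrm{r}}) = \min_{A_i,Z_i}\{q(\gamma^{\mathrm{r}},Z_i,A_i) - \rho^\star f(Z_i) + \mathbb{E}[W^\star(\gamma')|\gamma^{\mathrm{r}},Z_i,A_i]\}$. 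Rearranging to isolate $\rho^\star$ and using the fact that $f(Z_i) > 0$ (since $f(Z_i) = \mathbb{E}[Y_{i+1}] + Z_i \ge \mathbb{E}[Y_{i+1}] \ge 1$), I would show the minimizing action is unchanged under the transformation and that $\rho^\star = \min_{A_i,Z_i}\{(q(\gamma^{\mathrm{r}},A_i,Z_i) + \mathbb{E}[W^\star(\gamma')|\gamma^{\mathrm{r}},A_i,Z_i])/f(Z_i)\}$. This last step is where care is needed: passing from an \emph{additive} optimality equation to a \emph{fractional} one requires verifying that the argmin is preserved, which follows because for any fixed candidate value $\rho^\star$, the sign of $q + \mathbb{E}[W^\star] - \rho^\star f$ determines whether the fraction $(q + \mathbb{E}[W^\star])/f$ exceeds $\rho^\star$, exactly the Dinkelbach correspondence already invoked in Lemma \ref{l3}(i).

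For the converse — that any solution $(W^\star, \rho^\star)$ of the nonlinear system \eqref{eqfunction} must coincide with the genuine optimal value and relative value function — I would argue that $(W^\star, 0)$ satisfies the ACOE \eqref{eq21} for the MDP $\mathscr{P}_{\mathrm{MDP}}(\rho^\star)$ with $\lambda = \rho^\star$, hence by uniqueness of the average cost in a unichain MDP we get $U(\rho^\star) = 0$, and then by Lemma \ref{l3}(iii) the value $\rho^\star$ appearing in the system must equal the true optimum. One subtlety is that the relative value function $W^\star$ is unique only up to an additive constant; the second equation of \eqref{eqfunction}, which ties $\rho^\star$ to the specific value $W^\star(\gamma^{\mathrm{r}})$ at the reference state, serves precisely as the normalization that pins down this constant, so the two equations together have a consistent (and essentially unique, modulo the harmless additive freedom) solution.

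The main obstacle I anticipate is the rigorous justification of the argmin-preservation when moving between the additive form (line one at $\gamma^{\mathrm{r}}$) and the fractional form (line two): one must rule out the possibility that introducing the division by $f(Z_i)$ reshuffles which $(A_i, Z_i)$ attains the minimum. I would handle this by invoking the Dinkelbach lemma structure already established — namely that $h(A_i,Z_i) := q + \mathbb{E}[W^\star] - \rho^\star f \ge 0$ for all actions with equality at the optimum is equivalent to $(q + \mathbb{E}[W^\star])/f \ge \rho^\star$ with equality at the same optimum, using strict positivity of $f$ — so the two minimization problems share minimizers and optimal value. The remaining bookkeeping (substituting Lemma \ref{l4}'s cost decomposition, confirming boundedness so the expectations are finite) is routine given the bounded-delay and bounded-cost assumptions stated in the system model.
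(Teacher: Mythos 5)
Your proposal is correct and follows essentially the route the paper's machinery dictates: specialize the ACOE \eqref{eq21} to $\lambda=\rho^{\star}$ with $U(\rho^{\star})=0$ (Lemma \ref{l3}, Theorem \ref{the1}), use the decomposition $g=q-\rho^{\star}f$ from Lemma \ref{l4} and strict positivity of $f$ to pass between the additive and fractional forms at the reference state, and argue the converse by recognizing that any solution of \eqref{eqfunction} is an ACOE solution with average cost zero, forcing $U(\rho)=0$ and hence $\rho=\rho^{\star}$ by uniqueness of the Dinkelbach root. The only point to state explicitly when writing it up is the normalization $W^{\star}(\gamma^{\mathrm{r}})=0$: it is what turns the reference-state ACOE into the fractional equation in the forward direction, and conversely it is implied by combining the two equations of \eqref{eqfunction}, which you correctly identify as the mechanism pinning down the additive constant.
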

			\begin{proof}
				\textcolor{black}{See Appendix \ref{appendixg}.}
			\end{proof}
			
			In \eqref{eqfunction}, there are $|\mathcal{S}|\times|\mathcal{Y}|\times|\mathcal{A}|+1$ variables, \emph{i.e.}, $\rho^{\star}$ and $W^\star(\gamma),\gamma\in\mathcal{S}\times\mathcal{Y}\times\mathcal{A}$, which are matched by an equal number of equations. Solving non-linear equations is generally challenging, however, we establish that the equations (\ref{eqfunction}) are \textit{fixed-point equations}. Let $\mathbf{W}^\star$ denote the vector consisting of $W^\star(\gamma)$ for all $\gamma\in\mathcal{S}\times\mathcal{Y}\times\mathcal{A}$, (\ref{eqfunction}) can be succinctly represented as follows:
			\begin{equation}\label{eqfunctiontransformation}
				\left\{
				\begin{aligned}
					&\mathbf{W}^\star=T(\mathbf{W}^\star,\rho^{\star})\\
					&\rho^{\star}=H(\mathbf{W}^\star),
				\end{aligned}\right.
			\end{equation}
			where $T(\cdot):\mathbb{R}^{|\mathcal{S}\times\mathcal{Y}\times\mathcal{A}|}\times\mathbb{R}^+\rightarrow\mathbb{R}^{|\mathcal{S}\times\mathcal{Y}\times\mathcal{A}|}$ is a non-linear operator corresponding to the first equations of \eqref{eqfunctiontransformation} and $H(\cdot):\mathbb{R}^{|\mathcal{S}\times\mathcal{Y}\times\mathcal{A}|}\rightarrow\mathbb{R}^+$ is an operator corresponding to the second equation of \eqref{eqfunctiontransformation}. Substituting the second equation $\rho^{\star}=H(\mathbf{W}^\star)$ into the first equation of (\ref{eqfunctiontransformation}) yields $\mathbf{W}^\star=T(\mathbf{W}^\star,H(\mathbf{W}^\star))$. Define a new operator $G$ as $G(\mathbf{W})\triangleq T(\mathbf{W},H(\mathbf{W}))$, as implied by (\ref{eqfunctiontransformation}), we have the \textit{fixed-point equation}: \begin{equation}\label{eq31}
				\mathbf{W}^\star=G(\mathbf{W}^\star).
			\end{equation}
			Then, as was done in \cite[Algorithm 2]{li2024sampling}, we can conduct the following \textit{fixed-point iterations} to asymptotically approximate the solution to \eqref{eqfunction}:
			\begin{equation}\label{FPBI}
				\left\{\begin{aligned}
					&\mathbf{W}^{K+1}=G(\mathbf{W}^K), \\
					&\rho_{K+1}=H(\mathbf{W}^{K+1}).
				\end{aligned}\right.
			\end{equation}
			Typically, the convergence of fixed-point iterations is established using the \textit{Banach Contraction Mapping Theorem} \cite[Theorem 6]{banach1922operations}. According to this theorem, if the operator $Q(\cdot): \mathbb{R}^{|\mathcal{S}\times\mathcal{Y}\times\mathcal{A}|}\rightarrow\mathbb{R}^{|\mathcal{S}\times\mathcal{Y}\times\mathcal{A}|}$ is a \textit{contraction mapping}, \emph{i.e.}, for any $\mathbf{W}_1,\mathbf{W}_2\in\mathbb{R}^{ |\mathcal{S}\times\mathcal{Y}\times\mathcal{A}|}$, there exists a constant $0<\alpha<1$ such that
			\begin{equation}\label{alpha}
				||G(\mathbf{W}_1)-G(\mathbf{W}_2)||_2\le\alpha||\mathbf{W}_1-\mathbf{W}_2||_2, 
			\end{equation}
			then the operator $G$ has a unique fixed point, and the fixed-point iteration described in \eqref{FPBI} converges to the solution $\mathbf{W}^\star$, where $\mathbf{W}^\star=G(\mathbf{W^\star})$. 
			
			However, the following \textbf{counter-example} shows that the operator $G: \mathbb{R}^{|\mathcal{S}\times\mathcal{Y}\times\mathcal{A}|}\rightarrow\mathbb{R}^{|\mathcal{S}\times\mathcal{Y}\times\mathcal{A}|}$ may be a \textit{non-expansive mapping}\footnote{\textit{Non-expansive mapping} indicates $\alpha=1$ in \eqref{alpha}. Fig. \ref{fig:convergenceFPBI} demonstrates that the fix-point operation in \eqref{FPBI} may be \textit{non-expansive}.} rather than a \textit{contraction mapping}, which results in potential divergence.
			\begin{example}
				(Divergence Example of FPBI \cite[Algorithm 2]{li2024sampling}). Consider the parameter setup described in Appendix \ref{appendixh} where the delay is $p=0$. In this case, the fixed-point iteration starting from $\mathbf{W}^0=\mathbf{0}$ will diverge, as shown in Fig. \ref{fig:convergenceFPBI}.
			\end{example}
			\begin{figure}
				\centering
				\includegraphics[width=0.9\linewidth]{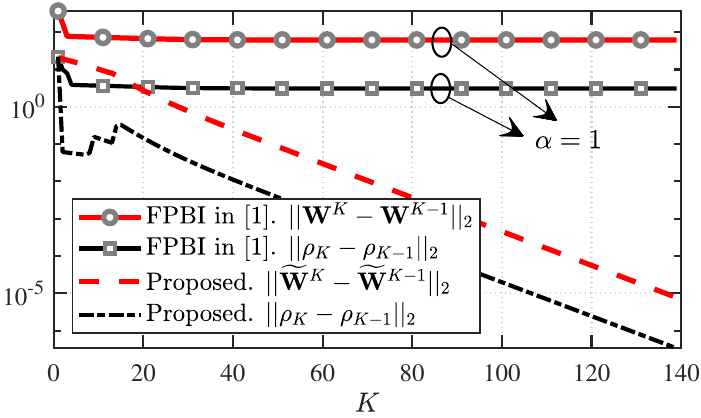}
				\caption{Convergence comparison between FPBI \cite[Algorithm 2]{li2024sampling} and \textsc{OnePDSI} (Iteration \ref{propsition2}). FPBI becomes \textit{non-expansive} and thus diverges. The proposed \textsc{OnePDSI} converges.}
				\label{fig:convergenceFPBI}
			\end{figure}
			
			\subsection{Primal-Dinkelbach Synchronous Iteration (Proposed \normalfont{\textsc{OnePDSI}})}\label{iiC}
			To overcome the divergence limitations of FPBI, we develop a novel one-layer iterative approach that guarantees rigorous convergence to the solution of the \textit{fixed-point equations} \eqref{eqfunction}. The details of the iteration are given as:
			\begin{iteration}\label{propsition2} ({\normalfont{\textsc{OnePDSI}}}): For a given $0<\kappa<1$, we can iteratively generate sequences $\{\rho_{K}\}^{K\in\mathbb{N}^+}$ and $\{\widetilde{W}_{K}(\gamma)\}_{\gamma\in\mathcal{S}\times\mathcal{Y}\times\mathcal{A}}^{K\in\mathbb{N}^+}$ with a starting initial value $\{\widetilde{W}_{0}(\gamma)\}_{\gamma\in\mathcal{S}\times\mathcal{Y}\times\mathcal{A}}$:
				\begin{subequations}\label{prop2}
					\begin{align}
						\rho_{K+1} &= \min_{A_i, Z_i} \Bigg\{ \frac{q(\gamma^{\text{r}}, Z_i, A_i) - \kappa \widetilde{W}_K(\gamma^{\text{r}}) \cdot \mathbb{E}[Y_i]}{f(Z_i)} \notag\\
						&+ \frac{\kappa \mathbb{E}\left[\widetilde{W}_K(\gamma') \mid \gamma^{\text{r}}, Z_i, A_i \right] \cdot \mathbb{E}[Y_i]}{f(Z_i)} \Bigg\}+\widetilde{W}_K(\gamma^{\text{r}}), \\
						\widetilde{W}_{K+1}(\gamma) &= \min_{A_i, Z_i} \Bigg\{ \frac{q(\gamma, Z_i, A_i) - \kappa \widetilde{W}_K(\gamma) \cdot \mathbb{E}[Y_i]}{f(Z_i)} \notag\\
						& + \frac{\kappa \mathbb{E}\left[\widetilde{W}_K(\gamma') \mid \gamma, Z_i, A_i \right] \cdot \mathbb{E}[Y_i]}{f(Z_i)} \Bigg\} \notag\\
						& +\widetilde{W}_K(\gamma) - \rho_{K+1}, \quad\forall \gamma \in \mathcal{S} \times \mathcal{Y} \times \mathcal{A}.\label{38b}
					\end{align}
				\end{subequations}	
				where $\gamma^{\text{r}}{\in\mathcal{S}\times\mathcal{Y}\times\mathcal{A}}$ is a fixed \textit{reference state} with an initial condition $\widetilde{W}_{0}(\gamma^{\text{r}})=0$.
			\end{iteration}
			
			In what follows, we present a comprehensive convergence analysis demonstrating that \textsc{OnePDSI} ensures robust and provable convergence.
			
			\subsection{Convergence of \textsc{OnePDSI}}\label{iiD}
			In this subsection, we theoretically demonstrate that the sequences in \textsc{OnePDSI} will approach the solution to \eqref{eqfunction}. To quantify the convergence, we define the \textit{relative error} of ${\rho_K}$ at iteration $K$ as:
			\begin{equation}
				e_{\rho}^{(K)}=|\rho_K-\rho^\star|.
			\end{equation}
			Meanwhile, define the \textit{relative error} of the sequence $\widetilde{W}_{K}(\gamma)$  at $K$-th iteration as
			\begin{equation}
				e_{\mathrm{W}}^{(K)}(\gamma;\kappa)=\left|\widetilde{W}_K(\gamma)-\frac{{W}^\star(\gamma)}{\kappa\cdotp\mathbb{E}[Y_i]}\right|.
			\end{equation}
			The following theorem demonstrates the convergence of \textsc{OnePDSI}.
			\begin{theorem}\label{Lemma6}
				(Convergence of \textsc{OnePDSI}). If the transformed MDP is unichain and $0<\kappa<1$, then the \textsc{OnePDSI} in \eqref{prop2} is convergent, with:
				\begin{equation}
					\begin{aligned}
						&\lim_{k \to \infty}	e_{\rho}^{(K)}=0,\\
						&\lim_{k \to \infty} e_{\mathrm{W}}^{(K)}(\gamma)=0, \forall \gamma\in\mathcal{S}\times\mathcal{Y}\times\mathcal{A}.
					\end{aligned}
				\end{equation}	
			\end{theorem}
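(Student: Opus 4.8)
The plan is to recognize \textsc{OnePDSI} as ordinary relative value iteration (RVI, Iteration \ref{RVI}) run on an auxiliary MDP that simultaneously encodes the Dinkelbach transformation and an aperiodization, and then to invoke Lemma \ref{Lemma4}. Write $m\triangleq\mathbb{E}[Y_i]=\mathbb{E}[Y_{i+1}]\ge 1$ and $\theta(Z_i)\triangleq\kappa m/f(Z_i)$; since $f(Z_i)=m+Z_i\ge m>0$ and $Z_i\ge0$, one has $\theta(Z_i)\in(0,\kappa]$. Define the auxiliary MDP $\widehat{\mathscr{P}}$ over the same state space $\mathcal{S}\times\mathcal{Y}\times\mathcal{A}$ and action space $\mathcal{Z}\times\mathcal{A}$ as $\mathscr{P}_{\mathrm{MDP}}$, with bounded cost $\hat c(\gamma,Z_i,A_i)\triangleq q(\gamma,Z_i,A_i)/f(Z_i)$ and transition kernel
\[
\widehat p(\gamma'\mid\gamma,Z_i,A_i)\triangleq\theta(Z_i)\Pr(\gamma'\mid\gamma,Z_i,A_i)+\bigl(1-\theta(Z_i)\bigr)\mathbbm{1}\{\gamma'=\gamma\},
\]
where $\Pr(\cdot\mid\cdot)$ is the kernel \eqref{tran}. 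The first step is a purely algebraic rearrangement of \eqref{prop2}: pulling $\widetilde{W}_K(\gamma)$ out of the fractional bracket and regrouping shows that the $\rho_{K+1}$-update equals $\min_{Z_i,A_i}\{\hat c(\gamma^{\mathrm r},\cdot)+\sum_{\gamma'}\widehat p(\gamma'\mid\gamma^{\mathrm r},\cdot)\widetilde{W}_K(\gamma')\}$ and the $\widetilde{W}_{K+1}(\gamma)$-update equals $\min_{Z_i,A_i}\{\hat c(\gamma,\cdot)+\sum_{\gamma'}\widehat p(\gamma'\mid\gamma,\cdot)\widetilde{W}_K(\gamma')\}-\rho_{K+1}$; i.e. $\{(\rho_K,\widetilde{W}_K)\}$ is precisely the RVI sequence for $\widehat{\mathscr{P}}$ with reference state $\gamma^{\mathrm r}$, and $\widetilde{W}_0(\gamma^{\mathrm r})=0$ is the usual RVI normalization.

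Next I would verify that $\widehat{\mathscr{P}}$ meets the hypotheses of Lemma \ref{Lemma4}. Finiteness of $\mathcal{S},\mathcal{A},\mathcal{Y},\mathcal{Z}$ is given, and $\hat c$ is bounded because each term of the finite sum in \eqref{15} is a row of a product of stochastic matrices applied to $\mathcal{C}(\cdot,A_i)$, so $|q(\gamma,Z_i,A_i)|\le(m+Z_i)\|\mathcal{C}\|_\infty=f(Z_i)\|\mathcal{C}\|_\infty$. For the unichain property: under any stationary policy $\phi$ the induced kernel is $\widehat P_\phi=\mathrm{diag}(\theta_\gamma)P_\phi+\mathrm{diag}(1-\theta_\gamma)$ with $\theta_\gamma\in(0,\kappa]$, so $\widehat P_\phi(\gamma,\gamma')>0$ iff $\gamma'=\gamma$ or $P_\phi(\gamma,\gamma')>0$; adjoining self-loops does not change reachability between distinct states, hence not the communicating-class structure, so $\widehat P_\phi$ inherits the single recurrent class of $P_\phi$, which is a unichain by Theorem \ref{the1}. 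For aperiodicity: $\widehat P_\phi(\gamma,\gamma)\ge 1-\theta_\gamma\ge 1-\kappa>0$ at every state under every stationary policy, so in particular the optimal one induces an aperiodic chain. Lemma \ref{Lemma4} then yields $\rho_K\to\widehat U$ and $\widetilde{W}_K(\gamma)\to\widehat V^\star(\gamma)$, where $(\widehat U,\widehat V^\star)$ is the ACOE solution of $\widehat{\mathscr{P}}$ normalized by $\widehat V^\star(\gamma^{\mathrm r})=0$.

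The third step pins down this limit. Substituting $\widehat p$ into the ACOE of $\widehat{\mathscr{P}}$ and cancelling the $\widehat V^\star(\gamma)$ terms gives, for every $\gamma$, $\widehat U=\min_{Z_i,A_i}\frac{q(\gamma,Z_i,A_i)+\kappa m(\mathbb{E}[\widehat V^\star(\gamma')\mid\gamma,Z_i,A_i]-\widehat V^\star(\gamma))}{f(Z_i)}$. Put $W(\gamma)\triangleq\kappa m\,\widehat V^\star(\gamma)$ (so $W(\gamma^{\mathrm r})=0$); multiplying through by $f(Z_i)>0$ turns ``$\widehat U\le(\cdot)/f$ for every action with equality at the minimizer'' into $\min_{Z_i,A_i}\{q(\gamma,Z_i,A_i)-\widehat U f(Z_i)+\mathbb{E}[W(\gamma')\mid\gamma,Z_i,A_i]\}=W(\gamma)$, i.e. $W(\gamma)=\min_{Z_i,A_i}\{g(\gamma,Z_i,A_i;\widehat U)+\mathbb{E}[W(\gamma')\mid\gamma,Z_i,A_i]\}$ by the definition \eqref{cfun} of $g$; evaluating the ratio identity at $\gamma^{\mathrm r}$ reproduces the second line of \eqref{eqfunction}. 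Hence $(W,\widehat U)$ solves the system \eqref{eqfunction}, which by Theorem \ref{the2} is equivalent to $\mathscr{P}_{\mathrm{MDP}}(\rho^\star)$ with $U(\rho^\star)=0$; its Dinkelbach root is unique (Lemma \ref{l3}(iii)) and its relative value function is unique under the normalization at $\gamma^{\mathrm r}$, so $\widehat U=\rho^\star$ and $\widehat V^\star(\gamma)=W^\star(\gamma)/(\kappa m)$. Combining with the previous step, $\rho_K\to\rho^\star$ and $\widetilde{W}_K(\gamma)\to W^\star(\gamma)/(\kappa\,\mathbb{E}[Y_i])$, i.e. $e_\rho^{(K)}\to0$ and $e_{\mathrm W}^{(K)}(\gamma;\kappa)\to0$.

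I expect the third step to be the main obstacle: the delicate point is the Dinkelbach-type passage from ``$\widehat U$ equals a minimum over actions of a ratio'' to ``$W$ obeys the affine Bellman equation with parameter $\widehat U$''. This hinges on $f(Z_i)>0$ holding for every admissible action, which is exactly what lets the ratio-minimization and the affine-minimization share argminima and forces the affine minimum at $\gamma$ to equal $W(\gamma)$. A secondary point requiring care is that the normalization $\widetilde{W}_K(\gamma^{\mathrm r})=0$ is preserved by \eqref{prop2} for all $K$ --- it is, since the $\rho_{K+1}$-update is precisely the $\gamma^{\mathrm r}$-evaluation of the bracket appearing in the $\widetilde{W}_{K+1}$-update --- which is what makes the identification with RVI exact rather than up to an unknown additive drift, and hence what makes the limit $\widehat V^\star(\gamma^{\mathrm r})=0$ consistent with $W^\star(\gamma^{\mathrm r})=0$ in the final rescaling.
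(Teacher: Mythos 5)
Your proposal is correct, but it takes a genuinely different route from the paper. The paper's proof (Appendix \ref{appendixJ}) is a self-contained Cauchy-sequence argument: it introduces the same ``lazy'' kernel you construct (there written $\widetilde{\mathbf{P}_{\kappa}(K)}$ in \eqref{definition2}, shown stochastic, aperiodic and unichain in Lemma \ref{Lemma9} and Lemma \ref{Lemma10}), derives sandwich inequalities on successive differences, treats the awkward case $\gamma^\star\ne\gamma^{\text{r}}$ via an auxiliary iteration, and only then identifies the limit with a root of \eqref{eqfunction}. You instead observe that, after pulling the action-independent term $\widetilde{W}_K(\gamma)$ inside the minimum, \textsc{OnePDSI} \emph{is} standard RVI for an auxiliary MDP with cost $q/f$ and kernel $\theta(Z_i)\Pr(\cdot)+(1-\theta(Z_i))\mathbbm{1}$, so convergence follows from Lemma \ref{Lemma4} once unichain and aperiodicity are checked (your reachability argument here is a cleaner substitute for the paper's Chapman--Kolmogorov computation in Lemma \ref{Lemma10}, and $0<\kappa<1$ is exactly what gives the uniform self-loop mass $1-\kappa>0$); the limit is then identified through the auxiliary ACOE, the positivity $f(Z_i)>0$ (the Dinkelbach passage you flag), Theorem \ref{the2} and the root uniqueness in Lemma \ref{l3}. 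What your route buys is brevity and conceptual clarity—\textsc{OnePDSI} is revealed as RVI on a Dinkelbach-normalized aperiodized MDP—at the cost of delivering only asymptotic convergence: the paper's explicit span-contraction bookkeeping is what produces the geometric error bound $\mathcal{O}(1/R^K)$ reused in Theorem \ref{the6upperbound2}, which a black-box appeal to Lemma \ref{Lemma4} does not give. One small point you assert without proof is $W^\star(\gamma^{\text{r}})=0$; it does follow from \eqref{eqfunction}, since the second equation there says the ratio form is minimized to $\rho^\star$ at $\gamma^{\text{r}}$, and multiplying by $f(Z_i)>0$ forces the affine minimum—i.e.\ $W^\star(\gamma^{\text{r}})$ by the first equation—to vanish; adding that one line closes the identification with the error metric $e_{\mathrm{W}}^{(K)}(\gamma;\kappa)$ as defined.
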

			\begin{proof}
				See Appendix \ref{appendixJ}.
			\end{proof}
			
			The next theorem characterizes an upper bound on the \textit{relative error} of the proposed \textsc{OnePDSI}, whose proof is provided in Appendix \ref{appendixJ}:
			\begin{theorem} (Upper Bound of Relative Error). \label{the6upperbound2}
				If the MDP $\mathscr{P}_{\text{MDP}}(\lambda)$ is a unichain MDP, then up to iteration $K$, the relative error \textcolor{black}{$e_{\mathrm{\rho}}^{(K)}$} is upper bounded above by
				\begin{equation}\color{black}
					e_{\rho}^{(K)}\le\frac{\tau M(1-\epsilon)^{(K-1)/L}}{1-(1-\epsilon)^{1/L}}=\mathcal{O}\left(\frac{1}{R^{K}}\right),
				\end{equation}
				where $M$ is a scaling factor, $L$ is defined by \eqref{181}. The term $R=\frac{1}{(1-\epsilon)^{1/L}}$ captures the asymptotic convergence rate. 
			\end{theorem}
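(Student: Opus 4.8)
The plan is to reduce the convergence-rate analysis of \textsc{OnePDSI} to a span-seminorm contraction on an auxiliary \emph{aperiodic} unichain MDP, mirroring the strategy behind Theorem~\ref{convergence1} and Theorem~\ref{theorem3:convergence}. First I would rewrite the recursion \eqref{prop2} so that the update of $\widetilde{W}_{K}$ is exhibited as a (rescaled) relative value iteration for a modified process $\widehat{\mathscr{P}}_{\mathrm{MDP}}$ whose one-step kernel is the $\kappa$-damped kernel $\widehat{p}_{ij}(a)=\kappa\,p_{ij}(a)+(1-\kappa)\,\mathbbm{1}\{i=j\}$, i.e.\ the device of \eqref{ptransform} with $\tau$ replaced by the effective damping that enters \eqref{prop2} through the factor $\kappa\,\E[Y_i]/f(Z_i)$. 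Since $f(Z_i)=\E[Y_{i+1}]+Z_i$ with $Z_i\ge 0$ and the $Y$'s identically distributed, this effective damping never exceeds $\kappa<1$, so the self-loop mass of $\widehat{\mathscr{P}}_{\mathrm{MDP}}$ is uniformly at least $1-\kappa>0$ over all states and actions. The companion Dinkelbach update $\rho_{K+1}$ is then identified as the running estimate of the average cost of $\widehat{\mathscr{P}}_{\mathrm{MDP}}$, so that $e_{\rho}^{(K)}=|\rho_K-\rho^\star|$ is controlled by the span of $\widetilde{W}_K-\widehat{W}^\star$, with $\widehat{W}^\star=W^\star/(\kappa\,\E[Y_i])$ the rescaled fixed point already established in the proof of Theorem~\ref{Lemma6}.

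Second, I would use that $\widehat{\mathscr{P}}_{\mathrm{MDP}}$ inherits the unichain property from $\mathscr{P}_{\mathrm{MDP}}(\rho^\star)$ (Theorem~\ref{the1}) and, by the uniformly positive self-loop mass, is aperiodic under every stationary policy. For a finite aperiodic unichain MDP there exist an integer $L$ (the quantity $L$ appearing in the statement) and a constant $\epsilon>0$ such that every $L$-step transition matrix $\widehat{P}^{\,L}_{\pi}$, uniformly over stationary policies $\pi$, possesses a common minorizing column — a Doeblin/scrambling condition. This minorization makes the $L$-fold composition of the damped Bellman operator $\mathcal{T}$ a contraction of modulus $1-\epsilon$ in the span seminorm, $\mathrm{sp}\!\left(\mathcal{T}^{L}v-\mathcal{T}^{L}w\right)\le(1-\epsilon)\,\mathrm{sp}(v-w)$, and in particular renders the successive increments $\mathrm{sp}(\widetilde{W}_{K+1}-\widetilde{W}_{K})$ geometrically small, $\le M\,(1-\epsilon)^{\lfloor (K-1)/L\rfloor}$, where $M$ absorbs the initial span $\mathrm{sp}(\widetilde{W}_1-\widetilde{W}_0)$ together with the scaling $\tau$ induced by the $\kappa\,\E[Y_i]$ weighting.

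Third, I would iterate and sum. Bounding $e_{\rho}^{(K)}$ by the tail of the increment series gives $e_{\rho}^{(K)}\le\sum_{j\ge K}\tau M(1-\epsilon)^{(j-1)/L}=\dfrac{\tau M(1-\epsilon)^{(K-1)/L}}{1-(1-\epsilon)^{1/L}}$, and setting $R=(1-\epsilon)^{-1/L}>1$ identifies this as $\mathcal{O}(1/R^{K})$. The same estimate transfers to $e_{\mathrm{W}}^{(K)}(\gamma;\kappa)$ by the standard span-to-sup conversion for relative value iteration (fixing the reference-state normalization $\widetilde{W}_0(\gamma^{\mathrm r})=0$), which yields componentwise geometric decay of the value error as well.

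I expect the main obstacle to be the first step: showing rigorously that the nonlinear \textsc{OnePDSI} recursion \eqref{prop2} — in which $\widetilde{W}_K(\gamma)$ appears both inside the fractional minimization (through $q-\kappa\widetilde{W}_K(\gamma)\E[Y_i]$ and $\kappa\,\E[\widetilde{W}_K(\gamma')\mid\cdot]\E[Y_i]$, all divided by $f(Z_i)$) and as the additive correction $+\widetilde{W}_K(\gamma)$ outside it — is genuinely affine in $\widetilde{W}_K$ and coincides with relative value iteration on $\widehat{\mathscr{P}}_{\mathrm{MDP}}$. The division by the action-dependent $f(Z_i)$ has to be reorganized carefully so that the damped Bellman operator and its transition kernel are exposed cleanly; once this identification is secured, the minorization-based span contraction and the geometric tail bound are routine and follow the template of Theorem~\ref{theorem3:convergence}.
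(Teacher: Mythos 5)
Your proposal is correct and follows essentially the same route as the paper's own proof (Appendix for Theorem \ref{Lemma6}): the paper likewise rewrites \eqref{prop2} in vector form to expose the action-dependent damped stochastic matrix $\mathbf{I}-\mathrm{diag}\left(\kappa\mathbb{E}[Y_i]/\mathbf{f}_K\right)+\mathbf{P}(K)\,\mathrm{diag}\left(\kappa\mathbb{E}[Y_i]/\mathbf{f}_K\right)$, shows it is an aperiodic unichain stochastic matrix precisely via your uniform self-loop-mass observation (Lemmas \ref{Lemma9} and \ref{Lemma10}), invokes the $L$-step minorization \eqref{181}, contracts the span of successive increments, and sums the geometric tail to obtain $|\rho_K-\rho_\infty|\le \tau M(1-\epsilon)^{(K-1)/L}/\bigl(1-(1-\epsilon)^{1/L}\bigr)$ with $\rho_\infty=\rho^\star$ from the fixed-point identification. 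The only presentational difference is that the paper resolves the reference-state normalization by a two-case argument (introducing an auxiliary iteration when $\gamma^\star\ne\gamma^{\text{r}}$) rather than your span-to-sup conversion, which amounts to the same device.
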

			
			Theorem \ref{the6upperbound2} demonstrates that the upper bound of the \textit{relative error} decreases \textbf{exponentially} with respect to the number of iterations $K$. This indicates that the number of inner iterations required to achieve a given optimality gap $\delta$ is at most \textbf{logarithmic}, \emph{i.e.}, $K\le\mathcal{O}\left(\log(1/\delta)\right)$.


				
				\section{Optimal Sampling With Rate Constraint: A Typical Three-Layer Approach}\label{sectionV}
				In this section, we investigate the optimal sampling and decision-making policy that minimizes the long-term average cost under a sampling frequency constraint, as formulated in Problem~\ref{p1}. Our goal is to derive this optimal policy and its corresponding value $h^{\star}$.  
				\subsection{Lagrangian Dual Techniques}
				
				Following the steps in \eqref{eq6}-\eqref{p4eq} and applying Dinkelbach's method for non-linear fractional programming as in \cite{dinkelbach1967nonlinear} and \cite[Lemma 2]{sun2019samplingwiener}, the problem of determining the optimal policy for Problem \ref{p1} is equivalent to solving the following alternative problem given the \textit{Dinkelbach} parameter $\lambda$:
				
				\begin{problem}[\textit{Standard Infinite-Horizon Constrained Markov Decision Process (CMDP) with Dinkelbach Parameter $\lambda$}]\label{p5}
					\begin{equation}
						\begin{aligned}
							&H(\lambda;f_{\max})\triangleq\\
							&\inf _{\boldsymbol{\psi}} \lim _{\mathrm{n} \rightarrow \infty}\frac{1}{n} {\sum_{i=0}^{n-1}\left\{ \mathbb{E}_{\boldsymbol{\psi}}\left[\sum_{t=\it{D}_{i}}^{\it{D}_{i+1}-1} \mathcal{C}(X_t, A_i)\right]-\lambda\mathbb{E}_{\boldsymbol{\psi}} \left[Z_i+Y_{i+1}\right]\right\}}\\
							&\text{s.t. } \lim_{T \to \infty} \frac{1}{T} \mathbb{E}_{\boldsymbol{\psi}}\left[\sum_{i=1}^{T}(S_{i+1}-S_i)\right] \geq \frac{1}{f_{\text{max}}},
						\end{aligned}
					\end{equation}
				\end{problem}
				The following lemma characterizes the relationship between $h^\star$ and the optimal value of Problem~\ref{p5}. The proof follows directly from modifying the policies in \cite[Appendix B]{li2024sampling} to satisfy the sampling frequency constraint, and is therefore omitted.
				\begin{Lemma}\label{l5v2}
					The following assertions hold:\\
					(i). $h^\star \gtreqless \lambda \text { if and only if } H(\lambda;f_{\max}) \gtreqless 0 \text {. }$\\
					(ii). When $H(\lambda;f_{\max})=0$, the solutions to Problem \ref{p5} coincide with those of Problem \ref{p1}.\\
					(iii). $H(\lambda;f_{\max})=0$ has a unique root, and the root is $h^\star$.
				\end{Lemma}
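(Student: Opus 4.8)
The plan is to read Lemma~\ref{l5v2} as the Dinkelbach correspondence for a linear‑fractional program restricted to a \emph{fixed} feasible set, namely the set $\Psi_{f_{\max}}$ of policies $\boldsymbol{\psi}$ satisfying the sampling‑frequency constraint $\lim_{T\to\infty}\tfrac1T\mathbb{E}[\sum_{i=1}^T(S_{i+1}-S_i)]\ge 1/f_{\max}$. Mirroring the regenerative‑process reformulation used for the unconstrained case in Problems~\ref{p3}--\ref{p4} (and in \cite[Appendix B]{li2024sampling}), for each $\boldsymbol{\psi}\in\Psi_{f_{\max}}$ I would set $N(\boldsymbol{\psi})\triangleq\lim_{n}\tfrac1n\sum_{i=0}^{n-1}\mathbb{E}[\sum_{t=D_i}^{D_{i+1}-1}\mathcal{C}(X_t,A_i)]$ and $D(\boldsymbol{\psi})\triangleq\lim_{n}\tfrac1n\sum_{i=0}^{n-1}\mathbb{E}[Z_i+Y_{i+1}]$, so that $h^\star=\inf_{\boldsymbol{\psi}\in\Psi_{f_{\max}}}N(\boldsymbol{\psi})/D(\boldsymbol{\psi})$ and $H(\lambda;f_{\max})=\inf_{\boldsymbol{\psi}\in\Psi_{f_{\max}}}\{N(\boldsymbol{\psi})-\lambda D(\boldsymbol{\psi})\}$ over the \emph{same} feasible set. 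The two facts that drive everything are: (a) $D(\boldsymbol{\psi})$ is the long‑run average inter‑sample time, so the constraint gives the uniform lower bound $D(\boldsymbol{\psi})\ge 1/f_{\max}>0$; and (b) boundedness of $Z_i,Y_i$ and of $\mathcal{C}$ yields a uniform upper bound $D(\boldsymbol{\psi})\le D_{\max}<\infty$ and finiteness of $N(\boldsymbol{\psi})$, hence finiteness of $H(\lambda;f_{\max})$.

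For part~(i) I would argue the trichotomy directly. If $h^\star>\lambda$, then for every $\boldsymbol{\psi}\in\Psi_{f_{\max}}$ we have $N(\boldsymbol{\psi})-\lambda D(\boldsymbol{\psi})=D(\boldsymbol{\psi})\big(N(\boldsymbol{\psi})/D(\boldsymbol{\psi})-\lambda\big)\ge (h^\star-\lambda)/f_{\max}>0$, so $H(\lambda;f_{\max})\ge (h^\star-\lambda)/f_{\max}>0$. Conversely, if $H(\lambda;f_{\max})>0$, then for every $\boldsymbol{\psi}$, $N(\boldsymbol{\psi})/D(\boldsymbol{\psi})-\lambda=(N(\boldsymbol{\psi})-\lambda D(\boldsymbol{\psi}))/D(\boldsymbol{\psi})\ge H(\lambda;f_{\max})/D_{\max}>0$, hence $h^\star-\lambda\ge H(\lambda;f_{\max})/D_{\max}>0$. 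This settles $h^\star>\lambda\Leftrightarrow H(\lambda;f_{\max})>0$. The identical chain of inequalities, now starting from an $\varepsilon$‑optimal policy when the relevant infimum is not attained, gives $h^\star<\lambda\Leftrightarrow H(\lambda;f_{\max})<0$; the remaining case $h^\star=\lambda\Leftrightarrow H(\lambda;f_{\max})=0$ then follows by trichotomy on both sides.

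For part~(iii), putting $\lambda=h^\star$ in part~(i) gives $H(h^\star;f_{\max})=0$, so $h^\star$ is a root. Uniqueness follows from strict monotonicity: $H(\,\cdot\,;f_{\max})$ is the pointwise infimum of the affine maps $\lambda\mapsto N(\boldsymbol{\psi})-\lambda D(\boldsymbol{\psi})$, each with slope $-D(\boldsymbol{\psi})\le-1/f_{\max}$, so for $\lambda_1<\lambda_2$ one obtains $H(\lambda_1;f_{\max})\ge H(\lambda_2;f_{\max})+(\lambda_2-\lambda_1)/f_{\max}$, i.e.\ $H(\,\cdot\,;f_{\max})$ is strictly decreasing (and concave); hence there is exactly one root, equal to $h^\star$. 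Part~(ii) is then immediate: $H(\lambda;f_{\max})=0$ forces $\lambda=h^\star$, and a feasible policy $\boldsymbol{\psi}$ satisfies $N(\boldsymbol{\psi})-h^\star D(\boldsymbol{\psi})=0=H(h^\star;f_{\max})$ if and only if $N(\boldsymbol{\psi})/D(\boldsymbol{\psi})=h^\star$, so the minimizers of Problem~\ref{p5} are exactly those of Problem~\ref{p1}, with attainment guaranteed by the finite‑state unichain structure (the constrained analog of Theorem~\ref{the1}, the optimal stationary policy possibly randomizing over two deterministic ones, which is precisely what forces the LP in Section~\ref{sectionV}).

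The main obstacle I anticipate is not any individual inequality but the groundwork: showing that the regenerative‑process reduction of Problem~\ref{p1} to the ratio form $h^\star=\inf_{\boldsymbol{\psi}\in\Psi_{f_{\max}}}N(\boldsymbol{\psi})/D(\boldsymbol{\psi})$ survives the sampling‑frequency constraint — the feasible set $\Psi_{f_{\max}}$ must be shown closed under the policy‑surgery argument of \cite[Appendix B]{li2024sampling}, and the Cesàro limits defining $N(\boldsymbol{\psi}),D(\boldsymbol{\psi})$ must still exist for feasible $\boldsymbol{\psi}$ — together with pinning down the attainment/randomization point in part~(ii) so that "the solutions coincide" is a statement about genuine optimizers rather than merely $\varepsilon$‑optimal policies. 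Once the common feasible set and the two‑sided bounds $1/f_{\max}\le D(\boldsymbol{\psi})\le D_{\max}$ are established, the Dinkelbach mechanics behind (i) and (iii) are routine.
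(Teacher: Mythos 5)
Your proposal is correct and follows essentially the same route as the paper: the authors omit the proof, stating it follows by repeating the Dinkelbach-type argument of the unconstrained case (\cite[Appendix B]{li2024sampling}, i.e.\ Lemma~\ref{l3}) with all policies restricted to the sampling-frequency-feasible set, which is exactly your construction of $h^\star=\inf_{\boldsymbol{\psi}\in\Psi_{f_{\max}}}N(\boldsymbol{\psi})/D(\boldsymbol{\psi})$ and $H(\lambda;f_{\max})=\inf_{\boldsymbol{\psi}\in\Psi_{f_{\max}}}\{N(\boldsymbol{\psi})-\lambda D(\boldsymbol{\psi})\}$ over the same feasible set, with the bounds $1/f_{\max}\le D(\boldsymbol{\psi})\le D_{\max}$ driving the trichotomy, strict monotonicity, and uniqueness of the root.
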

				
				With Lemma \ref{l5v2} in hand, solving Problem \ref{p1} is equivalent to solving for the root of the implicit function $H(\lambda;f_{\max})$. To obtain $H(\lambda;f_{\max})$ given $\lambda$, we first transform the CMDP into an unconstrained Lagrangian MDP problem. Specifically, define the \textit{Lagrange Function} as:
				\begin{equation}\label{eq37}
					\begin{aligned}
						&\mathcal{L}(\boldsymbol{\psi};\theta,\lambda,f_{\max})=\frac{\theta}{f_{\max}}+\\&\lim _{n \rightarrow \infty} \frac{1}{n} \sum_{i=0}^{n-1}\mathbb{E}_{\boldsymbol{\psi}}\left\{\sum_{t=D_i}^{D_{i+1}-1} \mathcal{C}\left(X_t, A_i\right)-(\lambda+\theta) \left(Z_i+Y_{i+1}\right)\right\},
					\end{aligned}
				\end{equation}
				where $\theta\ge0$ is the \textit{Lagrangian multiplier}. Let the \textit{Lagrangian Dual Function} defined as:
				\begin{equation}\label{eq38}
					\Upsilon(\theta,\lambda;f_{\max})\triangleq\inf_{\boldsymbol{\psi}}\mathcal{L}(\boldsymbol{\psi};\theta,\lambda,f_{\max}).
				\end{equation}
				Since an optimal \textit{stationary deterministic} policy exists for the MDP problem $\inf_{\boldsymbol{\psi}}\mathcal{L}(\boldsymbol{\psi};\theta,\lambda,f_{\max})$ (as indicated in Theorem \ref{thm:lifted-unichain}), we can use a short-hand notation $\phi$ to denote $\boldsymbol{\psi}$, and the \textit{Lagrangian Dual Problem} of Problem \ref{p5} is:
				
				\begin{problem}[\textit{Lagrangian Dual Problem}]\label{p6}
					\begin{equation}\label{eq39}
						\begin{aligned}
							d(\lambda;f_{\max})=\max_{\theta\ge0}\Upsilon(\theta,\lambda;f_{\max}),
						\end{aligned}
					\end{equation}
					where $\Upsilon(\theta,\lambda;f_{\max})=\inf_{\phi}\mathcal{L}(\phi;\theta,\lambda,f_{\max})$ .
				\end{problem}
				The \textit{weak duality principle} \cite[Chapter 5.2.2]{boyd2004convex} implies that $d(\lambda;f_{\max})$ is a lower bound of $H(\lambda;f_{\max})$, \emph{i.e.}, $d(\lambda;f_{\max})\le H(\lambda;f_{\max})$. In the following lemma, we establish the conditions where the \textit{strong duality} holds true and thus $d(\lambda;f_{\max})=H(\lambda;f_{\max})$. Under these conditions, it is sufficient to solve $d(\lambda;f_{\max})$ to obtain $H(\lambda;f_{\max})$.
				\begin{Lemma}\label{Lemma6v2}
					(Restatement of \cite[Chapter 5.5.3]{boyd2004convex}) The duality gap between Problem \ref{p5} and Problem \ref{p6} is zero, \emph{i.e.}, $d(\lambda;f_{\max})=H(\lambda;f_{\max})$, if and only if for any given $\lambda$, we can find $\phi^\star_{\lambda+\theta^\star_\lambda}$ and $\theta^\star_\lambda$ such that the Karush--Kuhn--Tucker (KKT) conditions are satisfied:
					\begin{subequations}
						\begin{align}
							&\theta^\star_\lambda\ge0,\label{eq40}\\
							&\phi^\star_{\lambda+\theta^\star_\lambda}=\mathop{\arg\min}_{\phi}\lim _{n \rightarrow \infty} \frac{1}{n} \sum_{i=0}^{n-1}\mathbb{E}_{\phi}\Bigg\{\sum_{t=D_i}^{D_{i+1}-1} \mathcal{C}\left(X_t, A_i\right)-\notag\\&\quad\quad\quad\quad\quad\quad\quad\quad\quad(\lambda+\theta^\star_\lambda) \left(Z_i+Y_{i+1}\right)\Bigg\}+\frac{\theta^\star_\lambda}{f_{\max}},\label{kkt2}\\
							&\lim _{\mathrm{n} \rightarrow \infty}\frac{1}{n} {\sum_{i=0}^{n-1}\mathbb{E}_{\phi^\star_{\lambda+\theta^\star_\lambda}} \left[Z_i+Y_{i+1}\right]}\ge\frac{1}{f_{\max}},\label{KKT3}\\
							&\theta^\star_\lambda\left\{\lim _{\mathrm{n} \rightarrow \infty}\frac{1}{n} {\sum_{i=0}^{n-1}\mathbb{E}_{\phi^\star_{\lambda+\theta^\star_\lambda}} \left[Z_i+Y_{i+1}\right]}-\frac{1}{f_{\max}}\right\}=0.\label{kktT4}
						\end{align}	
					\end{subequations}
				\end{Lemma}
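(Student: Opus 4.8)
The plan is to leverage the structure granted by Theorem~\ref{the1}: since $\mathscr{P}_{\mathrm{MDP}}(\lambda)$ is unichain, the constrained average-cost problem in Problem~\ref{p5} is equivalent to a finite-dimensional linear program over the polytope of stationary occupation measures $\mu$ of $\mathscr{P}_{\mathrm{MDP}}(\lambda)$: the Dinkelbach-modified objective equals $\langle \mu, g(\cdot\,;\lambda)\rangle$, the long-run expected inter-delivery (equivalently inter-sample) time equals $\langle \mu, f\rangle$, and the frequency constraint becomes the linear inequality $\langle \mu, f\rangle \ge 1/f_{\max}$. With this convex (indeed linear) structure, the asserted equivalence is the specialization of the KKT optimality characterization for convex programs \cite[Ch.~5.5.3]{boyd2004convex} to the present CMDP, and what remains is to translate the generic KKT system into \eqref{eq40}--\eqref{kktT4} and to verify the regularity that the textbook result requires. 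Writing $H \equiv H(\lambda;f_{\max})$, $d\equiv d(\lambda;f_{\max})$, and letting $J(\phi)$ denote the Problem~\ref{p5} objective evaluated at a stationary policy $\phi$, a direct computation from \eqref{eq37} yields the identity $\mathcal{L}(\phi;\theta,\lambda,f_{\max}) = J(\phi) + \theta\big(\tfrac{1}{f_{\max}} - \lim_{n}\tfrac1n\sum_{i}\mathbb{E}_{\phi}[Z_i+Y_{i+1}]\big)$, which I will use throughout.

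\emph{($\Leftarrow$).} Suppose $\theta^\star_\lambda\ge 0$ and $\phi^\star_{\lambda+\theta^\star_\lambda}$ satisfy \eqref{eq40}--\eqref{kktT4}. Condition \eqref{kkt2} states precisely that $\phi^\star_{\lambda+\theta^\star_\lambda}$ attains the infimum defining $\Upsilon(\theta^\star_\lambda,\lambda;f_{\max})$ in \eqref{eq38}, so $\Upsilon(\theta^\star_\lambda,\lambda;f_{\max}) = \mathcal{L}(\phi^\star_{\lambda+\theta^\star_\lambda};\theta^\star_\lambda,\lambda,f_{\max})$; by the identity above together with complementary slackness \eqref{kktT4} the bracketed term vanishes, hence $\Upsilon(\theta^\star_\lambda,\lambda;f_{\max}) = J(\phi^\star_{\lambda+\theta^\star_\lambda})$. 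Feasibility \eqref{KKT3} gives $J(\phi^\star_{\lambda+\theta^\star_\lambda}) \ge H$, while $d\ge \Upsilon(\theta^\star_\lambda,\lambda;f_{\max})$ and weak duality $d\le H$ then squeeze $H \ge d \ge J(\phi^\star_{\lambda+\theta^\star_\lambda}) \ge H$, so $d=H$. This direction uses only Lagrangian weak duality and complementary slackness --- no convexity.

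\emph{($\Rightarrow$).} Conversely, assume $d = H$. Let $\phi^\star$ be a stationary deterministic optimum of Problem~\ref{p5}, which exists because the LP is feasible and bounded (costs are bounded) and, by Theorem~\ref{the1}, the optimum is attained at a stationary deterministic policy; and let $\theta^\star_\lambda$ maximize $\Upsilon(\cdot,\lambda;f_{\max})$ over $\theta\ge0$, a maximizer existing because the finite LP dual attains its value whenever the common optimal value is finite, equivalently because $\theta\mapsto\Upsilon(\theta,\lambda;f_{\max})$ is concave and, under the feasibility (Slater) hypothesis implicit in Problem~\ref{p5}, coercive. Then, using $\theta^\star_\lambda\ge0$ and primal feasibility of $\phi^\star$ in the identity above,
\begin{equation*}
H = d = \Upsilon(\theta^\star_\lambda,\lambda;f_{\max}) \le \mathcal{L}(\phi^\star;\theta^\star_\lambda,\lambda,f_{\max}) \le J(\phi^\star) = H ,
\end{equation*}
the first inequality being the definition of $\Upsilon$ as an infimum and the last equality primal optimality. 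Equality throughout forces (i) $\mathcal{L}(\phi^\star;\theta^\star_\lambda,\lambda,f_{\max}) = \Upsilon(\theta^\star_\lambda,\lambda;f_{\max})$, i.e.\ $\phi^\star$ minimizes the Lagrangian MDP, which is \eqref{kkt2}; and (ii) $\theta^\star_\lambda\big(\tfrac{1}{f_{\max}} - \lim_{n}\tfrac1n\sum_{i}\mathbb{E}_{\phi^\star}[Z_i+Y_{i+1}]\big) = 0$, which is complementary slackness \eqref{kktT4}. Dual feasibility \eqref{eq40} and primal feasibility \eqref{KKT3} hold by the choices of $\theta^\star_\lambda$ and $\phi^\star$, and by Theorem~\ref{the1} the minimizing policy in \eqref{kkt2} may be taken stationary deterministic, so it may be written $\phi^\star_{\lambda+\theta^\star_\lambda}$.

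The step I expect to be the main obstacle is not the Lagrangian bookkeeping above but the two structural facts it rests on. First, that the average-cost CMDP --- with its $\limsup/\liminf$ Cesàro limits --- genuinely collapses to a finite LP over occupation measures; this relies on the unichain property of Theorem~\ref{the1}, which makes stationary policies optimal, ensures the limits exist, and renders the objective and the constraint linear (hence continuous) functionals of $\mu$. Second, existence of a dual optimal multiplier $\theta^\star_\lambda$: I would obtain it either from finite LP strong duality or, more robustly, by exhibiting a Slater point --- a policy that deliberately waits long enough that $\lim_{n}\tfrac1n\sum_{i}\mathbb{E}[Z_i+Y_{i+1}]>1/f_{\max}$ strictly --- which makes $\Upsilon(\cdot,\lambda;f_{\max})$ coercive in $\theta$ and hence guarantees the maximizer. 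Once these are secured, Lemma~\ref{Lemma6v2} is exactly \cite[Ch.~5.5.3]{boyd2004convex} specialized to the CMDP setting.
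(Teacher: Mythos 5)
The paper never proves Lemma \ref{Lemma6v2}: it is stated as a restatement of \cite[Chapter 5.5.3]{boyd2004convex} and used as a black box, with the real work of actually exhibiting the pair $(\theta^\star_\lambda,\phi^\star_{\lambda+\theta^\star_\lambda})$ deferred to Lemma \ref{Lemma7v2}, Corollary \ref{coro1}, and Theorem \ref{theorem5}. Your proposal is therefore a genuinely different (self-contained) route, built on the identity $\mathcal{L}(\phi;\theta,\lambda,f_{\max})=J(\phi)+\theta\bigl(\tfrac{1}{f_{\max}}-\lim_n\tfrac1n\sum_i\mathbb{E}_\phi[Z_i+Y_{i+1}]\bigr)$, where $J(\phi)$ is the Problem \ref{p5} objective. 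Your sufficiency direction ($\Leftarrow$) is correct as written and, as you note, uses only weak duality and complementary slackness; this is in fact the direction the paper exploits downstream, so spelling it out is a genuine addition relative to the bare citation.

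Two points in your necessity direction ($\Rightarrow$) need repair. First, you take ``a stationary deterministic optimum of Problem \ref{p5}'' and conclude by asserting that the minimizer in \eqref{kkt2} ``may be taken stationary deterministic.'' For a constrained MDP this is false in general, and the paper's own Theorem \ref{theorem5}($iii$) exhibits the obstruction: the single policy that must satisfy \eqref{kkt2}, \eqref{KKT3}, and \eqref{kktT4} simultaneously may have to be a randomized mixture of two deterministic Lagrangian minimizers, because a deterministic minimizer of the Lagrangian need not meet the frequency constraint with the required complementary slackness. The fix is to let $\phi^\star$ be a stationary, possibly randomized, primal optimum — whose existence follows from the occupation-measure LP under the unichain property of Theorem \ref{the1}, and which still minimizes the Lagrangian because the Lagrangian is linear in the occupation measure — so your equality chain survives, but determinism cannot be claimed. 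Second, dual attainment of $\theta^\star_\lambda$ is asserted rather than established: your Slater-point sketch (wait long enough that the mean inter-sample time strictly exceeds $1/f_{\max}$) only works when $1/f_{\max}$ is strictly below the largest achievable mean inter-sample time $\max\mathcal{Z}+\mathbb{E}[Y_i]$; in the boundary case coercivity fails and you must instead invoke attainment of the dual optimum for finite LPs. With these two repairs the argument is sound and arguably more informative than the paper's citation, which silently assumes the convex-program framework of Boyd applies to this average-cost CMDP.
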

				
				By leveraging Lemma \ref{Lemma6v2}, we reformulate the constrained problem as an unconstrained Problem \ref{p6}. Given a fixed \textit{Dinkelbach} parameter $\lambda$, the goal of Problem \ref{p6} is to determine the \textit{saddle point} $(\phi^\star_{\lambda+\theta^\star_\lambda}, \theta^\star_\lambda)$ of the function $\mathcal{L}(\phi; \theta, \lambda)$. The inner layer of Problem \ref{p6} is a standard MDP problem with fixed \textit{Lagrangian multiplier} $\theta\ge0$ and \textit{Dinkelbach parameter} $\lambda$, while the outer layer seeks the optimal \textit{Lagrangian multiplier} $\theta^\star_{\lambda}$ that maximizes the \textit{Lagrangian Dual Function} \eqref{eq38} under the KKT conditions \eqref{eq40}-\eqref{kktT4}. 
				
				\subsection{Three-Layer Solutions and the Structure of Optimal Policies}
				In this subsection, we propose a \textbf{three-layer} algorithm outlined in Algorithm \ref{Algorithm 2}. The basic framework of this algorithm is inspired by \cite[Section IV.C]{10621420}. This algorithm consists of inner, middle, and outer layers, whose implementations are detailed below.
				\subsubsection{Inner-Layer: A Standard MDP Given $\lambda$ and $\theta$}
				For any given $\theta$ and $\lambda$, the inner layer  $\inf_{\phi}\mathcal{L}(\phi;\theta,\lambda,f_{\max})$ is a standard unconstrained infinite horizon MDP as defined in \ref{MDP}, which is denoted by $\mathscr{P}_{\text{MDP}}(\theta+\lambda)$ with the optimal value $U(\lambda+\theta)$. This problem can be efficiently solved using the $\tau$-RVI algorithm, as outlined in Iteration \ref{convergence:theorem2}. Notably, \cite[Section IV.C]{10621420} applies a value iteration process in the inner layer, while we apply the $\tau$-RVI to ensure the rigorous convergence in our context. 
				
				\subsubsection{Middle-Layer: Update Lagrangian multiplier $\theta$ Given $\lambda$}
				Given a \textit{Dinkelbach parameter} $\lambda$, the middle layer involves solving $\max_{\theta\ge0}\Upsilon(\theta,\lambda;f_{\max})$ in Problem \ref{p6} to accurately approximate $d(\lambda; f_{\max})$, which searches for the optimal $\theta^\star_{\lambda}$ and its corresponding optimal policy $\phi^\star_{\lambda+\theta^\star_{\lambda}}$ that satisfy the KKT conditions in Lemma \ref{Lemma6v2}. Different from \cite[Section IV.C]{10621420} where the sub-gradient method is employed to update the Lagrangian multiplier leveraging the convexity of the Lagrangian Dual function, we conduct a monotonicity analysis and explicitly state the conditions for the optimal multiplier. We begin by introducing the following notations that help clarify the role of the multiplier $\theta$ given a fixed $\lambda$. Specifically, 
				\begin{equation}\label{eq44v2}
					\mathcal{Q}^{\lambda+\theta}\triangleq\limsup _{\mathrm{n} \rightarrow \infty}\frac{1}{n} {\sum_{i=0}^{n-1} \mathbb{E}_{\phi^\star_{\lambda+\theta}}\left[\sum_{t=\it{D}_{i}}^{\it{D}_{i+1}-1} \mathcal{C}(X_t, A_i)\right]},
				\end{equation}
				\begin{equation}\label{eq45v2}
					\mathcal{F}^{\lambda+\theta}\triangleq\liminf _{\mathrm{n} \rightarrow \infty}\frac{1}{n} {\sum_{i=0}^{n-1}\mathbb{E}_{\phi^\star_{\lambda+\theta}} \left[Z_i+Y_{i+1}\right]},
				\end{equation}
				where $\phi^\star_{\lambda+\theta}$ is the optimal \textit{stationary deterministic} policy for the unconstrained problem $\mathscr{P}_{\text{MDP}}(\lambda+\theta)$ given in \ref{MDP}. The subsequent Lemma presents key properties of $\mathcal{F}^{\lambda+\theta}$, $\mathcal{Q}^{\lambda+\theta}$, $U(\lambda+\theta)$, and $\Upsilon(\theta, \lambda; f_{\max})$ with respect to $\theta$, which are useful for the explicit solutions.
				\begin{Lemma} \label{Lemma7v2}(A variant of \cite[Lemma 3.1]{BEUTLER1985236}) The following assertions hold true:
					
					($i$) $U({\lambda+\theta})$ is non-increasing with respect to $\theta$;
					
					($ii$) $\mathcal{Q}^{\lambda+\theta}$ and $\mathcal{F}^{\lambda+\theta}$ are non-decreasing functions with respect to $\theta$; 
					
					($iii$) $\mathcal{Q}^{\lambda+\theta}$ and $\mathcal{F}^{\lambda+\theta}$ are both step functions with respect to $\theta$;
					
					($iv$) If $\mathcal{F}^{\lambda+\theta}\ge1/f_{\max}$, $\Upsilon(\theta,\lambda;f_{\max})$ is non-increasing with respect to $\theta$.
				\end{Lemma}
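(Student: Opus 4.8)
The plan is to derive all four assertions from a single structural fact. By Theorem~\ref{the1}, $\mathscr{P}_{\text{MDP}}(\lambda+\theta)$ is a unichain MDP with finite state space $\mathcal{S}\times\mathcal{Y}\times\mathcal{A}$ and finite action space $\mathcal{Z}\times\mathcal{A}$, so it admits only finitely many stationary deterministic policies $\phi_1,\dots,\phi_N$. Because the cost splits as $g(\mathcal{G}_i,Z_i,A_i;\lambda)=q(\mathcal{G}_i,Z_i,A_i)-\lambda f(Z_i)$ by Lemma~\ref{l4} and the transition kernel \eqref{tran} depends on neither $\lambda$ nor $\theta$, each $\phi_k$ induces $\theta$-independent long-run averages $\mathcal{Q}^{\phi_k}$ and $\mathcal{F}^{\phi_k}$, defined as in \eqref{eq44v2} and \eqref{eq45v2} with $\phi^\star_{\lambda+\theta}$ replaced by $\phi_k$; the unichain property furnishes a unique stationary distribution under each $\phi_k$, so these $\limsup/\liminf$ are in fact genuine limits. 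Hence $U(\lambda+\theta)=\min_{1\le k\le N}\big(\mathcal{Q}^{\phi_k}-(\lambda+\theta)\mathcal{F}^{\phi_k}\big)$ is a pointwise minimum of $N$ affine functions of $\theta$, therefore concave, continuous and piecewise affine, with finitely many breakpoints $0\le\theta_1<\cdots<\theta_m$; on each open interval between consecutive breakpoints a single policy $\phi^\star_{\lambda+\theta}$ attains the minimum, ties at breakpoints being broken by a fixed rule.

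Assertion $(i)$ is then immediate: every affine piece has slope $-\mathcal{F}^{\phi_k}\le 0$ (indeed $\mathcal{F}^{\phi_k}\ge 1>0$ since $Y_{i+1}\ge 1$ and $Z_i\ge 0$), and a minimum of non-increasing functions is non-increasing. Assertion $(iii)$ is equally short: on each interval $(\theta_j,\theta_{j+1})$ the optimal policy is constant, and $\mathcal{Q}^{\lambda+\theta}$ and $\mathcal{F}^{\lambda+\theta}$ depend only on that policy, so both are step functions of $\theta$.

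For assertion $(ii)$ I would first treat $\mathcal{F}$: on the affine piece through $\theta$, the right-derivative of $U(\lambda+\theta)$ equals $-\mathcal{F}^{\lambda+\theta}$, and concavity of $U$ forces this right-derivative to be non-increasing in $\theta$, so $\mathcal{F}^{\lambda+\theta}$ is non-decreasing. For $\mathcal{Q}$, at each breakpoint $\theta_j$ the two adjacent optimal policies both attain $U(\lambda+\theta_j)$, which gives $\mathcal{Q}^{\phi^\star_{\lambda+\theta_j^{+}}}-\mathcal{Q}^{\phi^\star_{\lambda+\theta_j^{-}}}=(\lambda+\theta_j)\big(\mathcal{F}^{\phi^\star_{\lambda+\theta_j^{+}}}-\mathcal{F}^{\phi^\star_{\lambda+\theta_j^{-}}}\big)$; since $\lambda+\theta_j\ge 0$ and $\mathcal{F}$ is non-decreasing, the right-hand side is $\ge 0$, and summing these increments across the breakpoints lying below $\theta$ shows $\mathcal{Q}^{\lambda+\theta}$ is non-decreasing.

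For assertion $(iv)$ I would first record $\Upsilon(\theta,\lambda;f_{\max})=U(\lambda+\theta)+\theta/f_{\max}$, obtained by pulling the constant term $\theta/f_{\max}$ out of the infimum in \eqref{eq37} and \eqref{eq38}. On each affine piece its right-derivative is $-\mathcal{F}^{\lambda+\theta}+1/f_{\max}$, which is $\le 0$ exactly when $\mathcal{F}^{\lambda+\theta}\ge 1/f_{\max}$; and since $\mathcal{F}^{\lambda+\theta}$ is non-decreasing by $(ii)$, the set $\{\theta:\mathcal{F}^{\lambda+\theta}\ge 1/f_{\max}\}$ is a half-line $[\theta^{\dagger},\infty)$, so continuity of the piecewise-affine $\Upsilon$ upgrades the piecewise inequality to monotonicity of $\Upsilon$ on that whole set. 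I expect the main obstacle to be making the ``well-defined step function'' claim airtight --- namely, verifying that the $\limsup/\liminf$ in \eqref{eq44v2} and \eqref{eq45v2} are attained as limits under every stationary deterministic policy, and that the fixed tie-breaking choice of $\phi^\star_{\lambda+\theta}$ at the finitely many breakpoints does not disturb the one-sided limits used in the $\mathcal{Q}$-monotonicity step; both are handled by the unichain structure from Theorem~\ref{the1}, and the remainder is routine convex-analytic bookkeeping on a concave piecewise-affine function, in the spirit of \cite[Lemma 3.1]{BEUTLER1985236}.
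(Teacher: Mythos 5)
Your proposal is correct, but it takes a genuinely different route from the paper. The paper does not prove parts ($i$)--($iii$) at all: it cites \cite[Lemma 3.1]{BEUTLER1985236} for ($i$)--($ii$) and \cite[Lemma 3.2]{BEUTLER1985236} for ($iii$), and only part ($iv$) is proved in Appendix \ref{appendixkv3}, via a one-step suboptimality comparison: expand $\Upsilon(\theta+\Delta\theta,\lambda;f_{\max})-\Upsilon(\theta,\lambda;f_{\max})$, bound it using $\mathcal{L}(\phi^\star_{\theta+\Delta\theta+\lambda};\theta+\Delta\theta,\lambda,f_{\max})\le\mathcal{L}(\phi^\star_{\theta+\lambda};\theta+\Delta\theta,\lambda,f_{\max})$, and obtain $-\Delta\theta\,(\mathcal{F}^{\lambda+\theta}-1/f_{\max})\le 0$ directly, for any $\Delta\theta\ge 0$, without invoking part ($ii$) or any piecewise-affine structure. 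Your argument instead derives all four assertions self-containedly from the representation $U(\lambda+\theta)=\min_k\big(\mathcal{Q}^{\phi_k}-(\lambda+\theta)\mathcal{F}^{\phi_k}\big)$ over the finitely many stationary deterministic policies (justified by Theorem \ref{the1} and the $\theta$-independence of the kernel), which is exactly the mechanism underlying the cited Beutler--Ross results; this buys transparency (e.g.\ it makes explicit that the $\mathcal{Q}$-monotonicity in ($ii$) genuinely uses $\lambda+\theta_j\ge 0$, a hypothesis hidden in the citation but consistent with the paper's $\lambda\ge 0$, $\theta\ge 0$ setting, and it matches the derivative identity $\mathrm{d}U/\mathrm{d}\lambda=-\mathcal{F}^{\lambda}$ the paper later uses in Appendix \ref{apposa}), at the cost of some bookkeeping the paper avoids. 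Two small points to tighten: at a breakpoint $\theta_j$, the selected optimal policy need not be the left or right piece's policy, so for the monotonicity of $\theta\mapsto\mathcal{F}^{\lambda+\theta}$ \emph{at} $\theta_j$ you should note that any policy optimal at $\theta_j$ has its affine cost dominating $U$ and touching it there, forcing its slope (hence its $\mathcal{F}$) to lie between the two one-sided slopes; and in ($iv$) your piecewise-slope argument establishes non-increase of $\Upsilon$ on the half-line where $\mathcal{F}^{\lambda+\theta}\ge 1/f_{\max}$ only after invoking ($ii$), whereas the paper's comparison inequality gives $\Upsilon(\theta+\Delta\theta,\lambda;f_{\max})\le\Upsilon(\theta,\lambda;f_{\max})$ from the condition at the single point $\theta$ — both suffice for how the lemma is used in Corollary \ref{coro1}.
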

				\begin{proof}
					Part ($i$) and part ($ii$) are supported by \cite[Lemma 3.1]{BEUTLER1985236}. Part ($iii$) is supported by \cite[Lemma 3.2]{BEUTLER1985236}. See Appendix \ref{appendixkv3} for the detailed proof of part ($iv$).
				\end{proof}
				
				Lemma \ref{Lemma7v2} leads to the following corollary, which provides an explicit solution for the optimal value $\theta_{\lambda}^\star$ in Problem \ref{p6} that satisfies the KKT conditions \eqref{eq40}-\eqref{kktT4}. 
				\begin{figure}
					\centering
					\includegraphics[width=1\linewidth]{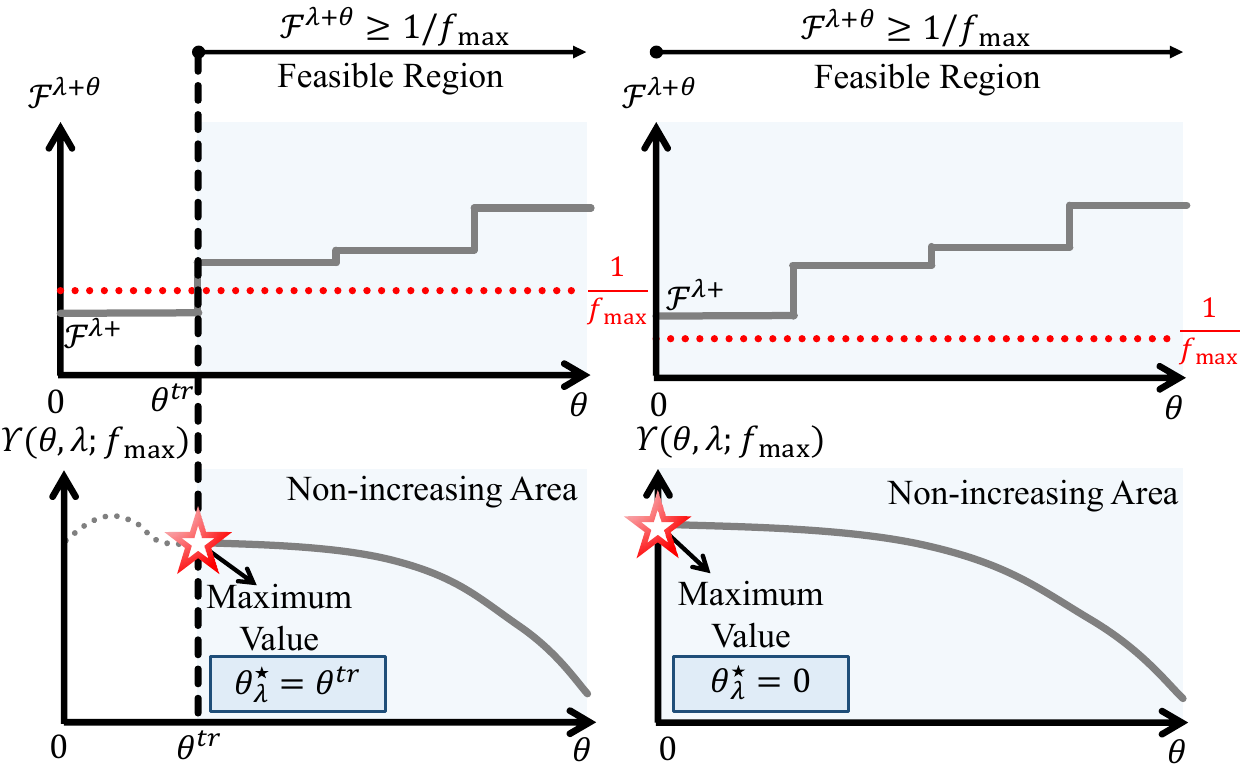}
					\caption{Illustrations of Case ($i$) and case ($ii$) in Corollary \ref{coro1}. In this figure, $\mathcal{F}^{\lambda+\theta}$ is a non-increasing step function with respect to $\theta$, as Lemma \ref{Lemma7v2}-($ii$) and Lemma \ref{Lemma7v2}-($iii$) indicate. In addition,  $\Upsilon(\theta,\lambda;f_{\max})$ is non-increasing with respect to $\theta$ if $\mathcal{F^{\lambda+\theta}}\ge1/f_{\max}$, as Lemma \ref{Lemma7v2}-($iv$) indicates.} 
					\label{fig:lagrangianv2}
				\end{figure}
				\begin{corollary}\label{coro1} Denote $\mathcal{F}^{\lambda^+}$ as the right limit\footnote{Since $\mathcal{F}^{\lambda}$ is a step function, it does not necessarily follow that $\mathcal{F}^{\lambda^+} = \mathcal{F}^{\lambda}$. Specifically, when $\lambda$ is a break point, we have $\mathcal{F}^{\lambda^+}>\mathcal{F}^{\lambda}$.} of $\mathcal{F}^\lambda$: 
					\begin{align}\label{eq46v2}
						\mathcal{F}^{\lambda^+}=\lim_{\Delta\lambda \rightarrow 0}\mathcal{F}^{\lambda+\Delta\lambda},
					\end{align}
					The following assertions hold true:
					
					$(i)$. If $\mathcal{F}^{\lambda^+} \geq 1/f_{\max}$, then $\theta_{\lambda}^\star=0$;
					
					$(ii)$. If $\mathcal{F}^{\lambda^+} < 1/f_{\max}$, then $\theta_{\lambda}^\star$ is equal to a positive break point $\theta^{\text{tr}}>0$, which satisfies:
					\begin{equation}\label{eq46}
						\begin{aligned}
							\mathcal{F}^{(\theta^{\text{tr}}+\lambda)^-}<\frac{1}{f_{\max}}\le\mathcal{F}^{(\theta^{\text{tr}}+\lambda)^+},\\			
						\end{aligned}
					\end{equation}
				\end{corollary}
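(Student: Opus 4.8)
The plan is to recognize $\max_{\theta\ge0}\Upsilon(\theta,\lambda;f_{\max})$ as a one–dimensional concave maximization and to read its optimizer off from the one–sided derivatives of $\Upsilon$. First I would note that, because $\theta/f_{\max}$ does not depend on the policy, \eqref{eq37}--\eqref{eq38} collapse to
\[
\Upsilon(\theta,\lambda;f_{\max}) \;=\; \frac{\theta}{f_{\max}} + U(\lambda+\theta),
\]
where $U(\cdot)$ is the optimal value of $\mathscr{P}_{\mathrm{MDP}}(\cdot)$ of Problem~\ref{p4}. Writing $\mathcal{Q}(\phi)$ and $\mathcal{F}(\phi)$ for the per–policy long–run average cost and average inter–sample time (so that $\mathcal{Q}^{\mu}=\mathcal{Q}(\phi^\star_{\mu})$ and $\mathcal{F}^{\mu}=\mathcal{F}(\phi^\star_{\mu})$), we have $U(\mu)=\inf_{\phi}\big(\mathcal{Q}(\phi)-\mu\,\mathcal{F}(\phi)\big)$, an infimum of affine functions of $\mu$; hence $U$ is concave, and by Lemma~\ref{Lemma7v2}-($iii$) piecewise linear. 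Therefore $\Upsilon(\cdot,\lambda;f_{\max})$ is concave and piecewise linear on $[0,\infty)$, and its maximizer is pinned down entirely by where its right derivative changes sign.

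The crux is to identify the one–sided derivatives. By a Danskin/envelope argument, $\partial U(\mu)$ equals the closed convex hull of $\{-\mathcal{F}(\phi):\phi\text{ attains the infimum defining }U(\mu)\}$; combined with the monotone step structure of the optimal value $\mathcal{F}^{\mu}$ supplied by Lemma~\ref{Lemma7v2}-($ii$)--($iii$) — at a break point the left-active and right-active deterministic policies tie and have $\mathcal{F}$-values $\mathcal{F}^{\mu^-}$ and $\mathcal{F}^{\mu^+}$ — the endpoints of $\partial U(\mu)$ are $-\mathcal{F}^{\mu^+}$ and $-\mathcal{F}^{\mu^-}$, giving $\partial^{+}U(\mu)=-\mathcal{F}^{\mu^+}$ and $\partial^{-}U(\mu)=-\mathcal{F}^{\mu^-}$. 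Adding the slope $1/f_{\max}$ of the linear term yields, for $\theta\ge0$,
\[
\partial^{+}_{\theta}\Upsilon(\theta,\lambda;f_{\max})=\tfrac{1}{f_{\max}}-\mathcal{F}^{(\lambda+\theta)^+},\qquad
\partial^{-}_{\theta}\Upsilon(\theta,\lambda;f_{\max})=\tfrac{1}{f_{\max}}-\mathcal{F}^{(\lambda+\theta)^-},
\]
so at the boundary $\theta=0$ the right derivative is $1/f_{\max}-\mathcal{F}^{\lambda^+}$ with $\mathcal{F}^{\lambda^+}$ as defined in \eqref{eq46v2}.

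The two cases then follow by standard concave-maximization arguments. If $\mathcal{F}^{\lambda^+}\ge 1/f_{\max}$, then $\partial^{+}_{\theta}\Upsilon(0,\lambda;f_{\max})\le 0$, while for every $\theta>0$ monotonicity (Lemma~\ref{Lemma7v2}-($ii$)) gives $\mathcal{F}^{\lambda+\theta}\ge\mathcal{F}^{\lambda^+}\ge1/f_{\max}$, so by Lemma~\ref{Lemma7v2}-($iv$) the function $\Upsilon(\cdot,\lambda;f_{\max})$ is non-increasing on $[0,\infty)$ and its maximizer is $\theta^\star_\lambda=0$; the KKT conditions of Lemma~\ref{Lemma6v2} hold there, \eqref{kktT4} trivially because $\theta^\star_\lambda=0$, and \eqref{KKT3} by selecting, among the optimal policies of $\mathscr{P}_{\mathrm{MDP}}(\lambda)$, one with $\mathcal{F}\ge 1/f_{\max}$, which exists since $\mathcal{F}^{\lambda^+}\ge1/f_{\max}$. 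If instead $\mathcal{F}^{\lambda^+}< 1/f_{\max}$, then $\partial^{+}_{\theta}\Upsilon(0,\lambda;f_{\max})>0$, so $\theta=0$ is not optimal; by concavity the maximizer is the smallest $\theta$ at which the right derivative first becomes non-positive, and by the step structure of $\mathcal{F}$ this is a positive break point $\theta^{\mathrm{tr}}$ with $\partial^{-}_{\theta}\Upsilon(\theta^{\mathrm{tr}})>0\ge\partial^{+}_{\theta}\Upsilon(\theta^{\mathrm{tr}})$, i.e. $0\in\partial\Upsilon(\theta^{\mathrm{tr}})$; rewriting these two inequalities is exactly \eqref{eq46}. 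Existence of a finite $\theta^{\mathrm{tr}}$ I would justify from boundedness of the waiting times and delays (so $\mathcal{F}^{\lambda+\theta}$ is bounded and non-decreasing) together with feasibility of Problem~\ref{p1}, which forces $\sup_{\theta}\mathcal{F}^{\lambda+\theta}\ge1/f_{\max}$; and \eqref{eq46} sandwiches $1/f_{\max}$ between the left and right limits of $\mathcal{F}$ at $\lambda+\theta^{\mathrm{tr}}$, so at that point an optimal (generally randomized) policy of $\mathscr{P}_{\mathrm{MDP}}(\lambda+\theta^{\mathrm{tr}})$ attains $\mathcal{F}=1/f_{\max}$, which makes \eqref{KKT3} and \eqref{kktT4} hold.

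The main obstacle I expect is the rigorous envelope identification $\partial^{\pm}_{\theta}\Upsilon=1/f_{\max}-\mathcal{F}^{(\lambda+\theta)^{\mp}}$: since Theorem~\ref{the1} only guarantees an optimal \emph{stationary deterministic} policy, one has to argue that at a break point the two flanking deterministic policies, with $\mathcal{F}$-values $\mathcal{F}^{\mu^-}$ and $\mathcal{F}^{\mu^+}$, together with their randomized mixtures sweep out the entire subdifferential interval $[-\mathcal{F}^{\mu^+},-\mathcal{F}^{\mu^-}]$ — and this same randomization is what supplies the slackness-attaining policy needed for \eqref{kktT4}. Everything else is routine one–variable bookkeeping built on Lemma~\ref{Lemma7v2}.
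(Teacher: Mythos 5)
Your proposal is correct, and for the substantive part (case $(ii)$) it takes a genuinely different route from the paper. The paper's own proof never argues via concavity or derivatives of $\Upsilon$: it restricts attention to the set of multipliers that are KKT-feasible, i.e.\ those $\theta$ with $\mathcal{F}^{\lambda+\theta}\ge 1/f_{\max}$ (which by the monotone step structure of Lemma~\ref{Lemma7v2}-($ii$),($iii$) is exactly $[\theta^{\text{tr}},\infty)$ with $\theta^{\text{tr}}$ defined by \eqref{eq46}), and then invokes Lemma~\ref{Lemma7v2}-($iv$) to say $\Upsilon(\cdot,\lambda;f_{\max})$ is non-increasing on that feasible region, so the optimum is its left endpoint; verification of complementary slackness via the randomized mixture is deferred to Theorem~\ref{theorem5}. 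You instead maximize $\Upsilon(\theta,\lambda;f_{\max})=\theta/f_{\max}+U(\lambda+\theta)$ unconstrained over $\theta\ge0$, exploiting that $U$ is a pointwise infimum of affine functions of the Dinkelbach parameter (hence concave and, by the finite stationary-deterministic policy structure of Theorem~\ref{the1}, piecewise linear), and read off the optimizer from the one-sided derivatives $\partial^{\pm}_{\theta}\Upsilon=1/f_{\max}-\mathcal{F}^{(\lambda+\theta)^{\mp}}$; the sign change of the right derivative reproduces \eqref{eq46}, with the strict left inequality coming from minimality of the break point. The envelope step you flag as the main obstacle is legitimate but not a gap: it is the one-sided refinement of the derivative identity $\mathrm{d}U/\mathrm{d}\lambda=-\mathcal{F}^{\lambda}$ that the paper itself imports from Beutler--Ross \cite[Lemma 3.1]{BEUTLER1985236} (and uses in the proofs of Theorems~\ref{the6} and~\ref{Lemmasa}), so your argument could be shortened by citing it. Your route buys a cleaner optimality certificate ($0\in\partial\Upsilon(\theta^{\text{tr}})$ over all $\theta\ge0$, not just over the KKT-feasible set), an explicit existence argument for a finite $\theta^{\text{tr}}$ (which the paper simply asserts from monotonicity), and a more careful treatment of case $(i)$ at a break point of $\mathcal{F}^{\lambda}$ by selecting the right-limit optimal policy; the paper's route is more elementary in that it needs only the monotonicity statements of Lemma~\ref{Lemma7v2} and no subdifferential calculus.
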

				\begin{proofsketch}
					If $\mathcal{F}^{\lambda+}\ge\frac{1}{f_{\max}}$, which is illustrated in the right panel of Fig. \ref{fig:lagrangianv2}, the function $\Upsilon(\theta,\lambda;f_{\max})$ is non-increasing with $\theta$ for $\theta\ge0$. In this case, the maximum value of $\Upsilon(\theta,\lambda;f_{\max})$ is obtained at $\theta_{\lambda}^{\star}=0$. If $\mathcal{F}^{\lambda+}<\frac{1}{f_{\max}}$, the feasible region under the KKT condition \eqref{KKT3} is shown in the left panel of Fig. \ref{fig:lagrangianv2} with $\theta\ge\theta^{\text{tr}}$. In this feasible region, the function $\Upsilon(\theta,\lambda;f_{\max})$ is non-increasing in $\theta$. Therefore, the maximum value of $\Upsilon(\theta,\lambda;f_{\max})$ is obtained at $\theta_{\lambda}^{\star}=\theta^{\text{tr}}$. A detailed proof is provided in Appendix \ref{appendixlv2}.
				\end{proofsketch}
				
				Having established Corollary \ref{coro1}, our remaining task is to identify the threshold value $\theta^{\text{tr}}$ that satisfies \eqref{eq46} under the condition that $\mathcal{F}^{\lambda^+} < 1/f_{\max}$. \textcolor{black}{We here introduce two algorithms for searching this threshold:}
			\textcolor{black}{	\begin{itemize}
					\item \textbf{Bisection Search}: Given that $\mathcal{F}^{\lambda + \theta}$ is non-decreasing in $\theta$, we can apply a \textit{bisection search} to gradually converge on the threshold $\theta^{\text{tr}}$. This method is a classical approach for locating thresholds in monotonic functions.
					\item \textbf{Intersection Search} \cite[Algorithm 2]{11007613}: To improve the efficiency of solving constrained MDPs, \cite{11007613} further exploits the piece-wise linear and concave (PWLC) structure Lagrangian cost function under finite state and action spaces. Unlike traditional bisection methods that locate the zero-crossing of a single monotonic function, \textit{intersection search} identifies the intersection point of two tangents to the Lagrangian curve, thereby accelerating the search for the optimal Lagrange multiplier $\theta^{\text{tr}}$ with fewer iterations. We refer readers to \cite[Section V.A]{11007613} for a detailed discussion of this algorithm.
				\end{itemize}}	
				 Once this threshold value is obtained, setting $\theta_\lambda^\star = \theta^{\text{tr}}$ ensures that the KKT conditions \eqref{eq40}--\eqref{KKT3} are satisfied. What remains is to determine the optimal policy $\phi_{\lambda + \theta^\star_{\lambda}}$ that guarantees the KKT condition \eqref{kktT4}. The structure of this optimal policy is provided in the following theorem. 
				\begin{theorem}\label{theorem5}(Structure of the optimal policy) The following assertions hold true:
					
					($i$). If $\mathcal{F}^{\lambda^+}\ge1/f_{\max}$, the optimal policy is a stationary deterministic policy $\phi^\star_{\lambda}$ determined in \eqref{eq19}; 
					
					($ii$). If $\mathcal{F}^{\lambda^+}<1/f_{\max}$ and $\mathcal{F}^{(\lambda+\theta^\star_\lambda)^+}=1/f_{\max}$, the optimal policy is a stationary deterministic policy $\phi^\star_{(\lambda+\theta^\star_\lambda)}$;
					
					($iii$). If $\mathcal{F}^{\lambda^+}<1/f_{\max}$ and $\mathcal{F}^{(\lambda+\theta^\star_\lambda)^+}>1/f_{\max}$, the optimal policy is a random mixture of two stationary deterministic policy $\phi^\star_{(\lambda+\theta^\star_\lambda)^+}$ and $\phi^\star_{(\lambda+\theta^\star_\lambda)^-}$:
					\begin{equation}\label{eq52}
						\phi^\star_{\lambda+\theta^\star_\lambda}(\gamma)=\begin{cases}
							\phi^\star_{(\lambda+\theta^\star_\lambda)^+}(\gamma),  &\text{w.p. } \eta \\
							\phi^\star_{(\lambda+\theta^\star_\lambda)^-}(\gamma), &\text{w.p. }1-\eta
						\end{cases}, \text{ for } \forall \gamma\in\mathcal{S}\times\mathcal{Y}\times\mathcal{A},
					\end{equation}
					where $\eta$ is a randomization factor given by
					\begin{equation}\label{eq53v2}
						\eta=\frac{\frac{1}{f_{\max}}-\mathcal{F}^{\lambda+\theta^\star_\lambda}}{\mathcal{F}^{(\lambda+\theta^\star_\lambda)^+}-\mathcal{F}^{\lambda+\theta^\star_\lambda}}.
					\end{equation}
				\end{theorem}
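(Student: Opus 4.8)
The plan is to verify directly that the policy exhibited in each of the three cases, paired with the multiplier $\theta^\star_\lambda$ supplied by Corollary \ref{coro1}, satisfies the full KKT system \eqref{eq40}--\eqref{kktT4}; by Lemma \ref{Lemma6v2} this certifies $d(\lambda;f_{\max})=H(\lambda;f_{\max})$ and that the exhibited policy attains $H(\lambda;f_{\max})$, so specializing to the Dinkelbach root $\lambda=h^\star$ (where $H(h^\star;f_{\max})=0$ by Lemma \ref{l5v2}) then yields an optimal policy for Problem \ref{p1}. Of the four KKT conditions, \eqref{eq40} and the break-point structure entering \eqref{KKT3} are already delivered by Corollary \ref{coro1} together with \eqref{eq46}, and \eqref{kkt2} holds for any policy attaining $\Upsilon(\theta^\star_\lambda,\lambda;f_{\max})=\inf_\phi\mathcal{L}(\phi;\theta^\star_\lambda,\lambda,f_{\max})$ (the infimum being attained by a stationary deterministic policy, Theorem \ref{the1}); the real work is therefore making complementary slackness \eqref{kktT4} hold, i.e.\ driving the interval-average sampling time $\mathcal{F}$ of the chosen policy to exactly $1/f_{\max}$ whenever $\theta^\star_\lambda>0$.

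Cases (i) and (ii) are short. In case (i), Corollary \ref{coro1}$(i)$ gives $\theta^\star_\lambda=0$, so \eqref{kktT4} is vacuous and \eqref{kkt2} collapses to the unconstrained MDP $\mathscr{P}_{\text{MDP}}(\lambda)$ whose optimal stationary deterministic policy is $\phi^\star_\lambda$ of \eqref{eq19}; primal feasibility $\mathcal{F}^\lambda\ge1/f_{\max}$ follows from the hypothesis $\mathcal{F}^{\lambda^+}\ge1/f_{\max}$ and Lemma \ref{Lemma7v2}$(ii)$--$(iii)$ (selecting, if a break point of the step function $\mathcal{F}^{\lambda+\theta}$ falls at $\lambda$, the $\mathcal{F}$-largest optimal stationary deterministic policy, which leaves the Lagrangian value unchanged). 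In case (ii), $\theta^\star_\lambda=\theta^{\text{tr}}>0$ and the hypothesis $\mathcal{F}^{(\lambda+\theta^\star_\lambda)^+}=1/f_{\max}$ means the right-limiting optimal policy of $\mathscr{P}_{\text{MDP}}(\lambda+\theta^\star_\lambda)$ has $\mathcal{F}$-value exactly $1/f_{\max}$; this policy is still optimal for the Lagrangian MDP \emph{at} $\theta^\star_\lambda$ (by the continuity argument below), so it satisfies \eqref{kkt2}, makes \eqref{KKT3} tight, and hence fulfils \eqref{kktT4}.

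Case (iii) is the crux. Since $\mathcal{F}^{(\lambda+\theta^\star_\lambda)^-}<1/f_{\max}<\mathcal{F}^{(\lambda+\theta^\star_\lambda)^+}$ and $\mathcal{F}^{\lambda+\theta}$ is a step function (Lemma \ref{Lemma7v2}$(iii)$), no stationary deterministic policy optimal for any $\mathscr{P}_{\text{MDP}}(\lambda+\theta)$ can make \eqref{KKT3} tight, so randomization is forced, and I would argue in three steps. First, both $\phi^\star_{(\lambda+\theta^\star_\lambda)^-}$ and $\phi^\star_{(\lambda+\theta^\star_\lambda)^+}$ are optimal for the Lagrangian MDP \emph{exactly at} $\theta=\theta^\star_\lambda$: for each fixed stationary deterministic $\phi$ the map $\theta\mapsto\mathcal{L}(\phi;\theta,\lambda,f_{\max})$ is affine, hence $\Upsilon(\theta,\lambda;f_{\max})$ is a minimum of finitely many affine functions and is continuous in $\theta$, so any policy optimal on a one-sided neighbourhood of $\theta^\star_\lambda$ remains optimal at $\theta^\star_\lambda$. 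Second, realize the mixture \eqref{eq52} as a single time-$0$ coin flip of bias $\eta$ that thereafter commits to $\phi^\star_{(\lambda+\theta^\star_\lambda)^+}$ or to $\phi^\star_{(\lambda+\theta^\star_\lambda)^-}$ forever; because $\mathscr{P}_{\text{MDP}}$ is unichain (Theorem \ref{the1}) each branch is regenerative, so by the renewal--reward theorem the mixed policy's long-run averages $\mathcal{Q}$, $\mathcal{F}$ (cf.\ \eqref{eq44v2}--\eqref{eq45v2}) and its Lagrangian value are the $\eta$-convex combinations of the two branches' values, whence its Lagrangian value equals $\eta\,\Upsilon+(1-\eta)\,\Upsilon=\Upsilon(\theta^\star_\lambda,\lambda;f_{\max})$ and \eqref{kkt2} holds. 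Third, pick $\eta$ so that the mixed $\mathcal{F}$-value equals $1/f_{\max}$: solving $\eta\,\mathcal{F}^{(\lambda+\theta^\star_\lambda)^+}+(1-\eta)\,\mathcal{F}^{\lambda+\theta^\star_\lambda}=1/f_{\max}$ (with the convention $\mathcal{F}^{\lambda+\theta^\star_\lambda}=\mathcal{F}^{(\lambda+\theta^\star_\lambda)^-}$) gives precisely \eqref{eq53v2}, with $\eta\in(0,1)$ by $\mathcal{F}^{(\lambda+\theta^\star_\lambda)^-}<1/f_{\max}<\mathcal{F}^{(\lambda+\theta^\star_\lambda)^+}$; this makes \eqref{KKT3} tight, so \eqref{kktT4} holds, and all four KKT conditions are met.

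The delicate point is the first step of Case (iii): justifying rigorously that the two one-sided limiting policies are simultaneously optimal at the break point, and that the renewal--reward averaging is valid for the randomized policy — in particular that the correct realization of a ``randomized mixture of stationary deterministic policies'' is commitment-at-time-$0$ rather than independent per-slot randomization, and that this realization preserves the regenerative structure underlying Problems \ref{p3}--\ref{p4} and the averages \eqref{eq44v2}--\eqref{eq45v2}. Everything else reduces to bookkeeping with the KKT conditions and the closed-form $\eta$ of \eqref{eq53v2}.
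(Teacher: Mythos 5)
Your proposal is correct, but it takes a genuinely different route from the paper on the one case that carries all the weight. The paper's own proof is essentially a dispatch: Cases ($i$) and ($ii$) are handled, as you do, by plugging Corollary \ref{coro1} into the KKT conditions of Lemma \ref{Lemma6v2} (with \eqref{kkt2} reducing to the unconstrained MDP and \eqref{kktT4} being vacuous or automatically tight), while Case ($iii$) is settled by citing an external structural result for singly-constrained MDPs, \cite[Theorem 4.4]{li2011finding}, which asserts that the mixture \eqref{eq52}--\eqref{eq53v2} is optimal. You instead prove Case ($iii$) from scratch: you show the two one-sided limiting deterministic policies are simultaneously Lagrangian-optimal at the break point $\theta^\star_\lambda$ (via piecewise-affinity and continuity of $\Upsilon$ over the finite set of stationary deterministic policies), realize the mixture as a time-$0$ commitment so that the unichain/regenerative structure of Theorem \ref{the1} makes $\mathcal{Q}$, $\mathcal{F}$, and the Lagrangian value linear in $\eta$, and then solve for $\eta$ to make \eqref{KKT3} tight, which closes complementary slackness \eqref{kktT4}. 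What your version buys is self-containment and an explicit clarification the paper leaves implicit: the interpolation formula \eqref{eq53v2} is exact precisely under the commit-at-time-$0$ (or occupation-measure-mixing) reading of \eqref{eq52}, not under independent per-slot randomization with probability $\eta$ in every state; you also patch a small tie-breaking issue in Case ($i$) (primal feasibility when $\lambda$ sits at a break point of $\mathcal{F}^{\lambda+\theta}$) that the paper glosses over. What the paper's route buys is brevity and delegation of the CMDP mixture structure to established literature. Your identified "delicate points" are in fact handled adequately by your own continuity and renewal--reward arguments, so I see no genuine gap.
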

				\begin{proof}
					See Appendix \ref{appendixkM}.
				\end{proof}
				
				With Theorem \ref{theorem5}, we can determine the optimal policy once the optimal $\theta_{\lambda}^{\star}$ is obtained. In this way, the optimal value $d(\lambda;f_{\max})$ is determined as:
				\begin{equation}
					\begin{aligned}
						d(\lambda;f_{\max})&=
						\Upsilon(\theta_{\lambda}^{\star},\lambda;f_{\max})\\&=
						\begin{cases}
							U(\lambda)&\text{if } \mathcal{F}^{\lambda^+}\ge\frac{1}{f_{\max}}\\
							U(\lambda+\theta_{\lambda}^{\star})+\frac{\theta_{\lambda}^{\star}}{f_{\max}}&\text{if } \mathcal{F}^{\lambda^+}<\frac{1}{f_{\max}}
						\end{cases}.
					\end{aligned}
				\end{equation}
				The following subsection aims at searching the root of $d(\lambda;f_{\max})$.
				\begin{algorithm}[t]
					\caption{\textit{A Three-layer Algorithm (\textbf{A Variant of \cite[Section IV.C]{10621420}})}}
					\label{Algorithm 2}
					\LinesNumbered
					\KwIn{Tolerence $\epsilon_1,\epsilon_2>0$, MDP $\mathscr{P}_{\mathrm{MDP}}(\lambda)$, maximum sampling frequency $f_{\max}$}
					Initialization: sufficiently large $\lambda_{\uparrow}$, and $\lambda_{\downarrow}=\min_{s,a}\mathcal{C}(s,a)$\;\tcp{Outer-Layer Bisection Search on Dinkelbach parameter $\lambda$}
					\While {$\lambda_{\uparrow}-\lambda_{\downarrow}\ge \epsilon_1$}
					{$\lambda=(\lambda_{\uparrow}+\lambda_{\downarrow})/2$\;
						Run $\tau$-RVI to solve $\mathscr{P}_{\mathrm{MDP}}(\lambda)$ and calculate $\mathcal{F}^{\lambda^+}$ and $U(\lambda)$\;
						\If{$\mathcal{F}^{\lambda^+}\ge1/f_{\max}$}{$d(\lambda;f_{\max})=U(\lambda)$\;}
						\Else{
							Initialization: sufficiently large $\theta_{\uparrow}$, and $\theta_{\downarrow}=0$\;\tcp{Middle-Layer Bisection Search on Lagrangian multiplier $\theta$}
							\While {$\theta_{\uparrow}-\theta_{\downarrow}\ge \epsilon_2$}{
								$\theta=(\theta_{\uparrow}+\theta_{\downarrow})/2$\;\tcp{Inner-Layer MDP Given $\lambda$ and $\theta$}Run $\tau$-RVI to solve $\mathscr{P}_{\text{MDP}}(\lambda+\theta)$ and calculate $\mathcal{F}^{(\lambda+\theta)^+}$ and $U(\lambda+\theta)$\;
								\If{$\mathcal{F}^{(\lambda+\theta)^+}\ge1/f_{\max}$}
								{
									$\theta_{\uparrow}=\theta$\;
								}
								\Else
								{$\theta_{\downarrow}=\theta$\;}}
							$d(\lambda;f_{\max})=U(\lambda+\theta)+\frac{\theta}{f_{\max}}$\;
						}
						\If{$d(\lambda;f_{\max})>0$}
						{
							$\lambda_{\uparrow}=\lambda$\;
						}
						\Else
						{$\lambda_{\downarrow}=\lambda$\;}			
					}
					\KwOut{$h^\star=\lambda$}
				\end{algorithm}
				
				\subsubsection{Outer-Layer: Update \textit{Dinkelbach parameter} $\lambda$}
				With the approximation of $d(\lambda;f_{\max})$ in hand, the outer layer updates $\lambda$ in a \textit{bisection-search} fashion by leveraging Lemma \ref{l5v2} to finally approach the root $h^\star$ such that $d(h^\star;f_{\max})=0$. The flow of the algorithm is demonstrated in Algorithm \ref{Algorithm 2}. 
				
				\section{Optimal Sampling With Rate Constraint: {One}-{L}ayer {I}teration is All You Need}\label{Onelayerwithrate}
				\begin{algorithm}[t]
					\caption{\textit{\textbf{Proposed} One-layer \textsc{QuickBLP}}}
					\label{Algorithm 3}
					\LinesNumbered
					\KwIn{MDP $\mathscr{P}_{\mathrm{MDP}}(\lambda)$ and maximum sampling frequency $f_{\max}$}
					Run \textsc{OnePDSI} in Iteration \ref{propsition2} to obtain $\rho^\star$\;
					Calculate the left limit $\mathcal{F}^{(\rho^\star)^-}$\;
					\If{$\mathcal{F}^{(\rho^\star)^-}\ge1/f_{\max}$}{$h^\star=\rho^\star$\tcp*{Case ($i$) of Theorem \ref{the6}}}
					\Else{Solve LP Problem \ref{p7} to obtain $Q^\star(f_{\max})$\; 
						Calculate the root $h^\star=f_{\max}\cdot Q^\star(f_{\max})$\tcp*{Case ($ii$) of Theorem \ref{the6}}}
					\KwOut{$h^\star$}
				\end{algorithm}
				In the three-layer algorithm, each update of $\lambda$ or $\theta$ necessitates solving the inner-layer MDP using the $\tau$-RVI algorithm. This process incurs a high computational complexity due to the repeated execution of the $\tau$-RVI algorithm required to iteratively search and optimize the parameters $\lambda$ and $\theta$.
				
				\subsection{QuickBLP}
				
				In this section, we design a one-layer \textit{two-stage} hierarchical algorithm namely, {\normalfont \textsc{QuickBLP}}, which reduces the computational complexity by explicitly solving for the root $h^\star$. Rather than relying on iterative \textit{bisection search} to find the root $h^\star$, our approach leverages a direct structural exploration of the solution space, allowing us to bypass the need for multiple $\tau$-RVI executions. Consequently, the {\normalfont \textsc{QuickBLP}} solves this problem in two stages. The first stage solves a Bellman variant and the second stage explicitly expresses the root $h^\star$ as a function of the solution to an LP problem, thus directly obtaining the root $h^\star$ in a more computationally efficient manner. The main structural results are summarized in the following theorem. 
				\begin{theorem}\label{the6}
					(Structural Results of the root $h^\star$) The following assertions hold true:
					
					($i$). If the root of $U(\rho)$, denoted as $\rho^{\star}$, satisfies  
					$\mathcal{F}^{(\rho^{\star})^-}\ge 1/f_{\max}$, then the optimal value of Problem \ref{p1} is $h^\star=\rho^{\star}$, and the optimal policy for Problem \ref{p1} is $\phi_{\rho^{\star}}^\star$, as defined in \eqref{eq19};
					
					($ii$). If  $\mathcal{F}^{(\rho^{\star})^-}< 1/f_{\max}$, the optimal value of Problem \ref{p1} is
					\begin{equation}
						h^\star=f_{\max}\cdot Q^\star(f_{\max}),
					\end{equation}where $Q^\star(f_{\max})$ is the optimal value of the following Linear Programming: 
					
					\begin{problem}[\textit{Linear Programming Reformulation}]\label{p7}
						\begin{subequations}
							\begin{align}
								&Q^\star(f_{\max})=\min_{\mathbf{x}} \sum_{\gamma,z,a}q(\gamma,z,a)x(\gamma,z,a) \\
								\mathrm{s.t.}\quad& \sum_{\gamma,z,a} f(z)x(\gamma,z,a)=\frac{1}{f_{\max}},\label{eq53}\\
								& \sum_{z,a}x(\gamma',z,a)=\sum_{\gamma,z,a}p(\gamma'|\gamma,z,a)x(\gamma,z,a),\notag\\&\quad\quad\quad\quad\quad\quad\quad\quad\quad\quad\forall \gamma'\in\mathcal{S}\times\mathcal{Y}\times\mathcal{A},\\
								&\sum_{\gamma,z,a}x(\gamma,z,a)=1, \\
								& x(\gamma,z,a)\ge0, ~~\forall \gamma\in\mathcal{S}\times\mathcal{Y}\times\mathcal{A},z\in\mathcal{Z},a\in\mathcal{A},\label{eq56} 
							\end{align}
						\end{subequations}
						and the corresponding optimal policy is a randomized policy given by:
						\begin{equation}
							\begin{aligned}
								\phi^\star(\gamma)=(z,a), \text{w.p. } \frac{x(\gamma,z,a)}{\sum_{\zeta,\alpha}x(\gamma,\zeta,\alpha)}, \\\forall \gamma\in\mathcal{S}\times\mathcal{Y}\times\mathcal{A},z\in\mathcal{Z},a\in\mathcal{A}.
							\end{aligned}
						\end{equation}
					\end{problem}
				\end{theorem}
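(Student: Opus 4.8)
The plan is to recast $h^\star$ geometrically, in terms of the set of achievable (cost, inter-sample-interval) pairs, and then read off the two cases. First I would invoke the classical fact (in the spirit of \cite{BEUTLER1985236}) that for a unichain average-cost semi-Markov problem it suffices to optimise over stationary randomised policies, which are in bijection with the occupation-measure polytope $\mathcal{D}$ of Problem \ref{p7}, i.e.\ the $\mathbf{x}\ge0$ satisfying the flow-balance and normalisation constraints. For $\mathbf{x}\in\mathcal{D}$, the long-run cost of Problem \ref{p1} equals the ratio $(\sum q\,x)/(\sum f\,x)$ by the renewal-reward identity already derived in Section \ref{subsectionB}, and constraint \eqref{rc} reads $\sum f\,x\ge1/f_{\max}$, where $\sum q\,x$ and $\sum f\,x$ abbreviate $\sum_{\gamma,z,a}q(\gamma,z,a)x(\gamma,z,a)$ and $\sum_{\gamma,z,a}f(z)x(\gamma,z,a)$. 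Thus $h^\star=\min\{(\sum q\,x)/(\sum f\,x):\mathbf{x}\in\mathcal{D},\ \sum f\,x\ge1/f_{\max}\}$. Writing $\mathcal{K}$ for the image of $\mathcal{D}$ under $\mathbf{x}\mapsto(\sum q\,x,\sum f\,x)$, a compact convex polygon inside $\{f\ge1\}$ (since $f(z)\ge\mathbb{E}[Y]\ge1$), the unconstrained minimum $\min_{(q,f)\in\mathcal{K}}q/f$ equals $\rho^\star$, as that is exactly Problem \ref{p2} rewritten via occupation measures.

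For Case $(i)$ I would show the unconstrained optimum is already feasible. By Lemma \ref{l3}(iii), $U(\rho^\star)=0$, so by Lemma \ref{l3}(ii) the policy $\phi^\star_{\rho^\star}$ of \eqref{eq19} attains ratio value $\rho^\star$; its inter-sample interval is $\mathcal{F}^{\rho^\star}$, and by monotonicity of $\mathcal{F}^{\lambda}$ (Lemma \ref{Lemma7v2}(ii)) we have $\mathcal{F}^{\rho^\star}\ge\mathcal{F}^{(\rho^\star)^-}\ge1/f_{\max}$. Hence $\phi^\star_{\rho^\star}$ is feasible for Problem \ref{p1}, giving $h^\star\le\rho^\star$; the reverse $h^\star\ge\rho^\star$ is immediate because Problem \ref{p1} is Problem \ref{p2} with the extra constraint \eqref{rc}. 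Therefore $h^\star=\rho^\star$, attained by $\phi^\star_{\rho^\star}$.

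For Case $(ii)$ the crux is that the frequency constraint is \emph{active} at the optimum, so the ratio objective collapses to the linear one $f_{\max}\sum q\,x$ over the feasible set. To get activity I would argue that the limiting policy $\phi^\star_{(\rho^\star)^-}=\lim_{\lambda\uparrow\rho^\star}\phi^\star_{\lambda}$ is optimal for $\mathscr{P}_{\mathrm{MDP}}(\rho^\star)$ — by compactness of the finite policy space and closedness under limits of the set of average-cost optimisers — hence, by Lemma \ref{l3}(ii), attains ratio $\rho^\star$ with interval $\mathcal{F}^{(\rho^\star)^-}<1/f_{\max}$; so $\mathcal{K}$ contains a minimiser $(q^\star,f^\star)$ of $q/f$ with $f^\star<1/f_{\max}$. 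Then for any $(q,f)\in\mathcal{K}$ with $f\ge1/f_{\max}$, the chord from $(q^\star,f^\star)$ to $(q,f)$ lies in $\mathcal{K}$, the linear-fractional map $t\mapsto\frac{(1-t)q^\star+tq}{(1-t)f^\star+tf}$ is monotone on $[0,1]$ with derivative at $t=0$ equal to $\frac{q f^\star-q^\star f}{(f^\star)^2}\ge0$ (using $q^\star/f^\star\le q/f$), so its value at the unique $t$ with denominator $1/f_{\max}$ is $\le q/f$. Hence $\min_{\mathcal{K}\cap\{f\ge1/f_{\max}\}}q/f$ is attained on $\{f=1/f_{\max}\}$, where $q/f=f_{\max}\,q$. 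Consequently $h^\star=f_{\max}\min\{\sum q\,x:\mathbf{x}\in\mathcal{D},\ \sum f\,x=1/f_{\max}\}=f_{\max}\,Q^\star(f_{\max})$, which is exactly the linear program of Problem \ref{p7}; and normalising an LP optimiser $\mathbf{x}^\star$ to $\phi^\star(\gamma)=(z,a)$ with probability $x^\star(\gamma,z,a)/\sum_{\zeta,\alpha}x^\star(\gamma,\zeta,\alpha)$ yields the stated randomised optimal policy.

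The main obstacle is the activity step in Case $(ii)$ — ruling out that the best feasible occupation measure strictly over-satisfies $\sum f\,x\ge1/f_{\max}$. My route rests on (a) the genuine optimality of $\phi^\star_{(\rho^\star)^-}$ for $\mathscr{P}_{\mathrm{MDP}}(\rho^\star)$, which needs a careful limiting/compactness argument, and (b) quasi-linearity of $q/f$ along chords of $\mathcal{K}$. An alternative that sidesteps the geometry is to derive activity from the Lagrangian development of Section \ref{sectionV}: by Lemma \ref{l5v2} and the strong duality of Lemma \ref{Lemma6v2}, $h^\star$ is the root of $d(\cdot;f_{\max})$; when $\mathcal{F}^{(\rho^\star)^-}<1/f_{\max}$ the optimal multiplier is a strictly positive break point (Corollary \ref{coro1}(ii)), so complementary slackness \eqref{kktT4} forces $\sum f\,x^\star=1/f_{\max}$, and Theorem \ref{theorem5} identifies the (possibly randomised) optimal policy, which must coincide with the LP optimiser of Problem \ref{p7}. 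The rest — the occupation-measure reformulation and the LP identification — is standard once activity is in hand, though it requires attention to the semi-Markov (variable-sojourn-time) structure and the ratio form of the objective.
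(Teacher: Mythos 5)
Your proposal is correct, but it follows a genuinely different route from the paper's proof in Appendix \ref{appendixkv2}. The paper proves part ($i$) by a contradiction argument on the Dinkelbach--Lagrangian root function $\mathcal{J}(\rho)=\mathcal{Q}^{\rho}-\rho(\mathcal{F}^{\rho}-1/f_{\max})-h^\star/f_{\max}$ built from Lemma \ref{l5v2} and strong duality (three cases $h^\star\lessgtr\rho^\star$, using Lemma \ref{Lemma7v2} and uniqueness of the root of $U$), and proves part ($ii$) by showing via the derivative identity $\mathrm{d}\mathcal{J}/\mathrm{d}\rho=-\mathcal{F}^{\rho}+1/f_{\max}$ (from \cite[Lemma 3.1]{BEUTLER1985236}) that the frequency constraint must bind, after which Problem \ref{p6} collapses to the equality-constrained Problem \ref{p8}, which is converted to the LP of Problem \ref{p9} via the standard dual relaxation of \cite[Chapter 8.8]{puterman2014markov} and read off as $h^\star=f_{\max}Q^\star(f_{\max})$. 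You instead work directly in the occupation-measure polytope: part ($i$) becomes the two-line observation that $h^\star\ge\rho^\star$ trivially while $\phi^\star_{\rho^\star}$ is feasible because $\mathcal{F}^{\rho^\star}\ge\mathcal{F}^{(\rho^\star)^-}\ge1/f_{\max}$ (the ordering across multipliers follows from the standard exchange argument even with ties), and part ($ii$) obtains activity of the constraint from convexity of $\mathcal{K}$ plus monotonicity of the linear-fractional map along chords anchored at the point realized by $\phi^\star_{(\rho^\star)^-}$; the LP identification is then immediate on the slice $\{f=1/f_{\max}\}$. Your route buys a more transparent and self-contained argument (it avoids the $\mathcal{J}$-bookkeeping, and incidentally sidesteps the sign confusion in the paper's part-($ii$) monotonicity step), whereas the paper's route stays inside the Lagrangian machinery it has already built (Corollary \ref{coro1}, Theorem \ref{theorem5}) and delivers the policy-structure statements as by-products; your second, alternative derivation of activity via Corollary \ref{coro1}($ii$) and complementary slackness \eqref{kktT4} is essentially the paper's argument in disguise. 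The two places where your sketch leans on unproven-but-classical facts --- that $h^\star$ equals the minimum of the ratio over the occupation polytope $\mathcal{D}$ with $\sum f\,x\ge1/f_{\max}$ (sufficiency of stationary randomized policies for the constrained semi-Markov problem), and that the left-limit policy $\phi^\star_{(\rho^\star)^-}$ is optimal for $\mathscr{P}_{\mathrm{MDP}}(\rho^\star)$ --- are at the same level of citation-based rigor as the paper's own appeals to \cite{BEUTLER1985236} and \cite{puterman2014markov}, and your finite-policy-space limiting argument for the latter is sound, so neither constitutes a genuine gap.
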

				\begin{proof}
					{See Appendix \ref{appendixkv2}}.
				\end{proof}
				
				Theorem \ref{the6} leads to a one-layer two-stage algorithm presented in Algorithm \ref{Algorithm 3}. In the first stage, the algorithm solves the unconstrained MDP $\mathscr{P}_{\text{MDP}}(\lambda)$ and determines the root of $U(\lambda)$, denoted by $\rho^\star$. This root is obtained by implementing \textsc{OnePDSI} in Iteration \ref{propsition2}. If the condition specified in part ($i$) of Theorem \ref{the6} holds true, then the root $h^\star$ is immediately found as $h^\star = \rho^\star$. If the condition of part ($i$) of Theorem \ref{the6} is not met, the algorithm proceeds to the second stage. Here, the LP problem\footnote{Once an LP problem is established, it can be solved to global optimality using well-established algorithms such as the \textit{simplex} and modern \textit{primal--dual interior-point} methods. \textcolor{black}{For problems whose variable set is large, mature decomposition techniques such as \textit{column generation} allow solving the LP without ever forming the entire matrix, while preserving exact optimality.}} Problem \ref{p7} is solved. The solution of this LP provides the optimal value $Q^\star(f_{\max})$. Finally, the root $h^\star$ is explicitly determined as $h^\star=f_{\max}\cdot Q^\star(f_{\max})$. Compared to the previous Algorithm \ref{Algorithm 2}, this newly proposed algorithm eliminates the need for multiple executions of the $\tau$-RVI for searching $h^\star$ and $\theta^\star_{h^\star}$.
				
				\subsection{\textcolor{black}{Polynomial Complexity Results}}
				\textcolor{black}{In this subsection, we analyze that both the methods proposed in this paper are \textbf{polynomial} in time and space complexity. For the three-layer Algorithm \ref{Algorithm 2}, the computational cost of a single $\tau$-RVI iteration is $\mathcal{O}(|\mathcal{Z}||\mathcal{S}|^2|\mathcal{Y}|^2|\mathcal{A}|^3)$, and according to Theorem~\ref{theorem3:convergence}, the $\tau$-RVI requires at most $\mathcal{O}\left(\log(1/\delta)\right)$ iterations to reach accuracy $\delta$. Consequently, the time complexity of the three-layer Algorithm~\ref{Algorithm 2} is
				\begin{equation}
					\mathcal{O}\left(\log(\frac{1}{\epsilon_1})\log(\frac{1}{\epsilon_2})\log(\frac{1}{\delta})|\mathcal{Z}||\mathcal{S}|^2|\mathcal{Y}|^2|\mathcal{A}|^3\right),
				\end{equation}
				which is \textbf{polynomial} in all problem parameters. In terms of space complexity, the algorithm stores one value vector $\tilde{V}_{K}(\gamma;\lambda)$ of size $\mathcal{O}(|\mathcal{S}\times\mathcal{Y}\times\mathcal{A}|)$ together with the sparse transition structure of size $\mathcal{O}(|\mathcal{Z}||\mathcal{S}|^2|\mathcal{Y}|^2|\mathcal{A}|^3)$. However, the overall computational cost can be dominated by the outer layers: the three-layer algorithm performs a large number of nested loops to reach accuracies $\epsilon_1, \epsilon_2$ and $\delta$. When these tolerances are set small, the number of outer loops can be very high, and the resulting run time, though still polynomial, might become impractical.}
				
				\textcolor{black}{In contrast, the proposed \textsc{QuickBLP} addresses this challenge by eliminating the outer-loop search. In the first stage, \textsc{OnePDSI} incurs $\mathcal{O}(|\mathcal{Z}||\mathcal{S}|^2|\mathcal{Y}|^2|\mathcal{A}|^3)$ per iteration and converges in $\mathcal{O}\!\big(\log(1/\delta)\big)$
				iterations (Theorem~\ref{the6upperbound2}), \emph{i.e.}, the complexity of \textsc{OnePDSI} is polynomial:
				\begin{equation}
					\mathcal{O}\left(\log(\frac{1}{\delta})|\mathcal{Z}||\mathcal{S}|^2|\mathcal{Y}|^2|\mathcal{A}|^3\right).
				\end{equation}
				If the stopping condition of Theorem~\ref{the6} is satisfied, the algorithm terminates, and no LP needs to be solved. 
				Only when this condition is not met does the second stage become active, in which the LP Problem \ref{p7} with $n=|\mathcal{Z}|\times|\mathcal{S}|\times|\mathcal{Y}|\times|\mathcal{A}|^2$ variables and $m=|\mathcal{S}\times\mathcal{Y}\times\mathcal{A}|+2$ equality constraints must be solved. The LP is handled via the Column Generation (CG) method \cite{desaulniers2006column} to decompose the Master Problem (MP) into some manageable Restricted Master Problems (RMP). Each RMP is solved using a primal--dual interior-point method \cite[Chap. 11]{boyd2004convex}. Since the occupancy-measure LP admits optimal basic feasible solutions, the solution vector $\mathbf{x}$ can be chosen with at most $m$ nonzeros; hence, at CG iteration $t$ the RMP involves $n_t
				\le m$ active columns. A primal--dual interior-point method requires $\mathcal{O}(\sqrt{m}\log(1/\epsilon))$ iterations \cite[Theo. 3.1]{gondzio2012interior}, each dominated by solving a sparse Schur system with $n_t$ variables and $m$ constraints, whose factorization cost $T_{\text{fact}}(m,n_t)$ ranges from $\mathcal{O}(m^{1.5})$ to $\mathcal{O}(m^{2})$. As a result, the total cost for solving the LP via CG and primal--dual interior-point methods can be upper bounded by:
				\begin{equation}
					\mathcal{O}\left(\sum_{t=1}^q\sqrt{m}\log(1/\epsilon)\cdot T_{\text{fact}}(m,n_t)\right),
				\end{equation}
				where $q\le m$ is the number of CG iterations and $\sum_{t=1}^q\le m^2$. This yields a worst-case complexity between $\mathcal{O}(m^{3.5})$ to $\mathcal{O}(m^4)$. The space complexity to solve the LP is dominated by storing and factorizing the Schur complement system of order $m$. Since each RMP contains $n_t\le m$ active columns, the overall memory requirement is $\mathcal{O}(m)$ to $\mathcal{O}(m^2)$, depending on the sparsity structure of the constraints.}
				
				\textcolor{black}{\textbf{Discussion}: Both the algorithms achieve \textbf{polynomial} complexity. The three-layer algorithm admits a lower-degree polynomial bound in theory but suffers from heavy nested searches, which can make the runtime large in practice. 
				In contrast, \textsc{QuickBLP} adopts a two-stage design: in many instances the first stage alone suffices, and the second-stage LP is only occasionally invoked. 
				Although the LP stage has a higher \textit{worst-case} bound, the combination of CG and sparse interior-point methods makes the practical complexity often much lower than the theoretical worst case. As a result, \textsc{QuickBLP} is typically far more efficient in real problem instances.}
				
				\subsection{\textcolor{black}{Approximate LP: Scaling to Large Spaces}}
				\textcolor{black}{In large-scale systems, the growth of the state space $\mathcal{S}$, the action space $\mathcal{A}$, and the delay space $\mathcal{Y}$ renders optimally solving the problem intractable. A key advantage of \textsc{QuickBLP} is that its linear programming formulation naturally accommodates approximate linear programming (ALP), making it well suited for scaling to large state--action--delay spaces. Following \cite{malek2014linear}, we first adopt a linear approximation of the occupation measure \(x(\gamma, z, a)\) to reduce the number of variables:
				\begin{equation}
					\begin{aligned}
						x(\gamma,z,a)&\approx\hat{x}(\gamma,z,a;\boldsymbol{\theta})\\
						&\triangleq\mu_0(\gamma,z,a)+\sum_{i=1}^{d}\theta_i\psi_i(\gamma,z,a)\\
						&=\mu_0(\gamma,z,a)+\boldsymbol{\theta}\boldsymbol{\psi}^{T},
					\end{aligned}
				\end{equation}
				where \(\mu_0(\gamma, z, a)\) is a fixed baseline function, \(\{\psi_i(\gamma, z, a)\}_{i=1}^d\) are predefined basis functions, and \(\boldsymbol{\theta} \in \mathbb{R}^d\) is the parameter vector to be optimized. This reduces the number of variables from $n=|\mathcal{Z}|\times|\mathcal{S}|\times|\mathcal{Y}|\times|\mathcal{A}|^2$ to feature dimension $d$. Substituting this approximation into the LP Problem~\ref{p7} yields the following ALP:
									\begin{problem}[\textit{Linear Programming Approximation}]\label{p8-1}
					\begin{subequations}
						\begin{align}
							&Q^\star(f_{\max})\approx\min_{\boldsymbol{\theta}} \sum_{\gamma,z,a}q(\gamma,z,a)(\mu_0(\gamma,z,a)+\boldsymbol{\theta}\boldsymbol{\psi}^{T}) \\
							\mathrm{s.t.}\quad& \sum_{\gamma,z,a} f(z)\hat{x}(\gamma,z,a;\boldsymbol{\theta})=\frac{1}{f_{\max}},\\
							& \sum_{z,a}\hat{x}(\gamma',z,a;\boldsymbol{\theta})=\sum_{\gamma,z,a}p(\gamma'|\gamma,z,a)\hat{x}(\gamma,z,a;\boldsymbol{\theta}),\notag\\&\quad\quad\quad\quad\quad\quad\quad\quad\quad\quad\forall \gamma'\in\mathcal{S}\times\mathcal{Y}\times\mathcal{A},\\
							&\sum_{\gamma,z,a}\hat{x}(\gamma,z,a;\boldsymbol{\theta})=1, \\
							& \hat{x}(\gamma,z,a;\boldsymbol{\theta})\ge0, ~~\forall \gamma\in\mathcal{S}\times\mathcal{Y}\times\mathcal{A},z\in\mathcal{Z},a\in\mathcal{A},
						\end{align}
					\end{subequations}
				\end{problem}
				The above ALP reduces the number of decision variables from \(n = |\mathcal{Z}| \times |\mathcal{S}| \times |\mathcal{Y}| \times |\mathcal{A}|^2\) to a fixed feature dimension \(d\), but the number of equality constraints still scales with the sizes of $\mathcal{S}$, $\mathcal{Y}$, and $\mathcal{A}$. This residual dependence on the system size limits the scalability of the approximation unless the constraints are further relaxed.}
				
				\textcolor{black}{To achieve scalability, the hard constraints can be relaxed into soft penalties in the objective function~\cite{malek2014linear}. Specifically, the objective function is augmented with weighted $\ell_1$ violations of the constraints, yielding a surrogate loss. 
			This reformulation converts the constrained LP into an \textit{unconstrained stochastic convex optimization problem} over the low-dimensional parameter $\boldsymbol{\theta}\in\mathbb{R}^d$, which can be solved efficiently using \emph{stochastic subgradient descent}. 
			As a result, the overall complexity scales only with the feature dimension $d$, becoming independent of $\mathcal{S}$, $\mathcal{Y}$, and $\mathcal{A}$, and thus achieves true scalability.}
				
				\subsection{Sensitivity Analysis and Sampling Frequency Threshold}
				
				To better understand the relationship between $h^{\star}$ and the maximum rate constraint parameter $f_{\max}$, we conduct a \textit{sensitivity analysis} in this subsection to explore how $h^{\star}$ depends on the maximum rate constraint with parameter $f_{\max}$. The key result is stated in the following Lemma, whose proof can be found in Appendix \ref{apposa}.
				
				\begin{theorem}\label{Lemmasa}
					(Sensitivity Analysis). Define $f_{\max}^T$ as:
					$
						f_{\max}^T\triangleq\frac{1}{\mathcal{F}^{(\rho^{\star})^-}},
					$
					the following assertions hold true:
					
					($i$). If $f_{\max}\ge f_{\max}^T$, $h^{\star}$ is independent of $f_{\max}$; 
					
					($ii$). If $f_{\max}< f_{\max}^T$,  $h^{\star}=f_{\max}Q^{\star}(f_{\max})$ is monotonically non-increasing with $f_{\max}$, with the derivative given as:
					$
						\frac{\mathrm{d}h^{\star}}{\mathrm{d}f_{\max}}=U(\lambda^{\star}),
					$
					where 
					$
						\lambda^{\star} = \underset{\lambda}{\arg\max} \left\{ \lambda + f_{\max}U(\lambda) \right\}.
					$
				\end{theorem}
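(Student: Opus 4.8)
The plan is to read both assertions off the structural characterization of $h^\star$ in Theorem~\ref{the6}, after which everything reduces to a parametric analysis of the concave map $\lambda\mapsto \lambda+f_{\max}U(\lambda)$. For part $(i)$, the first observation is that the threshold condition is nothing but the dichotomy of Theorem~\ref{the6} in disguise: $f_{\max}\ge f_{\max}^{T}$ is equivalent to $\mathcal{F}^{(\rho^{\star})^{-}}\ge 1/f_{\max}$, which is exactly Case~$(i)$ of Theorem~\ref{the6}. Hence $h^{\star}=\rho^{\star}$, and since $\rho^{\star}$ is the unique root of $U(\cdot)$ (Lemma~\ref{l3}), an object determined purely by $\mathscr{P}_{\mathrm{MDP}}$ and carrying no dependence on $f_{\max}$, assertion $(i)$ follows at once.

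For part $(ii)$ we are in Case~$(ii)$ of Theorem~\ref{the6}, so $h^{\star}=f_{\max}\,Q^{\star}(f_{\max})$ with $Q^{\star}$ the optimal value of the LP Problem~\ref{p7}. The key representation step is to dualize the single rate constraint \eqref{eq53}: relaxing $\sum f(z)x=1/f_{\max}$ with a free multiplier $\lambda$ leaves precisely the occupation-measure LP of $\mathscr{P}_{\mathrm{MDP}}(\lambda)$ with per-stage cost $q-\lambda f=g(\cdot;\lambda)$, whose optimal value equals $U(\lambda)$ by the average-cost LP/MDP equivalence for unichain chains (guaranteed here by Theorem~\ref{the1}). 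Using feasibility of Problem~\ref{p7} in Case~$(ii)$ (established in the proof of Theorem~\ref{the6}) to invoke LP strong duality, I would obtain $Q^{\star}(f_{\max})=\max_{\lambda}\{U(\lambda)+\lambda/f_{\max}\}$, and therefore
\[
h^{\star}(f_{\max})=\max_{\lambda}\big\{\lambda+f_{\max}\,U(\lambda)\big\}.
\]

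Monotonicity is then cheapest to argue directly from Problem~\ref{p1}: enlarging $f_{\max}$ weakens the constraint \eqref{rc} (its right-hand side $1/f_{\max}$ shrinks), so the feasible set of policies only grows and the infimum $h^{\star}$ cannot increase. For the derivative, the displayed identity exhibits $h^{\star}$ as a pointwise maximum over $\lambda$ of the affine-in-$f_{\max}$ functions $f_{\max}\mapsto\lambda+f_{\max}U(\lambda)$ of slope $U(\lambda)$, so $h^{\star}(\cdot)$ is convex and piecewise linear, and a standard envelope (Danskin-type) argument gives $\partial h^{\star}(f_{\max})=\mathrm{conv}\{U(\lambda):\lambda\in\arg\max\}$. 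Since $U$ is concave piecewise linear with finitely many pieces, the maximizer $\lambda^{\star}=\arg\max_{\lambda}\{\lambda+f_{\max}U(\lambda)\}$ is unique for all but finitely many $f_{\max}$, and there $\mathrm{d}h^{\star}/\mathrm{d}f_{\max}=U(\lambda^{\star})$. The sign $U(\lambda^{\star})\le 0$ follows for free from the monotonicity just established; alternatively it can be shown intrinsically, since optimality of $\lambda^{\star}$ forces $-1/f_{\max}\in\partial U(\lambda^{\star})$, and because $\partial U(\lambda)\subseteq[-\mathcal{F}^{\lambda^{+}},-\mathcal{F}^{\lambda^{-}}]$ with $\mathcal{F}^{\lambda}$ non-decreasing (Lemma~\ref{Lemma7v2}), the Case-$(ii)$ hypothesis $\mathcal{F}^{(\rho^{\star})^{-}}<1/f_{\max}$ excludes $\lambda^{\star}<\rho^{\star}$; hence $\lambda^{\star}\ge\rho^{\star}$ and $U(\lambda^{\star})\le U(\rho^{\star})=0$.

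The step I expect to be the main obstacle is the LP-duality identity $Q^{\star}(f_{\max})=\max_{\lambda}\{U(\lambda)+\lambda/f_{\max}\}$: one must verify that Problem~\ref{p7} is feasible and bounded so that strong duality applies, and, more delicately, that the Lagrangian-relaxed LP is \emph{literally} the average-cost LP of $\mathscr{P}_{\mathrm{MDP}}(\lambda)$ with value $U(\lambda)$ rather than some weaker relaxation — this is precisely where the unichain property of Theorem~\ref{the1} is indispensable. A secondary subtlety worth flagging is the reading of the derivative formula: at the finitely many breakpoint values of $f_{\max}$ the set $\arg\max_{\lambda}$ is a nondegenerate interval on which $U$ is strictly decreasing, so $h^{\star}$ has a genuine kink there, and the statement $\mathrm{d}h^{\star}/\mathrm{d}f_{\max}=U(\lambda^{\star})$ is to be understood at the generic $f_{\max}$ (equivalently, via one-sided derivatives, with $\partial h^{\star}$ the corresponding interval at the kinks).
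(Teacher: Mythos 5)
Your proposal is correct and follows essentially the same route as the paper's proof in Appendix I: part ($i$) via the dichotomy of Theorem \ref{the6}, and part ($ii$) by dualizing the rate constraint of the LP in Problem \ref{p7} (with strong duality) to obtain $h^{\star}=\max_{\lambda}\{\lambda+f_{\max}U(\lambda)\}$, then differentiating via an envelope argument using $\mathrm{d}U/\mathrm{d}\lambda=-\mathcal{F}^{\lambda}$ and concluding $U(\lambda^{\star})\le U(\rho^{\star})=0$ from $\lambda^{\star}\ge\rho^{\star}$. Your direct monotonicity argument from the feasible set of Problem \ref{p1} and your explicit treatment of breakpoints/one-sided derivatives are minor refinements of, not departures from, the paper's argument.
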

				
				Furthermore, the condition that distinguishes the two cases in Lemma \ref{Lemmasa} directly establishes the following corollary, which characterizes the phenomenon where additional sampling does not contribute to further decision-making performance.
				\begin{corollary} (Sampling frequency threshold)
					When the sampling frequency $f_{\max}$ exceeds or equals the threshold $f_{\max}^T$, \emph{i.e.}, $f_{\max}\ge f_{\max}^T$, any further increase in sampling frequency provides no improvement in decision performance. 
				\end{corollary}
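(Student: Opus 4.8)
The plan is to derive this corollary as an immediate consequence of the sensitivity analysis in Theorem~\ref{Lemmasa}, together with the structural characterization in Theorem~\ref{the6}. First I would fix the interpretation: ``decision performance'' is quantified by the optimal long-term average cost $h^\star$ of Problem~\ref{p1}, so that a smaller $h^\star$ corresponds to better decisions. I would also record the elementary monotonicity fact that increasing $f_{\max}$ only relaxes the sampling-frequency constraint \eqref{rc} (it lowers the right-hand side $1/f_{\max}$), hence enlarges the feasible policy set and makes $h^\star$ non-increasing in $f_{\max}$ over all of $(0,\infty)$; consequently ``improvement'' can only mean a strict decrease of $h^\star$.

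Next I would show that the regime $f_{\max}\ge f_{\max}^T$ is precisely the regime in which $h^\star$ has already bottomed out. Since $Y_{i+1}\ge 1$ and $Z_i\ge 0$ we have $\mathcal{F}^{(\rho^{\star})^-}\ge 1>0$, so from the definition $f_{\max}^T=1/\mathcal{F}^{(\rho^{\star})^-}$ the inequality $f_{\max}\ge f_{\max}^T$ is equivalent to $\mathcal{F}^{(\rho^{\star})^-}\ge 1/f_{\max}$, which is exactly the hypothesis of case~($i$) of Theorem~\ref{the6} (equivalently, case~($i$) of Theorem~\ref{Lemmasa}). Invoking that result gives $h^\star=\rho^{\star}$, the optimal value of the \emph{unconstrained} Problem~\ref{p3}, which does not depend on $f_{\max}$ at all. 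Hence for every $f_{\max}\ge f_{\max}^T$ the optimal cost is pinned at the constant $\rho^{\star}$, so raising $f_{\max}$ further leaves $h^\star$ unchanged and yields no gain in decision performance.

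To make the conclusion fully self-contained I would add the complementary observation that $\rho^{\star}$ is a universal lower bound for $h^\star$: any policy feasible for the constrained Problem~\ref{p1} is also admissible for the unconstrained problem, so $h^\star\ge\rho^{\star}$ for every $f_{\max}$; once $h^\star$ attains $\rho^{\star}$ it has reached the best value achievable under \emph{any} sampling budget, and additional samples are provably wasted. There is no genuine obstacle here --- the corollary is essentially a direct reading of the preceding theorems --- but the one place that needs a moment's care is the boundary case $f_{\max}=f_{\max}^T$, \emph{i.e.}, when $\rho^{\star}$ sits at a break point of the step function $\mathcal{F}^{\lambda+\theta}$. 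There I would use the \emph{left} limit $\mathcal{F}^{(\rho^{\star})^-}$ exactly as it appears in the definition of $f_{\max}^T$, so that the threshold itself belongs to the ``no-improvement'' regime and the corollary's inequality $f_{\max}\ge f_{\max}^T$ (including equality) stays consistent with case~($i$) of Theorem~\ref{the6}.
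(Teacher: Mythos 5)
Your proposal is correct and follows essentially the same route as the paper: the corollary is read off directly from Theorem~\ref{Lemmasa}, since $f_{\max}\ge f_{\max}^T$ is equivalent to $\mathcal{F}^{(\rho^{\star})^-}\ge 1/f_{\max}$, which places the system in case~($i$) where $h^\star=\rho^\star$ is independent of $f_{\max}$. Your added remarks on the monotonicity of $h^\star$ in $f_{\max}$ and the lower bound $h^\star\ge\rho^\star$ are consistent extras, not a different argument.
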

				
				%
				\section{Simulation Results}\label{sectionVIII}
				\subsection{Comparing Benchmark}
				\subsubsection{Sampling Policies}
				In this paper, the sampling policy that is co-designed with the goal-oriented remote decision-making policy to minimize the \textit{long-term average cost} in Problem \ref{p1} is referred to as ``\textit{Goal-oriented sampling}''. We compare this sampling policy with the following benchmarks:
				\begin{itemize}
					\item \textit{Uniform sampling}: The sampling is activated periodically with a constant sampling interval $d\in\mathbb{N}^+$. In this case, the sampling process follows $S_{i+1}-S_i=d$ for $\forall i$ (see \cite[Section VI]{DBLP:conf/infocom/KaulYG12} for a detailed system description of uniform sampling with queuing and random service delay). Since the sampling interval $d$ is limited to integer values in the discrete-time MDP setting, the corresponding sampling frequency $1/d$ also takes discrete values. Hence, the simulation yields a performance curve comprising discrete points, as shown in Fig.~\ref{fig:compareconstraint}.
					
					\item \textit{Zero-wait sampling}: An update is transmitted once the previous update is delivered, \emph{i.e.}, $Z_i=0$ for $\forall i$. This policy achieves the maximum throughput. The sampling frequency for \textit{zero-wait} sampling is feasible only if $ f_{\operatorname{max}} \geq 1/\mathbb{E}[Y_i] $. 
					
					\item \textit{Constant-wait sampling}: Waiting before transmitting is reported to be a good alternative for information \textit{freshness} \cite{sun2017update}. Here we consider $Z_i=z$, where $z\in\mathbb{N}^+$ is a constant waiting time. The sampling frequency for \textit{constant-wait sampling} is $1/(z+\mathbb{E}[Y_i])$, which imposes a feasibility condition: $f_{\max}\ge1/(z+\mathbb{E}[Y_i])$.
					
					\item \textit{AoI-optimal sampling}:
					The \textit{AoI-optimal} policy determines $Z_i$ by \cite[Theorem 4]{DBLP:journals/tit/SunUYKS17}, which is a threshold-based policy $Z_i=\max(0,\beta-Y_i)$, where $\beta$ is the solution to the following equations:
					\begin{equation}
						\begin{aligned}
							\mathbb{E}\left[Y+z(Y)\right]=\operatorname*{max}\left(\frac{1}{f_{\operatorname*{max}}},\frac{\mathbb{E}[(Y+z(Y))^{2}]}{2\beta}\right)
						\end{aligned}
					\end{equation}
					with 
					$
						z(Y)=\max(0,\beta-Y).
					$
					\cite[Algorithm 2]{DBLP:journals/tit/SunUYKS17} presents a low-complexity algorithm to solve $\beta$.
				\end{itemize}
				
				\subsubsection{Remote Decision Making}
				We consider the following decision-making policies:
				\begin{itemize}
					\item \textit{Co-design with goal-oriented sampling}: This decision-making policy is obtained by solving Problem \ref{p1}. For conciseness, we refer to the co-design approach simply as ``\textit{goal-oriented sampling}''.
					\item \textit{Myopic decision policy} $\pi_{\text{myopic}}$: At each time step, the remote decision maker selects an action based on the most recent (possibly outdated) observation $ X_{t - \Delta(t)}$. The selected action aims to minimize the instantaneous cost and is formally given by \begin{equation} a_t = \pi_{\text{myopic}}(X_{t - \Delta(t)}) = \arg\min_a \mathcal{C}(X_{t - \Delta(t)}, a), \end{equation} where $\mathcal{C}(\cdot, \cdot)$ denotes the one-step cost function. This policy is termed \textit{myopic} because it excludes any prediction or planning mechanisms; instead, decisions are made greedily to minimize the one-step cost based on the most recently available observation.
					\item \textit{Long-term optimal decision policy $\pi^\star$}: The long-term optimal policy thinks \textit{ahead} to minimize the long-term average cost over time. We compute it by solving the ACOE of the primal MDP using the RVI algorithm. The resulting policy employed at the remote decision maker $\pi^\star:\mathcal{S}\rightarrow \mathcal{A}$ selects an action based on the latest available observation 
					$X_{t-\Delta(t)}$, according to $a_t=\pi^\star(X_{t-\Delta(t)})$. \textcolor{black}{\textit{Unless otherwise stated, this long-term optimal decision policy is employed as the default downstream decision-making policy under all benchmark sampling strategies. Throughout the paper, each baseline (e.g., "zero-wait", "AoI-optimal", "constant-wait") refers to a \textit{composite strategy} that combines the corresponding sampling policy with this long-term optimal decision policy.}} 
				\end{itemize}
				\begin{figure}[t]
					\centering
					\includegraphics[width=0.9\linewidth]{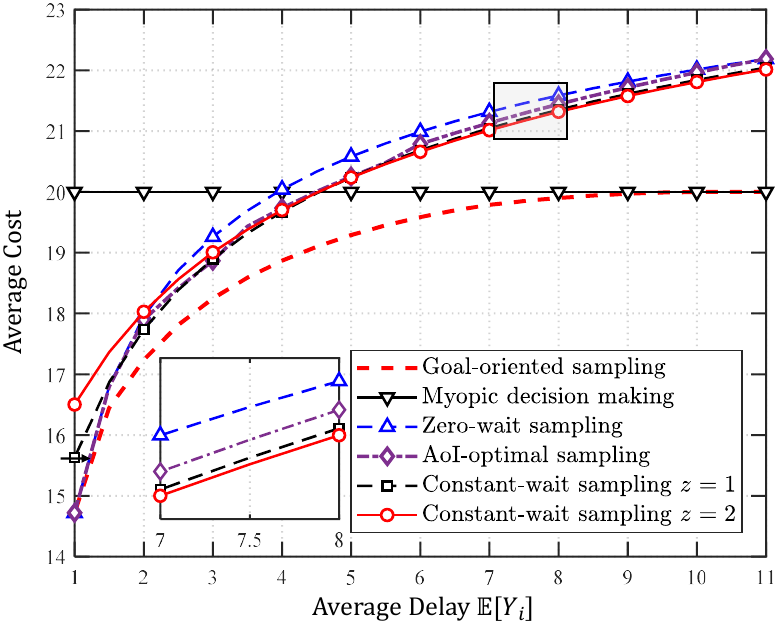}
					\caption{Average cost vs. Average delay $\mathbb{E}[Y_i]$ under \textcolor{black}{binary delay model ($Y_{\max} = 11$), with parameter $p$ controlling the average delay. All baseline sampling policies are paired with the corresponding \textit{long-term optimal decision policy}.}}
					\label{fig:comparewithoutconstraint}
				\end{figure}
				\begin{figure}[t]
					\centering
					\includegraphics[width=0.9\linewidth]{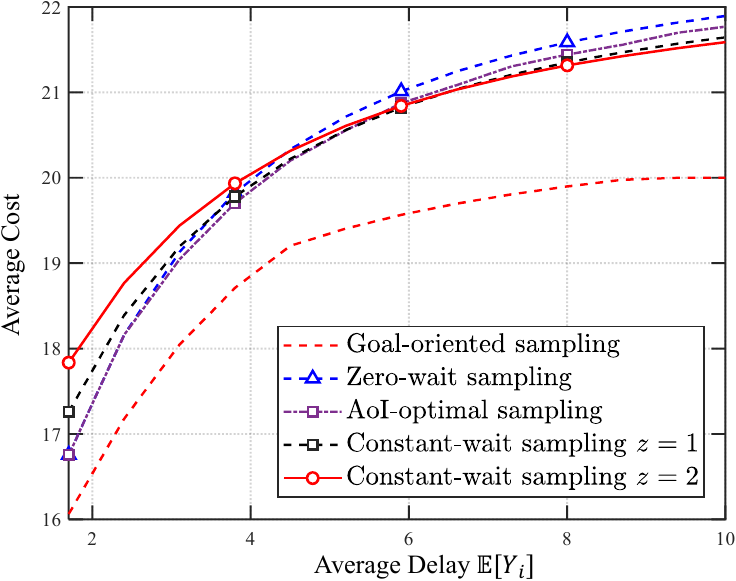}
					\caption{\textcolor{black}{Average cost vs. Average delay $\mathbb{E}[Y_i]$ under binary delay model ($p = 0.3$), with $Y_{\max}$ controlling the average delay. All baseline sampling policies are paired with the corresponding \textit{long-term optimal decision policy}.}}
					\label{fig:comparewithoutconstraint2}
				\end{figure}
					\begin{figure}[t]
					\centering
					\includegraphics[width=0.9\linewidth]{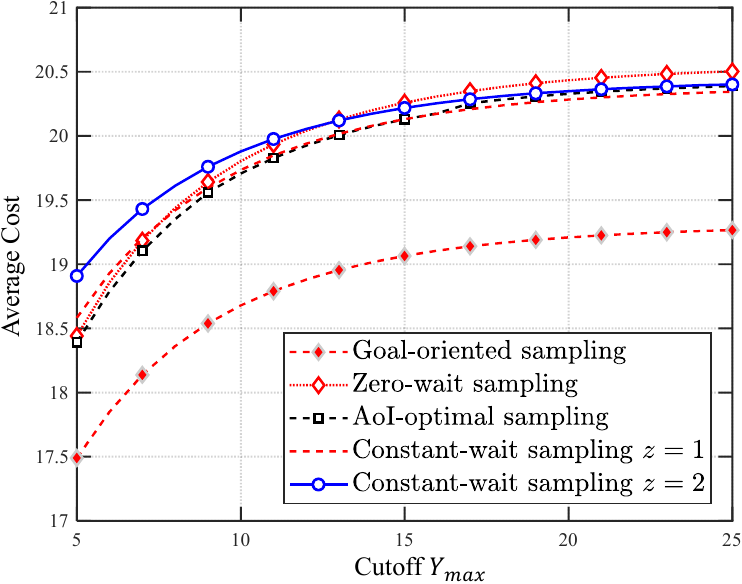}
					\caption{\textcolor{black}{Average cost vs. Cutoff $Y_{\max}$ under truncated geometric delay model with $q = 0.3$. All baseline sampling policies are paired with the corresponding \textit{long-term optimal decision policy}.}}
					\label{fig:geometric}
				\end{figure}
				\begin{figure}[t]
					\centering
					\includegraphics[width=0.9\linewidth]{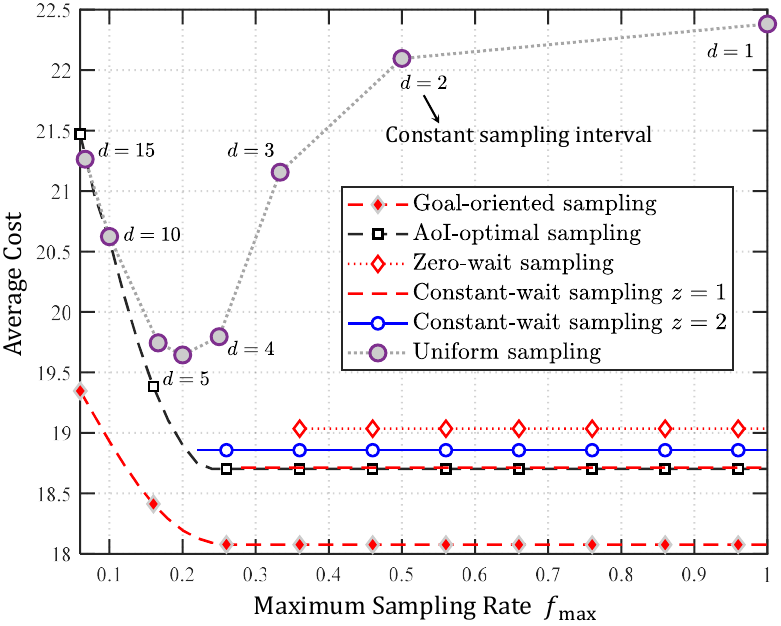}
					\caption{Average cost vs. Maximum sampling frequency $f_{\max}$. \textcolor{black}{All baseline sampling policies are paired with the corresponding \textit{long-term optimal decision policy}.}}
					\label{fig:compareconstraint}
				\end{figure}

				\subsection{Simulation Parameter Setup}
				\subsubsection{Primal MDP}
				
				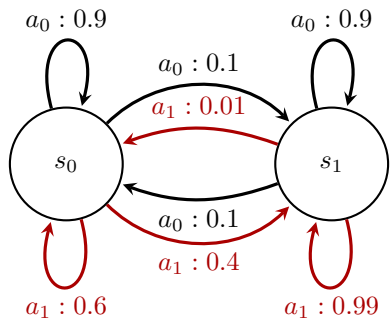
\begin{figure}[h]
					\centering
					\begin{tikzpicture}[
						> = stealth,
						shorten > = 1pt,
						auto,
						node distance = 2cm,
						thick
						]
						
						\definecolor{myblue}{RGB}{0, 0, 0}  
						\definecolor{myred}{RGB}{170, 0, 0} 
						
						\tikzstyle{state} = [circle, draw=black, thick, fill=white, drop shadow={shadow xshift=0.5mm, shadow yshift=-0.5mm, opacity=0.3}, minimum width=1.5cm, font=\normalsize]
						
						\node[state] (s0) {$s_0$};
						\node[state] (s1) [right=of s0] {$s_1$};
						
						\path[->, very thick]
						(s0) edge[bend left=45, color=myblue] node[above, text=myblue] {$a_0: 0.1$} (s1)
						(s1) edge[bend left=20, color=myblue] node[below, text=myblue] {$a_0: 0.1$} (s0)
						(s0) edge[loop above, color=myblue] node[text=myblue] {$a_0: 0.9$} (s0)
						(s1) edge[loop above, color=myblue] node[text=myblue] {$a_0: 0.9$} (s1);
						
						\path[->, very thick] 
						(s0) edge[bend right=45, color=myred] node[below, text=myred] {$a_1: 0.4$} (s1)
						(s1) edge[bend right=20, color=myred] node[above, text=myred] {$a_1: 0.01$} (s0)
						(s0) edge[loop below, color=myred] node[text=myred] {$a_1: 0.6$} (s0)
						(s1) edge[loop below, color=myred] node[text=myred] {$a_1: 0.99$} (s1);		
						
					\end{tikzpicture}
					\caption{Simulation Setup: Transition diagram of the primal MDP given in Appendix \ref{appendixh}. }\label{fig:9}
				\end{figure}

				As a case study, we specifically consider a benchmark setup for clarity and insight. In this setup, the sizes of the state space and the action space of the \textit{primal MDP} are both $2$, consisting of states $s_0,s_1$ and actions $a_0,a_1$\footnote{The binary state space is broadly representative of many real-world systems. For instance, it can model the \textbf{occurrence or absence} of critical events such as \textit{fires, industrial failures, security breaches, or anomalies in equipment status}. The binary action space can be interpreted as whether or not to apply a control intervention in response to the critical event, e.g. extinguishing a detected fire or repairing a malfunctioning piece of equipment.}. The \textit{primal MDP} tuple is detailed in Appendix \ref{appendixh}, and is visualized by transition diagram given in Fig. \ref{fig:9}. 
				
				In the primal MDP, the corresponding \textit{myopic} decision-making policy is
				\begin{equation}
					\pi_{\text{myopic}}(s_0)=a_0, \pi_{\text{myopic}}(s_1)=a_0,
				\end{equation}
				indicating a constant action regardless of state. Consequently, this decision-making policy does not utilize state information at the remote decision maker, and thus the value of information transmission is \textit{null} in this context.
				
				The \textit{long-term optimal decision} policy is solved by using RVI algorithm, and the solution is given as:
				\begin{equation}
					\pi^\star(s_0)=a_1, \pi^\star(s_1)=a_0,
				\end{equation}
				which clearly depends on the system state. In this case, the remote decision maker must rely on the potentially \textit{stale} state to select the ``\textit{right}'' action, highlighting the utility of information transmission under the decision-making policy.
				\subsubsection{\textcolor{black}{Finite-support Memory-less Delays}}
				\textcolor{black}{Throughout, we restrict attention to delay distributions with finite support. It is also practically justified, as many communication and control systems enforce a maximum admissible delay (via timeouts, buffer limits, or QoS deadlines). As a representative example, we consider a binary-valued random delay model where each delay $Y_i$ takes value $1$ with probability $p$ and $Y_{\max}$ with probability $1 - p$, i.e., $\Pr(Y_i = 1) = p,\Pr(Y_i = Y_{\max}) = 1 - p,\forall i\in\mathbb{Z}^+$} \textcolor{black}{(cf.\ \cite{DBLP:journals/tit/BedewySKS21})}. This binary-delay formulation enables tractable analysis while capturing key aspects of stochastic delay. 
				
				\textcolor{black}{Importantly, the proposed policy maps $(X_{S_i},Y_i,A_{i-1})$ to actions, so the solution complexity scales with $|\mathcal Y|$; a finite-support assumption keeps the policy solution computationally manageable. The analysis extends to \emph{any} finite-support distribution without memory, including truncated variants of heavy-tailed models\footnote{\textcolor{black}{Beyond truncation, handling \emph{unbounded-support} delays (e.g., geometric channels) directly would require additional conditions to ensure average-cost optimality; a systematic treatment is an interesting direction for future work.}}. Specifically, a geometric channel can be approximated by a \emph{truncated geometric} delay with parameter $q\in(0,1)$ and cutoff $Y_{\max}$: 
				\begin{equation}
					\Pr(Y=y)=\frac{q(1-q)^{y-1}}{1-(1-q)^{Y_{\max}}},\quad y=1,\dots,Y_{\max}.
				\end{equation}}
					
				\subsection{Discussions}
				
				\begin{table}[t]
					\centering
					\caption{\textcolor{black}{Relative cost reduction (\%) of goal-oriented sampling compared with baselines at selected average delays.}}
					\label{tabel6}
					\begin{tabular}{c|cccc}
						\toprule
						$\mathbb{E}[Y_i]$ &
						Zero-wait &
						AoI-optimal &
						Const-wait $z=2$ \\
						\midrule
						1.7 &  4.18\%  &  4.18\% &  9.98\% \\
						5.9 &  6.23\%  &  6.85\% &  6.09\% \\
						8.0 &  7.18\%  &  7.83\% &  6.66\% \\
						14.3 & 10.11\%& 9.87\% &  8.76\%\\
						\bottomrule
					\end{tabular}
				\end{table}

				Fig. \ref{fig:comparewithoutconstraint} and Fig. \ref{fig:comparewithoutconstraint2} illustrate the performance comparisons between various sampling and decision-making policies under different average delays $\mathbb{E}[Y_i]$. \textcolor{black}{In Fig. \ref{fig:comparewithoutconstraint}, the average delay is adjusted by adjusting $p$, with the maximum delay fixed at $Y_{\max} = 11$. In Fig.~\ref{fig:comparewithoutconstraint2}, the average delay is increased by enlarging $Y_{\max}$ while keeping $p = 0.3$ fixed. Across both scenarios, the proposed goal-oriented sampling strategy consistently outperforms all baseline methods in terms of minimizing the average cost. Table \ref{tabel6} reports the relative cost reduction $\eta_b = (C_b - \rho^{\star})/C_b \times 100\%$ of the proposed goal-oriented sampling compared with representative baselines at several average delay levels $\mathbb{E}[Y_i]$, where $C_b$ is the average cost of baseline $b$. The average delay is controlled by adjusting $Y_{\max}$. 
				For small delays, goal-oriented sampling achieves about 
					{4\%} reduction over zero-wait and AoI-optimal policies and nearly 
					{10\%} over the constant-wait policy. 
					As the delay increases to moderate levels 
					($\mathbb{E}[Y_i]=5.9$--$8$), the improvement grows to 
					{6--8\%}, and at large delays 
					($\mathbb{E}[Y_i]=14.3$) the gain further increases to about 
					{10\%} against zero-wait and AoI-optimal. The performance gain primarily arises from the fact that the proposed sampling policy is explicitly designed to optimize decision effectiveness rather than communication metrics alone. Unlike conventional AoI-optimal sampling policy, which treats information freshness and control decisions as separate optimization problems, goal-oriented sampling allocates sampling opportunities more judiciously, prioritizing updates that are most relevant to improving decision outcomes. In contrast, AoI-optimal sampling may transmit timely but less informative data, resulting in suboptimal decision performance despite lower AoI.}
				
				A key observation from Fig. \ref{fig:comparewithoutconstraint} is that the \textit{myopic decision-making policy}, which selects a fixed action regardless of state information, yields a constant cost independent of delay. \textcolor{black}{This myopic policy serves as a delay-agnostic baseline in the sense that it does not explicitly condition its decision rule on AoI beyond the last received state; the gap to the proposed goal-oriented sampling quantifies the benefit of explicitly leveraging AoI in decision-making.} This highlights an insight: in scenarios where the control policy is insensitive to informed state information, enhancing the communication channel (e.g., by reducing delay) \textcolor{black}{may} not lead to any improved goal-oriented decision-making performance. Moreover, Fig. \ref{fig:comparewithoutconstraint} demonstrates that the performance gap between the proposed \textit{goal-oriented sampling} policy and the \textit{myopic decision-making} policy gradually narrows as the average delay $\mathbb{E}[Y_i]$ increases. This convergence reflects a fundamental limitation in decision-making under communication constraints: when the delay becomes excessively large, \textcolor{black}{the received state information becomes so outdated that it no longer provides reliable guidance for current decisions. As a result, the optimal goal-oriented policy degenerates into state-independent behavior in high-delay regimes, where the influence of timely and fresh information on decision performance becomes negligible.}
				
				\textcolor{black}{Fig.~\ref{fig:geometric} illustrates the average cost achieved by various sampling strategies under a truncated geometric delay model with parameter $q = 0.3$. The $x$-axis represents the cutoff parameter $Y_{\max}$, which limits the maximum possible delivery delay. As $Y_{\max}$ increases, the truncated delay distribution gradually approaches the standard (non-truncated) geometric distribution, and the resulting average cost curves converge accordingly. Across the entire range of $Y_{\max}$, the proposed goal-oriented sampling strategy consistently outperforms all baselines, achieving significantly lower average cost---especially under large delay cutoffs. }

				Fig. \ref{fig:compareconstraint} illustrates the relationship between average cost and the maximum sampling frequency $f_{\max}$. The proposed goal-oriented sampling consistently achieves the lowest cost across all values of $f_{\max}$. As $f_{\max}$ increases, the performance of both \textit{goal-oriented sampling} and \textit{AoI-optimal} sampling improves gradually. Notably, the \textit{goal-oriented sampling} curve aligns with the \textit{sensitivity analysis} presented in Theorem \ref{Lemmasa}, which predicts that the average cost initially decreases with $f_{\max}$, then saturates to a constant value.

				In contrast, the uniform sampling policy in Fig. \ref{fig:compareconstraint} exhibits a \textit{U-shaped} cost curve. As $f_{\max}$ decreases, its performance converges to that of AoI-optimal sampling policy. This convergence highlights a key insight: under sparse sampling constraints, the transmitted information becomes too outdated to support effective state-dependent decision-making, resulting in uniformly poor performance. Conversely, when the sampling interval $d=1/f_{\max}$ becomes small, the system begins to accumulate a \textit{queuing backlog}, resulting in increased delivery delays. This queuing-induced staleness reduces the timeliness of information at the decision-maker, thereby degrading goal-oriented decision-making performance.

				\section{Conclusion \textcolor{black}{and Future Directions}}\label{sectionVI}
				In this paper, we have proposed a new remote MDP problem, namely AR-MDP in the time-lag MDP framework. Specifically, AoI, typically an optimization indicator for information \textit{freshness}, is incorporated into AR-MDP both as a controllable random processing delay and as critical side information to support remote decision-making. To investigate the fundamental trade-off between communication constraint and the decision-making effectiveness, we considered both \textit{sampling frequency constraint} and \textit{random delay} in this problem and developed low-complexity \textit{one-layer} algorithms, \textsc{OnePDSI} and \textsc{QuickBLP} to solve this problem efficiently. \textcolor{black}{Through theoretical analysis and experiments, we reveal how communication-induced information \textit{staleness} can negatively impact remote decision-making performance.}
				
				\textbf{Future directions.} Several promising directions remain open for future exploration. First, while this work focuses on finite-state, finite-action, and finite bounded-delay systems, extending the framework to handle \textit{continuous state and action spaces}, as well as delay distributions with infinite support, remains a challenging yet important direction. 
				\textcolor{black}{In particular, finiteness underlies several key steps of the present analysis, including the existence of stationary average-cost optimal policies and the associated optimality arguments on the lifted MDP, the finite-dimensional occupation-measure LP formulation in the constrained case, and the convergence and error guarantees of the proposed algorithms. Extending to infinite spaces will therefore require replacing these finite-dimensional tools, for example via discretization/quantization, or by invoking average-cost MDP/CMDP theory on Borel spaces under additional regularity conditions.}
				Second, the current model assumes that a sample, once sent, cannot be interrupted. Enabling \textit{preemptive or adaptive transmission mechanisms} may enhance the system responsiveness. \textcolor{black}{Third, relaxing the action-holding assumption and incorporating \textit{AoI-triggered decision-making mechanisms}, where sampling or control actions are explicitly adapted based on real-time AoI values, may allow for more efficient utilization of limited communication resources.} Moreover, while the proposed algorithms offer a favorable balance between optimality and complexity under the considered setting, future work could explore \textit{scalable reinforcement learning techniques} that generalize to unknown systems.

				\appendices
				\normalsize
				\textcolor{black}{\section{Proof of Lemma \ref{l1}}\label{l1proof}
				\subsection{Proof of Sufficient Statistics}
				We consider the finite-horizon value function
				\begin{equation}\label{eq:val-marginalize}
					\begin{aligned}
						J_t(\mathcal{I}_t)&\triangleq\max_{a_{t:T}}\mathbb{E}^{\,a_{t:T}}\left[\sum_{k=t}^T \mathcal{C}(X_k, a_k)|\mathcal{I}_t\right]\\
						&=\max_{a_{t:T}}\sum_{k=t}^T\sum_{s'\in\mathcal S} \mathcal{C}(s', a_{k})\cdot\Pr^{\,a_{t:T}}(X_k=s'\mid\mathcal{I}_t),
					\end{aligned}
				\end{equation}
				where $ \Pr^{\,a_{t:T}}$ (and $\mathbb E^{\,a_{t:T}}$) denotes the probability measure (and expectation) \emph{induced by} the chosen action sequence $a_{t:T}$.	
				Because $X_k$ is a controlled Markov process, the filtering distribution at time $k$ given $\mathcal I_t$ depends on $\big(X_{t-\Delta(t)},\Delta(t),a_{t-\Delta(t):t-1}\big)$ and not on earlier history, hence  
				\begin{equation}\label{eq:filter1}
					\begin{aligned}
						&\Pr^{\,a_{t:T}}\!\big(X_k\in\cdot \,\big|\,\mathcal I_t\big)
						\\ &=\Pr^{\,a_{t:T}}\!\big(X_k\in\cdot \,\big|\,X_{t-\Delta(t)},\Delta(t),a_{t-\Delta(t):t-1}\big).
					\end{aligned}
				\end{equation}
				For $t\in[D_i,D_{i+1})$ we have $t-\Delta(t)=S_i$ by \eqref{eq1} and $Y_i=D_i-S_i$, so
				\begin{equation}
					a_{S_i:t-1}
					=\big(a_{S_i:D_i-1},\,a_{D_i:t-1}\big)
					=(A_{i-1},\,a_{D_i:t-1}),
				\end{equation}
				and
				\begin{equation}\label{eq:filter2}
					\Pr^{\,a_{t:T}}\!\big(X_k\in\cdot \,\big|\,\mathcal I_t\big)
					= \Pr^{\,a_{t:T}}\!\left(X_k\in\cdot\mid X_{S_i},Y_i,A_{i-1},a_{D_i:t-1}\right).
				\end{equation}
			As assumed by \eqref{eq8}, the control is held constant on $[D_i,D_{i+1})$,
			 \begin{equation}
					a_{D_i:t-1}=(a_t,\ldots,a_t), \text{for } t\in [D_i,D_{i+1}),
				\end{equation}
				so the term $a_{D_i:t-1}$ is completely determined by $a_t$. We can therefore \emph{absorb} its effect into the induced measure and establish,
				\begin{equation}\label{eq27}
					\begin{aligned}
						J_t(\mathcal I_t)
						&=\max_{a_{t:T}}\sum_{k=t}^{T}\sum_{s'\in\mathcal S}
						\mathcal C(s',a_k)\,
						\Pr^{\,a_{t:T}}\!\big(X_k=s'\,\big|\,\mathcal G_i\big)\\
						&=\max_{a_{t:T}}\;
						\mathbb E^{\,a_{t:T}}\!\Bigg[\sum_{k=t}^{T}\mathcal C(X_k,a_k)\,\Big|\,\mathcal G_i\Bigg]
						\;=\; J_t(\mathcal G_i).
					\end{aligned}
				\end{equation}
				From Definition \ref{definition1} and we know that $\mathcal G_i=(X_{S_i},Y_i,A_{i-1})$ is a sufficient statistics of $\mathcal{I}_t$ for $t\in [D_i,D_{i+1})$.
				\subsection{Determining $u_t$ is equivalent to Determining $Z_i$}
				By \eqref{eq2} and \eqref{eq4}, within $[D_i,D_{i+1})$ the sampling control $u_t$ determines the next sampling epoch via
				\begin{equation}
					S_{i+1} \;=\; \inf\{\tau\ge D_i: a^S_\tau=1\} \;=\; D_i+Z_i,
				\end{equation}
				with $Z_i\in\mathbb N$ the waiting time. Conversely, any choice of $S_{i+1}$ (equivalently $Z_i$) induces a unique sequence $\{a^S_\tau\}_{\tau\in[D_i,D_{i+1})}$ by setting
				$a^S_\tau=0$ for $\tau\in[D_i,S_{i+1})$ and $a^S_{S_{i+1}}=1$ (and then proceeding to the next interval).
				Thus the mapping between $\{a^S_\tau\}_{\tau\in[D_i,D_{i+1})}$ and $S_{i+1}$ (or $Z_i$) is one-to-one, and optimizing over one is equivalent to optimizing over the other.}
				
				\textcolor{black}{\section{Proof of Lemma \ref{l3}}\label{proofofl2}
				\subsection{Proof of Part (i)}
				\subsubsection{$\rho^\star\le \lambda \iff U(\lambda)\le0$} 
				If $\rho^\star\le \lambda$, there exists a policy $\pi=(Z_0,A_0,Z_1,A_1,\cdots)$ such that \begin{equation}\label{eq22}
					\lim _{\mathrm{n} \rightarrow \infty} \frac{\sum_{i=0}^{n-1} \mathbb{E}_\pi\left[\sum_{t=\it{D}_{i}}^{\it{D}_{i+1}-1} \mathcal{C}(X_t, A_i)\right]}{\sum_{i=0}^{n-1} \mathbb{E}_\pi\left[Y_{i+1}+Z_i\right]}\le\lambda,
				\end{equation}
				which is equivalent to 
				\begin{equation}\label{fracless0}
					\resizebox{1\hsize}{!}{$\begin{aligned}
							\lim _{\mathrm{n} \rightarrow \infty} \frac{\frac{1}{n}\sum_{i=0}^{n-1} \mathbb{E}_\pi\left[\sum_{t=\it{D}_{i}}^{\it{D}_{i+1}-1} \mathcal{C}(X_t, A_i)\right]-\lambda\mathbb{E}_\pi\left[Y_{i+1}+Z_i\right]}{\frac{1}{n}\sum_{i=0}^{n-1} \mathbb{E}_\pi\left[Y_{i+1}+Z_i\right]}\le0.
						\end{aligned}$}
				\end{equation}
				Since  $Y_i>0$ and $0\le Z_i<\infty$, we have that $\lim_{\mathrm{n} \rightarrow \infty}\frac{1}{n}\sum_{i=0}^{n-1} \mathbb{E}_{
				\pi}\left[Y_{i+1}+Z_i\right]$ always exists, satisfying \begin{equation}0<\label{rege}\lim_{\mathrm{n} \rightarrow \infty}\frac{1}{n}\sum_{i=0}^{n-1} \mathbb{E}_{
				\pi}\left[Y_{i+1}+Z_i\right]<\infty.\end{equation} Thus, we have that the numerator of (\ref{fracless0}) satisfies\begin{equation}\label{25eq}
					\lim _{\mathrm{n} \rightarrow \infty}\frac{1}{n}\sum_{i=0}^{n-1} \mathbb{E}_\pi\left[\sum_{t=\it{D}_{i}}^{\it{D}_{i+1}-1} \mathcal{C}(X_t, A_i)\right]-\lambda\mathbb{E}_\pi[Z_i+Y_{i+1}]\le 0.
				\end{equation}
				This implies that the infimum of the left hand side of (\ref{25eq}) is also at most $0$, \emph{i.e}, $U(\lambda)\le 0$.}
				
				\textcolor{black}{On the contrary, when $U(\lambda)\le 0$, we can know that there exists a policy $\pi=(Z_0,A_0,Z_1,A_1,\cdots)$ that satisfies (\ref{25eq}). As (\ref{rege}) always holds, we can easily obtain that (\ref{eq22}) holds. Note that $\rho^\star$ is the infimum of the left hand side of (\ref{eq22}), we have \begin{equation}
					\rho^\star\le\lim _{\mathrm{n} \rightarrow \infty} \frac{\sum_{i=0}^{n-1} \mathbb{E}_\pi\left[\sum_{t=\it{D}_{i}}^{\it{D}_{i+1}-1} \mathcal{C}(X_t, a_i)\right]}{\sum_{i=0}^{n-1} \mathbb{E}_\pi\left[Y_{i+1}+Z_i\right]}\le\lambda.
				\end{equation}
				\subsubsection{$\rho^\star>\lambda \iff U(\lambda)>0$}
				If $\rho^\star>\lambda$, we have that for any policy $(Z_0,A_0,Z_1,A_1,\cdots)$, the following inequality always holds
				\begin{equation}\label{27}
					\lim _{\mathrm{n} \rightarrow \infty} \frac{\sum_{i=0}^{n-1} \mathbb{E}\left[\sum_{t=\it{D}_{i}}^{\it{D}_{i+1}-1} \mathcal{C}(X_t, A_i)\right]}{\sum_{i=0}^{n-1} \mathbb{E}\left[Y_{i+1}+Z_i\right]}>\lambda.
				\end{equation}		
				Since (\ref{rege}) holds, we have that for any policy $\pi$, 
				\begin{equation}\label{aaa}
					\lim _{\mathrm{n} \rightarrow \infty}\frac{1}{n}\sum_{i=0}^{n-1} \mathbb{E}_\pi\left[\sum_{t=\it{D}_{i}}^{\it{D}_{i+1}-1} \mathcal{C}(X_t, A_i)\right]-\lambda\mathbb{E}_\pi[Z_i+Y_{i+1}]> 0.
				\end{equation}
				Since (\ref{aaa}) holds for any policies, it follows that the infimum value of the left-hand side (LHS) of (\ref{aaa}) is also greater than $0$, implying that $\rho^\star>\lambda$.}
				
				\textcolor{black}{When $U(\lambda) > 0$, it is established that condition (\ref{aaa}) is satisfied for any policy sequence $(Z_0, A_0, Z_1, A_1, \cdots)$. Given that (\ref{rege}) always holds, it follows directly that (\ref{27}) holds for any policies, implying that the infimum of the LHS of (\ref{27}) is also greater than $\lambda$, \emph{i.e.}, $\rho^\star>\lambda$.
				\subsection{Proof of Part (ii)}
				By Part (i), $U(\lambda)=0$ iff $\lambda=\rho^\star$. Let $\lambda=\rho^\star$. If $\pi^\star$ is optimal for Problem~\ref{p4}, then 
				\begin{equation}\label{29}
					\lim _{\mathrm{n} \rightarrow \infty}\frac{1}{n}\sum_{i=0}^{n-1} \mathbb{E}\left[\sum_{t=\it{D}_{i}}^{\it{D}_{i+1}-1} \mathcal{C}(X_t, A_i)\right]-\lambda(Z_i+Y_{i+1})= 0,
				\end{equation}
				which implies that 
				\begin{equation}\label{eq23}
					\lim _{\mathrm{n} \rightarrow \infty} \frac{\sum_{i=0}^{n-1} \mathbb{E}_{\pi^*}\left[\sum_{t=\it{D}_{i}}^{\it{D}_{i+1}-1} \mathcal{C}(X_t, A_i)\right]}{\sum_{i=0}^{n-1} \mathbb{E}_{\pi^*}\left[Y_{i+1}+Z_i\right]}=\lambda.
				\end{equation}
				Note that $\rho^\star=\lambda$, we have that for the policy $\pi^*$, \begin{equation}
					\rho^\star=\lim _{\mathrm{n} \rightarrow \infty} \frac{\sum_{i=0}^{n-1} \mathbb{E}\left[\sum_{t=\it{D}_{i}}^{\it{D}_{i+1}-1} \mathcal{C}(X_t, A_i)\right]}{\sum_{i=0}^{n-1} \mathbb{E}\left[Y_{i+1}+Z_i\right]},
				\end{equation}
				which infers that policy $\pi^*$ is also the optimal policy of Problem \ref{p3}.
				\subsection{Proof of Part (iii)}
				From Part (i), we know that proving Part (iii) is equivalent to prove that $U(\lambda)$ is monotonically non-increasing in terms of $\lambda$, \emph{i.e.}, for any $\Delta\lambda>0$, $U(\lambda+\Delta\lambda)\le U(\lambda)$. This is verified by the following inequalities:
				\begin{equation}
					\resizebox{1\hsize}{!}{$
						\begin{aligned}	
							& U(\lambda+\Delta\lambda)=\\
							&\inf_{\phi_{0:\infty}} \lim _{\mathrm{n} \rightarrow \infty}\frac{1}{n} {\sum_{i=0}^{n-1}\left\{ \mathbb{E}\left[\sum_{t=\it{D}_{i}}^{\it{D}_{i+1}-1} \mathcal{C}(X_t, A_i)\right]-(\lambda+\Delta\lambda)\mathbb{E} \left[Z_i+Y_{i+1}\right]\right\}}\\
							&=\inf _{\phi_{0:\infty}}\left\{\lim _{\mathrm{n} \rightarrow \infty}\frac{1}{n} {\sum_{i=0}^{n-1}\left\{ \mathbb{E}\left[\sum_{t=\it{D}_{i}}^{\it{D}_{i+1}-1} \mathcal{C}(X_t, A_i)\right]-\lambda\mathbb{E} \left[Z_i+Y_{i+1}\right]\right\}}\right.\\
							&\left.-\lim _{\mathrm{n} \rightarrow \infty}\frac{1}{n}\sum_{i=0}^{n-1} \Delta\lambda\mathbb{E} [Z_i+Y_{i+1}]       \right\}\\
							&\le \inf _{\phi_{0:\infty}}\lim _{\mathrm{n} \rightarrow \infty}\frac{1}{n} {\sum_{i=0}^{n-1}\left\{ \mathbb{E}\left[\sum_{t=\it{D}_{i}}^{\it{D}_{i+1}-1} \mathcal{C}(X_t, A_i)\right]-\lambda\mathbb{E} \left[Z_i+Y_{i+1}\right]\right\}}\\
							&=U(\lambda).
						\end{aligned}$}
				\end{equation}
				Thus, we have that $\lambda=\rho^\star$ is the unique root of $U(\lambda)=0$.}
				
				\textcolor{black}{\section{Transition Probability of $\mathscr{P}_{\mathrm{MDP}}(\lambda)$}\label{appendixc}
				Recall that $\mathcal G_i=(X_{S_i},Y_i,A_{i-1})$ and $\mathcal G_{i+1}=(X_{S_{i+1}},Y_{i+1},A_i)$.
				We assume $\{Y_i\}$ are i.i.d. and independent of the source process $\{X_t\}$ and of the actions. 
				We derive $\Pr(\mathcal G_{i+1}\mid \mathcal G_i,Z_i,A_i)$ by computing the three marginals $\Pr(X_{S_{i+1}}\mid\mathcal G_i,Z_i,A_i)$,
				$\Pr(Y_{i+1}\mid\mathcal G_i,Z_i,A_i)$, and
				$\Pr(A_i\mid\mathcal G_i,Z_i,A_i)$.
				\subsection{$\Pr\left(X_{S_{i+1}}|\mathcal{G}_i,Z_i,A_i\right)$}\label{CA}
				Condition on $\mathcal G_i=(s,\delta,a)$, where $\delta=Y_i$ and $a=A_{i-1}$.
				From $t=S_i$ to $t=D_i-1$ the action is constant and equal to $A_{i-1}=a$, so the source evolves for $\delta$ steps under transition matrix $\mathbf P_{a}$, yielding a $\delta$-step kernel $\mathbf P_{a}^{\delta}$.
				From $t=D_i$ to $t=S_{i+1}-1$ the action is $A_i$, so the source further evolves for $Z_i$ steps under $\mathbf P_{A_i}$, i.e., kernel $\mathbf P_{A_i}^{Z_i}$ (with the convention $\mathbf P_{A_i}^{0}=I$).
				By the semigroup property, we have
				\begin{equation}
					\Pr\!\left(X_{S_{i+1}}=s'\mid \mathcal G_i=(s,\delta,a),Z_i,A_i\right)
					=\big[\mathbf P_{a}^{\delta}\,\mathbf P_{A_i}^{Z_i}\big]_{s\times s'}.
				\end{equation}				
				\subsection{$\Pr(Y_{i+1}|\mathcal{G}_i,Z_i,A_i)$}
				Since $\{Y_i\}$ are i.i.d. and independent of $(\mathcal G_i,Z_i,A_i)$,
				\begin{equation}
								\Pr\!\left(Y_{i+1}=\delta'\mid \mathcal G_i=(s,\delta,a),Z_i,A_i\right)
					=\Pr(Y_{i+1}=\delta').
				\end{equation}
				\subsection{$\Pr(A_i|\mathcal{G}_i,Z_i,A_i)$}
				The third component of $\mathcal G_{i+1}$ is the record of the action taken during $[D_i,S_{i+1})$, namely $A_i$. Hence,
				\begin{equation}
					\Pr\left(A_i=a'|\mathcal{G}_i=(s,\delta,a),Z_i,A_i\right)=\mathbbm{1}\{a'=A_i\}.
				\end{equation}
				\subsection{Product form and the transition kernel}
				By the stated independence, $Y_{i+1}$ is independent of $X_{S_{i+1}}$ given $(\mathcal G_i,Z_i,A_i)$, and the value of $A_i$ is deterministic under the conditioning. Therefore,
				\begin{equation}
				\begin{aligned}
					&\Pr\!\left(\mathcal G_{i+1}=(s',\delta',a')\mid \mathcal G_i=(s,\delta,a),Z_i,A_i\right)\\
					&\qquad= \Pr(Y_{i+1}=\delta')\cdot
					\big[\mathbf P_{a}^{\delta}\,\mathbf P_{A_i}^{Z_i}\big]_{s\times s'}
					\cdot \mathbbm{1}\{a'=A_i\},
				\end{aligned}
			\end{equation}
				which is exactly the transition probability stated in~\eqref{tran}.} 
				
				\textcolor{black}{\section{Proof of Lemma \ref{l4}}\label{appendixd}
				Recall that $\mathcal G_i=(X_{S_i},Y_i,A_{i-1})$, $Z_i=S_{i+1}-D_i$, $D_{i+1}=S_{i+1}+Y_{i+1}$, and $\{Y_i\}$ are i.i.d. and independent of the source process $\{X_t\}$ and of the actions. With the per-epoch cost in \eqref{cfun}, we have
				\begin{equation}\label{36}
					\begin{aligned}
						&\inf _{\phi_{0: \infty}} \limsup _{n \rightarrow \infty} \frac{1}{n} \mathbb{E}\left[\sum_{i=1}^n g(\mathcal{G}_i, A_i,Z_i;\lambda)\right]\\
						&=\inf _{\phi_{0: \infty}} \limsup _{T \rightarrow \infty} \frac{1}{n} \mathbb{E}\left[\sum_{i=1}^n q(\mathcal{G}_i, A_i,Z_i)-\lambda\mathbb{E}[Z_i+Y_{i+1}]\right].
					\end{aligned}
				\end{equation}
				Hence it suffices to show
				\begin{equation}\label{eq:goal}
					\mathbb{E}\!\left[\sum_{t=D_i}^{D_{i+1}-1}\mathcal C(X_t,A_i)\right]
					=\mathbb{E}\!\left[q(\mathcal G_i,Z_i,A_i)\right].
				\end{equation}
				For each epoch $i$, the expectation $\mathbb{E}\!\left[\sum_{t=D_i}^{D_{i+1}-1}\mathcal C(X_t,A_i)\right]$ can be decomposed as \eqref{decompose}.
				\begin{figure*}
					\begin{equation}\label{decompose}
						\color{black}
						\begin{aligned}
							\mathbb{E}\!\left[\sum_{t=D_i}^{D_{i+1}-1}\mathcal C(X_t,A_i)\right]
							&=\mathbb{E}\!\left[\;
							\mathbb{E}\!\left[\left.\sum_{t=D_i}^{D_i+Z_i+Y_{i+1}-1}\mathcal C(X_t,A_i)\,\right|\,\mathcal G_i,Z_i,A_i\right]\right] \\
							&=\mathbb{E}\!\left[\;
							\mathbb{E}_{Y_{i+1}}\!\left[\sum_{t=D_i}^{D_i+Z_i+Y_{i+1}-1}\sum_{s'\in\mathcal S}\mathcal C(s',A_i)\,
							\Pr\!\left(X_t=s'\mid \mathcal G_i,Z_i,A_i\right)\right]\right].
						\end{aligned}
					\end{equation}
					\hrulefill
				\end{figure*} 
				From $t=S_i$ to $t=D_i-1$ the action is $A_{i-1}=a$, i.e., $\delta$ steps under transition matrix $\mathbf P_a$, yielding $\mathbf P_a^{\delta}$. 
				From $t=D_i$ to any $t\in\{D_i,\ldots,D_{i+1}-1\}$, the action is $A_i$, i.e., $(t-D_i)$ further steps under $\mathbf P_{A_i}$, yielding $\mathbf P_{A_i}^{\,t-D_i}$ (with $\mathbf P_{u}^{0}=I$).
				By the semigroup property,
				\begin{equation}\label{eq100}
					\Pr\!\left(X_t=s'\mid \mathcal G_i=(s,\delta,a),Z_i,A_i\right)
					=\big[\mathbf P_{a}^{\delta}\,\mathbf P_{A_i}^{\,t-D_i}\big]_{s\times s'}.
				\end{equation}
				Substitute \eqref{eq100} into \eqref{decompose} and reindex $\tau=t-D_i\in\{0,\ldots,Z_i+Y_{i+1}-1\}$, then exchange the sums:
				\begin{equation}
					\begin{aligned}
						&\mathbb{E}\!\left[\sum_{t=D_i}^{D_{i+1}-1}\mathcal C(X_t,A_i)\right]\\
						&=\mathbb{E}\!\left[\sum_{s'\in\mathcal S}\mathcal C(s',A_i)\;
						\mathbb{E}_{Y_{i+1}}\!\left[\sum_{\tau=0}^{Z_i+Y_{i+1}-1}
						\big[\mathbf P_{A_{i-1}}^{Y_i}\,\mathbf P_{A_i}^{\,\tau}\big]_{X_{S_i}\times s'}\right]\right].
					\end{aligned}
				\end{equation}
				By the definition in \eqref{15}, the right-hand side equals $\mathbb{E}[q(\mathcal G_i,Z_i,A_i)]$, which proves \eqref{eq:goal}. Therefore, we establish:
				\begin{equation}
					\begin{aligned}
						&\limsup_{n\to\infty}\frac{1}{n}\,
						\mathbb{E}\!\left[\sum_{i=0}^{n-1}\big(q(\mathcal G_i,Z_i,A_i)-\lambda f(Z_i)\big)\right]
						=
						\\
						&\limsup_{n\to\infty}\frac{1}{n}\,
						\mathbb{E}\!\left[\sum_{i=0}^{n-1}\left(\sum_{t=D_i}^{D_{i+1}-1}\mathcal C(X_t,A_i)
						-\lambda\,(Z_i+\mathbb E[Y_{i+1}])\right)\right],
					\end{aligned}
				\end{equation}
				which is exactly the objective of Problem~\ref{p4}. Hence $\mathscr{P}_{\mathrm{MDP}}(\lambda)$ is equivalent to Problem~\ref{p4}.}
				
				\textcolor{black}{\section{Proof of Theorem \ref{thm:lifted-unichain}}\label{appendixe}
					Fix an arbitrary stationary deterministic policy 
					$\pi:\mathcal S\times\mathcal Y\times\mathcal A\to\mathcal A\times\mathcal Z$
					and an arbitrary initial lifted state $\mathcal G_0=(s_0,\delta_0,a_0)\in\mathcal S\times\mathcal Y\times\mathcal A$. Denote by $K^\pi$ the one-step transition kernel of the lifted chain under $\pi$, and by $(K^\pi)^m$ its $m$-step kernel.
					Define the \emph{common small set}
					\[
					\mathcal V \;\triangleq\; \{s^\star\}\times \mathcal Y \times \mathcal A \;\subseteq\; \mathsf S\times\mathcal Y\times\mathcal A.
					\]
					We claim that there is a policy-independent one-block minorization:
					\begin{equation}\label{eq:block-minorization}
						\left[(K^\pi)^m\right]_{(s,\delta,a)\times\mathcal{V}}\ \ge\ \epsilon
						\qquad\forall(s,\delta,a)\in\mathcal S\times\mathcal Y\times\mathcal A.
					\end{equation}
					To see this, fix $(s,\delta,a)$ and unfold $m$ steps. Along any admissible length-$m$ sequence 
					$\{(A_t,Z_t,\delta_t)\}_{t=0}^{m-1}$ generated by the policy $\pi$ with linkage constraints $a_{t+1}=A_t$, the $s$-component transition from $s$ to $s^\star$ equals the matrix product on the left of \eqref{eq:entrance}. 
					By the hypothesis \eqref{eq:entrance}, this probability is at least $\epsilon$, uniformly over all such sequences. 
					Since $\mathcal C$ does not restrict the $(\delta,a)$ components at time $m$, we obtain
					\begin{equation}
						\left[(K^\pi)^m\right]_{(s,\delta,a)\times\mathcal{V}}\ge \epsilon>0,
					\end{equation}
					which proves \eqref{eq:block-minorization}.}
					
					\textcolor{black}{Now apply the Markov property on block times $\{0,m,2m,\ldots\}$. For every $n\in\mathbb N$, and the policy $\pi$
					\begin{equation}
						\Pr\!\left\{\mathcal G_{km}\notin\mathcal C\ \text{for }k=1,\ldots,n\ \Big|\ \mathcal G_0=(s,\delta,a)\right\}
						\;\le\; (1-\epsilon)^n.
					\end{equation}
					Letting $n\to\infty$ shows that the hitting time 
					$\tau_{\mathcal C}\triangleq \inf\{k\ge1:\mathcal G_{km}\in\mathcal C\}$ is almost surely finite.
					Restarting from any $y\in\mathcal C$ and repeating the same argument yields that the chain visits $\mathcal C$ infinitely often with probability one, uniformly over the initial state and the policy $\pi$.}
					
					\textcolor{black}{We now establish uniqueness of the recurrent class. Suppose, by contradiction, that there exist two disjoint recurrent classes $\mathcal R_1,\mathcal R_2\subseteq \mathsf S\times\mathcal Y\times\mathcal A$ under $\pi$. 
					Recurrent classes are closed, and starting from any $x\in\mathcal R_j$ the chain returns to $\mathcal R_j$ infinitely often almost surely. 
					But from any starting point the chain also visits $\mathcal C$ infinitely often almost surely, hence $\mathcal C\subseteq \mathcal R_j$ for $j=1,2$, which implies $\mathcal R_1\cap\mathcal R_2\supseteq\mathcal C\neq\varnothing$, which is a contradiction. 
					Therefore, there is at most one recurrent class. Since the state space is finite, at least one recurrent class exists, and uniqueness follows. 
					This proves that the lifted MDP is unichain in the sense stated.}

				\textcolor{black}{\section{Proof of Lemma \ref{l5}}\label{appendixf}
				\subsection{Proof of $\rho^{\star}\ge\min_{s,a}\mathcal{C}(s,a)$}
				For all $t$ and any policy,
				$\mathcal C(X_t,a_t)\ge \min_{s,a}\mathcal C(s,a)$ almost surely. Hence
				\begin{equation}
				\begin{aligned}
					\rho^\star
					&= \inf_{\phi_{0:\infty}}\limsup_{T\to\infty}\frac{1}{T}\,
					\mathbb{E}\!\left[\sum_{t=1}^{T}\mathcal C(X_t,a_t)\right] \\
					&\ge \inf_{\phi_{0:\infty}}\limsup_{T\to\infty}\frac{1}{T}\,
					\mathbb{E}\!\left[\sum_{t=1}^{T}\min_{s,a}\mathcal C(s,a)\right]
					= \min_{s,a}\mathcal C(s,a).
				\end{aligned}
				\end{equation}
				\subsection{Proof of $\rho^{\star}\le\min_a\sum_{s\in\mathcal{S}}{\pi}_a(s)\cdot \mathcal{C}(s,a)$}
				\begin{figure*}[b!]
					\hrulefill
					\begin{equation}\label{70}
						\color{black}
						\begin{aligned}
							0
							&=\min _{A_i, Z_i}\big\{g(\gamma^{\mathrm{ref}},A_i,Z_i;\rho^{\star})+\mathbb{E}[W^*\left(\gamma'\right)|\gamma^{\mathrm{ref}},Z_i,A_i]\big\}
							=\min _{A_i, Z_i}\big\{q(\gamma^{\mathrm{ref}},A_i,Z_i)-\rho^{\star}\cdot f(Z_i)+\mathbb{E}[W^*\left(\gamma'\right)|\gamma^{\mathrm{ref}},Z_i,A_i]\big\}\\
							&=\min _{A_i, Z_i}\left\{f(Z_i)\cdot\left(\frac{q(\gamma^{\mathrm{ref}},A_i,Z_i)+\mathbb{E}[W^*\left(\gamma'\right)|\gamma^{\mathrm{ref}},Z_i,A_i]}{f(Z_i)}-\rho^{\star}\right) \right\},
						\end{aligned}\tag{113}
					\end{equation}
				\end{figure*}
				Restrict to the subclass of policies that fix the action to a constant $a\in\mathcal A$. Then $\{X_t\}$ evolves as a time-homogeneous Markov chain with transition matrix $\mathbf P_a$.
				Assume that $\mathbf P_a$ admits a stationary distribution $\pi_a$ and that the chain is ergodic so that time averages converge to stationary expectations. By the Markov chain ergodic theorem,
				\begin{equation}
					\lim_{T\to\infty}\frac{1}{T}\,\mathbb{E}\!\left[\sum_{t=1}^{T}\mathcal C(X_t,a)\right]
					= \sum_{s\in\mathcal S}\pi_a(s)\,\mathcal C(s,a).
				\end{equation}
				Therefore, for each $a$,
				\begin{equation}\label{42}
					\begin{aligned}
						\rho^\star \le
						\limsup_{T\to\infty}\frac{1}{T}\,\mathbb{E}\!\left[\sum_{t=1}^{T}\mathcal C(X_t,a)\right]
						= \sum_{s}\pi_a(s)\,\mathcal C(s,a),
					\end{aligned}
				\end{equation}
				and minimizing over $a$ yields the stated upper bound.}
				
				\textcolor{black}{\section{Proof of Theorem \ref{the2}}\label{appendixg}
				From \cite[Proposition 7.4.1]{bertsekas2012dynamic}, we know that for any $\lambda$, the optimal value of Problem \ref{p4}, which is $U(\lambda)$, is the same for all initial states and some values $V^*(\gamma;\lambda), \gamma\in\mathcal{S}\times\mathcal{Y}\times\mathcal{A}$ and satisfies the following Bellman equation:
				\begin{align}\label{64}
					&V^*(\gamma;\lambda)+U(\lambda)=\nonumber\\&\min _{A_i, Z_i}\big\{g(\gamma,A_i,Z_i;\lambda)+ \mathbb{E}[V^*\left(\gamma';\lambda\right)|\gamma,A_i,Z_i]\big\},
				\end{align}
				Substituting $\lambda=\rho^{\star}$ and $U(\rho^{\star})=0$ into the Bellman equation, 
				\begin{align}\label{65}
					&V^*(\gamma;\rho^{\star})=\nonumber\\&\min _{A_i, Z_i}\big\{g(\gamma,A_i,Z_i;\rho^{\star})+ \mathbb{E}[V^*\left(\gamma';\rho^{\star}\right)|\gamma,A_i,Z_i]\big\},
				\end{align}
				Similar to the RVI algorithm, we introduce the \textit{relative value function} defined as 
				\begin{equation}\label{66}
					W^*(\gamma)\triangleq V^*(\gamma;\rho^{\star})-V^*(\gamma^{\mathrm{ref}};\rho^{\star}),
				\end{equation}
				where $\gamma^{\mathrm{ref}}$ is called \textit{reference state} and can be arbitrarily chosen from space $\mathcal{S}\times\mathcal{Y}\times\mathcal{A}$. Then, substituting (\ref{66}) into (\ref{65}) yields
				\begin{equation}\label{67}
					\begin{aligned}
						W^*(\gamma)=\min _{A_i, Z_i}\big\{g(\gamma,A_i,Z_i;\rho^{\star})+\mathbb{E}[W^*\left(\gamma'\right)|\gamma,Z_i,A_i]\big\},
					\end{aligned}
				\end{equation}
				Applying $\gamma=\gamma^{\mathrm{ref}}$ in (\ref{66}) and (\ref{67}) leads to
				\begin{equation}\label{68}
					W^*(\gamma^{\mathrm{ref}})=0,\forall \gamma\in\mathcal{S}\times\mathcal{Y}\times\mathcal{A},
				\end{equation}
				and 
				\begin{equation}\label{69}
					\begin{aligned}
						&W^*(\gamma^{\mathrm{ref}})=\\&\min _{A_i, Z_i}\big\{g(\gamma^{\mathrm{ref}},A_i,Z_i;\rho^{\star})+\mathbb{E}[W^*\left(\gamma'\right)|\gamma^{\mathrm{ref}},Z_i,A_i]\big\},
					\end{aligned}
				\end{equation}
				Then, substituting (\ref{68}) and $g(\gamma^{\mathrm{ref}},A_i,Z_i;\rho^{\star})=q(\gamma^{\mathrm{ref}},A_i,Z_i)-\rho^{\star}\cdot f(Z_i)$ into (\ref{69}) yields (\ref{70}).	
				\stepcounter{equation}}
				
				\textcolor{black}{Because $f(Z_i)>0$, we have that (\ref{70}) holds only if
				\begin{equation}\label{71}
					\min_{A_i,Z_i}\left\{\frac{q(\gamma^{\mathrm{ref}},A_i,Z_i)+\mathbb{E}[W^*\left(\gamma'\right)|\gamma^{\mathrm{ref}},Z_i,A_i]}{f(Z_i)}-\rho^{\star}\right\}=0.
				\end{equation}
				Moving $\rho^{\star}$ to the RHS of (\ref{71}) yields
				\begin{equation}
					\rho^{\star}=\min_{A_i,Z_i}\left\{\frac{q(\gamma^{\mathrm{ref}},A_i,Z_i)+\mathbb{E}[W^*\left(\gamma'\right)|\gamma^{\mathrm{ref}},Z_i,A_i]}{f(Z_i)}\right\}.
				\end{equation}
				We thus accomplish the proof.}
				
				\section{The Primal MDP}\label{appendixh}
				We consider the following parameter setup: 
				\begin{itemize}
					\item The \textit{state space} is a binary space: $\mathcal{S}=\{s_0,s_1\}$.
					\item The \textit{action space} is a binary space: $\mathcal{A}=\{a_0,a_1\}$.
					\item The \textit{transition probability matrix} of $X_t$ is \begin{equation}
						\mathbf{P}_{a_0}=\begin{bmatrix}
							0.9 &0.1\\0.1 &0.9
						\end{bmatrix}, \mathbf{P}_{a_1}=\begin{bmatrix}
							0.6 &0.4\\0.01 &0.99
						\end{bmatrix}.
					\end{equation}
					\item The cost function $\mathcal{C}(X_t,a_t)$ is given as \begin{equation}
						\begin{aligned}
							&\mathcal{C}(s_0,a_0)=40,\mathcal{C}(s_0,a_1)=60,\\
							&\mathcal{C}(s_1,a_0)=0,\mathcal{C}(s_1,a_1)=20.
						\end{aligned}
					\end{equation}
				\end{itemize}
				\section{Proof of Theorem \ref{convergence1} \textcolor{black}{and Theorem \ref{theorem3:convergence}}}\label{proof:convergence:theorem2}
				The proof is divided into two parts. First, we prove in Section \ref{subGa} that the limits $\lim_{K\to\infty}\tilde{U}_K(\lambda)$ and $\lim_{K\to\infty}\tilde{V}_K(\gamma;\lambda),\forall \gamma\in\mathcal{S}\times\mathcal{Y}\times\mathcal{A}$ are both finite; Second, we explicitly establish that $\lim_{K \to \infty}{\tilde{U}_K(\lambda)}=U(\lambda)$ and $\lim_{K \to \infty}\tilde{V}_K(\gamma;\lambda)={V}^\star(\gamma;\lambda)/\tau$ in Section \ref{subGb}.
				\subsection{Convergence of \eqref{MRVI}}\label{subGa}
				Denote $Z^{(K)}(\gamma)$, $A^{(K)}(\gamma)$ as the waiting time and controlled action that achieves the minimum in the $K$-th relation:
				\begin{equation}\label{eq82}
					\begin{aligned}					&(A^{(K)}(\gamma),Z^{(K)}(\gamma))=\\&\mathop{\arg\min}_{A_i,Z_i}\Big\{g(\gamma,Z_i,A_i;\lambda)+\tau \mathbb{E}[\tilde{V}_{K}\left(\gamma';\lambda\right)|\gamma,Z_i,A_i]\Big\}.
					\end{aligned}
				\end{equation}
				Define $\widetilde{\mathbf{V}}_K(\lambda)\in\mathbb{R}^{|\mathcal{S}\times\mathcal{Y}\times\mathcal{A}|}$ as the column vector formed by stacking the values $\tilde{V}_{K}(\gamma;\lambda)$ for all $\gamma\in\mathcal{S}\times\mathcal{Y}\times\mathcal{A}$, and let $\mathbf{e}$ an all-one vector of the same dimension. Similarly, define $\mathbf{g}_K(\lambda)$ as the column vector composed of the immediate costs $g(\gamma,Z^{(K)}(\gamma),A^{(K)}(\gamma);\lambda)$ arranged under the same indexing scheme. Let $\mathbf{P}(K)\in\mathbb{R}^{|\mathcal{S}\times\mathcal{Y}\times\mathcal{A}|\times|\mathcal{S}\times\mathcal{Y}\times\mathcal{A}|}$ denote the transition probability matrix where the $(i,j)$-th entry corresponds to the transition probability from $\gamma_i$ to $\gamma_j$ under the control $(Z^{(K)}(\gamma_i),A^{(K)}(\gamma_i))$, with $\gamma_i,\gamma_j$ indexed according to the same fixed ordering of $\mathcal{S}\times\mathcal{Y}\times\mathcal{A}$. Similarly, let $\widetilde{\mathbf{P}(K)}\in\mathbb{R}^{|\mathcal{S}\times\mathcal{Y}\times\mathcal{A}|\times|\mathcal{S}\times\mathcal{Y}\times\mathcal{A}|}$ be defined using the modified transition probabilities $\widetilde{p_{\gamma_i\gamma_j}}(Z^{(K)}(\gamma),A_K(\gamma))$. Under this notation, $\tau$-RVI in \eqref{MRVI} can be equivalently written in vector form as:
				\begin{equation}\label{eq83vw}
					\begin{aligned}
						\tilde{\mathbf{V}}_{K+1}(\lambda)&=(1-\tau)\tilde{\mathbf{V}}_{K}(\lambda)+\mathbf{g}_K(\lambda)\\&+\tau\mathbf{P}(K)\tilde{\mathbf{V}}_{K}(\lambda)-\tilde{U}_{K+1}(\lambda)\mathbf{e}.
					\end{aligned}
				\end{equation}
				From \eqref{eq82}, the pair $A^{(K)}(\gamma),Z^{(K)}(\gamma)$ is selected to minimize the following objective
				\begin{equation}
					g(\gamma,Z_i,A_i;\lambda)+\tau \mathbb{E}[\tilde{V}_{K}\left(\gamma';\lambda\right)|\gamma,Z_i,A_i].
				\end{equation}
				This implies that the chosen action $A^{(K)}(\gamma),Z^{(K)}(\gamma)$ at the $K$-th iteration yields an objective value no greater than that obtained by any other actions. In vector form, this yields the following inequality: 
				\begin{equation}
					\mathbf{g}_K(\lambda)+\tau\mathbf{P}(K)\tilde{\mathbf{V}}_{K}(\lambda)\le\mathbf{g}_t(\lambda)+\tau\mathbf{P}(t)\tilde{\mathbf{V}}_{K}(\lambda), \forall K,t\ge0,
				\end{equation}
				which can be combined with the update rule in \eqref{eq83vw} to derive the following upper bound:
				\begin{equation}\label{eq83}
					\begin{aligned}
						\tilde{\mathbf{V}}_{K+1}(\lambda)\le
						(1-\tau)\tilde{\mathbf{V}}_{K}(\lambda)+\mathbf{g}_{K-1}(\lambda)\\+\tau\mathbf{P}(K-1)\tilde{\mathbf{V}}_{K}(\lambda)-\tilde{U}_{K+1}(\lambda)\mathbf{e}.
					\end{aligned}
				\end{equation}
				Similarly, applying the update equation \eqref{eq83vw} at iteration $K$ and letting $t=K$ yields:
				\begin{equation}\label{eq84}
					\begin{aligned}
						\tilde{\mathbf{V}}_{K}(\lambda)&\le
						(1-\tau)\tilde{\mathbf{V}}_{K-1}(\lambda)+\mathbf{g}_{K}(\lambda)+\\&\quad\tau\mathbf{P}(K)\tilde{\mathbf{V}}_{K-1}(\lambda)-\tilde{U}_{K}(\lambda)\mathbf{e}.
					\end{aligned}
				\end{equation}
				Let us define the difference between successive relative value function iterates as \begin{equation}
					\widetilde{\varDelta_K\mathbf{V}(\lambda)}\triangleq\tilde{\mathbf{V}}_{K+1}(\lambda)-\tilde{\mathbf{V}}_{K}(\lambda).
				\end{equation}
				Subtracting \eqref{eq84} from \eqref{eq83} yields recursive inequalities that characterize the evolution of the value difference:
				\begin{subequations}\label{ineq85}
					\begin{align}
						\widetilde{\varDelta_K\mathbf{V}(\lambda)}&\le(1-\tau)\widetilde{\varDelta_{K-1}\mathbf{V}(\lambda)}\notag+\\&\tau\mathbf{P}(K-1)\widetilde{\varDelta_{K-1}\mathbf{V}(\lambda)}+(\tilde{U}_{K}(\lambda)-\tilde{U}_{K+1}(\lambda))\mathbf{e},\\
						\widetilde{\varDelta_K\mathbf{V}(\lambda)}&\ge(1-\tau)\widetilde{\varDelta_{K-1}\mathbf{V}(\lambda)}+\notag\\&\tau\mathbf{P}(K)\widetilde{\varDelta_{K-1}\mathbf{V}(\lambda)}+(\tilde{U}_{K}(\lambda)-\tilde{U}_{K+1}(\lambda))\mathbf{e}.
					\end{align}
				\end{subequations}
				From \eqref{ptransform}, we know that the matrix $\widetilde{\mathbf{P}(K)}$ satisfies:
				\begin{equation}\label{transpmatrix}
					\widetilde{\mathbf{P}(K)}=(1-\tau)\mathbf{I}+\tau\mathbf{P}(K), \text{ for } K\ge1.
				\end{equation}
				Substituting \eqref{transpmatrix} into \eqref{ineq85} yields:
				\begin{subequations}\label{ineq88}
					\begin{align}				&\widetilde{\varDelta_K\mathbf{V}(\lambda)}\ge\widetilde{\mathbf{P}(K)}\widetilde{\varDelta_{K-1}\mathbf{V}(\lambda)}+(\tilde{U}_{K}(\lambda)-\tilde{U}_{K+1}(\lambda))\mathbf{e},\\
						&\widetilde{\varDelta_K\mathbf{V}(\lambda)}\le\widetilde{\mathbf{P}(K-1)}\widetilde{\varDelta_{K-1}\mathbf{V}(\lambda)}+(\tilde{U}_{K}(\lambda)-\tilde{U}_{K+1}(\lambda))\mathbf{e}.
					\end{align}
				\end{subequations}
				By recursively applying these inequalities over $L$ iterations, we derive the following bounds:
				\begin{subequations}\label{itKtimes}
					\begin{align}
						\widetilde{\varDelta_K\mathbf{V}(\lambda)}&\ge\prod_{t=K}^{K-L+1}\widetilde{\mathbf{P}(t)}\widetilde{\varDelta_{K-L}\mathbf{V}(\lambda)}+\notag\\&(\tilde{U}_{K+1-L}(\lambda)-\tilde{U}_{K+1}(\lambda))\mathbf{e},\label{111a}\\
						\widetilde{\varDelta_K\mathbf{V}(\lambda)}&\le\prod_{t=K-1}^{K-L}\widetilde{\mathbf{P}(t)}\widetilde{\varDelta_{K-L}\mathbf{V}(\lambda)}+\notag\\
						&(\tilde{U}_{K+1-L}(\lambda)-\tilde{U}_{K+1}(\lambda))\mathbf{e}.\label{111b}
					\end{align}
				\end{subequations}
				Since the transformed MDP is a \textit{unichain} and the transition probability matrix $\widetilde{\mathbf{P}(K)}$ holds \textit{aperiodic} for $\forall K\ge1$ (as verified in the proof sketch), we have that there exists a positive integer $L$, a constant $\epsilon>0$, and a state $\gamma^\star$ such that:
				\begin{subequations}\label{Ldefinition}
					\begin{align}
						\left[\prod_{t=K}^{K-L+1}\widetilde{\mathbf{P}(t)}\right]_{\gamma\times\gamma^\star}\ge\epsilon,\quad\forall\gamma\in\mathcal{S}\times\mathcal{Y}\times\mathcal{A},\\
						\left[\prod_{t=K-1}^{K-L}\widetilde{\mathbf{P}(t)}\right]_{\gamma\times\gamma^\star}\ge\epsilon,\quad\forall\gamma\in\mathcal{S}\times\mathcal{Y}\times\mathcal{A}\label{112b}
					\end{align}
				\end{subequations}
				In the following, we establish the existence of finite limits by analyzing two distinct cases.
				\subsubsection{Case 1: $\gamma^\star=\gamma^{\text{r}}$}
				If $\gamma^\star=\gamma^{\text{r}}$, we can derive the inequality \eqref{ieq92} from \eqref{111b}, as shown at the top of this page,
				\begin{figure*}[t!]
					\begin{subequations}\label{ieq92}
						\begin{align}
							&\tilde{{V}}_{K+1}(\gamma;\lambda)-\tilde{{V}}_{K}(\gamma;\lambda)\notag\\
							&{\le}\tilde{U}_{K+1-L}(\lambda)-\tilde{U}_{K+1}(\lambda)+\sum_{\gamma'}	\left[\prod_{t=K-1}^{K-L}\widetilde{\mathbf{P}(t)}\right]_{\gamma\times\gamma'}\times\left(\tilde{{V}}_{K-L+1}(\gamma';\lambda)-\tilde{{V}}_{K-L}(\gamma';\lambda)\right)\label{113a}\\
							&{=}\tilde{U}_{K+1-L}(\lambda)-\tilde{U}_{K+1}(\lambda)+\sum_{\gamma'\ne\gamma^{\text{r}}}	\left[\prod_{t=K-1}^{K-L}\widetilde{\mathbf{P}(t)}\right]_{\gamma\times\gamma'}\times\left(\tilde{{V}}_{K-L+1}(\gamma';\lambda)-\tilde{{V}}_{K-L}(\gamma';\lambda)\right)\label{113b}\\
							&{\le}\tilde{U}_{K+1-L}(\lambda)-\tilde{U}_{K+1}(\lambda)+\max_{\gamma}\left\{\tilde{{V}}_{K-L+1}(\gamma;\lambda)-\tilde{{V}}_{K-L}(\gamma;\lambda)\right\}\times\sum_{\gamma'\ne\gamma^\star}	\left[\prod_{t=K-1}^{K-L}\widetilde{\mathbf{P}(t)}\right]_{\gamma\times\gamma'}\label{113c}
							\\&{\le}\tilde{U}_{K+1-L}(\lambda)-\tilde{U}_{K+1}(\lambda)+(1-\epsilon)\max_{\gamma}\left\{\tilde{{V}}_{K-L+1}(\gamma;\lambda)-\tilde{{V}}_{K-L}(\gamma;\lambda)\right\}, \quad\forall\gamma\in\mathcal{S}\times\mathcal{Y}\times\mathcal{A}.\label{113d}
						\end{align}
					\end{subequations}
					\hrulefill
				\end{figure*}
				where the transition from \eqref{113a} to \eqref{113b} uses the fact that for all $K$, the following holds:
				\begin{equation}\label{115}
					\tilde{{V}}_{K}(\gamma^{\text{r}};\lambda)=(1-\tau)^K\tilde{{V}}_{0}(\gamma^{\text{r}};\lambda)=0,\quad\forall K\in\mathbb{N};
				\end{equation}
				given the initialization $\tilde{{V}}_{K-1}(\gamma^{\text{r}};\lambda)=0$. Inequality \eqref{113c} uses the uniform bound:
				\begin{equation}
					\begin{aligned}
						&\tilde{V}_{K-L+1}(\gamma;\lambda)-\tilde{V}_{K-L}(\gamma;\lambda)\le\\&\max_{\gamma}\{\tilde{V}_{K-L+1}(\gamma;\lambda)-\tilde{V}_{K-L}(\gamma;\lambda)\},
					\end{aligned}
				\end{equation}
				and inequality \eqref{113d} follows from the definition in \eqref{112b}:
				\begin{equation}
					\sum_{\gamma'\ne\gamma^\star}	\left[\prod_{t=K-1}^{K-L}\widetilde{\mathbf{P}(t)}\right]_{\gamma\times\gamma'}=1-\left[\prod_{t=K-1}^{K-L}\widetilde{\mathbf{P}(t)}\right]_{\gamma\times\gamma^\star}{\le}1-\epsilon,
				\end{equation}
				and the fact that the product term is \textit{non-negative}:
				\begin{equation}
					\begin{aligned}
						&\max_{\gamma}\left\{\tilde{{V}}_{K-L+1}(\gamma;\lambda)-\tilde{{V}}_{K-L}(\gamma;\lambda)\right\}\\&\ge\tilde{{V}}_{K-L+1}(\gamma^{\text{r}};\lambda)-\tilde{{V}}_{K-L}(\gamma^{\text{r}};\lambda)=0.
					\end{aligned}
				\end{equation}
				Since \eqref{ieq92} holds for $\forall \gamma\in\mathcal{S}\times\mathcal{Y}\times\mathcal{A}$, we can bound the maximum increment:
				\begin{equation}\label{ineq97}
					\begin{aligned}
						&\max_{\gamma}\left\{\tilde{{V}}_{K+1}(\gamma;\lambda)-\tilde{{V}}_{K}(\gamma;\lambda)\right\}\\&\le(1-\epsilon)\max_{\gamma}\left\{\tilde{{V}}_{K-L+1}(\gamma;\lambda)-\tilde{{V}}_{K-L}(\gamma;\lambda)\right\}\\&\quad+\tilde{U}_{K+1-L}(\lambda)-\tilde{U}_{K+1}(\lambda).
					\end{aligned}
				\end{equation}
				\begin{figure*}[t!]
					\begin{subequations}\label{ieq98}
						\begin{align}
							&\tilde{{V}}_{K+1}(\gamma;\lambda)-\tilde{{V}}_{K}(\gamma;\lambda)\notag\\
							&{\ge}\tilde{U}_{K+1-L}(\lambda)-\tilde{U}_{K+1}(\lambda)+\sum_{\gamma'}	\left[\prod_{t=K}^{K-L+1}\widetilde{\mathbf{P}(t)}\right]_{\gamma\times\gamma'}\times\left(\tilde{{V}}_{K-L+1}(\gamma';\lambda)-\tilde{{V}}_{K-L}(\gamma';\lambda)\right)\label{120a}\\
							&{=}\tilde{U}_{K+1-L}(\lambda)-\tilde{U}_{K+1}(\lambda)+\sum_{\gamma'\ne\gamma^{\text{r}}}	\left[\prod_{t=K}^{K-L+1}\widetilde{\mathbf{P}(t)}\right]_{\gamma\times\gamma'}\times\left(\tilde{{V}}_{K-L+1}(\gamma';\lambda)-\tilde{{V}}_{K-L}(\gamma';\lambda)\right)\label{120b}\\
							&{\ge}\tilde{U}_{K+1-L}(\lambda)-\tilde{U}_{K+1}(\lambda)+\min_{\gamma}\left\{\tilde{{V}}_{K-L+1}(\gamma;\lambda)-\tilde{{V}}_{K-L}(\gamma;\lambda)\right\}\times\sum_{\gamma'\ne\gamma^\star}	\left[\prod_{t=K}^{K-L+1}\widetilde{\mathbf{P}(t)}\right]_{\gamma\times\gamma'}\label{120cv2}
							\\&{\ge}\tilde{U}_{K+1-L}(\lambda)-\tilde{U}_{K+1}(\lambda)+(1-\epsilon)\min_{\gamma}\left\{\tilde{{V}}_{K-L+1}(\gamma;\lambda)-\tilde{{V}}_{K-L}(\gamma;\lambda)\right\}, \quad\forall\gamma\in\mathcal{S}\times\mathcal{Y}\times\mathcal{A}.\label{120d}
						\end{align}
					\end{subequations}
					\hrulefill
				\end{figure*}	
				In a similar manner, using \eqref{111a}, we obtain the lower bound shown in \eqref{ieq98} at the top of the next page, where inequality \eqref{120a} establishes by  $\gamma^\star=\gamma^{\text{r}}$ and 
				\begin{align}
					&\tilde{{V}}_{K-L+1}(\gamma';\lambda)-\tilde{{V}}_{K-L}(\gamma';\lambda)\ge\notag\\&\min_{\gamma}\left\{\tilde{{V}}_{K-L+1}(\gamma;\lambda)-\tilde{{V}}_{K-L}(\gamma;\lambda)\right\};
				\end{align}
				and \eqref{120b} establishes because of \eqref{112b}: 
				\begin{equation}
					\sum_{\gamma'\ne\gamma^\star}	\left[\prod_{t=K}^{K-L+1}\widetilde{\mathbf{P}(t)}\right]_{\gamma\times\gamma'}=1-\left[\prod_{t=K}^{K-L+1}\widetilde{\mathbf{P}(t)}\right]_{\gamma\times\gamma^\star}\overset{(a)}{\le}1-\epsilon,
				\end{equation}
				and the fact that the product term is \textit{non-positive}:
				\begin{equation}
					\begin{aligned}
						&\min_{\gamma}\left\{\tilde{{V}}_{K-L+1}(\gamma;\lambda)-\tilde{{V}}_{K-L}(\gamma;\lambda)\right\}\\&\le\tilde{{V}}_{K-L+1}(\gamma^{\text{r}};\lambda)-\tilde{{V}}_{K-L}(\gamma^{\text{r}};\lambda)=0.
					\end{aligned}
				\end{equation}
				Since inequality \eqref{ieq98} holds for $\forall \gamma$, we can establish:
				\begin{equation}\label{ineq99}
					\begin{aligned}
						&\min_{\gamma}\left\{\tilde{{V}}_{K+1}(\gamma;\lambda)-\tilde{{V}}_{K}(\gamma;\lambda)\right\}\\&\ge(1-\epsilon)\min_{\gamma}\left\{\tilde{{V}}_{K-L+1}(\gamma;\lambda)-\tilde{{V}}_{K-L}(\gamma;\lambda)\right\}\\&\quad+\tilde{U}_{K+1-L}(\lambda)-\tilde{U}_{K+1}(\lambda).
					\end{aligned}
				\end{equation}
				Subtracting \eqref{ineq99} from \eqref{ineq97} directly yields \eqref{relation100} at the top of the next page.
				\begin{figure*}[]
					\begin{equation}\label{relation100}
						\begin{aligned}
							&\max_{\gamma}\left\{\tilde{{V}}_{K+1}(\gamma;\lambda)-\tilde{{V}}_{K}(\gamma;\lambda)\right\}-\min_{\gamma}\left\{\tilde{{V}}_{K+1}(\gamma;\lambda)-\tilde{{V}}_{K}(\gamma;\lambda)\right\}\\&\le(1-\epsilon)\left(\max_{\gamma}\left\{\tilde{{V}}_{K-L+1}(\gamma;\lambda)-\tilde{{V}}_{K-L}(\gamma;\lambda)\right\}-\min_{\gamma}\left\{\tilde{{V}}_{K-L+1}(\gamma;\lambda)-\tilde{{V}}_{K-L}(\gamma;\lambda)\right\}\right).
						\end{aligned}
					\end{equation}
					\hrulefill
				\end{figure*}
				Iterating \eqref{relation100} yields that for some $M>0$ and all $K\ge1$, we have
				\begin{equation}\label{ieq96}
					\begin{aligned}				&\max_{\gamma}\left\{\tilde{{V}}_{K+1}(\gamma;\lambda)-\tilde{{V}}_{K}(\gamma;\lambda)\right\}\\&-\min_{\gamma}\left\{\tilde{{V}}_{K+1}(\gamma;\lambda)-\tilde{{V}}_{K}(\gamma';\lambda)\right\}\\&\le M(1-\epsilon)^{K/L}.
					\end{aligned}
				\end{equation}
				Therefore, the relative difference between $\tilde{{V}}_{K+1}(\gamma;\lambda)$ and $\tilde{{V}}_{K}(\gamma;\lambda)$ is upper bounded by
				\begin{align}\label{eq97}				&\left|\tilde{{V}}_{K+1}(\gamma;\lambda)-\tilde{{V}}_{K}(\gamma;\lambda)\right|\notag\\&\le\max_{\gamma}\left\{\tilde{{V}}_{K+1}(\gamma;\lambda)-\tilde{{V}}_{K}(\gamma;\lambda)\right\}-\notag\\&\quad\min_{\gamma}\left\{\tilde{{V}}_{K+1}(\gamma';\lambda)-\tilde{{V}}_{K}(\gamma;\lambda)\right\}\notag\\&\le M(1-\epsilon)^{K/L},\forall \gamma\in\mathcal{S}\times\mathcal{Y}\times\mathcal{A}.
				\end{align}
				This indicates that $\{\tilde{{V}}_{K}(\gamma;\lambda)\}_{K\in\mathbb{N}^+}$ forms a \textit{Cauchy sequence}. Specifically, for $\forall T>1$, the following holds:
				\begin{subequations}
					\begin{align}				&|\tilde{{V}}_{K+T}(\gamma;\lambda)-\tilde{{V}}_{K}(\gamma;\lambda)|\notag\\&\le\sum_{t=0}^{T-1}|\tilde{{V}}_{K+t+1}(\gamma;\lambda)-\tilde{{V}}_{K+t}(\gamma;\lambda)|\label{128a}\\
						&\le M\sum_{t=0}^{T-1}(1-\epsilon)^{\frac{K+t}{L}}=\frac{M(1-\epsilon)^{K/L}(1-(1-\epsilon)^{T/L})}{1-(1-\epsilon)^{1/L}},\label{128b}
					\end{align}
				\end{subequations}
				where inequality \eqref{128a} follows from the triangle inequality. 
				
				Letting $T\rightarrow\infty$, \eqref{128b} yields
				\begin{equation}\label{129}
					|\tilde{{V}}_{K}(\gamma;\lambda)-\tilde{{V}}_{\infty}(\gamma;\lambda)|\le \frac{M(1-\epsilon)^{K/L}}{1-(1-\epsilon)^{1/L}},
				\end{equation}
				which confirms that 
				$\tilde{{V}}_{K}(\gamma;\lambda)$ converges to a bounded value $\tilde{{V}}_{\infty}(\gamma;\lambda)$ as $K\to\infty$, given any $\gamma$.
				
				We next prove that $\tilde{U}_K(\lambda)$ also converges to a bounded value as $K\to\infty$. The update rule in \eqref{23a} can be rewritten into a vector form as:
				\begin{equation}\label{eq130v2}
					\begin{aligned}
						&\tilde{U}_{K+1}(\lambda)\\&=\min _{A_i, Z_i}\Big\{g(\gamma^{\text{r}},Z_i,A_i;\lambda)+\tau[\mathbf{P}_{A_i,Z_i}]_{N(\gamma^{\text{r}}),:}\times\tilde{\mathbf{V}}_K(\lambda) \Big\},
					\end{aligned}		
				\end{equation}
				where $[\mathbf{P}_{a,z}]_{N(\gamma^{\text{r}}),:}$ denotes the row vector formed by stacking the transition probabilities $\{p_{\gamma^{\text{r}}\gamma'}(a,z)\}$ for all ${\gamma'\in\mathcal{S}\times\mathcal{Y}\times\mathcal{A}}$, arranged according to the same index as that of $\tilde{\mathbf{V}}_K(\lambda)$. Here, $N(\gamma^{\text{r}})$ denotes the index of the reference state $\gamma^{\text{r}}$. By substituting the optimal control pair $(A^{(K)}(\gamma^{\text{r}}),Z^{(K)}(\gamma^{\text{r}}))$, as defined in \eqref{eq82}, into the right-hand side of \eqref{eq130v2}, we obtain the upper bound of $\tilde{U}_{K+1}$ given in \eqref{131}, which is at the top of the next page.
				\begin{figure*}
					\begin{subequations}\label{131}
						\begin{align}				\tilde{U}_{K+1}(\lambda)&=g(\gamma^{\text{r}},Z^{(K)}(\gamma^{\text{r}}),A^{(K)}(\gamma^{\text{r}});\lambda)+\tau\left[\mathbf{P}_{A^{(K)}(\gamma^{\text{r}}),Z^{(K)}(\gamma^{\text{r}})}\right]_{N(\gamma^{\text{r}}),:}\times\tilde{\mathbf{V}}_K(\lambda)\label{eq131a}\\
							&\le g(\gamma^{\text{r}},Z^{(K-1)}(\gamma^{\text{r}}),A^{(K-1)}(\gamma^{\text{r}});\lambda)+\tau\left[\mathbf{P}_{A^{(K-1)}(\gamma^{\text{r}}),Z^{(K-1)}(\gamma^{\text{r}})}\right]_{N(\gamma^{\text{r}}),:}\times\tilde{\mathbf{V}}_K(\lambda).\label{131b}
						\end{align}
					\end{subequations}
					\hrulefill
				\end{figure*}
				Similarly, the $\tilde{U}_K(\lambda)$ can be upper bounded as shown in \eqref{eq133}, which is at the top of the next page.
				\begin{figure*}[t]
					\begin{subequations}\label{eq133}
						\begin{align}
							\tilde{U}_{K}(\lambda)&=g(\gamma^{\text{r}},Z^{(K-1)}(\gamma^{\text{r}}),A^{(K-1)}(\gamma^{\text{r}});\lambda)+\tau\left[\mathbf{P}_{A^{(K-1)}(\gamma^{\text{r}}),Z^{(K-1)}(\gamma^{\text{r}})}\right]_{N(\gamma^{\text{r}}),:}\times\tilde{\mathbf{V}}_{K-1}(\lambda)\label{133a}\\
							&\le g(\gamma^{\text{r}},Z^{(K)}(\gamma^{\text{r}}),A^{(K)}(\gamma^{\text{r}});\lambda)+\tau\left[\mathbf{P}_{A^{(K)}(\gamma^{\text{r}}),Z^{(K)}(\gamma^{\text{r}})}\right]_{N(\gamma^{\text{r}}),:}\times\tilde{\mathbf{V}}_{K}(\lambda).\label{133b}
						\end{align}
						\hrulefill		
					\end{subequations}
				\end{figure*}
				Subtracting \eqref{133a} from \eqref{131b} yields:
				\begin{equation}\label{eq134v2}
					\begin{aligned}				&\tilde{U}_{K+1}(\lambda)-\tilde{U}_{K}(\lambda)\\&\le\tau\left[\mathbf{P}_{A^{(K-1)}(\gamma^{\text{r}}),Z^{(K-1)}(\gamma^{\text{r}})}\right]_{N(\gamma^{\text{r}}),:}\times\left(\tilde{\mathbf{V}}_{K}(\lambda)-\tilde{\mathbf{V}}_{K-1}(\lambda)\right).
					\end{aligned}
				\end{equation}
				Subtracting \eqref{eq131a} from \eqref{133b} yields:
				\begin{equation}\label{eq135v2}
					\begin{aligned}
						&\tilde{U}_{K+1}(\lambda)-\tilde{U}_{K}(\lambda)\\&\ge\tau\left[\mathbf{P}_{A^{(K)}(\gamma^{\text{r}}),Z^{(K)}(\gamma^{\text{r}})}\right]_{N(\gamma^{\text{r}}),:}\times\left(\tilde{\mathbf{V}}_{K}(\lambda)-\tilde{\mathbf{V}}_{K-1}(\lambda)\right).
					\end{aligned}
				\end{equation}
				Combining the upper bound in \eqref{eq134v2} and the lower bound in \eqref{eq135v2}, and applying the inequality $y\le x\le z\Rightarrow|x|\le\max\left\{|y|,|z|\right\}$, we obtain \eqref{eq136} at the top of the next page.
				\begin{figure*}
					\begin{equation}\label{eq136}
						\begin{aligned}
							&\left|\tilde{U}_{K+1}(\lambda)-\tilde{U}_{K}(\lambda)\right|\le\\&\max\Bigg\{{\left|\tau\left[\mathbf{P}_{A^{(K-1)}(\gamma^{\text{r}}),Z^{(K-1)}(\gamma^{\text{r}})}\right]_{N(\gamma^{\text{r}}),:}\times\left(\tilde{\mathbf{V}}_{K}(\lambda)-\tilde{\mathbf{V}}_{K-1}(\lambda)\right)\right|},\left|\tau\left[\mathbf{P}_{A^{(K)}(\gamma^{\text{r}}),Z^{(K)}(\gamma^{\text{r}})}\right]_{N(\gamma^{\text{r}}),:}\times\left(\tilde{\mathbf{V}}_{K}(\lambda)-\tilde{\mathbf{V}}_{K-1}(\lambda)\right)\right|\Bigg\}.
						\end{aligned}
					\end{equation}
					\hrulefill
				\end{figure*}
				The two terms inside the maximum operator in \eqref{eq136} can each be bounded as follows:
				\begin{subequations}\label{135}
					\begin{align}
						&\left|\tau\left[\mathbf{P}_{A^{(K)}(\gamma^{\text{r}}),Z^{(K)}(\gamma^{\text{r}})}\right]_{N(\gamma^{\text{r}}),:}\times\left(\tilde{\mathbf{V}}_{K}(\lambda)-\tilde{\mathbf{V}}_{K-1}(\lambda)\right)\right|\notag\\&\le\tau\left[\mathbf{P}_{A^{(K)}(\gamma^{\text{r}}),Z^{(K)}(\gamma^{\text{r}})}\right]_{N(\gamma^{\text{r}}),:}\times\left|\tilde{\mathbf{V}}_{K}(\lambda)-\tilde{\mathbf{V}}_{K-1}(\lambda)\right|\label{eq137a}\\&\le\tau M(1-\epsilon)^{\frac{K-1}{L}},\label{eq137b}
					\end{align}
				\end{subequations}
				\begin{subequations}\label{eq138v2}
					\begin{align}
						&\left|\tau\left[\mathbf{P}_{A^{(K-1)}(\gamma^{\text{r}}),Z^{(K-1)}(\gamma^{\text{r}})}\right]_{N(\gamma^{\text{r}}),:}\times\left(\tilde{\mathbf{V}}_{K}(\lambda)-\tilde{\mathbf{V}}_{K-1}(\lambda)\right)\right|\notag\\&\le\tau\left[\mathbf{P}_{A^{(K)}(\gamma^{\text{r}}),Z^{(K)}(\gamma^{\text{r}})}\right]_{N(\gamma^{\text{r}}),:}\times\left|\tilde{\mathbf{V}}_{K}(\lambda)-\tilde{\mathbf{V}}_{K-1}(\lambda)\right|\label{eq138a}\\&\le\tau M(1-\epsilon)^{\frac{K-1}{L}},\label{eq138b}
					\end{align}
				\end{subequations}
				where \eqref{eq137a} and \eqref{eq138a} follow from the \textit{triangle inequality}; \eqref{eq137b} and \eqref{eq138b} use the contraction bound established in \eqref{eq97}, along with the fact that sum of the vector $\tau\left[\mathbf{P}_{A^{(K-1)}(\gamma^{\text{r}}),Z^{(K-1)}(\gamma^{\text{r}})}\right]_{N(\gamma^{\text{r}}),:}$ is $1$. Substituting \eqref{135} and \eqref{eq138v2} into \eqref{eq136}, we conclude that
				\begin{equation}\label{139}
					\left|\tilde{U}_{K+1}(\lambda)-\tilde{U}_{K}(\lambda)\right|\le\tau M(1-\epsilon)^{\frac{K-1}{L}}.
				\end{equation}
				This demonstrates that $\{\tilde{{U}}_{K}(\lambda)\}_{K\in\mathbb{N}^+}$ forms a \textit{Cauchy sequence}. In particular, for $\forall T>1$, we have:
				\begin{subequations}
					\begin{align}			&|\tilde{{U}}_{K+T}(\lambda)-\tilde{{U}}_{K}(\lambda)|\notag\\&\label{140a}\le\sum_{t=0}^{T-1}|\tilde{{U}}_{K+t+1}(\lambda)-\tilde{{U}}_{K+t}(\lambda)|\\
						&\le\tau M\sum_{t=0}^{T-1}(1-\epsilon)^{\frac{K+t-1}{L}}\label{140b}\\&=\frac{\tau M(1-\epsilon)^{(K-1)/L}\left(1-(1-\epsilon)^{T/L}\right)}{1-(1-\epsilon)^{1/L}},\label{140c}
					\end{align}
				\end{subequations}
				where inequality \eqref{140a} follows from the \textit{triangle inequality}, and \eqref{140b} follows directly from the bound in \eqref{139}.
				Taking the limit as $T\rightarrow\infty$, we have
				\begin{equation}\label{141}
					\begin{aligned}
						&\left|\tilde{{U}}_{K}(\lambda)-\tilde{{U}}_{\infty}(\lambda)\right|\\&\le\lim_{T \to \infty}\frac{\tau M(1-\epsilon)^{(K-1)/L}\left(1-(1-\epsilon)^{T/L}\right)}{1-(1-\epsilon)^{1/L}}\\&=\frac{\tau M(1-\epsilon)^{(K-1)/L}}{1-(1-\epsilon)^{1/L}}.
					\end{aligned}
				\end{equation}
				This confirms that $\tilde{{U}}_{K}(\lambda)$ converges to a bounded limiting value, denoted by $\tilde{{U}}_{\infty}(\lambda)$.
				\subsubsection{Case 2:  $\gamma^\star\ne\gamma^{\text{r}}$}
				In \textit{Case 1}, we established that when $\gamma^\star=\gamma^{\text{r}}$, both sequences $\{\tilde{{V}}_{K}(\gamma;\lambda)\}_{K\in\mathbb{N}^+}$ and $\{\tilde{{U}}_{K}(\lambda)\}_{K\in\mathbb{N}^+}$ constitute \textit{Cauchy sequences}, and thus converge to bounded values. Here we extend the result to the more general case where the strict condition $\gamma^\star=\gamma^{\text{r}}$ is relaxed. This generalization introduces significant analytical challenges, as key inequalities, specifically \eqref{113c} and \eqref{120cv2}, no longer hold under the relaxed assumption. To overcome this challenge, we introduce an \textit{auxiliary iteration sequence} in the following.
				\begin{iteration}(Auxiliary Iteration Sequence). For a given $\lambda$ and a parameter  $0<\tau\le1$, the auxiliary iteration sequence iteratively generate sequences $\{\bar{U}_{K}(\lambda)\}^{K\in\mathbb{N}^+}$ and $\{\bar{V}_{K}(\gamma;\lambda)\}_{\gamma\in\mathcal{S}\times\mathcal{Y}\times\mathcal{A}}^{K\in\mathbb{N}^+}$ with a starting initial value $\{\bar{V}_0({\gamma;\lambda})\}_{\gamma\in\mathcal{S}\times\mathcal{Y}\times\mathcal{A}}$.
					\begin{subequations}\label{MRVIv2}
						\begin{align}
							\bar{U}_{K+1}(\lambda)&=\min _{A_i, Z_i}\Big\{g(\gamma^{\star},Z_i,A_i;\lambda)+\notag\\&\quad\quad\quad\quad\quad\tau \mathbb{E}[\bar{V}_{K}\left(\gamma';\lambda\right)|\gamma^{\star},Z_i,A_i]\Big\},\\
							\bar{V}_{K+1}(\gamma;\lambda)&=(1-\tau)\bar{V}_{K}(\gamma;\lambda)+\min _{A_i, Z_i}\Big\{g(\gamma,Z_i,A_i;\lambda)\notag\\&+\tau \mathbb{E}[\bar{V}_{K}\left(\gamma';\lambda\right)|\gamma,Z_i,A_i]\Big\}-\bar{U}_{K+1}(\lambda),\notag\\& \quad\quad\quad\quad\quad\quad\quad\forall \gamma\in\mathcal{S}\times\mathcal{Y}\times\mathcal{A},
						\end{align}
					\end{subequations}
					where the initial condition satisfies that $\bar{V}_0(\gamma;\lambda)=\tilde{V}_0(\gamma;\lambda)$ for $\forall \gamma\in\mathcal{S}\times\mathcal{Y}\times\mathcal{A}$.
				\end{iteration}
				A key property of the generated \textit{auxiliary iteration sequence} is that there exists an $M>0$ such that for all $K\ge1$,
				\begin{equation}\label{ieq96v2}
					\begin{aligned}				&\max_{\gamma}\left\{\bar{{V}}_{K+1}(\gamma;\lambda)-\bar{{V}}_{K}(\gamma;\lambda)\right\}\\&-\min_{\gamma}\left\{\bar{{V}}_{K+1}(\gamma;\lambda)-\bar{{V}}_{K}(\gamma';\lambda)\right\}\\&\le M(1-\epsilon)^{K/L},
					\end{aligned}
				\end{equation}
				where the proof follows an analogous approach to that of inequality \eqref{ieq96} in \textit{Case 1} and thus we omit the details here. This demonstrates that the sequences $\{\bar{{V}}_{K}(\gamma;\lambda)\}_{K\in\mathbb{N}^+}$ also forms a \textit{Cauchy sequence}. We next describe the relationship between $\{\tilde{{V}}_{K}(\gamma;\lambda)\}_{K\in\mathbb{N}^+}$ and $\{\bar{{V}}_{K}(\gamma;\lambda)\}_{K\in\mathbb{N}^+}$. Compare \eqref{MRVIv2} with \eqref{MRVI}, the relationship between $\bar{V}_{K}{(\gamma;\lambda)}$ and $\tilde{V}_{K}{(\gamma;\lambda)}$ can be established by:
				\begin{equation}\label{eq112}
					\begin{aligned}
						\tilde{V}_{K}{(\gamma;\lambda)}=\bar{V}_{K}{(\gamma;\lambda)}+\Psi\left(\bar{\mathbf{V}}_{K-1}(\lambda)\right),
					\end{aligned}
				\end{equation}
				where $\bar{\mathbf{V}}_{K}(\lambda)$ is a row vector consisted of $\bar{{V}}_{K}(\gamma;\lambda)$ for all ${\gamma\in\mathcal{S}\times\mathcal{Y}\times\mathcal{A}}$, arranged by the same index scheme as that of $\tilde{V}_K(\gamma)$. The function $\Psi\left(\bar{\mathbf{V}}_{K-1}(\lambda)\right)$ is defined as
				\begin{align}
					&\Psi\left(\bar{\mathbf{V}}_{K-1}(\lambda)\right)\triangleq\notag\\&\min _{A_i, Z_i}\big\{g(\gamma^{\star},Z_i,A_i;\lambda)+\tau \mathbb{E}[\bar{V}_{K-1}\left(\gamma';\lambda\right)|\gamma^{\star},Z_i,A_i]\big\}\nonumber\\&-\min _{A_i, Z_i}\big\{g(\gamma^{\text{r}},Z_i,A_i;\lambda)+\tau \mathbb{E}[\bar{V}_{K-1}\left(\gamma';\lambda\right)|{\gamma^{\text{r}}},Z_i,A_i]\big\}.
				\end{align}
				Similarly, the relationship between $\bar{V}_{K+1}{(\gamma;\lambda)}$ and $\tilde{V}_{K+1}{(\gamma;\lambda)}$ can be established by
				\begin{equation}\label{eq114}
					\begin{aligned}
						\tilde{V}_{K+1}{(\gamma;\lambda)}=\bar{V}_{K+1}{(\gamma;\lambda)}+\Psi(\bar{\mathbf{V}}_{K}(\lambda))
					\end{aligned}
				\end{equation}
				Subtracting \eqref{eq112} from \eqref{eq114} yields:
				\begin{equation}\label{eq147}
					\begin{aligned}
						&\tilde{V}_{K+1}{(\gamma;\lambda)}-\tilde{V}_{K}{(\gamma;\lambda)}\\&=\bar{V}_{K+1}{(\gamma;\lambda)}-\bar{V}_{K}{(\gamma;\lambda)}+\Psi(\bar{\mathbf{V}}_{K}(\lambda))-\Psi(\bar{\mathbf{V}}_{K-1}(\lambda)),
					\end{aligned}		
				\end{equation}
				Thus, by applying the max and min operators to both sides of \eqref{eq147}, we obtain the following:
				\begin{subequations}
					\begin{align}
						&\max_{\gamma}\{\tilde{V}_{K+1}{(\gamma;\lambda)}-\tilde{V}_{K}{(\gamma;\lambda)}\}=\notag\\&\max_{\gamma}\{\bar{V}_{K+1}{(\gamma;\lambda)}-\bar{V}_{K}{(\gamma;\lambda)}\}+\Psi(\bar{\mathbf{V}}_{K};\lambda)-\Psi(\bar{\mathbf{V}}_{K-1}(\lambda)),\label{eq115a}\\
						&\min_{\gamma}\{\tilde{V}_{K+1}{(\gamma;\lambda)}-\tilde{V}_{K}{(\gamma;\lambda)}\}=\notag\\&\min_{\gamma}\{\bar{V}_{K+1}{(\gamma;\lambda)}-\bar{V}_{K}{(\gamma;\lambda)}\}+\Psi(\bar{\mathbf{V}}_{K}(\lambda))-\Psi(\bar{\mathbf{V}}_{K-1}(\lambda)).\label{eq115b}
					\end{align}
				\end{subequations}
				By subtracting equation \eqref{eq115b} from equation \eqref{eq115a}, we obtain the key identity presented in \eqref{149}, which is at the top of this page.
				\begin{figure*}
					\begin{equation}\label{149}
						\begin{aligned}				&\max_{\gamma}\left\{\tilde{V}_{K+1}{(\gamma;\lambda)}-\tilde{V}_{K}{(\gamma;\lambda)}\right\}-\min_{\gamma}\left\{\tilde{V}_{K+1}{(\gamma;\lambda)}-\tilde{V}_{K}{(\gamma;\lambda)}\right\}\\&=\max_{\gamma}\left\{\bar{V}_{K+1}{(\gamma;\lambda)}-\bar{V}_{K}{(\gamma;\lambda)}\right\}-\min_{\gamma}\left\{\bar{V}_{K+1}{(\gamma;\lambda)}-\bar{V}_{K}{(\gamma;\lambda)}\right\}.
						\end{aligned}
					\end{equation}
					\hrulefill
				\end{figure*}
				Combining \eqref{149} with \eqref{ieq96v2}, it follows that the sequence ${\tilde{V}_K(\gamma;\lambda)}$ satisfies:
				\begin{equation}\label{150}
					\begin{aligned}				&\max_{\gamma}\left\{\tilde{{V}}_{K+1}(\gamma;\lambda)-\tilde{{V}}_{K}(\gamma;\lambda)\right\}\\&-\min_{\gamma}\left\{\tilde{{V}}_{K+1}(\gamma;\lambda)-\bar{{V}}_{K}(\gamma';\lambda)\right\}\le M(1-\epsilon)^{K/L}.
					\end{aligned}
				\end{equation}
				Given \eqref{150}, and by applying the same reasoning used in Case 1 (\eqref{eq97}--\eqref{141}), we can show that both ${\tilde{V}_K(\gamma;\lambda)}$ and ${\tilde{U}_K(\lambda)}$ form \textit{Cauchy sequences}. Thus, they also converge to bounded values.
				\hfill$\blacksquare$
				
				\subsection{Convergence Direction}\label{subGb}
				In this subsection, we prove that the convergent bounded values $\tilde{U}_{\infty}(\lambda)$ and $\tilde{V}_{\infty}(\gamma;\lambda)$ constitute a solution to the ACOE \eqref{eq21}. Given the convergence of $\tilde{U}_{K}(\lambda)$ and $\tilde{V}_{K}(\gamma;\lambda)$, we can take $K\to\infty$ into the $\tau$-RVI \eqref{MRVI}, and establish that:
				\begin{equation}\label{eq111}
					\begin{aligned}
						\tilde{V}_{\infty}(\gamma;\lambda)+\tilde{U}_{\infty}(\lambda)=\min _{A_i, Z_i}\big\{g(\gamma,Z_i,A_i;\lambda)+\\ \sum_{\gamma'}\widetilde{p_{\gamma \gamma'}}(Z_i,A_i)\tilde{V}_{\infty}(\gamma';\lambda)\big\}, \forall \gamma\in\mathcal{S}\times\mathcal{Y}\times\mathcal{A}.
					\end{aligned}
				\end{equation}	 
				Compare \eqref{eq111} with \eqref{eq26} and we have that 
				\begin{equation}\label{eq113}
					\begin{aligned}
						\tilde{U}_{\infty}(\lambda)&=\tilde{U}(\lambda),\\\tilde{V}_{\infty}(\gamma;\lambda)&=\tilde{V}^\star(\gamma;\lambda),\forall {\gamma\in\mathcal{S}\times\mathcal{Y}\times\mathcal{A}}.
					\end{aligned}
				\end{equation}	
				Having established the equivalence between $\tilde{U}_{\infty}(\lambda)$ and $\tilde{U}(\lambda)$, as well as between $\tilde{V}_{\infty}(\gamma;\lambda)$ and $\tilde{V}^\star(\gamma;\lambda)$, we now proceed to derive the relationships between $\tilde{U}(\lambda)$ and ${U}(\lambda)$, as well as between $\tilde{V}_{\infty}(\gamma;\lambda)$ and ${V}^\star(\gamma;\lambda)$.
				By substituting \eqref{ptransform} into \eqref{eq26}, we obtain:
				\begin{equation}\begin{aligned}
						\tilde{V}^\star(\gamma;\lambda)+\tilde{U}(\lambda)=\min _{A_i, Z_i}\Big\{g(\gamma,Z_i,A_i;\lambda)+\\ \sum_{\gamma'}\tau{p_{\gamma \gamma'}}(Z_i,A_i)\tilde{V}^\star(\gamma';\lambda)+(1-\tau)\tilde{V}^\star(\gamma;\lambda)\Big\},\\ \forall \gamma\in\mathcal{S}\times\mathcal{Y}\times\mathcal{A}.
					\end{aligned}		
				\end{equation}
				This expression can be reformulated into a more concise form:
				\begin{equation}\label{newform}
					\begin{aligned}
						&\tau\tilde{V}^\star(\gamma;\lambda)+\tilde{U}(\lambda)=\min _{A_i, Z_i}\Big\{g(\gamma,Z_i,A_i;\lambda)+\\& \mathbb{E}\left[\tau\tilde{V}^\star(\gamma';\lambda)|Z_i,A_i\right]\Big\},\text{for } \forall \gamma\in\mathcal{S}\times\mathcal{Y}\times\mathcal{A}.	
					\end{aligned}	
				\end{equation}
				Comparing \eqref{newform} with the ACOE \eqref{eq21}, we observe that $\tilde{U}(\lambda)$ and $\tau\tilde{V}^\star(\gamma;\lambda),\gamma\in\mathcal{S}\times\mathcal{Y}\times\mathcal{A}$ are solutions to \eqref{eq21}. This leads to the following relationship:
				\begin{equation}\label{eq116}
					\begin{aligned}
						\tilde{U}(\lambda)&={U}(\lambda),\\\tau\tilde{V}^\star(\gamma;\lambda)&={V}^\star(\gamma;\lambda),{\gamma\in\mathcal{S}\times\mathcal{Y}\times\mathcal{A}}.
					\end{aligned}
				\end{equation}
				\textcolor{black}{By substituting \eqref{eq116} into \eqref{eq113} and invoking \eqref{141}, we obtain}
				\begin{equation}\color{black}
					e_{\mathrm{U}}^{(K)}(\lambda)\le\frac{\tau M(1-\epsilon)^{(K-1)/L}}{1-(1-\epsilon)^{1/L}},
				\end{equation}
				\textcolor{black}{which completes the proof of Theorem~\ref{theorem3:convergence}.}
				
				\textcolor{black}{Furthermore, letting $K\to\infty$ on both sides of \eqref{129} and \eqref{141}, we have}
				\begin{equation}
					\begin{aligned}
						0\le\lim_{K \to \infty}e_{\mathrm{U}}^{(K)}(\lambda)\le\lim_{K \to \infty}\frac{\tau M(1-\epsilon)^{(K-1)/L}}{1-(1-\epsilon)^{1/L}}=0,\\
						0\le\lim_{K \to \infty}e_{\mathrm{V}}^{(K)}(\gamma;\tau,\lambda)\le\lim_{K \to \infty}\frac{M(1-\epsilon)^{K/L}}{1-(1-\epsilon)^{1/L}}=0.
					\end{aligned}
				\end{equation}
				\textcolor{black}{Therefore, by the squeeze theorem, both error terms converge to zero, which completes the proof of Theorem~\ref{convergence1}.}
				\hfill$\blacksquare$

				\section{Proof of Theorem \ref{Lemma6}}\label{appendixJ}
				The proof proceeds in two distinct parts to establish the desired result. First, in \cref{JA}, we demonstrate that both $\lim_{K\to\infty}\rho_K$ and $\lim_{K\to\infty}\widetilde{W}_K(\gamma),\forall \gamma\in\mathcal{S}\times\mathcal{Y}\times\mathcal{A}$ exist and are finite for all $\gamma \in \mathcal{S} \times \mathcal{Y} \times \mathcal{A}$. Then, we explicitly establish that $\lim_{K \to \infty}{\rho_K}=\rho^\star$ and $\lim_{K \to \infty}\widetilde{W}_K(\gamma)=\frac{W^\star(\gamma)}{\kappa\cdot\mathbb{E}[Y_i]}$ in Section \ref{J-B}.
				\subsection{Convergence of \eqref{prop2}}\label{JA}
				
				Denote $Z^{(K)}(\gamma)$, $A^{(K)}(\gamma)$ as the waiting time and controlled action that achieves the minimum in the $K$-th relation given in \eqref{eq126} at the top of this page.
				\begin{figure*}
					\begin{equation}\label{eq126}
						\begin{aligned}
							(A^{(K)}(\gamma),Z^{(K)}(\gamma))\triangleq\mathop{\arg\min}_{A_i,Z_i}\left\{\frac{q(\gamma,Z_i,A_i)-\kappa\widetilde{W}_K(\gamma)\cdotp\mathbb{E}[Y_i]+\kappa\mathbb{E}\left[\widetilde{W}_K(\gamma')|\gamma,Z_i,A_i\right]\cdotp\mathbb{E}[Y_i]}{f(Z_i)}\right\}.
						\end{aligned}
					\end{equation}
					\hrulefill
				\end{figure*}
				Let $\widetilde{\mathbf{W}}_K\in\mathbb{R}^{|\mathcal{S}\times\mathcal{Y}\times\mathcal{A}|}$ as the column vector formed by stacking the values $\widetilde{W}_{K}(\gamma)$ for all ${\gamma\in\mathcal{S}\times\mathcal{Y}\times\mathcal{A}}$, and define $\mathbf{q}_K\in\mathbb{R}^{|\mathcal{S}\times\mathcal{Y}\times\mathcal{A}|}$ as the column vector composed $q(\gamma,Z^{(K)}(\gamma),A^{(K)}(\gamma))$ arranged under the same indexing scheme. Similarly, let $\mathbf{f}_K\in\mathbb{R}^{|\mathcal{S}\times\mathcal{Y}\times\mathcal{A}|}$ be a column vector consisting of  $f(Z^{(K)}(\gamma))$. Let $\mathbf{P}(K)\in\mathbb{R}^{|\mathcal{S}\times\mathcal{Y}\times\mathcal{A}|\times|\mathcal{S}\times\mathcal{Y}\times\mathcal{A}|}$ denote a stochastic matrix where the $(i,j)$-th entry is given by $p_{\gamma_i\gamma_j}(Z^{(K)}(\gamma_i),A^{(K)}(\gamma_i))$. Finally, we denote by $\oslash$ the element-wise \textit{Hadamard division} operator between vectors, defined as:
				\begin{equation}
					{\mathbf{b}}\oslash{\mathbf{a}}\triangleq\left[\frac{b_1}{a_1},\frac{b_2}{a_2},\cdots,\frac{b_n}{a_n}\right]^{\text{T}},
				\end{equation}
				where $\mathbf{b}=[b_1,\cdots,b_n]$ and $\mathbf{a}=[a_1,\cdots,a_n]$. 
				With the above notations, the recursive relation for $\widetilde{W}_K(\gamma)$ in \eqref{prop2} can be compactly expressed in vector form as:
				\begin{equation}\label{eq128}
					\begin{aligned}
						\widetilde{\mathbf{W}}_{K+1}&=\widetilde{\mathbf{W}}_{K}+\mathbf{q}_K\oslash\mathbf{f}_K-\kappa\mathbb{E}[Y_i]\cdotp\widetilde{\mathbf{W}}_K\oslash\mathbf{f}_K\\&+\kappa\mathbb{E}[Y_i]\cdotp\mathbf{P}(K)\widetilde{\mathbf{W}}_K\oslash\mathbf{f}_K-h_{K+1}\mathbf{e}.
					\end{aligned}
				\end{equation}
				Since $A^{(K)}(\gamma)$ and $Z^{(K)}(\gamma)$ are chosen to minimize the right-hand side of \eqref{eq126}, it follows that for all $K, t \ge 0$,
				\begin{equation}\label{eq129}
					\begin{aligned}
						&\mathbf{q}_K\oslash\mathbf{f}_K-\kappa\mathbb{E}[Y_i]\cdotp\widetilde{\mathbf{W}}_K\oslash\mathbf{f}_K+\kappa\mathbb{E}[Y_i]\cdotp\mathbf{P}(K)\widetilde{\mathbf{W}}_K\oslash\mathbf{f}_K\\&\le\mathbf{q}_t\oslash\mathbf{f}_t-\kappa\mathbb{E}[Y_i]\cdotp\widetilde{\mathbf{W}}_K\oslash\mathbf{f}_t+\kappa\mathbb{E}[Y_i]\cdotp\mathbf{P}(t)\widetilde{\mathbf{W}}_K\oslash\mathbf{f}_t.
					\end{aligned}
				\end{equation}
				Applying this inequality in \eqref{eq128} yields the upper bound:
				\begin{equation}\label{eq130}
					\begin{aligned}
						\widetilde{\mathbf{W}}_{K+1}&\le\widetilde{\mathbf{W}}_{K}+\mathbf{q}_{K-1}\oslash\mathbf{f}_{K-1}-\kappa\mathbb{E}[Y_i]\cdotp\widetilde{\mathbf{W}}_K\oslash\mathbf{f}_{K-1}\\&+\kappa\mathbb{E}[Y_i]\cdotp\mathbf{P}(K-1)\widetilde{\mathbf{W}}_{K}\oslash\mathbf{f}_{K-1}-\rho_{K+1}\mathbf{e}.
					\end{aligned}
				\end{equation}
				Similarly, the recursive relationship between $\widetilde{\mathbf{W}}_K$ and $\widetilde{\mathbf{W}}_{K-1}$ can be established leveraging \eqref{eq129}:
				\begin{subequations}\label{127}
					\begin{align}
						\widetilde{\mathbf{W}}_{K}&=\widetilde{\mathbf{W}}_{K-1}+\mathbf{q}_{K-1}\oslash\mathbf{f}_{K-1}-\kappa\mathbb{E}[Y_i]\cdotp\widetilde{\mathbf{W}}_{K-1}\oslash\mathbf{f}_{K-1}\notag\\&\quad+\kappa\mathbb{E}[Y_i]\cdotp\mathbf{P}(K-1)\widetilde{\mathbf{W}}_{K-1}\oslash\mathbf{f}_{K-1}-\rho_{K}\mathbf{e}\\
						&\le\widetilde{\mathbf{W}}_{K-1}+\mathbf{q}_{K}\oslash\mathbf{f}_{K}-\kappa\mathbb{E}[Y_i]\cdotp\widetilde{\mathbf{W}}_{K-1}\oslash\mathbf{f}_{K}\notag\\&\label{eq131}\quad+\kappa\mathbb{E}[Y_i]\cdotp\mathbf{P}(K)\widetilde{\mathbf{W}}_{K-1}\oslash\mathbf{f}_K-\rho_{K}\mathbf{e}.
					\end{align}
				\end{subequations}
				By subtracting \eqref{127} from \eqref{eq130} and defining:
				\begin{equation}
					\widetilde{\varDelta_K\mathbf{W}}\triangleq\widetilde{\mathbf{W}}_{K+1}-\widetilde{\mathbf{W}}_{K},
				\end{equation}
				we obtain the recursive relations in \eqref{178} at the top of the next page.
				\begin{figure*}
					\begin{subequations}\label{178}
						\begin{align}
							\widetilde{\varDelta_K\mathbf{W}}&\le\widetilde{\varDelta_{K-1}\mathbf{W}}-\kappa\mathbb{E}[Y_i]\cdotp\widetilde{\varDelta_{K-1}\mathbf{W}}\oslash\mathbf{f}_{K-1}+\kappa\mathbb{E}[Y_i]\cdotp\mathbf{P}(K-1)\widetilde{\varDelta_{K-1}\mathbf{W}}\oslash\mathbf{f}_{K-1}+(\rho_K-\rho_{K+1})\mathbf{e}\nonumber\\
							&=\left(\mathbf{I}-\text{diag}\left(\frac{\kappa\mathbb{E}[Y_i]}{\mathbf{f}_{K-1}}\right)+\mathbf{P}(K-1)\text{diag}\left(\frac{\kappa\mathbb{E}[Y_i]}{\mathbf{f}_{K-1}}\right)\right)\widetilde{\varDelta_{K-1}\mathbf{W}}+(\rho_K-\rho_{K+1})\mathbf{e},\label{131a}\\
							\widetilde{\varDelta_K\mathbf{W}}&\ge\widetilde{\varDelta_{K-1}\mathbf{W}}-\kappa\mathbb{E}[Y_i]\cdotp\widetilde{\varDelta_{K-1}\mathbf{W}}\oslash\mathbf{f}_K+\kappa\mathbb{E}[Y_i]\cdotp\mathbf{P}(K)\widetilde{\varDelta_{K-1}\mathbf{W}}\oslash\mathbf{f}_K+(\rho_K-\rho_{K+1})\mathbf{e}\nonumber\\
							&=\left(\mathbf{I}-\text{diag}\left(\frac{\kappa\mathbb{E}[Y_i]}{\mathbf{f}_K}\right)+\mathbf{P}(K)\text{diag}\left(\frac{\kappa\mathbb{E}[Y_i]}{\mathbf{f}_K}\right)\right)\widetilde{\varDelta_{K-1}\mathbf{W}}+(\rho_K-\rho_{K+1})\mathbf{e}.\label{eq131b}
						\end{align}
					\end{subequations}
					\hrulefill
				\end{figure*}
				For short-hand notations, we define ${\widetilde{\mathbf{P}_{\kappa}(K)}}$ as: 
				\begin{equation}\label{definition2}
					{\widetilde{\mathbf{P}_{\kappa}(K)}}\triangleq\mathbf{I}-\text{diag}\left(\frac{\kappa\mathbb{E}[Y_i]}{\mathbf{f}_K}\right)+\mathbf{P}(K)\text{diag}\left(\frac{\kappa\mathbb{E}[Y_i]}{\mathbf{f}_K}\right).
				\end{equation}
				In the following lemma, we show that the introduced matrix $\widetilde{\mathbf{P}}_{\kappa}(K)$ is \textit{aperiodic} and \textit{stochastic}.
				\begin{Lemma}\label{Lemma9}
					If $0<\kappa<1$ and $\mathbf{P}(K)$ forms a unichain, the matrix ${\widetilde{\mathbf{P}_{\kappa}(K)}}$ is an aperiodic unichain stochastic matrix, \emph{i.e.}, for any $0<\epsilon<1$, there exists a positive integer $L$ and a state $\gamma^\star$ satisfying:
					\begin{subequations}\label{181}
						\begin{align}
							&\left[\prod_{t=K}^{K-L+1}\widetilde{\mathbf{P}_{\kappa}(t)}\right]_{\gamma\times\gamma^\star}\ge\epsilon,\quad\forall\gamma\in\mathcal{S}\times\mathcal{Y}\times\mathcal{A},\label{180a}\\
							&\left[\prod_{t=K-1}^{K-L}\widetilde{\mathbf{P}_{\kappa}(t)}\right]_{\gamma\times\gamma^\star}\ge\epsilon,\quad\forall\gamma\in\mathcal{S}\times\mathcal{Y}\times\mathcal{A}.\label{180b}
						\end{align}
					\end{subequations}  		
				\end{Lemma}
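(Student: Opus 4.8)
The plan is to prove Lemma \ref{Lemma9} in three stages: (i) verify that $\widetilde{\mathbf{P}_{\kappa}(K)}$ is a row-stochastic matrix whose diagonal is bounded below by a positive constant independent of $K$; (ii) show that the state-dependent ``lazification'' turning $\mathbf{P}(K)$ into $\widetilde{\mathbf{P}_{\kappa}(K)}$ leaves the communicating-class structure unchanged, so that $\widetilde{\mathbf{P}_{\kappa}(K)}$ inherits the unichain property (guaranteed for $\mathbf{P}(K)$ by Theorem \ref{the1}) and, thanks to the newly created self-loops, is aperiodic; and (iii) pass from a single aperiodic unichain matrix to the uniform positive-column bound \eqref{181} for the finite-length \emph{time-inhomogeneous} products $\prod_{t=K}^{K-L+1}\widetilde{\mathbf{P}_{\kappa}(t)}$ and $\prod_{t=K-1}^{K-L}\widetilde{\mathbf{P}_{\kappa}(t)}$.

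For stage (i) I would set $d_{\gamma}\triangleq \kappa\,\mathbb{E}[Y_i]/f(Z^{(K)}(\gamma))$. Since the delays are i.i.d.\ and $f(Z_i)=\mathbb{E}[Y_{i+1}]+Z_i=\mathbb{E}[Y_i]+Z_i\ge\mathbb{E}[Y_i]>0$ with $Z_i\ge 0$, one has $0<d_{\gamma}\le\kappa<1$ for every $\gamma$. Reading off \eqref{definition2} entrywise, row $\gamma$ of $\widetilde{\mathbf{P}_{\kappa}(K)}$ equals $(1-d_{\gamma})\mathbf{e}_{\gamma}^{\mathrm T}+d_{\gamma}[\mathbf{P}(K)]_{\gamma,:}$, hence all entries are nonnegative, each row sums to $(1-d_{\gamma})+d_{\gamma}=1$, and $[\widetilde{\mathbf{P}_{\kappa}(K)}]_{\gamma\gamma}=1-d_{\gamma}\bigl(1-p_{\gamma\gamma}(K)\bigr)\ge 1-d_{\gamma}\ge 1-\kappa>0$. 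This is a routine calculation. For stage (ii), observe that for $\gamma\neq\gamma'$ we have $[\widetilde{\mathbf{P}_{\kappa}(K)}]_{\gamma\gamma'}=d_{\gamma}p_{\gamma\gamma'}(K)>0$ iff $p_{\gamma\gamma'}(K)>0$; thus the directed support graph of $\widetilde{\mathbf{P}_{\kappa}(K)}$ is that of $\mathbf{P}(K)$ with a self-loop adjoined at every vertex. Adding self-loops affects neither reachability nor the partition into communicating classes, so $\widetilde{\mathbf{P}_{\kappa}(K)}$ has the same unique closed (recurrent) class as $\mathbf{P}(K)$, its transient states stay transient, and that class is now aperiodic; hence $\widetilde{\mathbf{P}_{\kappa}(K)}$ is an aperiodic unichain stochastic matrix.

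Stage (iii) is where I expect the real work. Because $\mathcal{S}$, $\mathcal{Y}$, $\mathcal{A}$ are finite, only finitely many stationary deterministic sampling–decision rules exist, so $\{\mathbf{P}(t)\}_t$ and hence $\{\widetilde{\mathbf{P}_{\kappa}(t)}\}_t$ range over a finite family, each member an aperiodic unichain by stage (ii). The target is a single reference state $\gamma^{\star}$ and a single horizon $L$ (and constant $\epsilon$) that work simultaneously for every factor in the product. I would argue that the unichain structure of the transformed MDP (Theorem \ref{the1}), together with the fact that the $Y$-coordinate is refreshed i.i.d.\ at every step and the uniform self-loop lower bound $[\widetilde{\mathbf{P}_{\kappa}(t)}]_{\gamma^{\star}\gamma^{\star}}\ge 1-\kappa$, yields a Doeblin-type minorization: there is a state $\gamma^{\star}$ reachable within a bounded number of steps from every state under every admissible policy, so for $L$ a common multiple of the finitely many mixing horizons, the $\gamma^{\star}$-column of any $L$-fold product of these matrices is bounded below entrywise by some $\epsilon>0$ uniformly in $K$; the shifted product in \eqref{180b} is handled identically. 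This step runs in parallel to the argument already invoked for $\tau$-RVI around \eqref{Ldefinition}.

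The main obstacle is precisely this last point: for a single aperiodic unichain matrix $\mathbf{A}$ one gets $\mathbf{A}^{L}\to\mathbf{1}\boldsymbol{\pi}^{\mathrm T}$ with $\boldsymbol{\pi}$ positive on the recurrent class, but for a \emph{time-varying} product the recurrent classes of successive factors could a priori differ, so a naive application fails (and, indeed, products of two absorbing-state chains with different absorbing states need not contract). The crux is therefore to exhibit a common state $\gamma^{\star}$ that is accessible within uniformly bounded time under every policy that $\tau$-RVI / \textsc{OnePDSI} can generate, and to make the resulting $\epsilon$ and $L$ independent of $K$; this is exactly where the unichain property from Theorem \ref{the1}, the per-step i.i.d.\ refresh of the delay coordinate, and the uniform self-loop bound $1-\kappa$ must all be used together, and it is the part of the proof that deserves a fully detailed argument rather than a citation.
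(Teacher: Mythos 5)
Your stages (i) and (ii) reproduce, in substance, what the paper's proof actually establishes, but via a partly different route. For stochasticity and aperiodicity you argue exactly as the paper does (your row decomposition matches the entrywise definition \eqref{eq144} and the row-sum computation \eqref{sum1}, and the uniform diagonal bound $1-\kappa$ is a slightly sharper version of the paper's strict positivity of $\widetilde{p_{\gamma\gamma}}$). For unichain preservation the paper does not use your support-graph observation; it proves a separate Lemma \ref{Lemma10} (Appendix \ref{appendixl}) by explicit Chapman--Kolmogorov computations, showing that $m$-step probabilities along off-diagonal paths stay positive (so recurrent states of $\mathbf{P}(K)$ remain recurrent) and that $\widetilde{p^{\,n}_{\phi,\gamma'\gamma'}}=(\widetilde{p^{\,1}_{\phi,\gamma'\gamma'}})^{n}\to 0$ for transient $\gamma'$. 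Your graph-theoretic argument (off-diagonal support unchanged, self-loops added, hence communicating classes and closedness unchanged) is more elementary and reaches the same conclusion, so this is a legitimate alternative rather than a gap.

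Where you and the paper genuinely diverge is your stage (iii). The paper stops once it has ``aperiodic unichain stochastic'' for each $\widetilde{\mathbf{P}_{\kappa}(K)}$ and obtains the uniform product bound \eqref{181} by citing \cite{bertsekas2008introduction}, exactly as it did for $\tau$-RVI around \eqref{Ldefinition}; it does not construct $\gamma^\star$, $L$, $\epsilon$ itself. You correctly identify that time-inhomogeneity is the delicate point and sketch a Doeblin-type argument, but you leave it incomplete, so as a self-contained proof your stage (iii) is still a gap. Two remarks that close most of it: first, your worry about factors having disjoint recurrent classes cannot occur here, because the transformed MDP is unichain (Theorem \ref{the1} plus your stage (ii)): if two stationary deterministic policies had disjoint recurrent classes, the policy that copies each of them on its own recurrent class would yield two disjoint closed sets, contradicting the unichain property; second, since only finitely many stationary deterministic policies exist and every factor has diagonal entries bounded below by $1-\kappa$, positivity of the relevant column over a fixed window is automatically uniform in $K$. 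Assembling these into the common-$\gamma^\star$, common-$L$ statement is precisely the content of the standard result the paper invokes, so your approach is compatible with the paper's but demands (and would need to supply) the detail the paper delegates to the citation.
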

				\begin{proof}
					See Appendix \ref{appendixK}.
				\end{proof}
				By substituting \eqref{definition2} into \eqref{131a} and \eqref{eq131b}, we obtain
				\begin{subequations}
					\begin{align}
						\widetilde{\varDelta_K\mathbf{W}}\ge{\widetilde{\mathbf{P}_{\kappa}(K-1)}}\widetilde{\varDelta_{K-1}\mathbf{W}}+(\rho_K-\rho_{K+1})\mathbf{e},\label{181a}\\
						\widetilde{\varDelta_K\mathbf{W}}\le{\widetilde{\mathbf{P}_{\kappa}(K)}}\widetilde{\varDelta_{K-1}\mathbf{W}}+(\rho_K-\rho_{K+1})\mathbf{e}.\label{181b}
					\end{align}
				\end{subequations}
				Upon iterating \eqref{181a} and \eqref{181b} for $L$ successive steps, we obtain the following lower and upper bounds:
				\begin{subequations}\label{eq135}
					\begin{align}
						&\widetilde{\varDelta_K\mathbf{W}}\ge\prod_{t=K}^{K-L+1}\widetilde{\mathbf{P}_{\kappa}(t)}\widetilde{\varDelta_{K-L}\mathbf{W}}+(\rho_{K+1-L}-\rho_{K+1})\mathbf{e},\label{182a}\\&\widetilde{\varDelta_K\mathbf{W}}\le\prod_{t=K-1}^{K-L}\widetilde{\mathbf{P}_{\kappa}(t)}\widetilde{\varDelta_{K-L}\mathbf{W}}+(\rho_{K+1-L}-\rho_{K+1})\mathbf{e}\label{182b}.
					\end{align}
				\end{subequations}
				
				In the following, we establish the existence of finite limits by analyzing two distinct cases.
				\subsubsection{Case 1: $\gamma^\star=\gamma^{\text{r}}$}
				If $\gamma^\star=\gamma^{\text{r}}$, by reformulating \eqref{182b} in its scalar form, we can obtain the inequality \eqref{183}.
				\begin{figure*}[t!]
					\begin{subequations}\label{183}
						\begin{align}
							&\widetilde{{W}}_{K+1}(\gamma)-\widetilde{{W}}_{K}(\gamma)\notag\\
							&{\le}\rho_{K+1-L}-\rho_{K+1}+\sum_{\gamma'}	\left[\prod_{t=K-1}^{K-L}\widetilde{\mathbf{P}(t)}\right]_{\gamma\times\gamma'}\times\left(\widetilde{{W}}_{K-L+1}(\gamma')-\widetilde{{W}}_{K-L}(\gamma')\right)\label{183a}\\
							&{=}{h}_{K+1-L}-{h}_{K+1}+\sum_{\gamma'\ne\gamma^{\text{r}}}	\left[\prod_{t=K-1}^{K-L}\widetilde{\mathbf{P}(t)}\right]_{\gamma\times\gamma'}\times\left(\widetilde{{W}}_{K-L+1}(\gamma')-\widetilde{{W}}_{K-L}(\gamma')\right)\label{183b}\\
							&{\le}{h}_{K+1-L}-{h}_{K+1}+\max_{\gamma}\left\{\widetilde{{W}}_{K-L+1}(\gamma)-\widetilde{{W}}_{K-L}(\gamma)\right\}\times\sum_{\gamma'\ne\gamma^\star}	\left[\prod_{t=K-1}^{K-L}\widetilde{\mathbf{P}(t)}\right]_{\gamma\times\gamma'}\label{183c}
							\\&{\le}{h}_{K+1-L}-{h}_{K+1}+(1-\epsilon)\max_{\gamma}\left\{\widetilde{{W}}_{K-L+1}(\gamma)-\widetilde{{W}}_{K-L}(\gamma)\right\}, \quad\forall\gamma\in\mathcal{S}\times\mathcal{Y}\times\mathcal{A}.\label{183d}
						\end{align}
					\end{subequations}
					\hrulefill
				\end{figure*} 
				where \eqref{183b} follows from substituting $\gamma=\gamma^{\text{r}}$ into \eqref{38b}, which yields: 
				\begin{equation}\label{184}
					\widetilde{W}_{K+1}(\gamma^{\text{r}})=\widetilde{W}_{K}(\gamma^{\text{r}}).
				\end{equation}	
				Iterating \eqref{184} yields
				\begin{equation}\label{185}
					\widetilde{W}_{K}(\gamma^{\text{r}})=\widetilde{W}_{0}(\gamma^{\text{r}})=0,\forall K\ge1,
				\end{equation}
				where \eqref{185} holds due to the initialization $\widetilde{W}_0(\gamma^{\text{r}}) = 0$ in \textsc{OnePDSI}. Inequality \eqref{183c} is derived by noting that $\gamma^\star = \gamma^{\text{r}}$ and applying the following bound: \begin{equation}
					\begin{aligned}
						&\widetilde{{W}}_{K-L+1}(\gamma)-\widetilde{{W}}_{K-L}(\gamma)\le\\&\max_{\gamma}\left\{\widetilde{{W}}_{K-L+1}(\gamma)-\tilde{{W}}_{K-L}(\gamma)\right\};
					\end{aligned}
				\end{equation} 
				Inequality \eqref{183d} is established by observing that: 
				\begin{equation}
					\begin{aligned}
						&\max_{\gamma}\left\{\widetilde{{W}}_{K-L+1}(\gamma)-\tilde{{W}}_{K-L}(\gamma)\right\}\\&\ge\widetilde{{W}}_{K-L+1}(\gamma^\text{r})-\tilde{{W}}_{K-L}(\gamma^{\text{r}})=0,
					\end{aligned}
				\end{equation}
				and a reformulation of \eqref{180b}:
				\begin{equation}
					\begin{aligned}
						\sum_{\gamma'\ne\gamma^\star}	\left[\prod_{t=K-1}^{K-L}\widetilde{\mathbf{P}_\kappa(t)}\right]_{\gamma\times\gamma'}&=1-\left[\prod_{t=K-1}^{K-L}\widetilde{\mathbf{P}_\kappa(t)}\right]_{\gamma\times\gamma^\star}\\&{\le}1-\epsilon,
					\end{aligned}
				\end{equation}
				Since inequality \eqref{183} holds for $\forall \gamma\in\mathcal{S}\times\mathcal{Y}\times\mathcal{A}$, we can rewrite it as
				\begin{equation}\label{189}
					\begin{aligned}
						&\max_{\gamma}\left\{\widetilde{{W}}_{K+1}(\gamma)-\widetilde{{W}}_{K}(\gamma)\right\}\\&\le(1-\epsilon)\max_{\gamma}\left\{\widetilde{{W}}_{K-L+1}(\gamma)-\widetilde{{W}}_{K-L}(\gamma)\right\}\\&\quad+\rho_{K+1-L}-\rho_{K+1}.
					\end{aligned}
				\end{equation}
				\begin{figure*}[t!]
					\begin{subequations}\label{190}
						\begin{align}
							&\widetilde{{W}}_{K+1}(\gamma)-\widetilde{{W}}_{K}(\gamma)\notag\\
							&{\ge}\rho_{K+1-L}-\rho_{K+1}+\sum_{\gamma'}	\left[\prod_{t=K}^{K-L+1}\widetilde{\mathbf{P}_\kappa(t)}\right]_{\gamma\times\gamma'}\times\left(\widetilde{{W}}_{K-L+1}(\gamma')-\widetilde{{W}}_{K-L}(\gamma')\right)\label{190a}\\
							&{=}\rho_{K+1-L}-\rho_{K+1}+\sum_{\gamma'\ne\gamma^{\text{r}}}	\left[\prod_{t=K}^{K-L+1}\widetilde{\mathbf{P}_\kappa(t)}\right]_{\gamma\times\gamma'}\times\left(\widetilde{{W}}_{K-L+1}(\gamma')-\widetilde{{W}}_{K-L}(\gamma')\right)\label{190b}\\
							&{\ge}\rho_{K+1-L}-\rho_{K+1}+\min_{\gamma}\left\{{{}}_{K-L+1}(\gamma;\lambda)-\widetilde{{W}}_{K-L}(\gamma;\lambda)\right\}\times\sum_{\gamma'\ne\gamma^\star}	\left[\prod_{t=K}^{K-L+1}\widetilde{\mathbf{P}(t)}\right]_{\gamma\times\gamma'}\label{190c}
							\\&{\ge}\rho_{K+1-L}-\rho_{K+1}+(1-\epsilon)\min_{\gamma}\left\{\widetilde{{W}}_{K-L+1}(\gamma)-\rho_{K-L}(\gamma;\lambda)\right\}, \quad\forall\gamma\in\mathcal{S}\times\mathcal{Y}\times\mathcal{A}.\label{190d}
						\end{align}
					\end{subequations}
					\hrulefill
				\end{figure*}	
				Similar to the derivation of \eqref{183}, we can obtain inequality \eqref{190} from \eqref{181a}, as shown at the top of the next page. The derivation proceeds through several key steps: First, inequality \eqref{190a} is established by expanding the vector form of \eqref{181a} into its scalar representation. Then, \eqref{190b} follows directly from \eqref{185}. To establish \eqref{190c}, we use the condition $\gamma^\star = \gamma^{\text{r}}$ along with the inequality: \begin{equation}
					\begin{aligned}
						&\widetilde{{W}}_{K-L+1}(\gamma')-\widetilde{{W}}_{K-L}(\gamma')\ge\\&\min_{\gamma}\left\{\widetilde{{W}}_{K-L+1}(\gamma)-\widetilde{{W}}_{K-L}(\gamma)\right\};
					\end{aligned}
				\end{equation} Inequality \eqref{190d} follows from two key observations. First,
				\begin{equation}
					\begin{aligned}
						&\min_{\gamma}\left\{\widetilde{{W}}_{K-L+1}(\gamma)-\tilde{{W}}_{K-L}(\gamma)\right\}\\&\le\widetilde{{W}}_{K-L+1}(\gamma^\text{r})-\tilde{{W}}_{K-L}(\gamma^{\text{r}})=0.
					\end{aligned}
				\end{equation}
				And second:
				\begin{equation}
					\sum_{\gamma'\ne\gamma^\star}	\left[\prod_{t=K}^{K-L+1}\widetilde{\mathbf{P}(t)}\right]_{\gamma\times\gamma'}=1-\left[\prod_{t=K}^{K-L+1}\widetilde{\mathbf{P}(t)}\right]_{\gamma\times\gamma^\star}\overset{(a)}{\le}1-\epsilon,
				\end{equation}
				where the inequality follows from \eqref{180a}. Given that inequality \eqref{190} holds for all $\gamma\in\mathcal{S}\times\mathcal{Y}\times\mathcal{A}$, we can rewrite it in terms of the minimal difference across states:
				\begin{equation}\label{194}
					\begin{aligned}
						&\min_{\gamma}\left\{\widetilde{{W}}_{K+1}(\gamma)-\widetilde{{W}}_{K}(\gamma)\right\}\\&\ge(1-\epsilon)\min_{\gamma}\left\{\widetilde{{W}}_{K-L+1}(\gamma)-\widetilde{{W}}_{K-L}(\gamma)\right\}\\&\quad+\rho_{K+1-L}-\rho_{K+1}.
					\end{aligned}
				\end{equation}
				Finally, by subtracting \eqref{194} from \eqref{189}, we have \eqref{195} at the top of the next page. 
				\begin{figure*}
					\begin{equation}\label{195}
						\begin{aligned}
							&\max_{\gamma}\left\{\widetilde{{W}}_{K+1}(\gamma)-\widetilde{{W}}_{K}(\gamma)\right\}-\min_{\gamma}\left\{\widetilde{{W}}_{K+1}(\gamma)-\widetilde{{W}}_{K}(\gamma)\right\}\\\le&(1-\epsilon)\left(\max_{\gamma}\left\{\widetilde{{W}}_{K-L+1}(\gamma)-\widetilde{{W}}_{K-L}(\gamma)\right\}-\min_{\gamma}\left\{\widetilde{{W}}_{K-L+1}(\gamma)-\widetilde{{W}}_{K-L}(\gamma)\right\}\right).
						\end{aligned}
					\end{equation}
					\hrulefill
				\end{figure*}
				By iterating \eqref{195}, we can conclude that for some constant $M$ and any $K\ge1$:
				\begin{equation}\label{196}
					\begin{aligned}
						&\max_{\gamma}\Big\{\widetilde{{W}}_{K+1}(\gamma)-\widetilde{{W}}_{K}(\gamma)\Big\}-\min_{\gamma'}\Big\{\widetilde{{W}}_{K+1}(\gamma)-\widetilde{{W}}_{K}(\gamma)\Big\}\\&\le M(1-\epsilon)^{K/L}.
					\end{aligned}
				\end{equation}
				This result allows us to establish an upper bound on the relative difference between  $\widetilde{{W}}_{K+1}(\gamma)$ and $\widetilde{{W}}_{K}(\gamma)$:
				\begin{subequations}\label{eq138}
					\begin{align}					&\left|\widetilde{{W}}_{K+1}(\gamma)-\widetilde{{W}}_{K}(\gamma)\right|\le\notag\\\label{197a}&\max_{\gamma}\Big\{\widetilde{{W}}_{K+1}(\gamma)-\widetilde{{W}}_{K}(\gamma)\Big\}-\min_{\gamma}\Big\{\widetilde{W}_{K+1}(\gamma)-\widetilde{{W}}_{K}(\gamma)\Big\}\\&\label{197b}\le M(1-\epsilon)^{K/L},\quad \forall \gamma\in\mathcal{S}\times\mathcal{Y}\times\mathcal{A}.
					\end{align}
				\end{subequations}
				This bound demonstrates that the sequence $\{\widetilde{{W}}_{K}(\gamma)\}_{K\in\mathbb{N}^+}$ constitutes a \textit{Cauchy sequence}. Specifically, $\forall T>1$, the following inequality holds:
				\begin{subequations}\label{eq146}
					\begin{align}				&\left|\widetilde{{W}}_{K+T}(\gamma)-\widetilde{{W}}_{K}(\gamma)\right|\notag\\&\label{198a}\le\sum_{t=0}^{T-1}\left|\widetilde{{W}}_{K+t+1}(\gamma)-\widetilde{{W}}_{K+t}(\gamma)\right|\\&\le M\sum_{t=0}^{T-1}(1-\epsilon)^{\frac{K+t}{L}}=\frac{M(1-\epsilon)^{K/L}(1-(1-\epsilon)^{T/L})}{1-(1-\epsilon)^{1/L}},\label{198b}
					\end{align}
				\end{subequations}
				The inequality \eqref{198a} is derived from the \textit{triangle inequality}, while \eqref{198b} follows directly from \eqref{197b}.
				Taking the limit $T\to\infty$ yields:
				\begin{subequations}\label{eq201}
					\begin{align}
						&\left|\widetilde{{W}}_{K}(\gamma)-\widetilde{{W}}_{\infty}(\gamma)\right|\notag\\&\le\lim_{T \to \infty}\frac{M(1-\epsilon)^{K/L}(1-(1-\epsilon)^{T/L})}{1-(1-\epsilon)^{1/L}}\\&= \frac{M(1-\epsilon)^{K/L}}{1-(1-\epsilon)^{1/L}}.\label{199b}
					\end{align}	
				\end{subequations}
				This final result demonstrates that $\widetilde{{W}}_{K}(\gamma)$ will converge to a bounded value $\widetilde{{W}}_{\infty}(\gamma)$.
				
				We now establish that $\rho_k$ is also a \textit{Cauchy sequence} and consequently converges to a bounded value $\rho_\infty$. Through equation \eqref{38b} and the formal definition of $(A^{(K)}(\gamma),Z^{(K)}(\gamma))$ provided in \eqref{eq126}, we can derive the bound presented in \eqref{ieq141}, which is shown at the top of the next page. The establishment of \eqref{200a} follows from recasting relation \eqref{38b} into its vector form, where $[\mathbf{P}_{a,z}]_{N(\gamma^{\text{r}}),:}$ denotes the row vector consisted of $\{p_{\gamma^{\text{r}}\gamma'}(a,z)\}_{\gamma'\in\mathcal{S}\times\mathcal{Y}\times\mathcal{A}}$ and $N(\gamma^{\text{r}})$ denotes the index of the reference state $\gamma^{\text{r}}$. Subsequently, \eqref{200b} is established through direct application of the definition of $(A^{(K)}(\gamma),Z^{(K)}(\gamma))$ as given in \eqref{eq126}.
				\begin{figure*}
					\begin{subequations}\label{ieq141}
						\begin{align}
							&\rho_{K+1}=\widetilde{W}_K(\gamma^{\text{r}})+\min _{A_i, Z_i}\left\{\frac{q(\gamma^{\text{r}},Z_i,A_i)-\kappa\widetilde{W}_K(\gamma^{\text{r}})\cdotp\mathbb{E}[Y_i]+\kappa\mathbb{E}\left[\widetilde{W}_K(\gamma')|\gamma^{\text{r}},Z_i,A_i\right]\cdotp\mathbb{E}[Y_i]}{f(Z_i)}\right\}\notag\\
							&\label{200a}=\widetilde{W}_K(\gamma^{\text{r}})+\frac{q(\gamma^{\text{r}},Z^{(K)}(\gamma^{\text{r}}),A^{(K)}(\gamma^{\text{r}}))-\kappa\widetilde{W}_K(\gamma^{\text{r}})\cdotp\mathbb{E}[Y_i]+\kappa\mathbb{E}[Y_i]\cdotp\left[\mathbf{P}_{A^{(K)}(\gamma^{\text{r}}),Z^{(K)}(\gamma^{\text{r}})}\right]_{N(\gamma^{\text{r}}),:}\widetilde{\mathbf{W}}_K}{f(Z^{(K)}(\gamma^{\text{r}}))}\\
							&\label{200b}\le \widetilde{W}_K(\gamma^{\text{r}})+\frac{q(\gamma^{\text{r}},Z^{(K-1)}(\gamma^{\text{r}}),A^{(K-1)}(\gamma^{\text{r}}))-\kappa\widetilde{W}_K(\gamma^{\text{r}})\cdotp\mathbb{E}[Y_i]+\kappa\mathbb{E}[Y_i]\cdotp\left[\mathbf{P}_{A^{(K-1)}(\gamma^{\text{r}}),Z^{(K-1)}(\gamma^{\text{r}})}\right]_{N(\gamma^{\text{r}}),:}\widetilde{\mathbf{W}}_K}{f(Z^{(K-1)}(\gamma^{\text{r}}))}.
						\end{align}		
					\end{subequations} 
					\hrulefill
				\end{figure*}
				\begin{figure*}
					\begin{subequations}\label{ieq142}
						\begin{align}
							&\rho_{K}=\widetilde{W}_{K-1}(\gamma^{\text{r}})+\min _{A_i, Z_i}\left\{\frac{q(\gamma^{\text{r}},Z_i,A_i)-\kappa\widetilde{W}_{K-1}(\gamma^{\text{r}})\cdotp\mathbb{E}[Y_i]+\kappa\mathbb{E}\left[\widetilde{W}_{K-1}(\gamma')|\gamma^{\text{r}},Z_i,A_i\right]\cdotp\mathbb{E}[Y_i]}{f(Z_i)}\right\}\notag\\
							&\label{201a}=\widetilde{W}_{K-1}(\gamma^{\text{r}})+\frac{q(\gamma^{\text{r}},Z^{(K-1)}(\gamma^{\text{r}}),A^{(K-1)}(\gamma^{\text{r}}))-\kappa\widetilde{W}_{K-1}(\gamma^{\text{r}})\cdotp\mathbb{E}[Y_i]+\kappa\mathbb{E}[Y_i]\cdotp\left[\mathbf{P}_{A^{(K-1)}(\gamma^{\text{r}}),Z^{(K-1)}(\gamma^{\text{r}})}\right]_{N(\gamma^{\text{r}}),:}\widetilde{\mathbf{W}}_{K-1}}{f(Z^{(K-1)}(\gamma^{\text{r}}))}\\
							&\label{201b}\le \widetilde{W}_{K-1}(\gamma^{\text{r}})+\frac{q(\gamma^{\text{r}},Z^{(K)}(\gamma^{\text{r}}),A^{(K)}(\gamma^{\text{r}}))-\kappa\widetilde{W}_{K-1}(\gamma^{\text{r}})\cdotp\mathbb{E}[Y_i]+\kappa\mathbb{E}[Y_i]\cdotp\left[\mathbf{P}_{A^{(K)}(\gamma^{\text{r}}),Z^{(K)}(\gamma^{\text{r}})}\right]_{N(\gamma^{\text{r}}),:}\widetilde{\mathbf{W}}_{K-1}}{f(Z^{(K-1)}(\gamma^{\text{r}}))}.
						\end{align}	
					\end{subequations}
					\hrulefill
				\end{figure*}
				Iteratively, we can also establish the bound on $\rho_k$ as shown in \eqref{ieq142}, presented at the top of the next page. Subtracting \eqref{201a} from \eqref{200b} yields the upper bound of $\rho_{K+1}-\rho_K$ in \eqref{202}, where \eqref{202b} is established by re-writing the righ-hand side of \eqref{202a} into a vector form and \eqref{202c} holds directly from \eqref{definition2}.
				\begin{figure*}
					\begin{subequations}\label{202}
						\begin{align}
							\rho_{K+1}-\rho_K&\le\frac{\left(f(Z^{K-1}(\gamma^{\text{r}}))-\kappa\cdot\mathbb{E}[Y_i]\right)\cdot\left(\widetilde{{W}}_{K}(\gamma^{\text{r}})-\widetilde{{W}}_{K-1}(\gamma^{\text{r}})\right)+\kappa\mathbb{E}[Y_i]\cdot\left[\mathbf{P}_{A^{(K-1)}(\gamma^{\text{r}}),Z^{(K-1)}(\gamma^{\text{r}})}\right]_{N(\gamma^{\text{r}}),:}\left(\widetilde{\mathbf{W}_K}-\widetilde{\mathbf{W}_{K-1}}\right)}{f(Z^{K-1}(\gamma^{\text{r}}))}\label{202a}\\
							&=\left[\mathbf{I}-\text{diag}\left(\frac{\kappa\mathbb{E}[Y_i]}{\mathbf{f}_{K-1}}\right)+\mathbf{P}(K-1)\text{diag}\left(\frac{\kappa\mathbb{E}[Y_i]}{\mathbf{f}_{K-1}}\right)\right]_{N(\gamma^{\text{r}}),:}\times\left(\widetilde{\mathbf{W}_K}-\widetilde{\mathbf{W}_{K-1}}\right)\label{202b}\\
							&=\left[{\widetilde{\mathbf{P}_{\kappa}(K-1)}}\right]_{N(\gamma^{\text{r}}),:}\left(\widetilde{\mathbf{W}}_{K}-\widetilde{\mathbf{W}}_{K-1}\right).\label{202c}
						\end{align}
					\end{subequations}
					\hrulefill
				\end{figure*}
				Similarly, by subtracting \eqref{201b} from \eqref{200a}, we can establish the lower bound of $\rho_{K+1}-\rho_K$ in \eqref{203} in next page.
				\begin{figure*}
					\begin{subequations}\label{203}
						\begin{align}
							\rho_{K+1}-\rho_K&\ge\frac{\left(f(Z^{K}(\gamma^{\text{r}}))-\kappa\cdot\mathbb{E}[Y_i]\right)\cdot\left(\widetilde{{W}}_{K}(\gamma^{\text{r}})-\widetilde{{W}}_{K}(\gamma^{\text{r}})\right)+\kappa\mathbb{E}[Y_i]\cdot\left[\mathbf{P}_{A^{(K)}(\gamma^{\text{r}}),Z^{(K)}(\gamma^{\text{r}})}\right]_{N(\gamma^{\text{r}}),:}\left(\widetilde{\mathbf{W}_K}-\widetilde{\mathbf{W}_{K-1}}\right)}{f(Z^{K}(\gamma^{\text{r}}))}\\
							&=\left[\mathbf{I}-\text{diag}\left(\frac{\kappa\mathbb{E}[Y_i]}{\mathbf{f}_{K}}\right)+\mathbf{P}(K)\text{diag}\left(\frac{\kappa\mathbb{E}[Y_i]}{\mathbf{f}_{K}}\right)\right]_{N(\gamma^{\text{r}}),:}\times\left(\widetilde{\mathbf{W}_K}-\widetilde{\mathbf{W}_{K-1}}\right)\\
							&=\left[{\widetilde{\mathbf{P}_{\kappa}(K)}}\right]_{N(\gamma^{\text{r}}),:}\left(\widetilde{\mathbf{W}}_{K}-\widetilde{\mathbf{W}}_{K-1}\right).
						\end{align}
					\end{subequations}
					\hrulefill
				\end{figure*}
				Because $a\le b\le c \Rightarrow |b|\le\max\{|a|,|c|\}$, combing \eqref{202} and \eqref{203} yields \eqref{eq144v2} in the next page,
				\begin{figure*}
					\begin{equation}\label{eq144v2}
						\begin{aligned}	
							\left|\rho_{K+1}-\rho_{K}\right|&\le\max\left\{\left|\left[{\widetilde{\mathbf{P}_{\kappa}(K-1)}}\right]_{N(\gamma^{\text{r}}),:}\left(\widetilde{\mathbf{W}}_{K}-\widetilde{\mathbf{W}}_{K-1}\right)\right|,\left|\left[{\widetilde{\mathbf{P}_{\kappa}(K)}}\right]_{N(\gamma^{\text{r}}),:}\left(\widetilde{\mathbf{W}}_{K}-\widetilde{\mathbf{W}}_{K-1}\right)\right|\right\} \\
							&\le\max\left\{\left[{\widetilde{\mathbf{P}_{\kappa}(K-1)}}\right]_{N(\gamma^{\text{r}}),:}\left|\widetilde{\mathbf{W}}_{K}-\widetilde{\mathbf{W}}_{K-1}\right|,\left[{\widetilde{\mathbf{P}_{\kappa}(K)}}\right]_{N(\gamma^{\text{r}}),:}\left|\widetilde{\mathbf{W}}_{K}-\widetilde{\mathbf{W}}_{K-1}\right|\right\}.
						\end{aligned}
					\end{equation}
					\hrulefill
				\end{figure*}
				where the first and second terms in the maximum operator satisfy the following inequalities:	 
				\begin{subequations}\label{205}
					\begin{align}
						&\left[{\widetilde{\mathbf{P}_{\kappa}(K-1)}}\right]_{N(\gamma^{\text{r}}),:}\left|\widetilde{\mathbf{W}}_{K}-\widetilde{\mathbf{W}}_{K-1}\right|\notag\\&\label{205a}\le\left[{\widetilde{\mathbf{P}_{\kappa}(K-1)}}\right]_{N(\gamma^{\text{r}}),:}M(1-\epsilon)^{\frac{K-1}{L}}\mathbf{e}\\&=M(1-\epsilon)^{\frac{K-1}{L}},\label{205b}
					\end{align}
				\end{subequations}
				\begin{subequations}\label{206}
					\begin{align}
						&\left[{\widetilde{\mathbf{P}_{\kappa}(K)}}\right]_{N(\gamma^{\text{r}}),:}\left|\widetilde{\mathbf{W}}_{K}-\widetilde{\mathbf{W}}_{K}\right|\notag\\&\label{206a}\le\left[{\widetilde{\mathbf{P}_{\kappa}(K-1)}}\right]_{N(\gamma^{\text{r}}),:}M(1-\epsilon)^{\frac{K-1}{L}}\mathbf{e}\\&=M(1-\epsilon)^{\frac{K-1}{L}},\label{206b}
					\end{align}
				\end{subequations}
				where \eqref{205a} and \eqref{206a} are established from the vector form of \eqref{eq138}. \eqref{205b} and \eqref{206b} are established as the matrix $\widetilde{\mathbf{P}_{\kappa}(k)}$ is a stochastic matrix for $\forall K\ge1$:
				\begin{align}\label{sum1}
					&\sum_{\gamma'}\left[\widetilde{\mathbf{P}_{\kappa}(k)}\right]_{\gamma\times\gamma'}\notag\\&=\sum_{\gamma'}\left[\mathbf{I}-\text{diag}\left(\frac{\kappa\mathbb{E}[Y_i]}{\mathbf{f}_k}\right)+\mathbf{P}(k)\text{diag}\left(\frac{\kappa\mathbb{E}[Y_i]}{\mathbf{f}_k}\right)\right]_{\gamma\times\gamma'}\notag\\&=1-\frac{\kappa\mathbb{E}[Y_i]}{f\left(Z^{(k)}(\gamma)\right)}+\frac{\kappa\mathbb{E}[Y_i]}{f\left(Z^{(k)}(\gamma)\right)}\sum_{\gamma'}\left[{\mathbf{P}(k)}\right]_{\gamma\times\gamma'}=1.
				\end{align}
				Substituting \eqref{205} and \eqref{206} into \eqref{eq144v2} yields:
				\begin{equation}\label{207}
					\left|\rho_{K+1}-\rho_{K}\right|\le M(1-\epsilon)^{\frac{K-1}{L}},
				\end{equation}
				indicating that the sequence $\{\rho_{K}\}_{K\in\mathbb{N}^+}$ forms a \textit{Cauchy sequence}. Specifically, for $\forall T>1$:
				\begin{subequations}
					\begin{align}			|\rho_{K+T}-\rho_{K}|&\le\sum_{t=0}^{T-1}|\rho_{K+t+1}-\rho_{K+t}|\label{208a}\\&\label{208b}\le\tau M\sum_{t=0}^{T-1}(1-\epsilon)^{\frac{K+t-1}{L}}\\&=\frac{\tau M(1-\epsilon)^{(K-1)/L}(1-(1-\epsilon)^{T/L})}{1-(1-\epsilon)^{1/L}},
					\end{align}
				\end{subequations}
				where \eqref{208a} is established from the \textit{triangle inequality} and inequality \eqref{208b} holds from \eqref{207}.
				Taking the limit $T\to\infty$:
				\begin{equation}\label{eq155v2}\begin{aligned}
						|\rho_{K}-\rho_{\infty}|&\le\lim_{T \to \infty}\frac{\tau M(1-\epsilon)^{(K-1)/L}(1-(1-\epsilon)^{T/L})}{1-(1-\epsilon)^{1/L}}\\&= \frac{M(1-\epsilon)^{K/L}}{1-(1-\epsilon)^{1/L}}.
					\end{aligned}
				\end{equation}
				This indicates that $\rho_K$ will converge to a bounded value $\rho_{\infty}$.
				\subsubsection{Case 2: $\gamma^\star\ne\gamma^{\text{r}}$}
				
				The primary objective of this subsection is to demonstrate that the conclusion under $\gamma^\star=\gamma^{\text{r}}$ remains valid even when if $\gamma^\star\ne\gamma^{\text{r}}$. To prove this, we introduce an auxiliary iteration sequence in the following:
				\begin{iteration} (Auxiliary Iteration Sequence): For a given $0<\kappa<1$, the auxiliary iteration generates sequence $\{\overline{\rho_{K}}\}_{K\in\mathbb{N}^+}$ and $\{\overline{W}_{K}(\gamma)\}_{\gamma\in\mathcal{S}\times\mathcal{Y}\times\mathcal{A}}^{K\in\mathbb{N}^+}$ with a starting initial value $\{\overline{W}_{0}(\gamma)\}_{\gamma\in\mathcal{S}\times\mathcal{Y}\times\mathcal{A}}$ by \eqref{iteration156}, where $\gamma^{\text{r}}{\in\mathcal{S}\times\mathcal{Y}\times\mathcal{A}}$ is a fixed \textit{reference state} with an initial condition $\overline{W}_{0}(\gamma^{\text{r}})=0$ and $\overline{W}_0(\gamma)=\widetilde{W}_0(\gamma)$ for $\forall \gamma\in\mathcal{S}\times\mathcal{Y}\times\mathcal{A}$.
				\end{iteration}
				\begin{figure*}
					\begin{subequations}\label{iteration156}
						\begin{align}
							&\overline{\rho_{K+1}}=\overline{W}_K(\gamma^{\star})+\min _{A_i, Z_i}\left\{\frac{q(\gamma^{\star},Z_i,A_i)-\kappa\overline{W}_K(\gamma^{\star})\cdotp\mathbb{E}[Y_i]+\kappa\mathbb{E}\left[\overline{W}_K(\gamma')|\gamma^{\star},Z_i,A_i\right]\cdotp\mathbb{E}[Y_i]}{f(Z_i)}\right\},\\
							&\overline{W}_{K+1}(\gamma)=\overline{W}_K(\gamma)+\min _{A_i, Z_i}\left\{\frac{q(\gamma,Z_i,A_i)-\kappa\overline{W}_K(\gamma)\cdotp\mathbb{E}[Y_i]+\kappa\mathbb{E}\left[\overline{W}_K(\gamma')|\gamma,Z_i,A_i\right]\cdotp\mathbb{E}[Y_i]}{f(Z_i)}\right\}-\overline{\rho_{K+1}},\notag\\& \gamma\in\mathcal{S}\times\mathcal{Y}\times\mathcal{A},
						\end{align}
					\end{subequations}
					\hrulefill
				\end{figure*}
				A key property of the auxiliary iteration sequence is that there exists an $M>0$ such that for all $K\ge1$,
				\begin{equation}\label{ieq157}
					\begin{aligned}			&\max_{\gamma}\left\{\overline{{W}}_{K+1}(\gamma)-\overline{{W}}_{K}(\gamma)\right\}-\\&\min_{\gamma}\left\{\overline{{W}}_{K+1}(\gamma)-\overline{{W}}_{K}(\gamma)\right\}\le M(1-\epsilon)^{K/L}.
					\end{aligned}
				\end{equation}
				Compare \eqref{iteration156} with \eqref{prop2}, the relationship between $\overline{W}_{K}{(\gamma)}$ and $\widetilde{W}_{K}{(\gamma)}$ is established by:
				\begin{equation}\label{eq159}
					\begin{aligned}
						\widetilde{W}_{K}{(\gamma)}=\overline{W}_{K}{(\gamma)}+\Phi(\overline{\mathbf{W}}_{K-1}),
					\end{aligned}
				\end{equation}
				where $\overline{\mathbf{W}}_{K}$ is a column vector consisted of $\overline{{W}}_{K}(\gamma)$ for all ${\gamma\in\mathcal{S}\times\mathcal{Y}\times\mathcal{A}}$ and $\Phi(\overline{\mathbf{W}}_{K-1})$ is given in \eqref{213} at the top of the next page.
				\begin{figure*}
					\begin{align}\label{213}
						\Phi(\overline{\mathbf{W}}_{K-1})=&\overline{W}_{K-1}(\gamma^{\star})+\min _{A_i, Z_i}\left\{\frac{q(\gamma^{\star},Z_i,A_i)-\kappa\overline{W}_{K-1}(\gamma^{\star})\cdotp\mathbb{E}[Y_i]+\kappa\mathbb{E}\left[\overline{W}_{K-1}(\gamma')|\gamma^{\star},Z_i,A_i\right]\cdotp\mathbb{E}[Y_i]}{f(Z_i)}\right\}\nonumber\\-&\overline{W}_{K-1}(\gamma^{\text{r}})-\min _{A_i, Z_i}\left\{\frac{q(\gamma^{\text{r}},Z_i,A_i)-\kappa\overline{W}_{K-1}(\gamma^{\text{r}})\cdotp\mathbb{E}[Y_i]+\kappa\mathbb{E}\left[\overline{W}_{K-1}(\gamma')|\gamma^{\text{r}},Z_i,A_i\right]\cdotp\mathbb{E}[Y_i]}{f(Z_i)}\right\}.
					\end{align}
					\hrulefill
				\end{figure*}
				Meanwhile, we can establish
				\begin{equation}\label{eq160}
					\begin{aligned}
						\widetilde{W}_{K+1}{(\gamma)}=\overline{W}_{K+1}{(\gamma)}+\Phi(\overline{\mathbf{W}}_{K}).
					\end{aligned}
				\end{equation}
				Subtracting \eqref{eq159} from \eqref{eq160} yields:
				\begin{equation}\label{215}
					\begin{aligned}
						&\widetilde{W}_{K+1}{(\gamma)}-\widetilde{W}_{K}{(\gamma)}=\\&\overline{W}_{K+1}{(\gamma)}-\overline{W}_{K}{(\gamma)}+\Phi(\overline{\mathbf{W}}_{K})-\Phi(\overline{\mathbf{W}}_{K-1}),
					\end{aligned}
				\end{equation}
				Thus, by applying the max and min operators to both sides of \eqref{215}, we obtain the following:
				\begin{subequations}
					\begin{align}
						\max_{\gamma}\Big\{\widetilde{W}_{K+1}{(\gamma)}-\widetilde{W}_{K}{(\gamma)}\Big\}\notag&=\max_{\gamma}\Big\{\overline{W}_{K+1}{(\gamma)}-\overline{W}_{K}{(\gamma)}\Big\}\\&+\Phi(\overline{\mathbf{W}}_{K})-\Phi(\widetilde{\mathbf{W}}_{K-1}),\label{eq162a}\\
						\min_{\gamma}\Big\{\widetilde{W}_{K+1}{(\gamma)}-\widetilde{W}_{K}{(\gamma)}\Big\}\notag&=\min_{\gamma}\Big\{\overline{W}_{K+1}{(\gamma)}-\overline{W}_{K}{(\gamma)}\Big\}\\&+\Phi(\overline{\mathbf{W}}_{K})-\Phi(\widetilde{\mathbf{W}}_{K-1}),\label{eq162b}
					\end{align}
				\end{subequations}
				Subtracting \eqref{eq162b} from \eqref{eq162a} yields the key equality in \eqref{217}, which is shown at the top of the next page.
				\begin{figure*}
					\begin{equation}\label{217}
						\begin{aligned}			&\max_{\gamma}\Big\{\widetilde{W}_{K+1}{(\gamma)}-\widetilde{W}_{K}{(\gamma)}\Big\}-\min_{\gamma'}\Big\{\widetilde{W}_{K+1}{(\gamma)}-\widetilde{W}_{K}{(\gamma)}\Big\}\\&=\max_{\gamma}\Big\{\overline{W}_{K+1}{(\gamma)}-\overline{W}_{K}{(\gamma)}\Big\}-\min_{\gamma'}\{\overline{W}_{K+1}{(\gamma)}-\overline{W}_{K}{(\gamma)}.
						\end{aligned}
					\end{equation}
					\hrulefill
				\end{figure*}
				Substituting \eqref{ieq157} into \eqref{217}, we have that the original sequence $\{\widetilde{{W}}_{K}(\gamma)\}_{K\in\mathbb{N}^+}$ is upper bounded by:
				\begin{equation}\label{218}
					\begin{aligned}				&\max_{\gamma}\left\{\widetilde{{W}}_{K+1}(\gamma)-\widetilde{{W}}_{K}(\gamma)\right\}\\&-\min_{\gamma}\left\{\widetilde{{W}}_{K+1}(\gamma)-\widetilde{{W}}_{K}(\gamma)\right\}\\&\le M(1-\epsilon)^{K/L}.
					\end{aligned}
				\end{equation}
				With \eqref{218} in hand, it follows that $\{\widetilde{{W}}_{K}(\gamma)\}_{K\in\mathbb{N}^+}$ and $\{\rho_{K}(\lambda)\}_{K\in\mathbb{N}^+}$ are both \textit{Cauchy sequences}, following a reasoning similar to that in \textit{Case 1}: \eqref{184}-\eqref{eq155v2}. 
				\subsection{Convergence Direction}\label{J-B}
				In this subsection, we establish the relationship between the convergent values and the solution to \eqref{eqfunction}. As $\widetilde{W}_{K}(\gamma)$ and $\rho_{K}$ are convergent, we have
				\begin{subequations}
					\begin{align}
						\lim_{K \to \infty}\rho_{K}(\lambda)&=\lim_{K \to \infty}\rho_{K+1}(\lambda)=\rho_{\infty}(\lambda),\label{219a}\\\lim_{K \to \infty}\widetilde{{W}}_{K+1}(\gamma)&=\lim_{K \to \infty}\widetilde{{W}}_{K}(\gamma)=\widetilde{{W}}_{\infty}(\gamma),\forall~\gamma,\label{219b}
					\end{align}
				\end{subequations}
				Substituting \eqref{219a} and \eqref{219b} into \eqref{prop2} yields \eqref{220}, presented at the top of this page.
				\begin{figure*}
					\begin{subequations}\label{220}
						\begin{align}
							\rho_{\infty}=\widetilde{W}_\infty(\gamma^{\text{r}})+\min _{A_i, Z_i}\left\{\frac{q(\gamma^{\text{r}},Z_i,A_i)-\kappa\widetilde{W}_{\infty}(\gamma^{\text{r}})\cdotp\mathbb{E}[Y_i]+\kappa\mathbb{E}[Y_i]\sum_{\gamma'}p(\gamma'|\gamma^{\text{r}},Z_i,A_i)\cdot \widetilde{W}_\infty(\gamma')}{f(Z_i)}\right\},\label{eq192}\\
							0=\min _{A_i, Z_i}\left\{\frac{q(\gamma,Z_i,A_i)-\rho_{\infty}\cdot f(Z_i)+\kappa\mathbb{E}[Y_i]\sum_{\gamma'}p(\gamma'|\gamma,Z_i,A_i)\cdot \widetilde{W}_\infty(\gamma')-\kappa\widetilde{W}_{\infty}(\gamma)\cdotp\mathbb{E}[Y_i]}{f(Z_i)}\right\},\nonumber\\\forall \gamma\in\mathcal{S}\times\mathcal{Y}\times\mathcal{A},\label{eq196}
						\end{align}
					\end{subequations}
					\hrulefill
				\end{figure*}
				
				Since $f(Z_i)>0$, from \eqref{eq196} we have that
				\begin{equation}\label{eq194}
					\begin{aligned}
						0=&\min _{A_i, Z_i}\Big\{g(\gamma,Z_i,A_i;\rho_{\infty})+
						\\&\kappa\mathbb{E}[Y_i]\sum_{\gamma'}p(\gamma'|\gamma,Z_i,A_i)\widetilde{W}_{\infty}(\gamma')-	{\kappa\mathbb{E}[Y_i]}\widetilde{W}_{\infty}(\gamma)\Big\},\\&\quad\quad\quad\quad\quad\quad\quad\quad\quad\quad\quad\quad\quad\forall \gamma\in\mathcal{S}\times\mathcal{Y}\times\mathcal{A},
					\end{aligned}
				\end{equation}
				where $g(\gamma,Z_i,A_i;\rho_{\infty})=q(\gamma,Z_i,A_i)-\rho_{\infty}\cdot f(Z_i)$.
				By moving the term ${\kappa\mathbb{E}[Y_i]}\cdot\widetilde{W}_{\infty}(\gamma)$ on the left-hand side of \eqref{eq194} to the left-hand side and rewriting the summation terms in \eqref{eq194} into an expectation form, we have
				\begin{equation}\label{eq195}
					\begin{aligned}
						&{\kappa\mathbb{E}[Y_i]}\cdot\widetilde{W}_{\infty}(\gamma)=\min _{A_i, Z_i}\Big\{g(\gamma,Z_i,A_i;\rho_{\infty})\\&+\mathbb{E}\left[\kappa\mathbb{E}[Y_i]\cdot\widetilde{W}_{\infty}(\gamma')|\gamma,Z_i,A_i\right]\Big\},\forall \gamma\in\mathcal{S}\times\mathcal{Y}\times\mathcal{A}.
					\end{aligned}
				\end{equation}
				Meanwhile, substituting \eqref{185} into \eqref{eq192} and rewriting the summation terms in \eqref{eq192} yields:
				\begin{equation}\label{eq197}
					\rho_{\infty}=\min _{A_i, Z_i}\left\{\frac{q(\gamma^{\text{r}},Z_i,A_i)+\mathbb{E}\left[\kappa\mathbb{E}[Y_i]\cdot\widetilde{W}_{\infty}(\gamma')|\gamma,Z_i,A_i\right]}{f(Z_i)}\right\},
				\end{equation}
				Compare \eqref{eq195} and \eqref{eq197} with \eqref{eqfunction}, we have that  $\left(\rho_{\infty},\left\{\kappa\mathbb{E}[Y_i]\cdot\widetilde{W}_\infty(\gamma)\right\}_{\gamma\in\mathcal{S}\times\mathcal{Y}\times\mathcal{A}}\right)$ is a root of \eqref{eqfunction}. We thus establish the equivalence:
				\begin{equation}\label{eq227}
					\begin{aligned}
						\rho^\star&=\rho_{\infty},\\
						\kappa\mathbb{E}[Y_i]\cdot\widetilde{W}_\infty(\gamma)&=\widetilde{W}^{\star}(\gamma),\forall\gamma\in\mathcal{S}\times\mathcal{Y}\times\mathcal{A},
					\end{aligned}
				\end{equation}	
				\textcolor{black}{Substituting \eqref{eq227} into \eqref{eq155v2} completes the proof of Theorem~\ref{the6upperbound2}. 
					Moreover, substituting \eqref{eq227} into \eqref{eq201} and \eqref{eq155v2} and letting $K\to\infty$, we obtain}
				\begin{equation}\color{black}
					\begin{aligned}
						0\le\lim_{K\to\infty}|\rho^K-\rho^\star|\le\lim\limits_{K\to\infty}\frac{M(1-\epsilon)^{K/L}}{1-(1-\epsilon)^{1/L}}=0,\\
						0\le\lim_{K\to\infty}\left|\widetilde{{W}}_{K}(\gamma)-\frac{\widetilde{{W}}^{\star}(\gamma)}{\kappa\mathbb{E}[Y_i]}\right|\le\lim\limits_{K\to\infty}\frac{M(1-\epsilon)^{K/L}}{1-(1-\epsilon)^{1/L}}=0.
					\end{aligned}
				\end{equation}
				\textcolor{black}{Therefore, by the squeeze theorem, both limits are zero, which completes the proof of Theorem~\ref{Lemma6}.}
				\section{Proof of Lemma \ref{Lemma7v2}}\label{appendixkv3}
				By substituting the definitions of $\mathcal{Q}^{\lambda+\theta}$ and $\mathcal{F}^{\lambda+\theta}$ from \eqref{eq44v2} and \eqref{eq45v2} into \eqref{eq37} and \eqref{eq38}, we express $\Upsilon(\theta,\lambda;f_{\max})$ as:
				\begin{equation}
					\begin{aligned}
						\Upsilon(\theta,\lambda;f_{\max})&=\mathcal{Q}^{\lambda+\theta}-(\lambda+\theta)\mathcal{F}^{\lambda+\theta}+\frac{\theta}{f_{\max}}\\&=\mathcal{Q}^{\lambda+\theta}-(\lambda+\theta)(\mathcal{F}^{\lambda+\theta}-\frac{1}{f_{\max}})-\frac{\lambda}{f_{\max}}.
					\end{aligned}
				\end{equation}
				We now demonstrate that for a fixed value of $\lambda$, the function $\Upsilon(\theta, \lambda; f_{\max})$ is non-increasing with respect to $\theta$, provided that $\mathcal{F}^{\lambda+\theta} \geq 1/f_{\max}$. For any $\Delta \theta \geq 0$, we can establish \eqref{226} at the top of the next page,
				\begin{figure*}
					\begin{subequations}\label{226}
						\begin{align}
							&\Upsilon(\theta+\Delta \theta,\lambda;f_{\max})-\Upsilon(\theta,\lambda;f_{\max})\nonumber\\&\overset{(a)}{=}\frac{\Delta\theta}{f_{\max}}+\lim _{\mathrm{n} \rightarrow \infty}\frac{1}{n} {\sum_{i=0}^{n-1} \mathbb{E}_{\phi^\star_{\theta+\Delta \theta+\lambda}}\left[\sum_{t=\it{D}_{i}}^{\it{D}_{i+1}-1} \mathcal{C}(X_t, A_i)\right]}-{(\theta+\Delta \theta+\lambda)}\lim _{\mathrm{n} \rightarrow \infty}\frac{1}{n} {\sum_{i=0}^{n-1}\mathbb{E}_{\phi^\star_{\theta+\Delta \theta+\lambda}} \left[Z_i+Y_{i+1}\right]}\label{226a}\\&-\left(\lim _{\mathrm{n} \rightarrow \infty}\frac{1}{n} {\sum_{i=0}^{n-1} \mathbb{E}_{\phi^\star_{ \theta+\lambda}}\left[\sum_{t=\it{D}_{i}}^{\it{D}_{i+1}-1} \mathcal{C}(X_t, A_i)\right]}-{(\theta+\lambda)}\lim _{\mathrm{n} \rightarrow \infty}\frac{1}{n} {\sum_{i=0}^{n-1}\mathbb{E}_{\phi^\star_{\theta+\lambda}} \left[Z_i+Y_{i+1}\right]}\right)\notag\\
							&{\le}\frac{\Delta\theta}{f_{\max}}-\Delta\theta\lim _{\mathrm{n} \rightarrow \infty}\frac{1}{n} {\sum_{i=0}^{n-1}\mathbb{E}_{\phi^\star_{\theta+\lambda}} \left[Z_i+Y_{i+1}\right]}\\&=-\Delta\theta\left(\mathcal{F}^{\lambda+\theta}-\frac{1}{f_{\max}}\right)\\&\le0,\label{226b}
						\end{align}
					\end{subequations}
					\hrulefill
				\end{figure*}
				where \eqref{226a} follows from expanding $\mathcal{Q}^{\lambda+\theta}$ and $\mathcal{F}^{\lambda+\theta}$ according to \eqref{eq44v2} and \eqref{eq45v2}; and inequality \eqref{226b} is established by the optimality of policy $\phi_{\theta+\Delta\theta+\lambda}^\star$:
				\begin{equation}
					\mathcal{L}(\phi_{\theta+\Delta\theta+\lambda}^\star;\theta+\Delta\theta,\lambda,f_{\max})\le\mathcal{L}(\phi_{\theta+\lambda}^\star;\theta+\Delta\theta,\lambda,f_{\max}).
				\end{equation}
				This completes the proof.
				\section{Proof of Corollary \ref{coro1}}\label{appendixlv2}
				\subsection{Case 1: $\mathcal{F}^{\lambda^+} \geq 1/f_{\max}$}
				From part ($ii$) of Lemma \ref{Lemma7v2}, which indicates the non-decreasing property of $\mathcal{F}^{\lambda+\theta}$, it turns out that if $\mathcal{F}^{\lambda^+} \geq 1/f_{\max}$, then for any $\theta \geq 0$, \begin{equation}\mathcal{F}^{\lambda + \theta} \ge \mathcal{F}^{\lambda^+}\ge 1/f_{\max}.\end{equation} From this, it follows that the inequality $\mathcal{F}^{\lambda + \theta} \geq \frac{1}{f_{\max}}$ is always true, satisfying the KKT condition \eqref{KKT3}. Additionally, under this assumption, part ($iv$) of Lemma \ref{Lemma7v2} ensures that $\Upsilon(\theta, \lambda; f_{\max})$ is non-increasing for $\theta \geq 0$. Hence, the minimum value of $\Upsilon(\theta, \lambda; f_{\max})$ occurs at $\theta = 0$:
				\begin{equation}
					\min_{\theta\ge0}\Upsilon(\theta,\lambda;f_{\max})=\Upsilon(0,\lambda;f_{\max})=U(\lambda).
				\end{equation}
				Note that $\theta^\star_{\lambda}=0$ also satisfies the KKT condition \eqref{KKT3}, we accomplish the proof under this case.
				
				\subsection{Case 2: $\mathcal{F}^{\lambda^+} < 1/f_{\max}$}
				
				On the other hand, consider the case where $\mathcal{F}^{\lambda^+} < 1/f_{\max}$. According to part ($ii$) of Lemma \ref{Lemma7v2}, there exists a threshold value $\theta^{\text{tr}}$ such that \eqref{eq46} holds. When $\theta\in[0,\theta^{\text{tr}})$, it follows from \eqref{eq46} that $\mathcal{F}^{\lambda+\theta} \leq \frac{1}{f_{\max}}$, which contradicts the KKT condition \eqref{KKT3}, indicating that values of $\theta$ in this range do not satisfy the necessary optimality conditions. However, for $\theta \in [\theta^{\text{tr}}, \infty)$, we observe from \eqref{eq46} that the KKT condition \eqref{KKT3} is satisfied, as $\mathcal{F}^{\lambda+\theta} \geq \frac{1}{f_{\max}}$. Furthermore, from part $(iv)$ of Lemma \ref{Lemma7v2}, $\Upsilon(\theta,\lambda;f_{\max})$ is non-increasing with respect to $\theta$ in this feasible region. Therefore, the minimum value of $\Upsilon(\theta, \lambda; f_{\max})$, under the KKT conditions occurs at the smallest $\theta$ for which the KKT condition \eqref{KKT3} holds, namely $\theta^\star_{\lambda} = \theta^{\text{tr}}$. 
				\section{Proof of Theorem \ref{theorem5}}\label{appendixkM}
				\subsection{Case 1: $\mathcal{F}^{\lambda^+}\ge1/f_{\max}$} From part ($i$) of Corollary \ref{coro1}, we know that the optimal Lagrangian multiplier satisfies $\theta_\lambda^\star=0$. Substituting $\theta_\lambda^\star=0$ into \eqref{kkt2} yields 
				\begin{equation}
					\begin{aligned}
						&\phi^\star_{\lambda}=\\&\mathop{\arg\min}_{\phi}\lim _{n \rightarrow \infty} \frac{1}{n} \sum_{i=0}^{n-1}\mathbb{E}_{\phi}\Bigg\{\sum_{t=D_i}^{D_{i+1}-1} \mathcal{C}\left(X_t, A_i\right)-\lambda \left(Z_i+Y_{i+1}\right)\Bigg\}.
					\end{aligned}
				\end{equation}In this case, the optimal policy is equivalent to the optimal policy determined in \eqref{eq19}, which is stationary deterministic.
				\subsection{Case 2: $\mathcal{F}^{\lambda^+}<1/f_{\max}$ and $\mathcal{F}^{(\lambda+\theta^\star_\lambda)^+}=1/f_{\max}$}
				As $\mathcal{F}^{\lambda^+}<1/f_{\max}$, we can establish from Corollary \ref{coro1} that $\theta_\lambda^\star>0$. The condition $\mathcal{F}^{(\lambda+\theta^\star_\lambda)^+}=1/f_{\max}$ leads to the fact that \eqref{KKT3} and \eqref{kktT4} are satisfied naturally, and thus the optimal policy is exactly the stationary deterministic policy $\phi^\star_{\lambda+\theta_\lambda^\star}$. 
				\subsection{Case 3: $\mathcal{F}^{\lambda^+}<1/f_{\max}$ and $\mathcal{F}^{(\lambda+\theta^\star_\lambda)^+}>1/f_{\max}$}
				($iii$). For this case, it is proved in \cite[Theorem 4.4]{li2011finding} that the policy given in \eqref{eq52} and \eqref{eq53v2} is the optimal policy.
				\section{Proof of Theorem \ref{the6}}\label{appendixkv2}
				\subsection{Proof of Part ($i$)}\label{sec:proof-of-part-i}
				Substituting \eqref{eq37}, \eqref{eq44v2} and \eqref{eq45v2} into \eqref{eq38}, we have
				\begin{equation}\label{eq203}
					\Upsilon(\theta,\lambda;f_{\max})=\mathcal{Q}^{\lambda+\theta}-(\lambda+\theta)\mathcal{F}^{\lambda+\theta}+\frac{\theta}{f_{\max}}.
				\end{equation}
				Let $\lambda=h^\star$ and $\rho=h^\star+\theta$, \eqref{eq203} turns to
				\begin{equation}\label{eq44}
					\Upsilon(\rho-h^\star,h^\star;f_{\max})=\mathcal{Q}^{\rho}-\rho\mathcal{F}^{\rho}+\frac{\rho-h^\star}{f_{\max}}.
				\end{equation}
				Note that $d(h^\star;f_{\max})=U(h^\star;f_{\max})=0$, the Problem \ref{p6} turns to the following equation:
				\begin{equation}\label{eq45}
					0=\max_{\rho\ge h^\star}\left\{\underbrace{\mathcal{Q}^{\rho}-\rho\left(\mathcal{F}^{\rho}-\frac{1}{f_{\max}}\right)-\frac{h^\star}{f_{\max}}}_{\mathcal{J}(\rho)}\right\}.
				\end{equation}
				Similar to the proof of part ($iv$) of Lemma \ref{Lemma7v2}, we can establish the following Lemma.
				\begin{Lemma}\label{Lemma13}
					If $\mathcal{F}^{(\rho)^-}\ge1/f_{\max}$, then $\mathcal{J}(\rho)$ is non-increasing with respect to $\rho$.
				\end{Lemma}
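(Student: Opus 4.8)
The plan is to reuse, almost verbatim, the monotonicity argument of Appendix~\ref{appendixkv3}, since \eqref{eq44} shows that $\mathcal{J}(\rho)=\Upsilon(\rho-h^\star,h^\star;f_{\max})$ and the claim is the direct analogue of part~($iv$) of Lemma~\ref{Lemma7v2}. First I would fix $\rho$ with $\mathcal{F}^{(\rho)^-}\ge 1/f_{\max}$ and an arbitrary increment $\Delta\rho\ge 0$, and expand $\mathcal{J}(\rho+\Delta\rho)-\mathcal{J}(\rho)$ straight from the definition $\mathcal{J}(\mu)=\mathcal{Q}^{\mu}-\mu(\mathcal{F}^{\mu}-1/f_{\max})-h^\star/f_{\max}$; grouping terms gives
\[
\mathcal{J}(\rho+\Delta\rho)-\mathcal{J}(\rho)=\big[\mathcal{Q}^{\rho+\Delta\rho}-(\rho+\Delta\rho)\mathcal{F}^{\rho+\Delta\rho}\big]-\big[\mathcal{Q}^{\rho}-\rho\mathcal{F}^{\rho}\big]+\tfrac{\Delta\rho}{f_{\max}}.
\]

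The key step is the optimality inequality. Recalling that $\mathcal{Q}^{\mu}$ and $\mathcal{F}^{\mu}$ are the per-renewal averages attained by the optimal stationary deterministic policy $\phi^\star_{\mu}$ of $\mathscr{P}_{\mathrm{MDP}}(\mu)$, the value of the $\mathscr{P}_{\mathrm{MDP}}(\rho+\Delta\rho)$ objective in \eqref{p4eq} under $\phi^\star_{\rho+\Delta\rho}$ is $\mathcal{Q}^{\rho+\Delta\rho}-(\rho+\Delta\rho)\mathcal{F}^{\rho+\Delta\rho}$, and this cannot exceed its value under the competing policy $\phi^\star_{\rho}$, namely $\mathcal{Q}^{\rho}-(\rho+\Delta\rho)\mathcal{F}^{\rho}$. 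This is exactly the inequality invoked for \eqref{226b}. Substituting it into the displayed identity cancels $\mathcal{Q}^{\rho}$ and yields $\mathcal{J}(\rho+\Delta\rho)-\mathcal{J}(\rho)\le -\Delta\rho\big(\mathcal{F}^{\rho}-1/f_{\max}\big)$.

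To conclude I would sign this bound using the step-function structure of $\mathcal{F}^{\mu}$: by parts~($ii$)--($iii$) of Lemma~\ref{Lemma7v2}, $\mathcal{F}^{\mu}$ is a non-decreasing step function, so $\mathcal{F}^{\rho}\ge\mathcal{F}^{(\rho)^-}\ge 1/f_{\max}$, whence $\mathcal{J}(\rho+\Delta\rho)\le \mathcal{J}(\rho)$. Since $\mathcal{F}^{(\cdot)^-}$ is itself non-decreasing, the hypothesis propagates: $\mathcal{F}^{(\rho')^-}\ge 1/f_{\max}$ for every $\rho'\ge\rho$, so the pointwise estimate upgrades to monotone non-increase of $\mathcal{J}$ on the entire feasible set $\{\rho:\mathcal{F}^{(\rho)^-}\ge 1/f_{\max}\}$. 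The only point I would treat with care --- and the main (though mild) obstacle --- is keeping the left-limit bookkeeping straight at the jumps of $\mathcal{F}^{\mu}$: the optimality inequality must be applied with the actual value $\mathcal{F}^{\rho}$ realized by $\phi^\star_{\rho}$ (not its left limit), and the inference $\mathcal{F}^{(\rho)^-}\ge 1/f_{\max}\Rightarrow\mathcal{F}^{\rho}\ge 1/f_{\max}$ is precisely where monotonicity of the step function is used; the remaining manipulations are routine algebra entirely parallel to \eqref{226}.
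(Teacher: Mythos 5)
Your proof is correct and follows essentially the same route the paper intends: it reuses the optimality-of-$\phi^\star_{\rho+\Delta\rho}$ inequality from the proof of Lemma~\ref{Lemma7v2}-($iv$) in Appendix~\ref{appendixkv3} to get $\mathcal{J}(\rho+\Delta\rho)-\mathcal{J}(\rho)\le-\Delta\rho\,(\mathcal{F}^{\rho}-1/f_{\max})\le 0$. Your explicit handling of the left-limit hypothesis (using monotonicity to pass from $\mathcal{F}^{(\rho)^-}\ge 1/f_{\max}$ to $\mathcal{F}^{\rho}\ge 1/f_{\max}$ and to propagate the condition to all larger $\rho$) is a small but welcome piece of bookkeeping that the paper leaves implicit.
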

				We next employ a proof by contradiction to establish this result. Specifically, we consider the following three cases: 
				\begin{itemize}
					\item \ul{\textit{Case 1}}: If $h^\star<\rho^{\star}$, we know that
					\begin{equation}
						\mathcal{J}(\rho^\star)=\frac{\rho^{\star}-h^{\star}}{f_{\max}}>0.
					\end{equation}
					This contradicts \eqref{eq45}, as \eqref{eq45} is equivalent to:
					\begin{equation}
						\mathcal{J}(\rho)\le0,\forall \rho\ge h^\star.
					\end{equation}
					\item \ul{\textit{Case 2}}: If $h^\star=\rho^{\star}$, we have that $\mathcal{F}^{(h^{\star})^-}\ge 1/f_{\max}$. Since $\mathcal{F}^{\lambda}$ is non-decreasing with $\lambda$, as indicated in Lemma \ref{Lemma7v2}, we know that 
					\begin{equation}
						\mathcal{F}^{(\rho)^-}\ge\frac{1}{f_{\max}},\forall\rho\ge h^{\star},
					\end{equation} which indicates that $\mathcal{J}(\rho)$ is non-increasing with respect to $\rho$, as shown in Lemma \ref{Lemma13}. As a result, we have that
					\begin{equation}
						\begin{aligned}
							0=\max_{\rho\ge h^{\star}}\left\{\mathcal{J}(\rho)\right\}&=\mathcal{J}(h^{\star})\\&=\mathcal{Q}^{h^{\star}}-h^{\star}\mathcal{F}^{h^\star}=U(h^\star).
						\end{aligned}
					\end{equation}
					This establishes that $U(h^{\star})=0$. Since the root of $U(\lambda)$ is unique, as detailed in part ($iii$) of Lemma \ref{l3}, it is sufficient to verify that $\rho^{\star}=h^\star$.
					\item \ul{\textit{Case 3}}: If $h^\star>\rho^{\star}$, we have that $\mathcal{F}^{(h^{\star})^-}{\ge}\mathcal{F}^{(\rho^{\star})^-}\ge 1/f_{\max}$ as $\mathcal{F}^{\lambda}$ is non-decreasing with $\lambda$ ( part ($ii$) of Lemma \ref{Lemma7v2}). Similar to \ul{\textit{Case 2}}, we know that $\mathcal{J}(\rho)$ is non-increasing with respect to $\rho$ and thus $U(h^{\star})=0$. Since the root of $U(\lambda)$ is unique (part ($iii$) of Lemma \ref{l3}), we have $\rho^\star=h^{\star}$. This contradicts the case where $h^\star>\rho^{\star}$.
				\end{itemize}
				As such, we establish that $h^\star=\rho^\star$.
				\subsection{Proof of Part ($ii$)}
				If $\mathcal{F}^{(\rho^{\star})^-}<1/f_{\max}$, \ul{\textit{Case 1}} in the proof of part ($i$) remains contradictory and thus $\rho^{\star}\ge h^\star$. Since $\mathcal{F}^{\lambda}$ is non-decreasing with $\lambda$, as indicated in Lemma \ref{Lemma7v2}, we know that 
				\begin{equation}
					\mathcal{F}^{(h^{\star})^-}\le\mathcal{F}^{(\rho^{\star})^-}<1/f_{\max}.
				\end{equation}
				
				\begin{Lemma}
					$\mathcal{J}(\rho)$ and $U(\rho)$ is uniformly absolutely continuous, with the derivative given as:
					\begin{subequations}
						\begin{align}
							\frac{\mathrm{d}U(\rho)}{\mathrm{d}\rho}&=-\mathcal{F}^{\rho},\label{243a}\\
							\frac{\mathrm{d}\mathcal{J}(\rho)}{\mathrm{d}\rho}&=-\mathcal{F}^{\rho}+\frac{1}{f_{\max}}.\label{243b}
						\end{align}
					\end{subequations} 
				\end{Lemma}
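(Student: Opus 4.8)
The plan is to read the lemma off the structural fact that $U(\rho)$ is the lower envelope of \emph{finitely many affine functions} of $\rho$, and then to treat $\mathcal{J}$ as $U$ plus an affine term.

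\emph{Step 1 (affine-envelope form of $U$).} The transformed MDP $\mathscr{P}_{\text{MDP}}(\rho)$ has finite state space $\mathcal{S}\times\mathcal{Y}\times\mathcal{A}$ and finite action space $\mathcal{Z}\times\mathcal{A}$ (recall $\mathcal{Z}\subseteq\mathbb{N}$ is bounded), hence only finitely many stationary deterministic policies $\phi$. For each such $\phi$ the unichain property (Theorem~\ref{the1}) makes the renewal--reward limits $\mathcal{Q}^{\phi}\triangleq\lim_{n}\frac{1}{n}\sum_{i=0}^{n-1}\mathbb{E}_{\phi}[\sum_{t=D_i}^{D_{i+1}-1}\mathcal{C}(X_t,A_i)]$ and $\mathcal{F}^{\phi}\triangleq\lim_{n}\frac{1}{n}\sum_{i=0}^{n-1}\mathbb{E}_{\phi}[Z_i+Y_{i+1}]$ well defined, finite, and independent of $\rho$. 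By the form of Problem~\ref{p4}, the average value attained by $\phi$ in $\mathscr{P}_{\text{MDP}}(\rho)$ equals $\mathcal{Q}^{\phi}-\rho\,\mathcal{F}^{\phi}$, which is affine in $\rho$; since an optimal stationary deterministic policy exists (Theorem~\ref{the1}), $U(\rho)=\min_{\phi}\{\mathcal{Q}^{\phi}-\rho\,\mathcal{F}^{\phi}\}$, and $\mathcal{F}^{\rho}$ of \eqref{eq45v2} is precisely $\mathcal{F}^{\phi^{\star}_{\rho}}$ for the minimizing policy $\phi^{\star}_{\rho}$.

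\emph{Step 2 (properties of $U$ and $\mathcal{J}$).} A finite pointwise minimum of affine maps is concave and piecewise linear with finitely many breakpoints, and all its slopes lie in the bounded set $\{-\mathcal{F}^{\phi}\}_{\phi}$; hence $U$ is globally Lipschitz on $\mathbb{R}$, and in particular uniformly absolutely continuous. Away from the finite breakpoint set a single policy $\phi^{\star}_{\rho}$ is active on a neighborhood, so $U$ is differentiable there with $U'(\rho)=-\mathcal{F}^{\phi^{\star}_{\rho}}=-\mathcal{F}^{\rho}$; this holds for all but finitely many $\rho$, which together with absolute continuity yields $U(b)-U(a)=-\int_{a}^{b}\mathcal{F}^{\rho}\,\mathrm{d}\rho$, i.e.\ the content of \eqref{243a}. (Concavity forces $U'$ to be non-increasing, consistent with the non-decreasing $\mathcal{F}^{\rho}$ of Lemma~\ref{Lemma7v2}($ii$); at a breakpoint $\rho_0$ the quantity $-\mathcal{F}^{\rho_0}$ agrees with the appropriate one-sided derivative of $U$ under the step-function convention used in the paper, and this measure-zero set does not affect the integral identity.) For $\mathcal{J}$, \eqref{eq45} gives $\mathcal{J}(\rho)=\mathcal{Q}^{\rho}-\rho(\mathcal{F}^{\rho}-\frac{1}{f_{\max}})-\frac{h^{\star}}{f_{\max}}=U(\rho)+\frac{\rho-h^{\star}}{f_{\max}}$, so $\mathcal{J}$ inherits global Lipschitzness and uniform absolute continuity, and differentiating termwise gives $\mathcal{J}'(\rho)=U'(\rho)+\frac{1}{f_{\max}}=-\mathcal{F}^{\rho}+\frac{1}{f_{\max}}$, which is \eqref{243b}.

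The only genuinely delicate point is Step~1: one must justify that the horizon-averaged objective of a fixed policy in Problem~\ref{p4} collapses to the single affine function $\mathcal{Q}^{\phi}-\rho\,\mathcal{F}^{\phi}$ — that the $\limsup$ and $\liminf$ coincide and the cost/length averages are independent of how the horizon is partitioned into regeneration cycles. This rests on the unichain structure guaranteed by Theorem~\ref{the1} together with a standard renewal--reward argument; everything that follows is elementary convex analysis.
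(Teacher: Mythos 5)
Your proposal is correct, but it proves the lemma by a genuinely different route than the paper. The paper's own proof is a one-line citation: \eqref{243a} is taken directly as a restatement of \cite[Lemma 3.1]{BEUTLER1985236}, and \eqref{243b} is then read off as a corollary because, by \eqref{eq45}, $\mathcal{J}(\rho)=U(\rho)+\tfrac{\rho-h^{\star}}{f_{\max}}$ is just $U$ shifted by an affine term --- exactly the reduction you use in your Step~2. What you do differently is to re-derive the Beutler--Ross fact from first principles: finiteness of the state and action spaces plus the unichain guarantee of Theorem \ref{the1} give $U(\rho)=\min_{\phi}\{\mathcal{Q}^{\phi}-\rho\,\mathcal{F}^{\phi}\}$ over the finitely many stationary deterministic policies, so $U$ is a concave, piecewise-linear lower envelope of affine maps, hence globally Lipschitz (so uniformly absolutely continuous), with derivative $-\mathcal{F}^{\rho}$ away from the finitely many breakpoints, and the integral identity follows. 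This buys a self-contained argument that also makes the concave piecewise-linear structure explicit and visibly consistent with the step-function and monotonicity facts of Lemma \ref{Lemma7v2}; the paper's citation route is shorter and leans on a result proved in greater generality for constrained average-cost MDPs. The one point you rightly flag as delicate --- that a fixed stationary policy's average collapses to the single affine function $\mathcal{Q}^{\phi}-\rho\,\mathcal{F}^{\phi}$ with $\limsup=\liminf$ --- is indeed the load-bearing step of your Step~1, and your appeal to the unichain/renewal--reward structure (the same structure underlying Theorem \ref{the1} and the interval-by-interval reformulation in Problem \ref{p3}) is adequate; your handling of breakpoints via one-sided derivatives on a finite exceptional set matches the convention the paper already uses for the step function $\mathcal{F}^{\lambda}$, so no genuine gap remains.
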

				\begin{proof}
					\eqref{243a} is a restatement of \cite[Lemma 3.1]{BEUTLER1985236}. \eqref{243b} is a corollary of \eqref{243a}.
				\end{proof}
				
				As a result, the maximum value of $\mathcal{J}(\rho)$ is taken where $\mathcal{F}^{\rho}=\frac{1}{f_{\max}}$. We denote one of the root as
				\begin{equation}
					\rho^{\downarrow}=\inf\left\{\rho\ge h^{\star}:\mathcal{F}^{\rho}=\frac{1}{f_{\max}}\right\}.
				\end{equation}
				The root $\rho^{\downarrow}$ always exists since:
				\begin{equation}
					\left.\frac{\mathrm{d}\mathcal{J}(\rho)}{\mathrm{d}\rho}\right|_{\rho=(h^{\star})^-}<0,
				\end{equation}
				and the monotonic property of $\mathcal{F}(\rho)$. As a result, the Problem \ref{p6} turns to the following problem with a stricter equation constraint:
				
				\begin{problem}[\textit{CMDP with equation constraint}]\label{p8}
					\begin{equation}
						\begin{aligned}
							&H(\lambda;f_{\max})\triangleq\\&\inf _{\phi} \lim _{\mathrm{n} \rightarrow \infty}\frac{1}{n} {\sum_{i=0}^{n-1}\left\{ \mathbb{E}_{\boldsymbol{\psi}}\left[\sum_{t=\it{D}_{i}}^{\it{D}_{i+1}-1} \mathcal{C}(X_t, A_i)\right]-\lambda\mathbb{E} \left[Z_i+Y_{i+1}\right]\right\}}\\
							&\text{s.t. } \lim_{T \to \infty} \frac{1}{T} \mathbb{E}_{\phi}\left[\sum_{i=1}^{T}(Y_{i+1}+Z_i)\right] = \frac{1}{f_{\text{max}}}.
						\end{aligned}
					\end{equation}
				\end{problem}
				
				As did in \cite[Chapter 8.8]{puterman2014markov}, we can utilize the \textit{dual relaxation} to reformulate Problem \ref{p8} as the following LP problem:
				
				\begin{problem}[\textit{LP Transformation of {Problem \ref{p8}}}]\label{p9}
					\begin{align}
						H(\lambda;f_{\max})=\min_{\mathbf{x}} &\sum_{\gamma,z,a}(q(\gamma,z,a)-\lambda f(z))x(\gamma,z,a)\nonumber  \\
						\mathrm{s.t.}&~~~ \eqref{eq53}-\eqref{eq56}
					\end{align}
				\end{problem}
				
				Our remaining focus is to explicitly express the root of $H(\lambda;f_{\max})$ in Problem \ref{p9}.
				By substituting the constraint \eqref{eq53} into the objective function of Problem \ref{p9}, this problem is decomposed as: 
				\begin{align}
					H(\lambda;f_{\max})=\min_{\mathbf{x}} &\sum_{\gamma,z,a}q(\gamma,z,a)x(\gamma,z,a)-\frac{\lambda}{f_{\max}}\nonumber   \\
					\mathrm{s.t.}&~~~ \eqref{eq53}-\eqref{eq56}
				\end{align}
				Since $\lambda/f_{\max}$ is independent of the decision variable $\mathbf{x}$, let $\lambda=h^\star$ and $H(h^\star;f_{\max})=0$, we establish the following LP program:
				
				\begin{problem}[\textit{Closed-form Root Solution}]\label{p10}
					\begin{align}
						\frac{h^\star}{f_{\max}}=\min_{\mathbf{x}} &\sum_{\gamma,z,a}q(\gamma,z,a)x(\gamma,z,a)\nonumber\\
						\mathrm{s.t.}&~~~ \eqref{eq53}-\eqref{eq56}
					\end{align}
				\end{problem}
				
				Comparing Problem \ref{p10} with Problem \ref{p7} establishes the relationship $h^\star=f_{\max}\cdot {Q}^\star(f_{\max})$.
				
				\section{Proof of Theorem \ref{Lemmasa} }\label{apposa}
				\subsection{Part ($i$)}
				Part ($i$) holds naturally since $h^{\star}=\rho^{\star}$ when $\mathcal{F}^{(\rho^\star)^-}\ge1/f_{\max}$. In this case, as discussed in Section \ref{sectionIV} and Section \ref{sectionIV2}, $\rho^{\star}$ corresponds to the optimal value without a sampling frequency constraint. Consequently, $h^{\star}$ is independent of $f_{\max}$.
				
				\subsection{Part ($ii$)}
				When $\mathcal{F}^{(\rho^\star)^-}<1/f_{\max}$, we have that $h^{\star}=f_{\max}\cdot Q^{\star}(f_{\max})$. To analyze the relationship between $h^{\star}$ and $f_{\max}$, we need to conduct a comprehensive \textit{sensitivity analysis} on the LP problem specified in Problem \ref{p7}. To facilitate the \textit{sensitivity analysis}, we first rewrite $h^{\star}$ as follows:
				
				\begin{problem}[\textit{Reformulation of $h^{\star}$}]\label{p11}
					\begin{subequations}
						\begin{align}
							h^{\star}=&f_{\max}Q^\star(f_{\max})\\=&\min_{\mathbf{x}} \sum_{\gamma,z,a}f_{\max}q(\gamma,z,a)x(\gamma,z,a) \label{eq242a}\\
							&\mathrm{ s.t.} \sum_{\gamma,z,a} f_{\max}f(z)x(\gamma,z,a)=1,\label{eq242b}\\
							&\quad\quad\eqref{eq53}-\eqref{eq56}\label{appoLemma9} 
						\end{align}
					\end{subequations}
				\end{problem}
				
				For easy notations, we define the feasible region that satisfies constraint \eqref{eq53}-\eqref{eq56} as $\mathcal{D}=\{\mathbf{x}:\eqref{eq53}-\eqref{eq56}\}$. Our goal is to conduct a sensitivity analysis on the new LP problem in Problem \ref{p11}, where the parameter affects both the objective \eqref{eq242a} and subjective \eqref{eq242b}. By applying the Lagrangian dual technique to Problem \ref{p10}, we establish the partial Lagrangian dual problem:
				
				\begin{problem}[\textit{Lagrangian Dual of Problem \ref{p11}}]\label{p12}
					\begin{equation}
						\begin{aligned}
							&\Theta^{\star}=\\&\max_{\lambda}\left\{
							\lambda+f_{\max}{\min_{\mathbf{x}\in\mathcal{D}} \sum_{\gamma,z,a}(q(\gamma,z,a)-\lambda f(z))x(\gamma,z,a)}
							\right\}. 
						\end{aligned}
					\end{equation}
				\end{problem}
				
				The \textit{weak duality} principle establishes that $\Theta^{\star}\le h^{\star}$, while the strong duality holds that $\Theta^{\star}=h^{\star}$. For LP problems, the \textit{Staler's constraint qualification} is always satisfied \cite[Chapter 5.2.3]{boyd2004convex}, which naturally ensures \textit{strong duality}. Thus, for our LP problem, we have $\Theta^{\star}=h^{\star}$. By applying the inverse transformation approach as demonstrated in the transformation from Problem \ref{p8} to Problem \ref{p9}, we obtain:
				\begin{equation}
					U(\lambda)=\min_{\mathbf{x}\in\mathcal{D}} \sum_{\gamma,z,a}(q(\gamma,z,a)-\lambda f(z))x(\gamma,z,a).
				\end{equation}
				This transformation, combined with strong duality, converts Problem \ref{p12} into: \begin{equation}
					h^{\star}=\max_{\lambda}\left\{\underbrace{\lambda+f_{\max}U(\lambda)}_{g(\lambda)}\right\}
				\end{equation}
				To analyze the derivative of $g(\lambda)$, we utilize the following lemma, which is a restatement of \cite[Lemma 3.1]{BEUTLER1985236}.
				\begin{Lemma}
					(Restatement \cite[Lemma 3.1]{BEUTLER1985236}). $U(\lambda)$ is uniformly absolutely continuous, with the derivative given as:
					\begin{equation}
						\frac{\mathrm{d}U(\lambda)}{\mathrm{d}\lambda}=-\mathcal{F}^{\lambda}.
					\end{equation} 
				\end{Lemma}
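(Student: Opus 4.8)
The plan is to obtain both the derivative formula $\tfrac{\mathrm{d}U(\lambda)}{\mathrm{d}\lambda}=-\mathcal{F}^{\lambda}$ and uniform absolute continuity directly from the parametric structure of the unconstrained MDP $\mathscr{P}_{\mathrm{MDP}}(\lambda)$, using the equivalences already established in Lemma~\ref{l4} and the monotonicity/step-function facts of Lemma~\ref{Lemma7v2}. First I would write $U(\lambda)$ as a lower envelope of affine functions: for a stationary deterministic policy $\phi$ let $\mathcal{Q}(\phi)$ and $\mathcal{F}(\phi)$ denote its long-run average cost per regeneration interval and its long-run average interval length $\mathbb{E}_{\phi}[Z_i+Y_{i+1}]$, so that by Lemma~\ref{l4} and Theorem~\ref{the1}
\begin{equation}
U(\lambda)=\min_{\phi}\bigl\{\mathcal{Q}(\phi)-\lambda\,\mathcal{F}(\phi)\bigr\}.
\end{equation}
Since $\mathcal{S},\mathcal{Y},\mathcal{A}$ are finite and $\mathcal{Z}$ is bounded (footnote following condition $ii$), there are only finitely many stationary deterministic policies, so $U$ is the minimum of finitely many affine functions of $\lambda$: it is concave, piecewise linear, and has finitely many break points.

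Second, on the interior of each linear piece the minimizing policy $\phi^{\star}_{\lambda}$ is locally constant, hence $U(\lambda)=\mathcal{Q}(\phi^{\star}_{\lambda})-\lambda\mathcal{F}(\phi^{\star}_{\lambda})$ and $U'(\lambda)=-\mathcal{F}(\phi^{\star}_{\lambda})=-\mathcal{F}^{\lambda}$, which already establishes the formula at all but finitely many $\lambda$, i.e.\ a.e. At a break point the standard envelope/Danskin argument for a minimum of affine functions yields one-sided derivatives $U'(\lambda^{-})=-\mathcal{F}^{\lambda^{-}}$ and $U'(\lambda^{+})=-\mathcal{F}^{\lambda^{+}}$; these are consistent with concavity of $U$ because $\mathcal{F}^{\lambda}$ is non-decreasing (Lemma~\ref{Lemma7v2}$(ii)$) and jumps upward (Lemma~\ref{Lemma7v2}$(iii)$), and the particular value $\mathcal{F}^{\lambda}\in[\mathcal{F}^{\lambda^{-}},\mathcal{F}^{\lambda^{+}}]$ fixed there by the tie-breaking rule is irrelevant to the a.e.\ claim.

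Third, for uniform absolute continuity I would note that the slope is uniformly bounded: $\mathcal{F}^{\lambda}=\lim_{n}\tfrac1n\sum_{i=0}^{n-1}\mathbb{E}_{\phi^{\star}_{\lambda}}[Z_i+Y_{i+1}]$ satisfies $0<\mathcal{F}^{\lambda}\le \max\mathcal{Z}+\max[Y_i]=:B<\infty$, since waiting times lie in the bounded set $\mathcal{Z}$ and $\max[Y_i]<\infty$. Thus $|U'(\lambda)|\le B$ wherever it exists, and a concave function with a.e.\ derivative bounded by $B$ is globally $B$-Lipschitz, hence equal to the integral of its derivative: $U(\lambda_2)-U(\lambda_1)=-\int_{\lambda_1}^{\lambda_2}\mathcal{F}^{t}\,\mathrm{d}t$, which is uniformly absolutely continuous with modulus $\delta(\varepsilon)=\varepsilon/B$.

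I expect the only delicate point to be identifying the slope of each linear piece with the MDP-defined quantity $\mathcal{F}^{\lambda}$ of \eqref{eq45v2}: that definition uses a \emph{fixed} optimal stationary deterministic policy and a $\limsup/\liminf$, so one must invoke the unichain property of Theorem~\ref{the1} (to make those Ces\`aro averages genuine limits independent of the initial state) and the exact equivalence of Lemma~\ref{l4} (to make $\mathcal{Q}(\phi)-\lambda\mathcal{F}(\phi)$ the per-interval objective whose value equals $U(\lambda)$) before the envelope computation applies verbatim. Everything else -- concavity, the finite number of linear pieces, boundedness of $\mathcal{F}^{\lambda}$ -- is routine. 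As the statement is explicitly a restatement of \cite[Lemma~3.1]{BEUTLER1985236}, the shortest admissible route is simply to cite that result; the argument above gives the self-contained version specialized to $\mathscr{P}_{\mathrm{MDP}}(\lambda)$.
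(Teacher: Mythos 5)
Your proposal is correct, but note that the paper itself offers no proof of this lemma: it is stated as a verbatim restatement of \cite[Lemma 3.1]{BEUTLER1985236} and used as a black box in Appendix \ref{apposa}. Your envelope argument is the standard self-contained derivation specialized to $\mathscr{P}_{\mathrm{MDP}}(\lambda)$: because $\mathcal{S}\times\mathcal{Y}\times\mathcal{A}$ and the bounded waiting-time set $\mathcal{Z}$ are finite and the transformed MDP is unichain (Theorem \ref{the1}), the infimum in Problem \ref{p4} reduces (via Lemma \ref{l4} and Theorem \ref{the1}) to a minimum over finitely many stationary deterministic policies of the affine functions $\mathcal{Q}(\phi)-\lambda\mathcal{F}(\phi)$, so $U$ is concave, piecewise linear, with slopes uniformly bounded by $\max\mathcal{Z}+\max[Y_i]$; this yields Lipschitz (hence uniform absolute) continuity and the slope identification $U'(\lambda)=-\mathcal{F}^{\lambda}$ away from the finitely many break points. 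Two small points are worth making explicit: at a non-break point every minimizing affine piece must have the same slope (otherwise the point would be a kink of the lower envelope), so $\mathcal{F}^{\lambda}$ is unambiguous there even if the optimizer is not unique; and at a break point only one-sided derivatives exist, so the stated identity must be read a.e.\ or one-sidedly, which is precisely how the paper deploys it (it works with $\mathcal{F}^{(\cdot)^{-}}$ and $\mathcal{F}^{(\cdot)^{+}}$ in Corollary \ref{coro1} and Theorem \ref{Lemmasa}). What your route buys is independence from \cite{BEUTLER1985236} at the cost of a page of routine argument; what the paper's route buys is brevity, since the cited lemma covers the same Lagrangian-perturbation setting verbatim.
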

				Consequently, the derivative of $g(\lambda)$ is:
				\begin{equation}
					\frac{\mathrm{d}g(\lambda)}{\mathrm{d}\lambda}=	\frac{\mathrm{d}(\lambda+f_{\max}U(\lambda))}{\mathrm{d}\lambda}=1-f_{\max}\mathcal{F}^{\lambda}.
				\end{equation}
				Setting $\frac{\mathrm{d}g(\lambda)}{\mathrm{d}\lambda}=0$ and by utilizing the monotonically non-decreasing property of $\mathcal{F}^{\lambda}$ in terms of $\lambda$ (as established in Lemma \ref{Lemma7v2}-($ii$)), we have that the optimal $\lambda^{\star}$ satisfies:
				\begin{equation}
					\mathcal{F}^{(\lambda^{\star})^{-}}<\frac{1}{f_{\max}}\le\mathcal{F}^{(\lambda^{\star})^{+}}
				\end{equation}
				Given that $\mathcal{F}^{(\rho^{\star})^{-}}<1/f_{\max}$, and considering the monotonically non-decreasing property of $\mathcal{F}^{\lambda}$ with respect to $\lambda$, we establish that
				\begin{equation}
					\lambda^{\star}\ge\rho^{\star}.
				\end{equation}
				Thus, from the monotonically non-decreasing property of $U{(\lambda)}$ in terms of $\lambda$ (as established in Lemma \ref{Lemma7v2}-($i$)), we have that
				\begin{equation}
					U(\lambda^{\star})\le U(\rho^{\star})=0.
				\end{equation}
				As last, we conduct the \textit{sensitivity analysis} as follows:
				\begin{equation}
					\frac{\mathrm{d}{h^{\star}}}{\mathrm{d}f_{\max}}=\frac{\mathrm{d}{(\lambda^{\star}+f_{\max}U(\lambda^{\star}))}}{\mathrm{d}f_{\max}}=U(\lambda^{\star})\le0,
				\end{equation}
				which accomplished the part ($ii$).\finished
				\section{Proof of Lemma \ref{Lemma9}}\label{appendixK}
				First, in \eqref{sum1}, it has been proved that the sum of each row of $\widetilde{\mathbf{P}_{\kappa}(k)}\in\mathbb{R}^{|\mathcal{S}\times\mathcal{Y}\times\mathcal{A}|\times|\mathcal{S}\times\mathcal{Y}\times\mathcal{A}|}$ is 1. Define ${p_{\gamma\gamma'}}(Z^{(k)}(\gamma),A^{(k)}(\gamma))$ as the transition probability that $\gamma$ transitions to state $\gamma'$ under actions $(Z^{(k)}(\gamma),A^{(k)}(\gamma))$ in ${\mathbf{P}(k)}$, and $\widetilde{p_{\gamma\gamma'}}(Z^{(k)}(\gamma),A^{(k)}(\gamma))$ as the corresponding element in matrix $\widetilde{\mathbf{P}_{\kappa}(k)}$, we have that
				\begin{equation}\label{eq144}
					\begin{aligned}
						&\widetilde{p_{\gamma\gamma'}}(Z^{(k)}(\gamma),A^{(k)}(\gamma))=\\&\begin{cases}
							{\frac{\kappa\mathbb{E}[Y_i]{p_{\gamma\gamma'}}(Z^{(k)}(\gamma),A^{(k)}(\gamma))}{f(Z^{(k)}(\gamma))}} &\text{if $\gamma\ne \gamma'$,}\\
							1-\frac{\kappa\mathbb{E}[Y_i]-\kappa\mathbb{E}[Y_i]\cdot {p_{\gamma\gamma}}(Z^{(k)}(\gamma),A^{(k)}(\gamma))}{f(Z^{(k)}(\gamma))} &\text{if $\gamma= \gamma'$.}\\
						\end{cases}
					\end{aligned}
				\end{equation}
				Note that $0<\kappa<1$, $Z^{(k)}(\gamma)\ge0$ and $0\le{p_{\gamma\gamma'}}(Z^{(k)}(\gamma),A^{(k)}(\gamma))\le1$, we have
				\begin{equation}\label{case1}
					0\le{\frac{\kappa\mathbb{E}[Y_i]{p_{\gamma\gamma'}}(Z^{(k)}(\gamma),A^{(k)}(\gamma))}{f(Z^{(k)}(\gamma))}}<\frac{\mathbb{E}[Y_i]}{\mathbb{E}[Y_i]+Z^{(k)}(\gamma)}\le1,
				\end{equation}
				and 
				\begin{equation}\label{case2}
					\begin{aligned}
						0<\frac{\kappa\mathbb{E}[Y_i]-\kappa\mathbb{E}[Y_i]\cdot {p_{\gamma\gamma}}(Z^{(k)}(\gamma),A^{(k)}(\gamma))}{f(Z^{(k)}(\gamma))}<\\\frac{\mathbb{E}[Y_i](1-{p_{\gamma\gamma}}(Z^{(k)}(\gamma),A^{(k)}(\gamma)))}{\mathbb{E}[Y_i]+Z^{(k)}(\gamma)}\le1.
					\end{aligned}
				\end{equation}	 
				By combining \eqref{case1} and \eqref{case2}, we establish that the transition probabilities $\widetilde{p_{\gamma\gamma'}}(Z^{(k)}(\gamma),A^{(k)}(\gamma))$ satisfy the condition $0 \leq \widetilde{p_{\gamma\gamma'}}(Z^{(k)}(\gamma),A^{(k)}(\gamma)) \leq 1$. Given this and the result from \eqref{sum1}, it follows that $\widetilde{\mathbf{P}{\kappa}(k)}$ is indeed a stochastic matrix. 
				\setcounter{equation}{271}
								\begin{figure*}[h]
					\begin{subequations}\label{eq157}
						\begin{align}
							\widetilde{p_{\phi,\gamma'\gamma'}^{n+1}}&=\widetilde{p_{\phi,\gamma'\gamma'}^{1}}\widetilde{p_{\phi,\gamma'\gamma'}^{n}}+\sum_{\gamma_1\ne\gamma'}\widetilde{p_{\phi,\gamma'\gamma_1}^{1}}\widetilde{p_{\phi,\gamma_1\gamma'}^{n}}\\
							&=\widetilde{p_{\phi,\gamma'\gamma'}^{1}}\widetilde{p_{\phi,\gamma'\gamma'}^{n}}+\quad\sum_{\gamma_1\ne\gamma'}\widetilde{p_{\phi,\gamma'\gamma_1}^{1}}\sum_{{\gamma_2,\cdots,\gamma_{n}}}\widetilde{p_{\phi,\gamma_1\gamma_2}}\widetilde{p_{\phi,\gamma_2\gamma_3}}\cdots \widetilde{p_{\phi,\gamma_{n}\gamma'}}\\
							&=\widetilde{p_{\phi,\gamma'\gamma'}^{1}}\widetilde{p_{\phi,\gamma'\gamma'}^{n}}+\sum_{\gamma_1\ne\gamma',\gamma_2,\cdots,\gamma_n}\frac{\kappa\mathbb{E}[Y_i]\cdot p_{\phi,\gamma'\gamma_1}^{1}}{f(\phi(\gamma'))}\cdot\prod_{j\notin\mathcal{D}}
							\frac{\kappa\mathbb{E}[Y_i]\cdot p_{\phi,\gamma_{j}\gamma_{j+1}}^{1}}
							{f(\phi(\gamma_j))}\cdot\prod_{j\in\mathcal{D}}\left(1-\frac{\kappa\mathbb{E}[Y_i]-\kappa\mathbb{E}[Y_i]p_{\phi,\gamma_j\gamma_j}^1}{f(\phi(\gamma_j))}\right)\label{220b}\\
							&=\widetilde{p_{\phi,\gamma'\gamma'}^{1}}\widetilde{p_{\phi,\gamma'\gamma'}^{n}}+\sum_{\gamma_1\ne\gamma',\gamma_2,\cdots,\gamma_n}\frac{\left(\kappa\mathbb{E}[Y_i]\right)^{n-|\mathcal{D}+1|}\cdot\prod_{j\in\mathcal{D}}\left(1-\frac{\kappa\mathbb{E}[Y_i]-\kappa\mathbb{E}[Y_i]p_{\phi,\gamma_j\gamma_j}^1}{f(\phi(\gamma_j))}\right)}{f(\phi(\gamma'))\prod_{j\notin\mathcal{D}}f(\phi(\gamma_j))}\cdot p_{\phi,\gamma'\gamma_1}^{1}\prod_{j\notin\mathcal{D}} p_{\phi,\gamma_{j}\gamma_{j+1}}^{1}.
						\end{align}
					\end{subequations}
					\hrulefill
				\end{figure*}
				\setcounter{equation}{262}
				Second, we show that $\widetilde{\mathbf{P}_{\kappa}(k)}$ is always \textit{aperiodic}. This is verified by the following inequality:
				\begin{equation}
					\begin{aligned}	&\widetilde{p_{\gamma\gamma}}(Z^{(k)}(\gamma),A^{(k)}(\gamma))\\&=1-\frac{\kappa\mathbb{E}[Y_i]-\kappa\mathbb{E}[Y_i] {p_{\gamma\gamma}}(Z^{(k)}(\gamma),A^{(k)}(\gamma))}{Z^{(k)}(\gamma)+\mathbb{E}[Y_i]}\\
						&>1-\frac{\mathbb{E}[{Y_i}]-\mathbb{E}[{Y_i}] {p_{\gamma\gamma}}(Z^{(k)}(\gamma),A^{(k)}(\gamma))}{\mathbb{E}[Y_i]}\\&={p_{\gamma\gamma}}(Z^{(k)}(\gamma),A^{(k)}(\gamma))\ge0.
					\end{aligned}
				\end{equation}
				As the self-transition probability is always positive for each state $\gamma$, the chain is always \textit{aperiodic}. At last, we show that if MDP characterized by $p_{\gamma\gamma'}(z,a)$ is a \textit{unichain}, then the MDP characterized by $\widetilde{p_{\gamma\gamma'}}(z,a)$ is also a \textit{unichain}, this in given in the following lemma.
				\begin{Lemma}\label{Lemma10}
					If the MDP characterized by the transition probability $p_{\gamma\gamma'}(z,a)$ is a unichain, then the MDP characterized by  $\widetilde{p_{\gamma\gamma'}}(z,a)$ is also a unichain.
				\end{Lemma}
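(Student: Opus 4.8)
The plan is to reduce the claim to a purely graph-theoretic statement about the \emph{support} of the transition matrices, exploiting the fact that the transformation in \eqref{eq144} only rescales off-diagonal entries by a strictly positive factor while forcing every diagonal entry to be strictly positive. First I would fix an arbitrary stationary deterministic policy $\phi:\mathcal{S}\times\mathcal{Y}\times\mathcal{A}\to\mathcal{Z}\times\mathcal{A}$ and write $\mathbf{P}_\phi$ and $\widetilde{\mathbf{P}}_{\kappa,\phi}$ for the two chains it induces, with entries $p_{\gamma\gamma'}(\phi(\gamma))$ and $\widetilde{p_{\gamma\gamma'}}(\phi(\gamma))$. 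From \eqref{eq144}, if $\phi$ selects waiting time $z$ at state $\gamma$ then for every $\gamma'\ne\gamma$ we have $\widetilde{p_{\gamma\gamma'}}=c_\gamma\,p_{\gamma\gamma'}$ with $c_\gamma=\kappa\mathbb{E}[Y_i]/f(z)$; since $f(z)=\mathbb{E}[Y_i]+z\ge\mathbb{E}[Y_i]>0$ and $0<\kappa<1$, we get $c_\gamma\in(0,1)$, hence $\widetilde{p_{\gamma\gamma'}}>0$ if and only if $p_{\gamma\gamma'}>0$. Combined with the fact — already established in the aperiodicity part of the proof of Lemma~\ref{Lemma9} — that $\widetilde{p_{\gamma\gamma}}>0$ for every $\gamma$, this shows that the directed support graph of $\widetilde{\mathbf{P}}_{\kappa,\phi}$ is exactly the support graph of $\mathbf{P}_\phi$ with a self-loop appended at every vertex.

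Next I would invoke the elementary fact that appending self-loops to a finite directed graph changes neither the reachability relation between distinct vertices, nor the strongly connected components, nor the family of vertex subsets that are closed under the transition relation. Consequently the recurrent classes of $\widetilde{\mathbf{P}}_{\kappa,\phi}$ — the closed communicating classes — coincide set-for-set with those of $\mathbf{P}_\phi$, and the transient states are the same as well; in particular $\mathbf{P}_\phi$ has a unique recurrent class if and only if $\widetilde{\mathbf{P}}_{\kappa,\phi}$ does. Since $\phi$ was arbitrary and $\mathbf{P}_\phi$ is a unichain by hypothesis, the MDP with transitions $\widetilde{p_{\gamma\gamma'}}(z,a)$ has a single recurrent class under every stationary policy; aperiodicity of that chain is free from the positive diagonal, so whichever reading of ``unichain'' one adopts, the conclusion follows. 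An alternative, more computational route — the one implicit in the path expansion \eqref{eq157} — is to expand the return probabilities $\widetilde{p_{\phi,\gamma'\gamma'}^{\,n}}$ over paths, delete the self-loop steps, and bound them below by a positive multiple of the corresponding quantities for $\mathbf{P}_\phi$, transferring recurrence of each state directly; the graph argument is shorter but the two are interchangeable.

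I do not expect a genuine obstacle here. The only point that needs care is the bookkeeping in the second step — verifying that inserting self-loops can neither merge two recurrent classes into one nor split one into two — which reduces precisely to the observation from the first step that the off-diagonal support is identical for the two matrices, so no directed edge between distinct states is ever created or destroyed. A secondary remark worth making explicit is that the scaling factor $c_\gamma$ depends on the iterate index through $Z^{(k)}(\gamma)$ when this lemma is invoked inside the proof of Lemma~\ref{Lemma9}, but since $c_\gamma\in(0,1)$ uniformly in $k$, none of the above is affected.
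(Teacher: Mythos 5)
Your argument is correct, and it takes a genuinely different route from the paper. You reduce the claim to a statement about the support graphs: since the off-diagonal entries of $\widetilde{p_{\gamma\gamma'}}(\phi(\gamma))$ are the original ones scaled by $c_\gamma=\kappa\mathbb{E}[Y_i]/f(\phi(\gamma))\in(0,1)$ and every diagonal entry becomes strictly positive, the directed support of $\widetilde{\mathbf{P}}_{\kappa,\phi}$ is that of $\mathbf{P}_\phi$ with self-loops appended, and appending self-loops alters neither reachability between distinct states, nor the communicating classes, nor the closed sets; hence the recurrent class and the transient states are literally the same, for every stationary policy $\phi$. The paper instead argues state by state through explicit Chapman--Kolmogorov expansions: for a state $\gamma'$ in the recurrent class it lower-bounds the $m$-step probability $\widetilde{p^{m}_{\phi,\gamma\gamma'}}$ by a positive multiple $\bigl(\kappa\mathbb{E}[Y_i]\bigr)^{m}p^{m}_{\phi,\gamma\gamma'}/\bigl(f(\phi(\gamma))\prod_i f(\phi(\gamma_i))\bigr)$ of the original one using the loop-free path sum \eqref{eq219}, and for a transient $\gamma'$ it uses the expansion \eqref{eq157} together with the vanishing of all return paths to show $\widetilde{p^{\,n}_{\phi,\gamma'\gamma'}}=\bigl(\widetilde{p^{\,1}_{\phi,\gamma'\gamma'}}\bigr)^{n}\to0$, i.e.\ geometric decay of the return probability. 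Your graph-theoretic argument is shorter, more transparent, and makes clear that the conclusion is insensitive to the particular positive scaling (only the support matters); the paper's computational route, which you correctly identify as the alternative hidden in \eqref{eq157}, has the side benefit of producing explicit quantitative lower bounds on multi-step transition probabilities of the modified chain, which is in the same spirit as the constants $(\epsilon,L)$ used in Lemma~\ref{Lemma9} and the ensuing convergence-rate analysis. The only point needing care in your version—that inserting self-loops can neither merge nor split recurrent classes—is exactly the bookkeeping you flag, and it is settled by the observation that no edge between distinct states is created or destroyed; your remark that the scaling depends on the iterate index $k$ only through a factor uniformly in $(0,1)$ is also the right thing to note when the lemma is invoked inside the proof of Lemma~\ref{Lemma9}.
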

				\begin{proof}
					See Appendix \ref{appendixl}.
				\end{proof}
				
				According to Lemma \ref{Lemma10}, the Markov Decision Process (MDP) defined by $\widetilde{\mathbf{P}_{\kappa}(k)}$ is an \textit{aperiodic unichain}, satisfying the condition established in \cite{bertsekas2008introduction}. This completes the proof. 
				\section{Proof of Lemma \ref{Lemma10}}\label{appendixl}
				Since the MDP is a \textit{unichain}, it follows that for any policy $\phi:\mathcal{S}\times\mathcal{Y}\times\mathcal{A}\rightarrow\mathcal{Z}\times\mathcal{A}$, the markov chain constitutes some transient states and a single recurrent class. Let $\mathcal{X}_\phi$ denote the \textit{recurrent class} of the Markov chain under policy $\phi$, we have that if $\gamma' \in \mathcal{X}_{\phi}$ and $\gamma\in\mathcal{S}\times\mathcal{Y}\times\mathcal{A}$, there always exists a smallest positive integer $m$ such that:
				\begin{equation}
					[\mathbf{P}_{\phi}^{n}]_{\gamma\times\gamma'}\triangleq p^n_{\phi,\gamma\gamma'}\begin{cases}
						>0 & \text{if } n = m\\
						=0 & \text{if } n<m\\
					\end{cases},
				\end{equation}
				where $\mathbf{P}_{\phi}\in\mathbb{R}^{|\mathcal{S}\times\mathcal{Y}\times\mathcal{A}|\times|\mathcal{S}\times\mathcal{Y}\times\mathcal{A}|}$ is the transition probability matrix under policy $\phi$, with entries $p^1_{\phi,\gamma\gamma'}=p_{\gamma\gamma'}(\phi(\gamma))$. Note that ${p_{\phi,\gamma\gamma'}^{m-1}}=0$, we can apply the \textit{Chapman--Kolmogorov equation} to establish:
				\begin{equation}\label{eq149}
					\begin{aligned}
						{p_{\phi,\gamma\gamma'}^{m}}&=\sum_{\gamma_1}{p_{\phi,\gamma\gamma_1}^1}{p_{\phi,\gamma_1\gamma'}^{m-1}}\\
						&={p_{\phi,\gamma\gamma}^1}{p_{\phi,\gamma\gamma'}^{m-1}}+\sum_{\gamma_1\ne\gamma}{p_{\phi,\gamma\gamma_1}^1}{p_{\phi,\gamma_1\gamma'}^{m-1}}\\
						&=\sum_{\gamma\ne\gamma_1}{p_{\phi,\gamma\gamma_1}^1}{p_{\phi,\gamma_1\gamma'}^{m-1}}.
					\end{aligned}
				\end{equation}
				Iterating \eqref{eq149} for $m-1$ times yields:
				\begin{equation}\label{eq219}
					\begin{aligned}
						{p_{\phi,\gamma\gamma'}^{m}}&=\sum_{\substack{\gamma\ne\gamma_1,\gamma_2\ne\gamma_3,\\\cdots,\gamma_{m-1}\ne\gamma'}}{p_{\phi,\gamma\gamma_1}^1}{p_{\phi,\gamma_1\gamma_2}^1}\cdots{p_{\phi,\gamma_{m-1}\gamma'}^1}.\\
					\end{aligned}
				\end{equation}
				Similarly, define $\widetilde{p_{\phi,\gamma\gamma'}^{m}}$ as the $m$-step transition probability for the Markov chain characterized by $\widetilde{p_{\gamma\gamma'}}(\phi(\gamma))$, we can apply the \textit{Chapman--Kolmogorov equation} to have
				\begin{subequations}
					\begin{align}
						\widetilde{p_{\phi,\gamma\gamma'}^{m}}&=\sum_{{\gamma_1,\gamma_2,\cdots,\gamma_{m-1}}}\widetilde{p_{\phi,\gamma\gamma_1}^1}\widetilde{p_{\phi,\gamma_1\gamma_2}^1}\cdots\widetilde{p_{\phi,\gamma_{m-1}\gamma'}^1}\\&\ge\sum_{\substack{\gamma\ne\gamma_1,\gamma_2\ne\gamma_3,\\\cdots,\gamma_{m-1}\ne\gamma'}}\widetilde{p_{\phi,\gamma\gamma_1}^1}\widetilde{p_{\phi,\gamma_1\gamma_2}^1}\cdots\widetilde{p_{\phi,\gamma_{m-1}\gamma'}^1}\\
						&{=}\frac{\left(\kappa\mathbb{E}[Y_i]\right)^{m}}{f(\phi(\gamma))\prod_{i=1}^{m-1}f(\phi(\gamma_i))}\times\label{247c}\\&\notag\sum_{\substack{\gamma\ne\gamma_1,\gamma_2\ne\gamma_3,\\\cdots,\gamma_{m-1}\ne\gamma'}}{p_{\phi,\gamma\gamma_1}^1}{p_{\phi,\gamma_1\gamma_2}^1}\cdots{p_{\phi,\gamma_{m-1}\gamma'}^1}\\
						&\label{247d}{=}\frac{\left(\kappa\mathbb{E}[Y_i]\right)^{m}\cdot p_{\phi,\gamma\gamma'}^{m}}{f(\phi(\gamma))\prod_{i=1}^{m-1}f(\phi(\gamma_i))}>0,
					\end{align}
				\end{subequations}
				where \eqref{247c} is established by the first line of \eqref{eq144} and \eqref{247d} is obtained by substituting \eqref{eq219}.  
				As such, if $\gamma'$ belongs to a recurrent class of the Markov chain characterized by $p_{\phi,\gamma\gamma'}^1$, it will also belong to a recurrent class of the Markov chain under $\widetilde{p_{\phi,\gamma\gamma'}^1}$, indicating that $\mathcal{X}_\phi$ is a recurrent class of the new chain $\widetilde{p_{\phi,\gamma\gamma'}^1}$.
				
				We next show that if $\gamma'$ is a transient state of the Markov chain $p_{\phi,\gamma\gamma'}^1$, it is also a transient state of the Markov chain $\widetilde{p_{\phi,\gamma\gamma'}^1}$. If $\gamma'$ is a transient state of the Markov unichain $p_{\phi,\gamma\gamma'}^1$, it follows that
				\begin{equation}
					p_{\phi,\gamma'\gamma_1}^{k}p_{\phi,\gamma_1\gamma'}^{n}=0, \text{for } \forall \gamma_1\ne\gamma' ,k,n\in\mathbb{N}^+,
				\end{equation}
				otherwise, the chain will be a multichain or $\gamma'$ is not a transient state. By leveraging the \textit{Chapman--Kolmogorov equation} to expand $p_{\phi,\gamma_1\gamma'}^{n}$, we have
				\begin{equation}\label{eq162}
					\begin{aligned}
						&0=p_{\phi,\gamma'\gamma_1}^{k}p_{\phi,\gamma_1\gamma'}^{n}=\\&\sum_{{\gamma_2,\cdots,\gamma_{n}}}p_{\phi,\gamma'\gamma_1}^{k}p_{\phi,\gamma_1\gamma_2}^{1}p_{\phi,\gamma_2\gamma_3}^{1}\cdots p_{\phi,\gamma_{n}\gamma'}^{1}\ge0.
					\end{aligned}
				\end{equation}
				From the \textit{sandwich theorem}, we know that every element in the summand of \eqref{eq162} is zero:
				\begin{equation}\label{eq154}
					\begin{aligned}			&p_{\phi,\gamma'\gamma_1}^{k}p_{\phi,\gamma_1\gamma_2}^{1}p_{\phi,\gamma_2\gamma_3}^{1}\cdots p_{\phi,\gamma_{n}\gamma'}^{1}=0,\\&\text{for }\forall\gamma_1\ne\gamma', \gamma_2,\cdots,\gamma_{n},k,n\in\mathbb{N}^+,
					\end{aligned}
				\end{equation}
				Meanwhile, we can leverage the \textit{Chapman--Kolmogorov equation} to expand the $n+1$-step transition probability $\widetilde{p_{\phi,\gamma'\gamma'}^{n+1}}$ and obtain \eqref{eq157} at the top of the previous page, where the set $\mathcal{D}$ is defined as $\mathcal{D}\triangleq\{j|\gamma_j=\gamma_{j+1},j=1,2,\cdots,n\}$, with $\gamma_{n+1}=\gamma'$ and \eqref{220b} is obtained by substituting \eqref{eq144}.
				As \eqref{eq154} holds for all $n$, we know that
\stepcounter{equation}
				\begin{equation}\label{eq158}
					p_{\phi,\gamma'\gamma_1}^{1}\prod_{j\notin\mathcal{D}} p_{\phi,\gamma_{j}\gamma_{j+1}}^{1}=0.
				\end{equation}
				Applying \eqref{eq158} in  \eqref{eq157} yields:
				\begin{equation}\label{eq221}
					\widetilde{p_{\phi,\gamma'\gamma'}^{n+1}}=\widetilde{p_{\phi,\gamma'\gamma'}^{1}}\widetilde{p_{\phi,\gamma'\gamma'}^{n}}.
				\end{equation}
				Iterating \eqref{eq221} yields
				\begin{equation}
					\widetilde{p_{\phi,\gamma'\gamma'}^n}=\left(\widetilde{p_{\phi,\gamma'\gamma'}^{1}}\right)^n=\left(1-\frac{\kappa\mathbb{E}[Y_i]-\kappa\mathbb{E}[Y_i]p_{\phi,\gamma'\gamma'}^1}{f(\phi(\gamma'))}\right)^n.
				\end{equation}
				Because $\gamma'$ is a transient state of the chain characterized by $p_{\phi,\gamma\gamma'}^1$, we have $p_{\phi,\gamma'\gamma'}^1<1$, and thus
				\begin{equation}
					\lim_{n\rightarrow\infty}\widetilde{p_{\phi,\gamma'\gamma'}^n}=0,
				\end{equation}
				which indicates that $\gamma'$ is also a transient state of the new chain characterized by $\widetilde{p_{\phi,\gamma\gamma'}^1}$. We thus accomplish the proof.

				\bibliographystyle{IEEEtran}
				\bibliography{reference}

				\vspace{-1cm}
				\IEEEbiographynophoto{Aimin Li}
				(Member, IEEE) received the B.S. degree (Best Thesis Award) and the Ph.D. degree (Awarded Best Dissertation Nomination) in electronic engineering from Harbin Institute of Technology (Shenzhen) in 2020 and 2025, respectively. From 2023 to 2024, he was a visiting researcher with the Institute for Infocomm Research (I$^2$R), Agency for Science, Technology, and Research (A*STAR), Singapore. His research interests include advanced channel coding techniques, information theory, and wireless communications. He has served as a reviewer for IEEE TIT, IEEE JSAC, IEEE TWC, IEEE TMC, IEEE TCOM, IEEE ISIT, among others, and as a session chair for IEEE Information Theory Workshop 2024 and IEEE Globecom 2024.
				
				\vspace{-1cm}
				\IEEEbiographynophoto{Shaohua Wu} (Member, IEEE) received the Ph.D. degree in communication engineering from the Harbin Institute of Technology (HIT), Harbin, China, in 2009. From 2009 to 2011, he held a Postdoctoral position with the Department of Electronics and Information Engineering, Shenzhen Graduate School, HIT (Shenzhen), Shenzhen, China, where he has been an Associate Professor since 2012. He is also an Associate Professor of Peng Cheng Laboratory, Shenzhen. From 2014 to 2015, he was a Visiting Researcher with BBCR, University of Waterloo, Waterloo, ON, Canada. He holds more than 30 Chinese patents. He has authored or coauthored more than 100 papers in the above-mentioned areas. His current research interests include wireless image/video transmission, space communications, advanced channel coding techniques, and B5G wireless transmission technologies.
				\vspace{-1cm}
				\IEEEbiographynophoto{Gary C.F. Lee} (Member, IEEE) received his B.S. degree in Electrical Engineering from Stanford University in 2016, his M.S. degree in Electrical Engineering from the Massachusetts Institute of Technology (MIT) in 2019, and his Ph.D. degree in Electrical Engineering from the Massachusetts Institute of Technology (MIT) in 2023. He is currently a Senior Scientist at the Institute for Infocomm Research (I2R), Agency for Science, Technology, and Research (A*STAR), Singapore. His current research interests are in wireless communications, machine learning, and signal processing.
				
				
				\IEEEbiographynophoto{Sumei Sun} (Fellow, IEEE) is a Principal Scientist, Distinguished Institute Fellow, and Acting Executive Director of the Institute for Infocomm Research (I2R), Agency for Science, Technology, and Research (A*STAR), Singapore. She is also holding a joint 
				appointment with the Singapore Institute of Technology, and an adjunct appointment 
				with the National University of Singapore, both as a full professor. Her current 
				research interests are in next-generation wireless communications, cognitive 
				communications and networks, industrial internet of things, communications-computing-control integrative design, joint radar-communication systems, and signal 
				intelligence. She is Editor-in-Chief of the IEEE Open Journal Vehicular Technology, and 
				Steering Committee Chair of the IEEE Transactions on Machine Learning in 
				Communications and Networking. She is also Member-at-Large of 
				the IEEE Communications Society, a member of the IEEE Vehicular Technology Society Board of Governors (2022-2024), Fellow of the IEEE and the Academy of Engineering Singapore.

			\end{document}